\newcommand{\rnote}[1]{\footnote{{\bf \color{red}Rocco}: {#1}}}
\newcommand{\noi}{\boldeta}
\newtheorem*{rep@theorem}{\rep@title}
\newcommand{\newreptheorem}[2]{
\newenvironment{rep#1}[1]{
 \def\rep@title{#2 \ref{##1}}
 \begin{rep@theorem}\itshape}
 {\end{rep@theorem}}}
\theoremstyle{plain}
\newenvironment{proofof}[1]{\par
  \pushQED{\qed}%
  \normalfont \topsep6\p@\@plus6\p@\relax
  \trivlist
  \item[\hskip\labelsep
\emph{    Proof of #1\@addpunct{.}}]\ignorespaces
}{%
  \popQED\endtrivlist\@endpefalse
}
\newcommand{\ignore}[1]{}
\def\colorful{1}
\def\adinclude{1}
\newcommand{\adin}[1]{{#1}}
\newcommand{\adout}[1]{}
\newcommand{\adin}[1]{}
\newcommand{\adout}[1]{{#1}}
\newcommand{\red}[1]{{\color{red} {#1}}}
\newcommand{\red}[1]{{{#1}}}
\newtheorem*{theorem*}{Theorem}
\newtheorem*{noclaim*}{Claim}
\newcommand{\ion}{\boldsymbol{\eta}}
\newcommand{\gf}[1]{\mathsf{GF}(#1)}
\newcommand{\gfp}[2]{\mathsf{GF}^{#2}(#1)}
\newcommand{\yes}{\mathrm{yes}}
\newcommand{\no}{\mathrm{no}}
\newcommand{\bdd}{\mathsf{B}}
\newcommand{\midd}{\mathrm{middle}}
\newcommand{\normale}[1]{{N(0,#1)}}
\newcommand{\taua}{\theta}
\newcommand{\Poi}{\mathrm{Poi}}
\renewcommand{\N}{\mathds{N}}
\renewcommand{\R}{\mathds{R}}
\newcommand{\Bern}{\mathrm{Bern}}
\newcommand{\freq}{\mathrm{freq}}
\newcommand{\Noise}{{\cal NOISE}}
\newcommand{\cum}{\kappa}
\newcommand{\ov}{\overline}
\newcommand{\Sym}{\mathrm{Sym}}
\newcommand{\ol}[1]{\overline{#1}}
\begin{document}

\title{
Testing noisy linear functions for sparsity}

\author{\hspace{-35pt}Xue Chen\thanks{{\tt xue.chen1@northwestern.edu}.}  \\
\hspace{-35pt} \small{\sl Northwestern University} \and Anindya De\thanks{{\tt anindyad@cis.upenn.edu}. Supported by NSF grant CCF 1926872} \\ \small{\sl University of Pennsylvania}
 \and Rocco A. Servedio\thanks{{\tt rocco@cs.columbia.edu}. Supported by NSF grants CCF-1814873, IIS-1838154, CCF-1563155, and by the Simons Collaboration on Algorithms and Geometry.}   \\ \small{\sl Columbia University}\vspace*{10pt}
}

\date{\today}

\maketitle


\begin{abstract}
We consider the following basic inference problem:   there is an unknown  high-dimensional vector $w \in \mathbb{R}^n$, and an algorithm is given access to labeled pairs $(x,y)$ where $x \in \mathbb{R}^n$ is a measurement and 
$y = w \cdot x + \mathrm{noise}$. What is the complexity of deciding whether the target vector $w$ is (approximately) $k$-sparse? The recovery analogue of this problem --- given the promise that $w$ is sparse, find or approximate the vector $w$ --- is the famous \emph{sparse recovery} problem, with a rich body of work in signal processing, statistics, and computer science. 


We study the {decision} version of this problem (i.e.~deciding whether the unknown $w$ is $k$-sparse) from the vantage point of \emph{property testing}. 
Our focus is on answering the following high-level question: when is it possible to efficiently \emph{test} whether the unknown target vector $w$ is sparse versus far-from-sparse using a number of samples which is \emph{completely independent} of the dimension $n$?
\adin{\ignore{Suppose}We consider the natural seting in which $x$ is drawn from a i.i.d.~product distribution ${\cal D}$ over $\mathbb{R}^n$ and the $\mathrm{noise}$ process is independent of the input $x$. As our main result, we give a general algorithm 
which solves the above-described testing problem using a number of samples which is completely independent of the ambient dimension $n$, as long as ${\cal D}$ is not a Gaussian. In fact, our algorithm is \emph{fully noise tolerant}, in the sense that 
for an arbitrary $w$, it approximately computes the distance of $w$ to the closest $k$-sparse vector. 
  To complement this algorithmic result, we show that weakening any of our conditions\ignore{on either ${\cal D}$  or the $\mathrm{noise}$}\ignore{\rnote{I ignore'd out ``on either ${\cal D}$  or the $\mathrm{noise}$'' here; is this okay?}}
makes it information-theoretically impossible for \emph{any} algorithm to solve the testing problem with fewer than essentially $\log n$ samples. Thus our conditions essentially characterize when it is possible to test noisy linear functions for sparsity with constant sample complexity.

\ignore{The main algorithmic tool here is to} 

Our algorithmic approach is based on relating the cumulants of the output distribution (i.e.~of $y$) with elementary power sum symmetric polynomials in $w$ and using the latter to\ignore{decide on} measure the sparsity of $w$.  This approach crucially relies on a theorem of Marcinkiewicz from probability theory. In fact, to obtain effective sample complexity bounds with our approach, we prove a new finitary version of Marcinkiewicz's theorem.  This involves extending the complex analytic arguments used in the original proof with results about the distribution of zeros of entire functions. 
 }

\adout{As our main result, we give a general algorithm which, under certain conditions on the input distribution over  $x$ and the noise process, solves the above-described testing problem using a number of samples which is completely independent of the ambient dimension $n$.  We complement this algorithmic result by showing that weakening any of our conditions makes it information-theoretically impossible for \emph{any} algorithm to solve the testing problem with fewer than essentially $\log n$ samples. Thus our conditions essentially characterize when it is possible to test noisy linear functions for sparsity with constant sample complexity.}

\end{abstract}

\thispagestyle{empty}

\newpage
\setcounter{page}{1}


\section{Introduction}

This paper addresses a basic data analysis problem from the perspective of \emph{property testing}.  To motivate our work, we begin by considering the following simple and fundamental inference problem: For independent uniform strings $\bx \sim \{-1,1\}^n$ and Gaussian noise $\ion \sim \normale{0.1}$, an algorithm gets access to labeled samples of the form $(\bx, \by)$ where $\by  = w \cdot \bx + \ion$ and $w$ is some fixed but unknown unit vector in $\mathbb{R}^n$. The task of recovering $w$ 
from these noisy samples is an instance of the standard \emph{linear regression} problem, which is of course very well studied in computer science, econometrics,  and statistics (see e.g.~\cite{gelman2006data, greene2003econometric} or many other references).  
As is well known, $\Theta(n)$ samples are both necessary and sufficient to recover $w$ (to within a small constant error), and the ordinary least squares algorithm is a computationally efficient algorithm  which achieves this sample complexity. 

Now suppose that the algorithm is promised that $w$ is \emph{$k$-sparse}, i.e.~it has only $k$ non-zero entries. In this case, the influential line of work on \emph{compressive sensing}  shows that much better sample complexities and running times can be achieved. In particular, the breakthrough work of Candes, Romberg and Tao~\cite{CRT06} shows that using just $m=O(k \log n)$ samples and running in time $\poly(m,n)$, it is possible to (approximately) recover the $k$-sparse vector $w$. Observe that when $k$ is small (like a constant), this is an exponential improvement over the sample complexity achieved by standard linear regression. We further note that by results such as \cite{PriceWoodruff12, ba2010lower}, the bound of $O(k \log n)$ samples is essentially tight, and that compressive sensing algorithms  are applicable for more general choices of the distribution of $\bx$ and the noise $\ion$ (see the survey by Candes~\cite{candes2006compressive}). 

In this paper we consider a natural \emph{decision} analogue of the above problem: the algorithm has access to the same type of $(\bx,\by=w\cdot \bx + \ion)$ samples as above, but it is \emph{not} promised that the target vector $w$ is $k$-sparse. Rather, the task of the algorithm now is to distinguish between the cases that (i) the target vector $w$ is $k$-sparse, versus (ii) the target vector $w$ is $\epsilon$-far from every $k$-sparse vector $w'$ (for some appropriate notion of ``far").  
Using algorithms from compressive sensing, it is straightforward to obtain an algorithm with $m=O(k \log n)$ sample complexity and $\poly(m,n)$ runtime. But can one do much better? In particular, it is \emph{a priori} conceivable that there is an algorithm for this decision problem\footnote{This is in contrast with the recovery problem, as shown by lower bounds such as \cite{PriceWoodruff12, ba2010lower} mentioned above.} with sample complexity \emph{completely independent of the ambient dimension $n$}. Do such ``ultra-efficient'' algorithms in fact exist?

As a corollary of our main algorithmic result, we give an affirmative answer to this question.  Our result implies that in the above setting, it is indeed possible to distinguish between $w$ which (i) is $k$-sparse, versus (ii) is $\epsilon$-far in $\ell_2$ distance from all $k$-sparse vectors $w'$, with an $m=O_{k,\epsilon}(1)$ sample  complexity that is completely independent of $n$.  In fact, we achieve much more: our algorithm can handle a broad range of distributions of $\bx$, and in essentially the same sample complexity we can approximate the distance from $w$ to the closest $k$-sparse vector.  Thus we can essentially determine the ``fit" of the best $k$-sparse vector using only $m=O_{k,\epsilon}(1)$ samples.  (The running time of our algorithm is $\poly(m,n)$.)

\subsection{Motivation:  Property testing}

Before describing our main results in more detail, we recall a line of work on \emph{property testing of functions} which strongly motivates our study. 
In the standard property testing framework, an algorithm is given access to an unknown function $f$ via an oracle $\mathcal{O}$. For a property $\mathcal{P}$ of functions, the goal of a property  testing algorithm for $\mathcal{P}$ is to make as few queries to $\mathcal{O}$ as possible and distinguish (with success probability, say, $9/10$) between the cases that (i) the function $f$ has the property $\mathcal{P}$, versus (ii) the function $f$ is at least $\epsilon$-far in Hamming distance from every function $g$ with property $\mathcal{P}$. 
As a well-known example of this framework, the seminal work of \cite{BLR93} showed that when $\mathcal{P}$ is the property of being $\gf{2}$-linear, $f$ is any function from $\gfp{2}{n}$ to $\gf{2}$, and the oracle $\mathcal{O}$ is a black-box oracle for $f$, then there is an algorithm with query complexity $O(1/\epsilon)$. We refer the reader to books and surveys such as \cite{Ron:08testlearn, Ron:10FNTTCS, PropertyTestingICS,  Goldreich17book} which give an overview of the nearly three decades of work in this area. 

An often-sought-after ``gold standard" for property testing algorithms, that can (perhaps surprisingly) be achieved for many problems, is an algorithm with \emph{constant query complexity}, i.e.~a query complexity that only depends on the error parameter $\epsilon$ and is completely independent of the ambient dimension $n$. This, for example, is the case with $\gf{2}$-linearity testing~\cite{BLR93}, low-degree testing~\cite{gemmell1991self}, junta testing~\cite{fischer2004testing}, and other problems. Indeed, there are grand conjectures (and partial results towards them) which seek to characterize all such properties $\mathcal{P}$ which can be tested with a constant number of queries to a black-box oracle (see e.g.~\cite{KaufmanSudan:08, bhattacharyya2013testing, BhattacharyyaGS15}). 

In this spirit, we explore the question of whether (and when), given noisy labeled samples of the form $(\bx, \by)$ where $\by = w\cdot \bx + \ion$, we can test $k$-sparsity of $w$ with a number of samples that only depends on $k$ and $\eps$, and is independent of $n$. Before describing our precise model, we point out an important difference between our model and much work on property testing.   
In the standard model of property testing of functions described above, it is usually assumed that the algorithm can make black-box queries to the unknown function; in contrast, in our model, the algorithm only has ``passive'' access to random samples.  Obtaining dimension-independent guarantees when given only sample access can be quite challenging; for example, the sample complexity of testing $\gf{2}$ linearity in this model is $\Theta(n)$ samples~\cite{goldreich2016sample} whereas as stated above only $O(1/\eps)$ queries are required by the \cite{BLR93} result.
We refer the reader to~\cite{chen2017sample, kearns2000testing, goldreich2016sample} for some property testing results in the ``sample-based'' model.

\subsection{The problem we consider}   \label{sec:model}

In order to describe the algorithmic problem that we consider in more detail, let us define the notion of \emph{distance to $k$-sparsity}. Given a nonzero vector $w \in \R^n$, we define its distance to $k$-sparsity to be
\begin{equation}~\label{def:distance}
\dist(w,k\text{-sparse}):= \min_{w' \in \R^n: \ w' \text{~is $k$-sparse}} {\frac {\|w - w'\|_2}{\|w\|_2}};
\end{equation}
this is equivalent to the fraction of the $2$-norm of $w$ that comes from the coordinates that are not among the $k$ largest-magnitude ones.
Note that when $w$ is a unit vector, then $\dist(w,k\text{-sparse})$ is the same as the $\ell_2$ distance between $w$ and the closest $k$-sparse vector.

\paragraph{Basic model:} We are now ready to describe our model. We are given access to independent labeled examples of the form $(\bx,\by)$ where $\bx \in \R^n$ and $\by \in \R.$  In each such labeled example $\bx$ is drawn from some distribution ${\cal D}$ over $\R^n$ and the label value $\by$ is a noise-corrupted version of $w \cdot \bx$ for some unknown target vector $w \in \R^n$.  In particular, $\by = w \cdot \bx + \ion$, 
where $\ion$ is drawn from some noise distribution (which is independent of $\bx$).
The goal is to distinguish between the following two cases: (i) $w$ is a $k$-sparse vector (meaning that it has at most $k$ nonzero coordinates), versus (ii) $w$ is \emph{$\eps$-far} from being $k$-sparse (meaning that $\dist(w,k\text{-sparse}) \geq \eps$).  Thus, we are considering a promise problem, or equivalently any output is okay in the intermediate case in which $w$ is not $k$-sparse but is $\eps$-close to being $k$-sparse.   
We refer to this problem as \emph{(non-robust) $k$-sparsity testing}. 

Our algorithms will in fact solve a robust version of this problem:  in the same model as above, for any given $\epsilon>0$, our algorithms will in fact approximate the value of $\dist(w,k\text{-sparse})$ to within an additive $\pm \epsilon$. We refer to this problem as \emph{noise tolerant $k$-sparsity testing} (see Parnas, Ron and Rubinfeld~\cite{PRR06}); it is immediate that any algorithm for this noise-tolerant version  immediately implies an algorithm with the same complexity for non-robust $k$-sparsity testing. In fact, while our main algorithmic result is for the noise tolerant problem, our lower bounds (which we describe later) are for the non-robust version (which \emph{a fortiori} makes them applicable to the noise-tolerant version). 

\paragraph{Our desideratum: constant-sample testability.} \label{remark:fixedk-noqueries}
As is the case for similar-in-spirit property testing problems such as $k$-junta testing \cite{FKRSS03,Blaisstoc09,Bshouty19}, we view $k$ as a parameter which is fixed relative to $n$, and our main goal is to obtain a \emph{constant-sample} tester, i.e.~ a testing algorithm for which the number of samples used is $O_{k,\eps}(1)$ completely independent of $n$.  As stressed earlier (unlike the junta testing problem or many other problems studied in Boolean function property testing), our testing algorithms are \emph{not} allowed to ``actively'' make queries --- their only source of information about $w$ is access to the i.i.d.~samples $(\bx,\by)$ that are generated as described above.

\subsection{Our algorithmic results}

Informally speaking, our main positive result says that for a broad class of input distributions ${\cal D}$, if the parameters of the noise are provided then there is a testing algorithm with $O_{k,\eps}(1)$ sample complexity independent of $n$.
Here is a qualitative statement of our main result  (\Cref{thm:general_alg} gives a more precise version). We start with a description of the algorithmic guarantee for non-robust $k$-sparsity testing.

\begin{theorem} [Qualitative statement of main result]
\label{thm:main-informal}
Fix any random variable $\bX$ over $\R$ which has variance 1, finite moments of every order\footnote{It will be clear from our proofs that having finite moments of all orders is a stronger condition than our algorithm actually requires; we state this stronger condition here for simplicity of exposition.},
 and is not Gaussian (i.e.~its total variation distance from every Gaussian is nonzero).  For any $n$, let ${\cal D}$ be the product distribution over $\R^n$ whose marginals are each distributed according to $\bX$, i.e.~${\cal D} \equiv \bX^n$. Let $\ion$ be a random variable corresponding to a noise distribution over $\R$ which is such that all its moments are finite. \ignore{For $w \in \R^n$ let ${\cal P}_w$ be the distribution over $\R^n \times \R$ defined as follows:  a draw of $(\bx,\by)$ from ${\cal P}_w$ is obtained by drawing $\bx \sim {\cal D}$, independently drawing $\bz \sim \Noise$, and setting $\by \leftarrow w \cdot \bx + \bz.$}

Then there is an algorithm (depending on ${\cal D}$ and $\ion$) with the following property: for any $w \in \R^n$ with\footnote{It may be helpful to think of $C$ as being a large absolute constant, but we establish our results for general $C$.} 
 $1/C \leq \|w\|_2 \leq C$,  given $k, \eps$,  and access to independent samples $(\bx, \by = w \cdot \bx + \boldeta)$ where each $\bx \sim {\cal D}$,\ignore{from ${\cal P}_w$,}
\begin{itemize}
\item if $w$ is $k$-sparse then with probability $9/10$ the algorithm outputs ``$k$-sparse''; and
\item if $w$ is $\eps$-far from $k$-sparse then with probability $9/10$ the algorithm outputs ``far from $k$-sparse.''
\end{itemize}

The number $m$ of samples used by the algorithm depends only on $C, k,\eps, \bX$ and $\boldeta$; in particular, it is independent of $n$.  
We will refer to such an algorithm as an \emph{$\eps$-tester for $k$-sparsity under ${\cal D}$ and $\boldeta$ with sample complexity $m$.}
\end{theorem}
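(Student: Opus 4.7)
The plan is to exploit the additivity and homogeneity of cumulants under independent sums. Because $\by = \sum_{i=1}^n w_i \bx_i + \ion$ with i.i.d.\ $\bx_i \sim \bX$ independent of $\ion$,
\[ \kappa_j(\by) \;=\; \kappa_j(\bX)\cdot p_j(w) + \kappa_j(\ion), \qquad p_j(w):=\sum_{i=1}^n w_i^j. \]
Since $\bX$ and $\ion$ have finite moments of every order, $\kappa_j(\by)$ is estimable from empirical moments of $\by$ to arbitrary accuracy using $O_j(1)$ samples, and the constants $\kappa_j(\bX),\kappa_j(\ion)$ are known to the algorithm. Hence at every index $j$ with $\kappa_j(\bX)\neq 0$ we obtain an accurate estimate of the power sum symmetric polynomial $p_j(w)$.

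I would next reduce the testing question to a decision about the finite sequence $\{p_j(w)\}$. If $w$ is $k$-sparse, with nonzero entries taking distinct values $v_1,\dots,v_r$ ($r\le k$) with integer multiplicities $n_1,\dots,n_r$ summing to at most $k$, then $p_j(w) = \sum_{s=1}^r n_s v_s^j$, so the Hankel matrix $H_{a,b} := p_{a+b}(w)$ has rank at most $k$, and Prony's method recovers the pairs $(n_s,v_s)$ from $p_1,\dots,p_{O(k)}$. The key lemma to prove is a robust converse: whenever $\dist(w,k\text{-sparse}) \ge \eps$ (in the normalized sense of \eqref{def:distance}), some $O_{k,\eps}(1)$-size truncation of $H$ has its $(k{+}1)$-th singular value bounded below by a quantitative function of $\eps$ and $C$. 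The tester then forms this matrix from its estimates of $p_j(w)$ and thresholds on that singular value, equivalently on the residual of the best $k$-term exponential fit.

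The central obstacle is making the cumulant step effective: how many indices $j$ do we need, and how small can $|\kappa_j(\bX)|$ be at the indices we use? Qualitatively, Marcinkiewicz's theorem guarantees that a non-Gaussian $\bX$ has infinitely many nonzero cumulants; its classical proof proceeds by complex analysis on the characteristic function $\mathbb{E}[e^{it\bX}]$, showing that $\log \mathbb{E}[e^{it\bX}]$ cannot be a polynomial of degree greater than two. For an effective sample complexity we need a \emph{finitary} version of this statement: for any non-Gaussian $\bX$ there exist $d(\bX)$ and $\gamma(\bX) > 0$ with $|\kappa_j(\bX)| \ge \gamma(\bX)$ for some $j \le d(\bX)$. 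I would prove this by adapting the classical complex-analytic argument, combining it with quantitative results on the distribution of zeros of entire functions of finite order to turn the qualitative rigidity statement into a stability bound depending only on a measure of the non-Gaussianity of $\bX$.

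To finish, the finitary Marcinkiewicz bound fixes a constant-size set $J$ of indices at which $|\kappa_j(\bX)|$ is non-negligible; the algorithm estimates $p_j(w)$ for $j\in J$ to the precision demanded by the Hankel/Prony robustness analysis above, using a sample count depending only on $\bX,\ion,C,k,\eps$ and not on $n$, and then runs the rank/singular-value test. I expect the finitary Marcinkiewicz theorem to be the real technical heart of the argument: the qualitative statement follows from an essentially rigid theorem in complex analysis, and quantifying it requires a genuinely new argument about zero sets of entire characteristic functions.
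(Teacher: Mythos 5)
Your cumulant-to-power-sum reduction and the role of a finitary Marcinkiewicz theorem are exactly right, and you have correctly identified the latter as the technical heart. But the Hankel/Prony step has a gap that the paper works hard to avoid, and it is worth understanding why.

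The Hankel matrix $H_{a,b}=p_{a+b}(w)$ (or its symmetric-polynomial/Newton's-identities equivalent) requires access to the power sums $p_j(w)$ at \emph{consecutive} indices $j=1,\dots,2k$ (or an arithmetic progression after a change of variables). But you can only estimate $p_j(w)$ at indices $j$ where $\kappa_j(\bX)\neq 0$, and Marcinkiewicz's theorem --- even in the effective form proved in \Cref{sec:gap} --- only guarantees that \emph{some} nonzero cumulant exists in each window $(\ell,\ell']$ with $\ell'$ possibly doubly exponential in $\ell$. There is no control over which index inside the window it lands on, and certainly no guarantee that the non-negligible cumulants form an arithmetic progression or contain one of length $2k$. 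So the set $J$ you would feed into the Hankel/Prony robustness lemma may have no useful structure, and the matrix you want to form cannot be formed. (If you additionally assume that $\kappa_2,\dots,\kappa_{2k+2}$ are \emph{all} nonzero --- a much stronger hypothesis --- then your plan does work and is essentially \Cref{thm:test_sym_poly} in \Cref{sec:symmetric}, which proceeds via Newton's identities and elementary symmetric polynomials, giving a better $(k/\eps)^{O(k)}$ sample complexity. But this is a special case, not the general non-Gaussian statement being proved.)

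The paper's general algorithm is designed precisely around the scattered-index constraint, and uses a qualitatively different mechanism: rather than fitting a $k$-term exponential, it exploits the fact that $(\sum_i w_i^\ell)^{1/\ell} \to \|w\|_\infty$ as $\ell\to\infty$. Concretely (see the algorithm and \Cref{lem:dist_infty_norm} in \Cref{sec:proof-of-main-informal}), one picks a rapidly decreasing sequence of orders $\ell_1 > \ell_2 > \cdots > \ell_k$ at which the cumulants are nonzero --- these need not be consecutive and need not have any arithmetic structure --- uses the (very high) order $\ell_1$ to estimate the largest coordinate magnitude $|w_{(1)}|$, subtracts its contribution, uses $\ell_2$ to estimate $|w_{(2)}|$, and so on, then compares $\sum_{j\le k}\wt w_j^2$ to the estimated $\|w\|_2^2$. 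Each stage only needs a \emph{single} nonzero cumulant of sufficiently high order, which is exactly what Marcinkiewicz delivers. The price is a tower-type dependence on $k$ in the sample complexity, whereas your Prony plan would have been polynomial in the cumulant magnitudes --- but as written it does not apply in the generality claimed.
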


\paragraph{Tolerant testing:}  \label{remark:tolerant}
As mentioned earlier, our algorithmic guarantees are in fact, much stronger. Namely, under the same conditions on ${\cal D}$ and $\boldeta$ as above, 
the algorithm in Theorem~\ref{thm:main-informal}, with high probability, in fact computes $\dist(w,k\text{-sparse})$ to an additive $\pm \eps$. Thus for  ${\cal D}$ and $\boldeta$ as above, this shows that noise tolerant $k$-sparsity testing can be done with a constant number of samples.

\begin{remark} [Explicit bounds and sharper quantitative bounds for ``benign'' distributions] \label{remark:quantitative-preview}
\Cref{thm:main-informal} shows that for every non-Gaussian random variable $\bX$ the corresponding testing problem has a constant-sample algorithm, but it does not give a uniform upper bound on sample complexity that holds for all non-Gaussian distributions.  (Indeed, no such uniform upper bound on sample complexity can exist; see  \Cref{rem:refinement} and \Cref{sec:refinement}, specifically \Cref{thm:refinement}, for an elaboration of this point.)  However, if the background random variable $\bX$ is supported on a bounded set, say $[-\bdd, \bdd]$, then it is in fact possible to get an explicit uniform upper bound on the sample complexity (which is a tower of height $O(k)$). We do this by proving a new finitary version of a theorem due to J.~Marcinkiewicz  \cite{Marcinkiewicz39} from probability theory.  This involves extending the complex analytic arguments used in the original proof; prior to this work, to the best of our knowledge no finitary analogue of the Marcinkiewicz theorem was known \cite{Neeman-comm, Bryc-comm, Janson-comm}. We give this proof in \Cref{sec:gap}.

 Going beyond \Cref{thm:main-informal}, we show that for a large class of ``benign'' distributions (which includes the uniform distribution over $[0,1]$, any product distribution over $\bits$, and many others), a different and simpler algorithm  provides a uniform upper bound on sample complexity, which is roughly $(k/\eps)^{O(k)}$. (See \Cref{sec:symmetric} for a detailed statement and proof of this result.) 
\end{remark}

\subsection{Lower bounds:  Qualitative optimality of our algorithmic results} \label{sec:discussion}
\subsubsection{On the role of noise and its independence of the data points.} We begin by addressing the role of noise in our model. Without noise corrupting the labels, when the background random variable $\bX$ is continuous, even the recovery problem will admit a simple algorithm which uses only $k+1$ samples (see \Cref{sec:noise-is-necc} for an elaboration on this point). Thus, all of our positive results are for settings in which the labels are corrupted by noise. On the other hand, some of our lower bounds are for problem variants in which the labels are noise-free; this of course only makes the corresponding lower bounds stronger.

Secondly, our  model (described in \Cref{sec:model}) requires that the distribution of the noise $\ion$ is independent of the distribution of $\bx$. It is easy to see that if the noise process corrupting the label $\by$ of a labeled example $(\bx,\by)$ is allowed to depend on $\bx$, then it is possible for the noise to perfectly simulate $k$-sparsity when the target vector is far from $k$-sparse or vice versa. In this situation no algorithm, even with infinite sample complexity, can succeed in testing $k$-sparsity.  Thus throughout this work we assume that the noise $\ion$ in each labeled example is independent of the example $\bx$.

\subsubsection{Necessity of the conditions in our algorithmic result.}

There are three main requirements in the conditions of \Cref{thm:main-informal} which may give pause to the reader.  First, the distribution ${\cal D}$ must be an i.i.d.~product distribution: the $n$ coordinate marginal distributions are not only independent, they are identically distributed according to some single univariate random variable $\bX$. Second, certain parameters (various cumulants) of the noise distribution must be provided to the testing algorithm.  And finally, the underlying random variable $\bX$ is not allowed to be a Gaussian distribution.  

While these may seem like restrictive requirements, it turns out that each one is in fact \emph{necessary} for constant-sample testability. We give three different lower bounds which show, roughly speaking, that if any of these requirements is relaxed then finite-sample testability with no dependence on $n$ is information-theoretically impossible --- in fact, in each case the testing problem becomes essentially as difficult as the sparse recovery problem, requiring $\tilde{\Omega}(\log n)$\footnote{The notation ``$\tilde{\Omega}(\cdot)$'' hides factors polylogarithmic in its argument, so $\tilde{\Omega}(\log n)$ means $\Omega({\frac {\log n}{\poly(\log \log n)}}).$} samples.

Our first lower bound shows that even if ${\cal D}$ is allowed to be a product distribution in which half the coordinates are one simple integer-valued distribution (a Poisson distribution) and the other half are a different simple integer-valued distribution (a Poisson distribution with a different parameter), then at least $\Omega({\frac {\log n}{\log \log n}})$ samples may be required. This lower bound holds even if no noise is allowed. The proof is given in \Cref{sec:iid}:

\begin{theorem} [${\cal D}$ must be i.i.d.] \label{thm:iid}
Let ${\cal D}$ be the product distribution $(\Poi(1))^{n/2} \times
(\Poi(100))^{n/2}$.  Then even if there is no noise (i.e.~ the noise distribution $\boldeta$ is identically zero), any algorithm which is an $(\eps=0.99)$-tester for 1-sparsity under ${\cal D}$ must have sample complexity $m = \Omega({\frac {\log n}{\log \log n}}).$
\end{theorem}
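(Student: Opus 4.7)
My plan is to apply Le Cam's two-point method: exhibit priors $P_0$ on $1$-sparse $w$'s and $P_1$ on vectors that are $0.99$-far from $1$-sparse, and show the induced distributions on $m$ samples have total variation $o(1)$ when $m = o(\log n/\log\log n)$.

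\textbf{Construction.} Let $P_0$ draw $\boldsymbol{j}$ uniformly from the ``second half'' $\{n/2+1,\dots,n\}$ (the $\Poi(100)$ coordinates) and set $w = e_{\boldsymbol{j}}$. Let $P_1$ draw a uniformly random $100$-element subset $\boldsymbol{S}\subseteq\{1,\dots,n/2\}$ (the $\Poi(1)$ coordinates) and set $w = \sum_{i\in\boldsymbol{S}} e_i$. Every $w$ supported by $P_0$ is $1$-sparse with $\|w\|_2 = 1$, and every $w$ supported by $P_1$ has $\|w\|_2 = 10$ and $\dist(w,1\text{-sparse}) = \sqrt{99}/10 > 0.99$; both are valid yes/no instances for $C = 10$. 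The crucial identity is that under both priors the label $\by = w\cdot\bx$ has the same marginal distribution $\Poi(100)$: this is immediate for $P_0$, and for $P_1$ it follows from Poisson additivity, since a sum of $100$ independent $\Poi(1)$ variables is $\Poi(100)$-distributed.

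\textbf{Reduction.} Let $Q_0, Q_1$ denote the distributions on $m$ i.i.d.\ samples $((\bx^{(t)},\by^{(t)}))_{t=1}^m$ induced by $P_0, P_1$. Both share the same marginal on $X := (\bx^{(t)})_t$, so $d_{\mathrm{TV}}(Q_0,Q_1) \leq \mathbb{E}_X\, d_{\mathrm{TV}}(Q_0(\cdot\mid X), Q_1(\cdot\mid X))$. Conditional on $X$, $Q_0(\cdot\mid X)$ is uniform over the $n/2$ i.i.d.\ ``columns'' $(\bx_j^{(t)})_{t=1}^m$ for $j$ in the second half, each distributed as $\pi := \Poi(100)^m$; and $Q_1(\cdot\mid X)$ is uniform over the $\binom{n/2}{100}$ correlated ``$100$-sums'' $(\sum_{i\in S}\bx_i^{(t)})_{t=1}^m$, each marginally $\pi$. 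It therefore suffices to show $\mathbb{E}_X\, d_{\mathrm{TV}}(Q_i(\cdot\mid X),\pi) = o(1)$ for $i\in\{0,1\}$ and combine via the triangle inequality.

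\textbf{Truncation and second-moment bound.} A Chernoff tail bound shows that, except with probability $1/\poly(n)$, a $\pi$-sample lies in the typical set $T$ of integer vectors whose coordinates each lie in $[100\pm C\sqrt{\log n}]$, so $|T| = O(\sqrt{\log n})^m$. For $Q_0(\cdot\mid X)$ the columns are i.i.d., and the standard empirical-distribution variance estimate gives $\mathbb{E}_X\, d_{\mathrm{TV}}(Q_0(\cdot\mid X),\pi) \lesssim \sqrt{|T|/n}$. For $Q_1(\cdot\mid X)$ the sums are correlated, but only an $O(1/n)$ fraction of ordered subset-pairs $(S,S')$ intersect; writing $\sum_{i\in S}\bx_i = u+a$ and $\sum_{i\in S'}\bx_i = u+b$ with $u\sim\Poi(|S\cap S'|)^m$ independent of $a,b\sim\Poi(100-|S\cap S'|)^m$ lets us bound each pairwise covariance and obtain the matching estimate $\mathbb{E}_X\, d_{\mathrm{TV}}(Q_1(\cdot\mid X),\pi) \lesssim \sqrt{|T|/n}$. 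Both are $o(1)$ precisely when $(\sqrt{\log n})^m \ll n$, i.e.\ $m \leq c\log n/\log\log n$ for a small constant $c>0$, ruling out any $(\eps=0.99)$-tester with success probability $9/10$. The hardest step is the covariance analysis for $Q_1$: tracking the correlations across all overlap sizes $k\in\{1,\dots,99\}$ is the one place where the proof is not purely routine, though the Poisson decomposition keeps it tractable.
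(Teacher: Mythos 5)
Your proposal is correct in spirit and reaches the right bound, but it takes a slightly different — and in one place harder — route than the paper's. Both arguments pivot off the same ``middle'' distribution $\pi = \Poi(100)^m$ on labels, and both exploit Poisson additivity to observe that the label marginal is $\Poi(100)$ in both the yes and no worlds. The key structural divergence is in how you build the far-from-sparse prior $P_1$: you draw an \emph{unordered} $100$-subset $\boldsymbol{S}\subseteq[n/2]$ with no repeats, whereas the paper draws an \emph{ordered tuple} $(i_1,\dots,i_{100})\in[n/2]^{100}$ allowing repeats, i.e.\ takes the uniform distribution over the multiset of $(n/2)^{100}$ tuples. The paper's choice makes the conditional law of $\ol\by$ given $\ol{x}$ exactly the $100$-fold self-convolution of the empirical column distribution $\freq(\ol{x},\cdot)$, so once one knows $\|\freq(\ol{x},\cdot)-\Poi(1)^m(\cdot)\|_1\le\delta$ on a good event, the triangle-inequality argument gives $\dtv\le 100\delta$ with no covariance analysis at all. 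Your version, because distinct indices cannot repeat, does \emph{not} give a clean self-convolution, which is precisely why you are forced into the pairwise-covariance computation over subset pairs with overlap $\kappa\in\{1,\dots,99\}$. That computation is doable (the $O(1/n)$ collision probability and the $\Poi(\kappa)+\Poi(100-\kappa)$ decomposition give what you need), and your second-moment/chi-square route is a legitimate alternative to the paper's $\ell_1$-empirical-frequency bound, but it is the one step where your argument does genuinely more work. A small technical note: the paper truncates the $\Poi(1)$ \emph{columns} to $j_i < k$ with $k=\Theta(\log n/\log\log n)$ (exploiting the $1/k!$ tail), whereas you truncate the $\Poi(100)$ \emph{labels} to a $\pm O(\sqrt{\log n})$ window; both give $|T|=n^{o(1)}$ and hence $m=\Omega(\log n/\log\log n)$, so this difference is cosmetic. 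If you wanted to match the paper exactly, replacing your $\binom{n/2}{100}$ subsets with $(n/2)^{100}$ ordered tuples (with repeats) would eliminate the covariance step and reduce the far-from-sparse fraction only by a negligible $O(1/n)$.
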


Our second lower bound shows that even if only two ``known'' possibilities are allowed for the noise distribution, then for ${\cal D}=\bX^n$ where $\bX$ is a simple ``known'' integer-valued underlying univariate random variable, at least $\Omega({\frac {\log n}{\log \log n}})$ many samples may be required.  The proof is given in \Cref{ap:two-noise}:

\begin{theorem} [The noise distribution $\boldeta$ must be known]
\label{thm:two-noise}
Let ${\cal D}$ be the i.i.d.~product distribution ${\cal D}=(\Poi(1))^n$.  Suppose that the noise distribution $\boldeta$ is unknown to the testing algorithm but is promised to be either $\Poi(1)$ or $\Poi(100)$.  Then any $(\eps=0.99)$-tester for 1-sparsity under ${\cal D}$ and the unknown noise distribution $\boldeta \in \{\Poi(1),\Poi(100)\}$ must have sample complexity $m = \Omega({\frac {\log n}{\log \log n}}).$ 
\end{theorem}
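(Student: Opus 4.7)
My plan is to apply Le Cam's two-point method to a pair of carefully randomized ensembles, combined with a chi-squared computation that exploits the additivity of independent Poissons. In the YES ensemble $\mathcal{Y}$, draw $i$ uniformly from $[n]$, set $w = e_i$, and use noise $\ion \sim \Poi(100)$; in the NO ensemble $\mathcal{N}$, draw a uniform random subset $S \subseteq [n]$ with $|S|=100$, set $w = \mathbf{1}_S$ (the $0/1$ indicator), and use noise $\ion \sim \Poi(1)$. Under $\mathcal{Y}$, $w$ is trivially $1$-sparse; under $\mathcal{N}$, for any $1$-sparse vector $c\,e_j$ the normalized distance is at least $\sqrt{99}/\sqrt{100}>0.99$, so $w$ is $0.99$-far from $1$-sparse. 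The key point is that by Poisson additivity, $\by = w\cdot \bx + \ion \sim \Poi(101)$ marginally under both ensembles, so no algorithm can distinguish them by examining the marginal of $\by$ alone.

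\smallskip

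Let $Q$ denote the null where $\bx \sim (\Poi(1))^n$ and, independently, $\by \sim \Poi(101)$. By the triangle inequality and Le Cam's lemma it suffices to show that $\mathrm{TV}(\bar{P}_{\mathcal{Y}}^{\otimes m}, Q^{\otimes m})$ and $\mathrm{TV}(\bar{P}_{\mathcal{N}}^{\otimes m}, Q^{\otimes m})$ are both $o(1)$ for $m = o(\log n/\log\log n)$. I will use $\mathrm{TV}^2 \leq \tfrac14 \chi^2$ together with the standard identity
\[
1 + \chi^2\bigl(\bar P^{\otimes m}, Q^{\otimes m}\bigr) \;=\; \mathbb{E}_{w, w'}\Bigl[\bigl(\mathbb{E}_Q[L_w L_{w'}]\bigr)^m\Bigr],
\]
where $L_w = P_w/Q$ and $w, w'$ are drawn independently from the prior. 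For the YES case, $L_i$ depends only on $(x_i, y)$, and a short Poisson-convolution computation shows $\mathbb{E}_Q[L_i \mid y] = 1$, so the cross-terms satisfy $\mathbb{E}_Q[L_i L_{i'}] = 1$ for $i \neq i'$. A direct summation using the negative-binomial identity $\sum_{k\ge 0}\binom{z+k}{k} 101^{-k} = (101/100)^{z+1}$ yields $\mathbb{E}_Q[L_i^2] = 101/100$, so $\chi^2(\bar P_{\mathcal{Y}}^{\otimes m}, Q^{\otimes m}) = ((101/100)^m - 1)/n$, which is easily $o(1)$ in our regime.

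\smallskip

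The NO case is the technical heart. For $w = \mathbf{1}_S$, the ratio $L_S$ depends only on $(T_S, y)$ with $T_S = \sum_{j\in S} x_j \sim \Poi(100)$; decomposing $T_S = U + V$ and $T_{S'} = U + W$ with $U \sim \Poi(|S \cap S'|)$ and $V, W$ independent, an analogous Poisson computation should yield the clean closed form
\[
\mathbb{E}_Q[L_S L_{S'}] \;=\; \frac{101}{101 - |S \cap S'|}.
\]
To conclude, I must combine this with the hypergeometric tail $\Pr[|S \cap S'| = a] \leq (100^2/n)^a/a!$ for two uniform random $100$-subsets of $[n]$. The main obstacle is balancing the blow-up of $(101/(101-a))^m$ against the overlap decay: naively raising $f(a) = 101/(101-a)$ to the $m$-th power diverges at $a=100$, so I must truncate the sum at a carefully chosen level $a^\star$ and argue separately for the regimes $a \leq a^\star$ (handled by the Poisson approximation $(1+a/101)^m \approx e^{am/101}$ and an MGF bound for the overlap) and $a > a^\star$ (handled by crude bounds on $f$ and the superexponential decay $(100^2/n)^a/a!$). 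This balancing is exactly what produces the $\log\log n$ factor in the final lower bound $m = \Omega(\log n/\log\log n)$.
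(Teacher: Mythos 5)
Your proposal takes a genuinely different route from the paper. The paper proves \Cref{thm:two-noise} by a reduction to \Cref{thm:iid}: it exhibits a ``middle'' distribution $(\ba',\bb')$ with $\bb'\sim\Poi(3)$ independent of $\ba'$, shows both the YES and NO ensembles are close to it, and transfers the noiseless bound of \Cref{claim:noniid} to the noisy case using the data-processing inequality $\dtv(\bA+\bB,\bA'+\bB)\le\dtv(\bA,\bA')$ (\Cref{fact:dtv-inequality}). The core TV estimate in \Cref{claim:noniid2} is a direct argument about empirical column-frequencies of the sample matrix, and it is there that the $\log\log n$ loss is incurred (in \Cref{claim:egg}, from bounding the number of ``heavy'' frequency patterns by $k^m$). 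You instead use the Ingster--Suslina second-moment (\,$\chi^2$\,) method with randomized priors, computing $\E_Q[L_wL_{w'}]$ in closed form. Both are legitimate lower-bound techniques, but the payoff is substantively different.

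Two specific comments. First, the formula $\E_Q[L_S L_{S'}] = 101/(101-|S\cap S'|)$ that you flag as ``should yield'' is correct, and the clean way to see it is to integrate out the \emph{disjoint} parts of $S,S'$ before doing anything hard. Writing $T_S=U+\bV$, $T_{S'}=U+\bW$ with $U\sim\Poi(a)$, $\bV,\bW\sim\Poi(100-a)$ independent, and $p_\nu$ for the $\Poi(\nu)$ pmf, one has $\E_{\bV}[p_1(y-U-\bV)]=p_{101-a}(y-U)$ by Poisson convolution, and likewise for $\bW$; hence
\[
\E_Q[L_SL_{S'}] \;=\; \E_{U,\by}\!\left[\frac{p_{101-a}(\by-U)^2}{p_{101}(\by)^2}\right],
\]
which is exactly your YES-type $\E_Q[L^2]$ computation with ``signal'' rate $a$ and ``noise'' rate $101-a$, and your negative-binomial identity then yields $101/(101-a)$. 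So the ``technical heart'' is in fact routine once you collapse the disjoint coordinates.

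Second, your worry that the balancing ``produces the $\log\log n$ factor'' is overly pessimistic: the $\chi^2$ method actually gives the \emph{stronger} bound $m=\Omega(\log n)$. Using $\Pr[|S\cap S'|\ge a]\le \E\!\left[\binom{|S\cap S'|}{a}\right]=\frac{((100)_a)^2}{a!\,(n)_a}\le \frac{(100^2/(n-100))^a}{a!}$, the worst term in $\chi^2(\bar P_{\mathcal N}^{\otimes m},Q^{\otimes m})=\sum_{a=1}^{100}\Pr[A=a]\bigl(\tfrac{101}{101-a}\bigr)^m$ is $a=100$, contributing $\binom{n}{100}^{-1}\cdot 101^m$; this is $o(1)$ whenever $m\le c\log n$ for a small absolute constant $c$, and every other $a$ gives a weaker constraint. (The YES side is $\bigl((101/100)^m-1\bigr)/n$, which allows $m$ up to roughly $100\ln n$.) So no delicate truncation at an intermediate $a^\star$ is needed. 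In summary, your approach is correct, takes a different path, and in fact improves the quantitative bound over what the paper proves for this theorem; the one gap is the unverified cross-correlation identity, which, as above, has a short proof.
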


Finally, our third (and most technically involved) lower bound says that if the underlying univariate random variable $\bX$ is allowed to be a Gaussian, then even if the noise is Gaussian at least $\Omega(\log n)$ samples are required. The proof is given in \Cref{sec:gaussian-lb}:

\begin{theorem} [${\cal D}$ cannot be a Gaussian] \label{thm:gaussian-lb}
Let ${\cal D}$ be the standard $N(0,1)^n$ $n$-dimensional Gaussian distribution and let $\boldeta$ be distributed as $N(0,c^2)$ where $c>0$ is any constant. Then the sample complexity of any $(\eps=0.99)$-tester for $1$-sparsity under ${\cal D}$ and $\boldeta$ is $\Omega(\log n).$
\end{theorem}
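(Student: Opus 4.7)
The plan is to apply Le Cam's two-point method. I exhibit two priors on $w$: the \textsc{yes} prior $\pi_{\yes}$ that samples $w = \zeta \, e_I$ with $I$ uniform in $[n]$ and $\zeta$ uniform in $\{\pm 1\}$, and the \textsc{no} prior $\pi_{\no}$ that samples $w = \tfrac{1}{\sqrt K}\sum_{i \in S}\zeta_i e_i$ with $S$ a uniform size-$K$ subset of $[n]$ (for a sufficiently large constant, e.g., $K = 100$) and the $\zeta_i \in \{\pm 1\}$ independent signs. Both priors have $\|w\|_2 = 1$ and $\mathbb{E}[w] = 0$; every $w$ in the support of $\pi_{\yes}$ is $1$-sparse, while every $w$ in the support of $\pi_{\no}$ satisfies $\dist(w,1\text{-sparse}) = \sqrt{1 - 1/K} > 0.99$. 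The theorem will follow once I show that the laws $P_{\yes}^{(m)}, P_{\no}^{(m)}$ of $m$ i.i.d.\ samples are $o(1)$-close in total variation whenever $m = o(\log n)$.

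The heart of the argument is a clean closed form for the ``single-sample overlap.''  For unit $w$, a single sample is distributed as the centered Gaussian $\mathcal{G}_w = N(0, \Sigma_w)$ on $\mathbb{R}^{n+1}$ with $\Sigma_w = \bigl(\begin{smallmatrix} I_n & w \\ w^T & 1+c^2 \end{smallmatrix}\bigr)$. I work against the reference Gaussian $\bar{\mathcal{G}} = N(0, \bar\Sigma)$ with $\bar\Sigma = \mathrm{blockdiag}(I_n, 1+c^2)$, which coincides with the first- and second-moment structure of both mixtures. Applying the standard Gaussian identity $\int \mathcal{G}_w \mathcal{G}_{w'}/\bar{\mathcal{G}} = \det(\bar\Sigma)^{1/2} \det(\Sigma_w)^{-1/2} \det(\Sigma_{w'})^{-1/2} \det(\Sigma_w^{-1} + \Sigma_{w'}^{-1} - \bar\Sigma^{-1})^{-1/2}$ and reducing via block-matrix manipulations (a rank-$2$ Sherman--Morrison--Woodbury update on the top-left $n \times n$ block, followed by a Schur complement on the bottom-right), the resulting determinants simplify to
\[
\int \frac{\mathcal{G}_w(z)\, \mathcal{G}_{w'}(z)}{\bar{\mathcal{G}}(z)}\, dz \;=\; \frac{1 + c^2}{1 + c^2 - \langle w, w'\rangle}
\]
for every pair of unit vectors $w, w'$. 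The contribution from $m$ i.i.d.\ samples is the $m$-th power.

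By the standard mixture identity $\chi^2\!\bigl(\mathbb{E}_\pi[\mathcal{G}_w^{\otimes m}] \,\big\|\, \bar{\mathcal{G}}^{\otimes m}\bigr) + 1 = \mathbb{E}_{w, w' \sim \pi}[((1+c^2)/(1+c^2 - \langle w, w'\rangle))^m]$ and case analysis on the overlap, for $\pi_{\yes}$ one has $\langle w, w'\rangle = \pm 1$ each with probability $1/(2n)$ and $=0$ otherwise, giving $\chi^2(P_{\yes}^{(m)} \,\|\, \bar{\mathcal{G}}^{\otimes m}) \leq \tfrac{1}{2n}(1 + 1/c^2)^m$, which is $o(1)$ for $m = o(\log n/\log(1 + 1/c^2)) = o(\log n)$ when $c$ is constant. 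For $\pi_{\no}$, conditional on $t = |S \cap S'|$ (hypergeometric with mean $K^2/n$), the overlap $\langle w, w'\rangle = \tfrac{1}{K}\sum_{i \in S \cap S'} \zeta_i \zeta_i'$ is mean-zero sub-Gaussian with variance $t/K^2$; using the bound $(1 - x)^{-m} \leq \exp(mx + mx^2)$ valid on $|x| \leq 1/2$, together with a Cauchy--Schwarz split of the linear and quadratic exponentials, a sub-exponential MGF bound for the quadratic term, and the hypergeometric MGF bound $\mathbb{E}[e^{t\lambda}] \leq \exp((K^2/n)(e^\lambda - 1))$, one obtains $\chi^2(P_{\no}^{(m)} \,\|\, \bar{\mathcal{G}}^{\otimes m}) \leq \exp(O(m^2/(n\, c^4))) - 1 = o(1)$ for any $m = o(\sqrt n)$. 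Combining via the triangle inequality and $d_{TV} \leq \tfrac{1}{2}\sqrt{\chi^2}$ yields $d_{TV}(P_{\yes}^{(m)}, P_{\no}^{(m)}) = o(1)$ whenever $m = o(\log n)$, and the $\Omega(\log n)$ lower bound follows from Le Cam's lemma.

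The main technical obstacle will be the closed-form evaluation of the overlap integral: the cancellations that produce the clean $(1+c^2)/(1+c^2-\langle w,w'\rangle)$ formula are not obvious a priori and require careful bookkeeping of the matrix-determinant lemma and Schur complements. A secondary issue is the atypical event $t \geq K/2$ in the $\pi_{\no}$ analysis, where the Taylor-based MGF bound degrades; this is handled by a crude pointwise bound $(1-x)^{-m} \leq e^{m/c^2}$ combined with the observation that $\Pr[t \geq K/2] \leq (O(K)/n)^{K/2}$ is negligibly small once $K$ is chosen large enough relative to $c$.
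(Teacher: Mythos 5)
Your proposal is correct and follows a genuinely different technical route from the paper's proof, so a comparison is in order. The paper also sets up a distinguishing problem between a ``1-sparse'' distribution (hidden coordinate $\bi$, $\by^{(j)} = \bx^{(j)}_\bi + N(0,c^2)$) and a ``null'' distribution ($\by$ independent of $\bx$, drawn from $N(0,1+c^2)$), and proves the needed indistinguishability lemma in that form before lifting it to $k$-sparsity via a block-averaging trick. But the paper's core technical step is a \emph{likelihood-ratio analysis}: it writes down the Bayes-optimal test $\{{\cal S}_{\no} > {\cal S}_{\yes}\}$, expresses the posterior ratio as an average $\frac{1}{n}\sum_\ell R((\bx^{(1)}_\ell,\dots,\bx^{(t)}_\ell),\ol{y})$ of i.i.d.\ terms, computes its mean and variance by explicit Gaussian integration (Lemma~\ref{lem:expect1}, Claim~\ref{clm:bound_moments}), and then invokes the Berry--Esseen theorem to show the likelihood-ratio statistic is approximately Gaussian and concentrates symmetrically around its threshold, so the optimal test has vanishing advantage (Claim~\ref{clm:adv_random_guess}). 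You instead use the \emph{second-moment (chi-squared) method} of Ingster--Suslina: you derive the clean closed form $\int \mathcal{G}_w\mathcal{G}_{w'}/\bar{\mathcal{G}} = (1+c^2)/(1+c^2-\langle w,w'\rangle)$ (I verified this determinant computation; reducing to the rank-2 subspace spanned by $w,w'$ gives a $3\times 3$ determinant that indeed collapses to this expression), and bound $\chi^2(P^{(m)}\|\bar{\mathcal{G}}^{\otimes m}) + 1 = \E_{w,w'}[((1+c^2)/(1+c^2-\langle w,w'\rangle))^m]$ by case analysis on the overlap. Your approach is tighter in structure and avoids the CLT entirely; once the overlap formula is in hand, the yes-prior calculation is a one-liner. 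Your no-prior bound is over-engineered --- a crude union bound $\Pr[|S\cap S'| \geq 1] \leq K^2/n$ plus the worst-case ratio $(1+1/c^2)^m$ already gives $\chi^2 \leq \frac{K^2}{n}(1+1/c^2)^m = o(1)$ for $m = o(\log n)$, so the sub-Gaussian MGF analysis and the $o(\sqrt{n})$ claim buy nothing since the yes-prior is the bottleneck --- but the logic is sound. One bookkeeping point you should make explicit: the Gaussian overlap identity requires $\Sigma_w^{-1} + \Sigma_{w'}^{-1} - \bar{\Sigma}^{-1} \succ 0$; this follows by writing the expression as $(\Sigma_w^{-1} - \tfrac12\bar{\Sigma}^{-1}) + (\Sigma_{w'}^{-1} - \tfrac12\bar{\Sigma}^{-1})$ and verifying each summand is PD from $2\bar\Sigma - \Sigma_w \succ 0$ (Schur complement equals $c^2 > 0$). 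Overall, the chi-squared route is arguably cleaner, at the cost of the (slightly tedious) overlap computation; the paper's Berry--Esseen route requires more work but naturally generalizes to the quantitative refinement in Theorem~\ref{thm:refinement}, where the paper transfers the Gaussian lower bound to near-Gaussian mixtures by a coupling argument applied to the explicit priors ${\cal W}_{\yes}, {\cal W}_{\no}$.
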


\subsection{Related work}

We view this paper as lying at the confluence of several strands of research in theoretical computer science. As mentioned earlier, a strong motivation for our algorithmic desiderata comes from property testing. In particular, our $k$-sparsity testing question is in some sense akin to the well-studied problem of \emph{junta testing}, i.e., distinguishing between functions $f: \{\pm 1\}^n \rightarrow \{\pm 1\}$ which depend on at most $k$ coordinates versus those which are $\epsilon$-far from every such function.
There is a very rich line of work on junta testing, see e.g.~\cite{FKR+:04, Blaisstoc09, Blais10survey, Bshouty19, blais2019tolerant, ChocklerGutfreund:04} and other works. However, we note that all these papers (and other junta testing papers of which we are aware) assume query access to the unknown function $f$, whereas in our work we only assume a much weaker form of access, namely noisy labeled random samples.

A second strand of work is from compressive sensing.  Here the results of \cite{CRT06} and related works such as  \cite{candes2008restricted, candes2007sparsity} (as well as many other papers) give computationally efficient algorithms to (approximately) recover a sparse vector $w$ given labeled samples of the form $\{(\bx^{(i)}, w \cdot \bx^{(i)} + \noi)\}_{i=1}^T$ with sample complexity $T=O(k \log n)$. On one hand, such a sample complexity does not meet our core algorithmic desideratum of being independent of $n$. On the other hand, the algorithmic guarantee in \cite{CRT06} holds as long as the matrix formed by $ \bx^{(1)}, \ldots, \bx^{(T)}$ satisfies the so-called \emph{restricted isometry property} (see~\cite{candes2006compressive} for more details), which is a significantly more general condition than ours.  It is natural to wonder if an analogue of \Cref{thm:main-informal} can be obtained if ${\cal D}$ satisfies the weaker condition of being such that randomly drawn samples from ${\cal D}$ satisfy the restricted isometry property with high probability.  The answer to this question is negative; in particular, \Cref{thm:iid} gives an example of a distribution ${\cal D}$ for which $\tilde{\Omega}(\log n)$ samples are necessary for testing $k$-sparsity, but it is easy to show that randomly drawn samples from this distribution satisfies the restricted isometry property with high probability. 

Finally, another related line of work is given by Kong and Valiant~\cite{kong2018estimating}, who considered a setting in which an algorithm gets labeled samples of the form $(\by, w \cdot \bx + \ion)$, where $\ion$ is an \emph{unknown} distribution independent of $\bx$ and $w$ is a general (non-sparse) $n$-dimensional vector. The task of the algorithm is to estimate the variance of $\ion$ or equivalently, $\Vert w \Vert_2$; they view such a result as \emph{estimating how much of the data, i.e., $\by$, is explained by the linear part $w \cdot \bx$.} While learning $w$ itself requires $\Theta(n)$ samples (essentially the same as linear regression), their main result is that $\Vert w \Vert_2$ can be estimated with a sublinear number of samples. In particular, if the distribution of $\bx$ is isotropic, then the sample complexity required for this is only $O(\sqrt{n})$. In light of \Cref{thm:main-informal} and the results of \cite{kong2018estimating}, it is natural to ask whether there is a non-trivial estimator for noise in our setting when the target vector $w$ is assumed to be $k$-sparse. However, 
\Cref{thm:two-noise} essentially answers this in the negative, showing that if the magnitude of the noise is unknown, then any estimator must require  $\tilde{\Omega}(\log n)$ samples even for $1$-linearity testing. On the other hand,  $O(\log n)$ samples suffice for recovering the target $w$ (and hence the magnitude of the noise) when $k$ is a constant.


\section{Our techniques and a more detailed overview of our results}
\label{sec:techniques}

\subsection{Our algorithmic techniques:  analysis based on cumulants}

Both of our algorithms for testing sparsity make essential use of the \emph{cumulants} of the one-dimensional coordinate marginal random variable $\bX$. For any integer $\ell \geq 0$ and any real random variable $\bX$, the $\ell$-th cumulant of $\bX$, denoted $\cum_\ell(\bX)$, is defined in terms of the first $\ell$ moments of $\bX$, and, like the moments of $\bX$, it can be estimated using independent draws from $\bX$ (see \Cref{def:cumulants} for a formal definition of cumulants.) However, cumulants enjoy a number of attractive properties which are not shared by moments and which are crucial for our analysis.  

There are two key properties, both very simple.  First, cumulants are \emph{additive} for independent random variables:
\[
\text{If~$\bX,\bY$ are independent, then~}\cum_\ell(\bX+\bY) =
\cum_\ell(\bX) + \cum_\ell(\bY).
\]
Second, cumulants are \emph{homogeneous}:
\[
 \text{For all~$c \in \R$, it holds that~}\cum_\ell(c\bX) = c^\ell \cdot \cum_\ell(\bX).
\]

We now explain the key idea of why additivity and homogeneity of cumulants are useful for the algorithmic problem we consider.  These properties directly imply that if a distribution ${\cal D}$ over $\R^n$ has coordinate marginals that are i.i.d.~according to a random variable $\bX$, then for $\bx \sim {\cal D}$ and $\by = w \cdot \bx + \ion$, we have that
\[
\cum_\ell(\by) = \cum_\ell(\ion) +  \cum_\ell(\bX) \cdot \sum_{i=1}^n w_i^\ell.
\]
It follows that if the $\ell$-th cumulants of $\ion$ and of $\bX$ are known and the $\ell$-th cumulant $\cum_\ell(\bX)$ of $\bX$ is not too small, then from an estimate of $\cum_\ell(\by)$ (which can be obtained from samples) it is possible to obtain an estimate of the power sum $\sum_{i=1}^n w_i^\ell$.  By doing this for $k$ suitable different (even) values of $\ell$, provided that the cumulants $\cum_\ell(\bX)$ are not too small, it is possible to estimate the magnitudes of the $k$ largest-magnitude coordinates of $w$.  These estimates can be shown to yield the desired information about whether or not $w$ is (close to) $k$-sparse.

The argument sketched in the previous paragraph explains, at least at an intuitive level, why it is possible to test for $k$-sparsity \emph{if the random variable $\bX$ has $k$ nonzero cumulants}\footnote{(of even order, though as we will see, a simple trick lets us sidestep this requirement)}.  But why will every non-Gaussian random variable $\bX$ (as described in \Cref{thm:main-informal}) satisfy this property, and why does \Cref{thm:main-informal} exclude Gaussian distributions?  The second of these questions has a very simple answer so we address it first:  it is well known that for any normal distribution $\bX \sim N(\mu,\sigma^2)$, the first two cumulants are $\cum_1(\bX)=\mu$, $\cum_2(\bX)=\sigma^2$, and all other cumulants are zero. It follows that indeed our algorithmic approach cannot be carried out for normal distributions.\footnote{Recall that by our lower bound \Cref{thm:gaussian-lb}, this is not a failing of our particular algorithm sketched above but an inherent difficulty in the testing problem. \Cref{thm:gaussian-lb} shows that no algorithm can test $k$-sparsity with a sample complexity that is $o(\log n)$ when the underlying distribution is normal.}  The answer to the first question comes from a deep result in probability theory due to J.~Marcinkiewicz:

\begin{theorem} [Marcinkiewicz's theorem \cite{Marcinkiewicz39,Marcink,bryc}] \label{thm:Marcink_theorem}
If $\bX$ is a random variable that has a finite number of nonzero cumulants, then $\bX$ must be a normal random variable (and $\bX$ has at most two nonzero cumulants).
\end{theorem}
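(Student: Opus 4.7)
The plan is to work with the characteristic function $\phi(t) := E[e^{it\bX}]$ and its logarithm $K(t) := \log\phi(t)$, which is holomorphic near the origin with Taylor expansion $K(t) = \sum_{\ell\geq 1}\cum_\ell(\bX)(it)^\ell/\ell!$. The hypothesis that $\bX$ has only finitely many nonzero cumulants says precisely that $K$ truncates to a polynomial $P$ of degree $d := \max\{\ell : \cum_\ell(\bX)\neq 0\}$, so $\phi = e^P$ in a neighborhood of $0$. Since $e^{P(z)}$ is entire, this identity propagates to all of $\mathbb{C}$ by analytic continuation, making $\phi$ an entire function. The goal is to show $d \leq 2$; once established, $P(t) = i\cum_1 t - \cum_2 t^2/2$ is the exponent in the characteristic function of $N(\cum_1,\cum_2)$, and uniqueness of characteristic functions forces $\bX \sim N(\cum_1,\cum_2)$.

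The key analytic input is the elementary bound $|E[Z]|\leq E[|Z|]$. Since $|e^{iz\bX}| = e^{-\mathrm{Im}(z)\bX}$ and $\mathrm{Im}(re^{i\theta}) = r\sin\theta > 0$ for $\theta \in (0,\pi)$,
\[
|\phi(re^{i\theta})| \leq E\bigl[e^{-r\sin\theta\cdot\bX}\bigr] = \phi(ir\sin\theta).
\]
One first argues that $\phi(iy) = E[e^{-y\bX}]$ is finite for all real $y$: standard Cauchy-integral bounds on the Taylor coefficients of the entire function $e^P$ of order $d$ give moment bounds $|E[\bX^n]| \leq C^n\, n!^{1-1/d}$, which imply tails $\Pr[|\bX| \geq t] \leq \exp(-ct^{d/(d-1)})$ lighter than exponential, so the MGF exists on all of $\R$. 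Thus $\log\phi(iy) = P(iy)$ is a real polynomial in $y$ with leading term $(c_d i^d)y^d$, where $c_d = \cum_d\cdot i^d/d!$ is the leading coefficient of $P$. Taking logs, dividing by $r^d$, and sending $r \to \infty$ shows that the Phragm\'en indicator $h(\theta) := \limsup_{r\to\infty} r^{-d}\log|\phi(re^{i\theta})|$ of $\phi = e^P$, which equals $\mathrm{Re}(c_d e^{id\theta})$, satisfies
\[
\mathrm{Re}(c_d e^{id\theta}) \leq (c_d i^d)\sin^d\theta,\qquad \theta \in (0,\pi).
\]
Here $(c_d i^d) = (-1)^d \cum_d/d!$ is real, and the Hermitian symmetry $\phi(-t) = \overline{\phi(t)}$ forces $c_d$ to be real when $d$ is even and purely imaginary when $d$ is odd.

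To derive a contradiction from $d \geq 3$, split on parity. If $d$ is odd, write $c_d = i\beta$ with $\beta \in \R\setminus\{0\}$, so $h(\theta) = -\beta\sin(d\theta)$ oscillates in sign on $(0,\pi)$ (since $d\theta$ traverses more than one half-period of sine), while the right-hand side $-\cum_d \sin^d\theta/d!$ has a single sign on $(0,\pi)$; picking $\theta$ at which $h(\theta)$ and the RHS have opposite signs yields $h > \text{RHS}$, a contradiction. If $d$ is even, $c_d$ is real and $h(\theta) = c_d\cos(d\theta)$. Evaluating the indicator bound at $\theta\to 0^+$ (where $\cos d\theta \to 1$ and $\sin^d\theta \to 0$) forces $c_d \leq 0$, while evaluating at $\theta = \pi/d$ (where $\cos d\theta = -1$ and $\sin^d(\pi/d) \in (0,1)$) yields, after a short case analysis on $d \bmod 4$, that $c_d \geq 0$. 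Together $c_d = 0$, contradicting $\deg P = d$. Hence $d \leq 2$, completing the proof.

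\textbf{Main obstacle.} The heart of the argument is short, but the bulk of the work will be in verifying the entire-function prerequisites: extending $\phi$ to an entire function of order exactly $d$ from the bare cumulant hypothesis, justifying that the moment generating function is finite on all of $\R$ so that the displayed Jensen-type bound has content on the imaginary axis, and computing the Phragm\'en indicator of $e^P$. These are standard in the theory of entire functions but must be carefully assembled before the trigonometric endgame can be invoked.
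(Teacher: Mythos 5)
The paper does not prove this statement; it is cited from the literature (\cite{Marcinkiewicz39,Marcink,bryc}), and the paper's own new contribution (Theorem~\ref{thm:distance_non_zero_cumulants} in Section~\ref{sec:gap}) is a quantitative finitary variant proved by a different complex-analytic route (Jensen's formula, Hadamard factorization, the three-circle theorem). So your attempt is an independent reconstruction of the classical proof — and it is indeed the standard one, via the ridge property $|\phi(re^{i\theta})| \le \phi(ir\sin\theta)$ of analytic characteristic functions and the Phragm\'en–Lindel\"of indicator of the entire function $e^P$. The overall structure is correct, and you rightly flag the real work as establishing that $\phi$ is entire (light tails from moment bounds, so that the MGF exists on all of $\R$ and the ridge bound is meaningful off the real axis).

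There is, however, a genuine gap in the odd-degree case. Your stated inequality is $h(\theta) = -\beta\sin(d\theta) \le a\sin^d\theta$ on $(0,\pi)$, where $a = (-1)^d\cum_d/d!$ is real. You conclude by "picking $\theta$ at which $h$ and the RHS have opposite signs." But opposite signs only force a contradiction when $h(\theta) > 0 \ge \text{RHS}$; if the RHS is everywhere positive on $(0,\pi)$, then at points where $h < 0$ there is no contradiction, and at points where $h > 0$ one must still compare magnitudes. For $d$ odd the RHS has the single sign $\operatorname{sgn}(a) = -\operatorname{sgn}(\cum_d)$, so your argument as written only works when $\cum_d \ge 0$. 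A concrete failure: for $d = 3$ and $\cum_3 < 0$, the identity $\sin 3\theta = 3\sin\theta - 4\sin^3\theta$ shows the displayed inequality reduces to $a\sin\theta\cos^2\theta \ge 0$, which is satisfied for all $\theta\in(0,\pi)$ when $a > 0$ — no contradiction. The fix is small: either replace $\bX$ by $-\bX$, which flips $\cum_d$ for odd $d$ and lets you assume $\cum_d > 0$ without loss of generality, or invoke the ridge bound on the lower half-plane $\theta\in(-\pi,0)$, where the RHS $a\sin^d\theta$ has the opposite sign for odd $d$ and the oscillating $h$ again lands above it. Either repair closes the gap. The even case as you have it is correct (the endpoint $\theta\to 0^+$ plus $\theta = \pi/d$ does force $c_d = 0$ for $d \ge 4$, with $d=2$ consistently giving only $\cum_2 \ge 0$).
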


It follows that if $\bX$ is not a normal distribution, then it must have infinitely many nonzero cumulants, and hence the algorithmic approach sketched above can be made to work for testing $k$-sparsity under $\bX^n$. Details of the estimation procedure and of the analysis of the overall general algorithm are provided in \Cref{sec:est_cum_moment} and \Cref{sec:proof-of-main-informal} respectively.

\medskip

\subsection{A structural result on cumulants:  nonzero cumulants cannot be ``spaced too far apart''} \label{sec:structural-overview}

As described above, our main positive result on testing for sparsity under a product distribution $\bX^n$ uses a sequence of orders $i_1, i_2, \dots, i_k$ such that the corresponding cumulants $\cum_{i_j}(\bX)$ are all nonzero.  Since the running time of our algorithm depends directly on $i_k$, it is natural to ask how large is this value.  Recall that Marcinkiewicz's theorem ensures that for any non-Gaussian distribution there indeed must exist nonzero cumulants of infinitely many orders $i_1, i_2, \dots$, but it gives no information about how far apart these orders may need to be.  Thus we are motivated to investigate the following question:  given a real random variable $\bX$, how large can the gap in orders be between consecutive nonzero cumulants?  This is a natural question which, prior to our work, seems to have been completely unexplored.

In \Cref{sec:gap} we give a first result along these lines, by giving an explicit upper bound on the gap between nonzero cumulants for random variables with \emph{bounded support} (see \Cref{thm:distance_non_zero_cumulants}).  This theorem establishes that for any real random variable $\bX$ with unit variance and support bounded in $[-\bdd,\bdd]$, given any positive integer $\ell$ there must be a value $j \in [\ell+1,(4\bdd)^{O(\ell)}]$ such that the $j$-th cumulant $\kappa_j(\bX)$ has magnitude at least $\big| \kappa_j(\bX) \big| \ge 2^{-(4\bdd)^{O(\ell)}}$.  Like Marcinkiewicz's theorem, the proof of our \Cref{thm:distance_non_zero_cumulants} uses complex analytic arguments, specifically results about the {distribution of zeros of entire functions, the Hadamard factorization of entire functions and the Hadamard Three-Circle Theorem.}

\subsection{A more efficient algorithm for ``nice'' distributions}

In addition to the general positive result described above,\ignore{
The above gives an intuitive sketch of the reasoning behind \Cref{thm:main-informal}.  We} we also give a refined result, showing that a significantly better sample complexity can be achieved for distributions which are ``nice'' in the sense that they have $k+1$ consecutive even cumulants $\kappa_2, \kappa_4,\dots, \kappa_{2k+2}$ that are all (noticeably) nonzero.  This is achieved via a different algorithm; like the previously described general algorithm, it uses (estimates of) the power sums $\sum_{i=1}^n w_i^\ell$, but it uses these power sums in a different way, by exploiting some basic properties of symmetric polynomials.  The first $k+1$ power sums $\sum_{i=1}^n w_i^2,$ $\sum_{i=1}^n w_i^4$, $\dots$ are used to estimate the $(k+1)$-st elementary symmetric polynomial 
\begin{equation} \label{eq:elementary}
\sum_{1 \leq i_1 < i_2 < \cdots < i_{k+1} \leq n} w_{i_1} w_{i_2} \cdots w_{i_{k+1}}.
\end{equation}
The value of \eqref{eq:elementary} will clearly be zero if $w$ is $k$-sparse, and it can be shown that it will be ``noticeably far from nonzero'' if $w$ is far from $k$-sparse.  These ideas can be converted into a testing algorithm; see \Cref{sec:symmetric} for details.

\subsection{Our lower bounds and lower bound techniques}
\label{sec:techniques-lb}

The lower bounds of \Cref{thm:iid} and \Cref{thm:two-noise} both crucially exploit the well known additivity property of the Poisson distribution:  for $a,b > 0$, we have that $\Poi(a)+\Poi(b) = \Poi(a+b)$. 
To see why this  is useful for lower bounds, let us explain the high-level idea that underlies~\Cref{thm:iid}. For intuition, first imagine that rather than receiving pairs $(\bx,\by) \in \R^n \times \R$, instead the testing algorithm is only given the output value $\by$ from each pair.  Then by the additivity of the Poisson distribution, it would be information-theoretically impossible to distinguish between (i) the case in which $\by$ is a sum of $100$ coordinates each of which is distributed as $\Poi(1)$ (and hence the target vector $w$ is $0.99$-far from being 1-sparse), versus (ii) the case in which $\by$ is a single coordinate distributed as $\Poi(100)$ (and hence the target vector $w$ is 1-sparse).  Of course, in our actual testing scenario things are not so simple because the testing algorithm does receive the coordinates $\bx_1,\dots,\bx_n$ of each example $(\bx,\by)$ along with the value of $\by$, and this provides additional useful information.  Our proof establishes that this additional information is essentially useless unless $\tilde{\Omega}(\log n)$ samples are provided. Roughly speaking, this is because with $n/2$ coordinates distributed as $\Poi(1)$ and $n/2$ coordinates distributed as $\Poi(100)$, there are ``too many possibilities'' of each sort ((i) and (ii) above) for the $\bx$'s to provide useful distinguishing information until this many samples have been received.

The lower bound of \Cref{thm:two-noise} is based on similar ideas.  Now all $n$ coordinates are identically distributed as $\Poi(1)$, but the noise may be distributed either as $\Poi(1)$ or as $\Poi(100)$. As above, if only the output values $\by = w \cdot \bx + \ion$ were available to the tester, it would be impossible to distinguish between (i$'$) the target vector $w$ is 1-sparse and the noise is $\Poi(100)$, versus (ii$'$) the target vector is 100-sparse and the noise is $\Poi(1)$, since in both cases the distribution of $\by$ is $\Poi(101).$  The formal proof is by a reduction to \Cref{thm:iid}.


Finally, we turn to the lower bound of \Cref{thm:gaussian-lb}, which states that $\Omega(\log n)$ samples are required for the testing problem if the distribution ${\cal D}$ is $N(0,1)^n$ and the noise distribution $\ion$ is normally distributed as $N(0,c^2)$. The high level idea is that it is difficult to distinguish between the following two distributions over pairs $(\bx,\by)$:

\begin{itemize}

\item First distribution (no-distribution): in each draw of $(\bx,\by)$ from the no-distribution, each $\bx_j$ is an independent $N(0,1)$ random variable, and $\by$ is an $N(0,1+c^2)$ normal random variable which is completely independent of all of the $\bx_j$'s;

\item Second distribution (yes-distribution):  there is a fixed but unknown uniform random coordinate $\bi \in [n]$, and in each draw of $(\bx,\by)$ from the yes-distribution, each $\bx_j$ is an independent $N(0,1)$ random variable and $\by = \bx_{\bi} + N(0,c^2)$.

\end{itemize}

Similar to the first paragraph of this subsection, since the sum of a draw from $N(0,1)$ plus an independent draw from $N(0,c^2)$ is a draw from $N(0,1+c^2)$, if only the output value $\by$ from each pair were given to a tester then it would be information-theoretically impossible to distinguish between the two distributions described above.  And similar to the discussion in that paragraph, the idea that animates our lower bound proof here is that the additional information (the $\bx_1,\dots,\bx_n$-coordinates of each sample) available to the testing algorithm is essentially useless unless $\Omega(\log n)$ samples are provided.  As before, roughly speaking, this is because there are ``too many possibilities'' (for which coordinate might be the unknown hidden $\bi \sim [n]$ in the second distribution) for the $\bx$-components of the samples to provide useful distinguishing information until $\Omega(\log n)$ many samples have been received.  The formal argument uses Bayes' rule to analyze the optimal distinguishing algorithm (corresponding to a maximum likelihood approach) and employs the Berry-Esseen theorem to make these intuitions precise.

\begin{remark} \label{rem:refinement}
We note here that we also give a quantitative refinement of \Cref{thm:gaussian-lb}.  Since for a Gaussian random variable $\bX$ all cumulants $\cum_\ell(\bX), \ell>2$, are zero, we may informally view \Cref{thm:gaussian-lb} as saying that if the cumulants of $\bX$ are zero then the number of samples required to test for linearity under $\bX^n$ may be arbitrarily large (going to infinity as $n$ does). This intuitively suggests that if the cumulants of $\bX$ are ``small'' then ``many'' samples should be required to test for linearity under $\bX^n$.  In~\Cref{sec:refinement} we make this intuition precise: building on~\Cref{thm:gaussian-lb}, we show (roughly speaking) that if the cumulants of a random variable $\bX$ are at most $\gamma$, then at least $1/\gamma$ samples are required for testing linearity under $\bX^n$ and Gaussian noise.
See~\Cref{thm:refinement}  for a precise statement and proof.
\end{remark}

\subsection{Directions for future work}

Our results suggest a number of directions for future work; we touch on a few of these below.  

Within the linearity testing framework that this paper considers, it would be interesting to gain a more quantitatively precise understanding of the sample complexity required to test linearity.  A natural specific question here is the following:  let $\bX$ be a simple random variable such as $\bX = $ uniform on $\{-1,1\}$ or $\bX = $ uniform on $[0,1]$. For these specific distributions, what is the optimal dependence on $k$ for the $k$-sparsity testing question that we have considered?  It would be interesting to determine whether or not an exponential dependence on $k$ is required. 

Another natural quantitative question arises from our results in \Cref{sec:gap}.
\Cref{thm:distance_non_zero_cumulants} implies an explicit ``tower-type'' upper bound on the minimum value $i_k$ such that a random variable $\bX$ as above must have at least $k$ nonzero cumulants in $\{1,\dots,i_k\}$. It would be interesting to obtain sharper quantitative bounds or bounds that hold under relaxed conditions on the random variable $\bX$.

Finally, another intriguing potential direction is to look beyond linearity and attempt to identify other contexts in which sparsity is testable with a constant sample complexity independent of $n$.  A concrete first goal along these lines is to investigate the sparsity testing question when $(\bx,\by)$ is distributed as $\by = \phi(w \cdot \bx) + \textrm{noise}$ for various natural transfer functions $\phi$ such as the probit function or the logistic function.

\subsection{Notational conventions}

Given a vector $w \in \R^n$ we write $\|w\|_\ell$ to denote the $\ell$-norm of $w$, i.e.~$\|w\|_\ell = \left(\sum_{i=1}^n w_i^\ell\right)^{1/\ell}.$  For a nonzero vector $w \in \R^n$ where $n > k$, the vector $w$'s \emph{distance from being $k$-sparse} is 

\[
\dist(w,k\text{-sparse}):= \min_{w' \in \R^n: \ w' \text{~is $k$-sparse}} {\frac {\|w - w'\|_2}{\|w\|_2}}.
\]
Equivalently, if the entries of $w$ are sorted by magnitude so that $|w_{i_1}| \geq \cdots \geq |w_{i_n}|$, the distance of $w$ from being $k$-sparse is
\[
{\frac {\sqrt{w_{i_{k+1}}^2 + \cdots + w_{i_n}^2}}{\|w\|_2}}.
\]

For a random variable $\bZ$, we write $m_{\ell}(\bZ)$ to denote its $\ell$th raw moment, i.e., $\E[\bZ^{\ell}]$.


\section{Preliminaries:  Facts about Cumulants} \label{sec:prelims}

In this section we recall some basic facts about cumulants which we will use extensively.

\begin{definition} \label{def:cumulants}
The \emph{cumulants} of $\bX$ are defined by the cumulant generating function $K(t)$, which is the natural logarithm of the moment generating function $M(t) = \E[e^{t \bX}]$:
\[
K(t)=\ln \E[e^{t\bX}].
\] Equivalently, $e^{K(t)}= \E[e^{t\bX}]$. For $\ell > 0$ the cumulants of $\bX$, which are denoted $\cum_\ell(\bX)$, are the coefficients in the Taylor expansion of the cumulant generating function about the origin:
\[
K(t) = \sum_{\ell=1}^\infty \cum_\ell(\bX) {\frac {t^\ell}{\ell!}}.
\]
Equivalently, $\cum_\ell(\bX)=K^{(\ell)}(0)$.
\end{definition}

One useful property of cumulants is additivity for independent random variables, which follows as an easy consequence of the definition:

\begin{fact} \label{fact:additive}
If $\bX$ and $\bY$ are independent random variables then $\cum_\ell(\bX+\bY)=\cum_\ell(\bX)+\cum_\ell(\bY)$.
\end{fact}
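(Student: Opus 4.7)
The plan is to prove the additivity of cumulants by working directly with the cumulant generating function $K(t) = \ln \E[e^{t \bX}]$ from \Cref{def:cumulants}, and reducing the statement to the well-known multiplicativity of moment generating functions for independent random variables. Formally, let $M_\bX(t) = \E[e^{t\bX}]$, $M_\bY(t) = \E[e^{t\bY}]$, and $M_{\bX+\bY}(t) = \E[e^{t(\bX+\bY)}]$ denote the respective MGFs, and let $K_\bX, K_\bY, K_{\bX+\bY}$ be the corresponding CGFs.

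First I would invoke independence of $\bX$ and $\bY$ to factor the expectation: since $e^{t\bX}$ and $e^{t\bY}$ are functions of independent random variables, they are themselves independent, so
\[
M_{\bX+\bY}(t) \;=\; \E\!\left[e^{t\bX}\,e^{t\bY}\right] \;=\; \E[e^{t\bX}]\cdot \E[e^{t\bY}] \;=\; M_\bX(t)\, M_\bY(t).
\]
Taking natural logarithms of both sides (valid in a neighborhood of $0$, where the MGFs are positive), this immediately yields the additive relation
\[
K_{\bX+\bY}(t) \;=\; K_\bX(t) + K_\bY(t).
\]

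Finally, I would expand both sides as Taylor series about $t=0$ using the definition of cumulants, obtaining
\[
\sum_{\ell=1}^{\infty} \cum_\ell(\bX+\bY)\,\frac{t^\ell}{\ell!} \;=\; \sum_{\ell=1}^{\infty}\bigl(\cum_\ell(\bX)+\cum_\ell(\bY)\bigr)\frac{t^\ell}{\ell!}.
\]
Matching coefficients of $t^\ell/\ell!$ on both sides gives $\cum_\ell(\bX+\bY) = \cum_\ell(\bX)+\cum_\ell(\bY)$, as desired.

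There is no real obstacle here; the only subtlety is ensuring that the MGFs exist in a neighborhood of $0$ so that the logarithm and the Taylor expansion are well-defined. In the paper's setting all relevant moments are finite (indeed the theorem statements assume finite moments of every order for the distributions of interest), so this technical condition is satisfied; alternatively, one can carry out the same argument using the characteristic function $\E[e^{it\bX}]$, which always exists, and define cumulants via its logarithm—the identity then holds as an identity of formal power series in $t$, which is all that is needed for the coefficient comparison.
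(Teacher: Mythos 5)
Your proof is correct and is precisely what the paper means when it says \Cref{fact:additive} ``follows as an easy consequence of the definition'': factor the MGF of an independent sum, take logarithms to get additivity of the cumulant generating function, and match Taylor coefficients. Your remark about the existence of the MGF near $0$ (or the alternative of working with the characteristic function and formal power series) is the right thing to note and does not indicate any gap.
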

\begin{corollary}\label{cor:sym_var_cumulant}
For any random variable $\bX$, the value of $\cum_\ell(\bX-\bX)$ is zero when $\ell$ is odd and is $2 \cdot \cum_\ell(\bX)$ when $\ell$ is even.
\end{corollary}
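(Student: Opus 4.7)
The plan is to prove the corollary as a direct consequence of Fact \ref{fact:additive} (additivity of cumulants for independent random variables) together with the homogeneity property of cumulants stated in Section~\ref{sec:techniques}, i.e.\ $\cum_\ell(c\bX) = c^\ell \cum_\ell(\bX)$. My reading of ``$\bX - \bX$'' in the statement is the standard symmetrization: it denotes $\bX - \bX'$ where $\bX'$ is an independent copy of $\bX$ (otherwise the expression is identically zero and the claim is false for even $\ell$ with $\cum_\ell(\bX) \neq 0$); the proof should start by making this convention explicit.

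First I would write $\bX - \bX' = \bX + (-\bX')$ and apply Fact~\ref{fact:additive}, using that $\bX$ and $-\bX'$ are independent, to obtain
\[
\cum_\ell(\bX - \bX') \;=\; \cum_\ell(\bX) + \cum_\ell(-\bX').
\]
Next, I would apply homogeneity with $c = -1$ to get $\cum_\ell(-\bX') = (-1)^\ell \cum_\ell(\bX')$. Since $\bX'$ has the same distribution as $\bX$, its cumulant generating function, and hence its $\ell$-th cumulant, equals that of $\bX$; so $\cum_\ell(\bX') = \cum_\ell(\bX)$.

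Combining these three observations yields
\[
\cum_\ell(\bX - \bX') \;=\; \bigl(1 + (-1)^\ell\bigr)\,\cum_\ell(\bX),
\]
which is $0$ when $\ell$ is odd and $2\cum_\ell(\bX)$ when $\ell$ is even, establishing the corollary. There is no real obstacle here; the only subtlety worth flagging in the write-up is the interpretation of ``$\bX-\bX$'' as the symmetrized random variable $\bX-\bX'$, together with a one-line justification of homogeneity (e.g.\ from the identity $K_{c\bX}(t) = \ln \E[e^{tc\bX}] = K_\bX(ct)$ and reading off the Taylor coefficient at $t^\ell/\ell!$) in case the reader wants a self-contained argument rather than relying on the informal statement of homogeneity from Section~\ref{sec:techniques}.
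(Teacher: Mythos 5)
Your proof is correct and is exactly the argument the paper has in mind: the statement is presented as an immediate corollary of Fact~\ref{fact:additive}, to be combined with homogeneity (Fact~\ref{fact:homogeneous}, with $c=-1$) and the observation that an independent copy $\bX'$ has the same cumulants. You are also right that ``$\bX - \bX$'' is the paper's shorthand for the symmetrization $\bX - \bX'$ with $\bX'$ an independent copy, as is clear from how the corollary is subsequently used in the paragraph following it.
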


Looking ahead, all of our algorithms will work by estimating cumulants of the real random variable $\by$ which is distributed as $\by = w \cdot \bx + \ion$ where $\bx \sim \bX^n$ and $\ion$ is independently drawn from a noise distribution.  By \Cref{fact:additive} and \Cref{cor:sym_var_cumulant}, we can (and do) assume throughout the analysis of our algorithms that $\bX$ and $\ion$ are both symmetric distributions. This is because we can always combine two independent draws $(\bx_1,\by_1)$ and $(\bx_2,\by_2)$ with $\by_i=w \cdot \bx_i + \ion$ into one draw $(\frac{\bx_1-\bx_2}{\sqrt{2}}, \frac{\by_1-\by_2}{\sqrt{2}}),$ such that the new marginal distribution $\frac{\bX-\bX}{\sqrt{2}}$ and noise distribution $\frac{\ion-\ion}{\sqrt{2}}$ are both symmetric and have the same variances as before combining.

Another useful property is $\ell$-th order homogeneity of the $\ell$-th cumulant:
\begin{fact} \label{fact:homogeneous}
For any $c \in \R$ and any $\ell \in \N$, we have $\cum_\ell(c\bX)=c^\ell \cum_\ell(\bX).$
\end{fact}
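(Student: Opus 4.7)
The plan is to prove Fact~\ref{fact:homogeneous} by direct manipulation of the cumulant generating function (Definition~\ref{def:cumulants}) and then matching Taylor coefficients. Since this is essentially a one-line calculation, there is no real obstacle; the only thing to be careful about is tracking the factor of $c$ through the substitution into the power series expansion.

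First, I would write down the cumulant generating function of the scaled variable $c\bX$ using the definition:
\[
K_{c\bX}(t) = \ln \E[e^{t(c\bX)}] = \ln \E[e^{(ct)\bX}] = K_{\bX}(ct).
\]
Next, I would expand both sides as Taylor series about $t=0$. On the one hand, by definition,
\[
K_{c\bX}(t) = \sum_{\ell=1}^\infty \cum_\ell(c\bX) \frac{t^\ell}{\ell!}.
\]
On the other hand, substituting $ct$ for $t$ in the expansion of $K_{\bX}$ gives
\[
K_{\bX}(ct) = \sum_{\ell=1}^\infty \cum_\ell(\bX) \frac{(ct)^\ell}{\ell!} = \sum_{\ell=1}^\infty c^\ell \cum_\ell(\bX) \frac{t^\ell}{\ell!}.
\]

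Finally, I would equate the two power series and match coefficients of $t^\ell/\ell!$ to conclude that $\cum_\ell(c\bX) = c^\ell \cum_\ell(\bX)$ for every $\ell \in \N$. The main (minor) subtlety is justifying that matching coefficients is legitimate, which follows immediately from the fact that $\cum_\ell(\bX) = K_{\bX}^{(\ell)}(0)$, together with the chain rule: $K_{c\bX}^{(\ell)}(0) = \frac{d^\ell}{dt^\ell}\big|_{t=0} K_{\bX}(ct) = c^\ell K_{\bX}^{(\ell)}(0) = c^\ell \cum_\ell(\bX)$, which gives the claim without relying on convergence of the Taylor series.
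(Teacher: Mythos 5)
Your proof is correct and is the standard argument; the paper itself states this fact without proof as a well-known property of cumulants, so there is no paper proof to compare against. The observation $K_{c\bX}(t) = K_{\bX}(ct)$ followed by differentiating $\ell$ times at $t=0$ (or equivalently matching Taylor coefficients) is exactly the canonical justification, and your remark that the derivative argument sidesteps any convergence concerns is apt.
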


Let $m_\ell(\bX)$ denote the $\ell$th moment $\E[\bX^\ell]$ of a random variable $\bX$.
There is a one-to-one mapping between the first $n$ moments and the first $n$ cumulants which can be derived by relating coefficients in the Taylor series expansions of the cumulant and moment generating functions~\cite{Barndorff}:

\begin{fact}\label{fact:transfer_moments_cumulants}
Let $\bX$ be a random variable with mean zero.  Then
\begin{equation}\label{eq:degree_l_cum}
\cum_\ell(\bX)=m_\ell(\bX)-\sum_{j=1}^{\ell-1} {\ell-1 \choose j-1} \cum_j(\bX) \cdot m_{\ell-j}(\bX).
\end{equation}
Cumulants can be expressed in terms of moments and vice-versa:
\begin{equation}\label{eq:moments_cumulants}
m_\ell(\bX)=\sum_{k=1}^\ell B_{\ell,k}\bigg( \cum_1(\bX),\ldots,\cum_{\ell-k+1}(\bX) \bigg)
\end{equation}
and
\begin{equation}\label{eq:cumulants_moments}
\cum_\ell(\bX)=\sum_{k=1}^\ell (-1)^{k-1} (k-1)! B_{\ell,k}\bigg( m_1(\bX),\ldots,m_{\ell-k+1}(\bX) \bigg	),
\end{equation}
where $B_{\ell,k}$ are incomplete \emph{Bell polynomials},
$$
B_{\ell,k}(x_1,\ldots,x_{\ell-k+1})=\sum \frac{\ell!}{j_1!\cdots j_{\ell-k+1}!} \left(\frac{x_1}{1} \right)^{j_1} \cdots \left(\frac{x_{\ell-k+1}}{(\ell-k+1)!}\right)^{j_{\ell-k+1}},
$$
where the summation is over all non-negative sequences $(j_1,\ldots,j_{\ell-k+1})$ that satisfy
$$
j_1+\cdots+j_{\ell-k+1}=k \text{ and } j_1+2j_2 + \cdots + (\ell-k+1)j_{\ell-k+1}=\ell.
$$
\end{fact}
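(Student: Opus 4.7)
The plan is to derive all three identities from the defining relation $M(t) = e^{K(t)}$ between the moment and cumulant generating functions, treating both as formal (or, under the standing moment-existence assumptions, analytic) Taylor series around $t = 0$.

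For the recursion~\eqref{eq:degree_l_cum}, I would differentiate once to obtain $M'(t) = K'(t)\, M(t)$ and then compare the coefficient of $t^{\ell-1}/(\ell-1)!$ on both sides. Writing $M(t) = \sum_{i \ge 0} m_i(\bX)\, t^i/i!$ and $K'(t) = \sum_{j \ge 1} \cum_j(\bX)\, t^{j-1}/(j-1)!$, the Cauchy product of $K'(t)$ and $M(t)$ contributes
\[
\sum_{j=1}^{\ell} \binom{\ell-1}{j-1}\, \cum_j(\bX)\, m_{\ell-j}(\bX)
\]
to the coefficient of $t^{\ell-1}/(\ell-1)!$. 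Matching this with the corresponding coefficient $m_\ell(\bX)$ on the left-hand side, splitting off the $j = \ell$ term using $m_0(\bX) = 1$, and solving for $\cum_\ell(\bX)$ produces the stated recursion. (In fact the mean-zero hypothesis is not actually required for this step.)

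For the two Bell polynomial identities I would appeal to Faà di Bruno's formula in its incomplete Bell polynomial form: if $h = f \circ g$ with $g(0) = a$, then
\[
h^{(\ell)}(0) = \sum_{k=1}^{\ell} f^{(k)}(a)\, B_{\ell,k}\!\left(g'(0), g''(0), \ldots, g^{(\ell-k+1)}(0)\right).
\]
Applying this with $f(x) = e^x$, $g = K$, and $a = K(0) = 0$ gives $f^{(k)}(0) = 1$ for every $k$, so the formula collapses to~\eqref{eq:moments_cumulants} after reading off $M^{(\ell)}(0) = m_\ell(\bX)$ and $K^{(j)}(0) = \cum_j(\bX)$. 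Applying it instead with $f(x) = \ln x$, $g = M$, and $a = M(0) = 1$ gives $f^{(k)}(1) = (-1)^{k-1}(k-1)!$, which produces~\eqref{eq:cumulants_moments} directly.

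The only real obstacle is making sure the power series manipulations and the invocation of Faà di Bruno are rigorously justified. Under the standing assumption that the relevant moments exist (and in particular that $M(t)$ is analytic in a neighborhood of $0$, where $M(0) = 1 \neq 0$ so that $\ln M$ is well defined there), the required derivative and composition formulas are all legal. Thus each identity reduces to a routine coefficient comparison, with no combinatorial content beyond the statement of Faà di Bruno's formula itself.
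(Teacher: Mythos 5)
The paper does not prove this fact; it is stated as a standard identity with a citation to the Barndorff-Nielsen reference, so there is no in-paper argument to compare against. Your derivation is the standard one and it is correct: differentiating $M = e^K$ and matching Taylor coefficients gives the recursion~\eqref{eq:degree_l_cum}, and Fa\`a di Bruno's formula with $f = e^x$ (resp.\ $f = \ln x$) at $g(0) = 0$ (resp.\ $g(0) = 1$) gives~\eqref{eq:moments_cumulants} (resp.~\eqref{eq:cumulants_moments}). Your side remark that the mean-zero hypothesis is not actually used in the recursion is also correct; the identity holds for any random variable with the requisite moments, and the paper presumably states the hypothesis only because all random variables in scope have been centered.
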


\Cref{eq:degree_l_cum} can be used to give a upper bound on $\cum_\ell(\bX)$ in terms of the moments of $\bX$:

\begin{claim}\label{clm:up_bound_cumulant}
For any random variable $\bX$ with mean zero and any even $\ell$, we have $|\cum_\ell(\bX)| \le m_\ell(\bX) \cdot e^{\ell} \cdot \ell!$.
\end{claim}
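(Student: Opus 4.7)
The idea is to combine the moment--cumulant recursion \eqref{eq:degree_l_cum} with a power-mean bound on moments, turning the claim into a purely combinatorial inequality about a numerical recursion.

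First I would normalize. Since $\ell$ is even, $\E[\bX^\ell] = \E[|\bX|^\ell]$. For every $1 \le j \le \ell$, Jensen's inequality applied to the convex map $t \mapsto t^{\ell/j}$ on $[0,\infty)$ yields $|m_j(\bX)| \le \E[|\bX|^j] \le m_\ell(\bX)^{j/\ell}$. Set $\mu = m_\ell(\bX)^{1/\ell}$ and define the normalized quantities $\tilde m_j := m_j(\bX)/\mu^{j}$ and $\tilde\kappa_j := \cum_j(\bX)/\mu^{j}$, so that $|\tilde m_j| \le 1$ for all $1 \le j \le \ell$. Dividing the recursion \eqref{eq:degree_l_cum} through by $\mu^{\ell}$ gives the (dimension-free) relation
\[
\tilde\kappa_\ell \;=\; \tilde m_\ell \;-\; \sum_{j=1}^{\ell-1}\binom{\ell-1}{j-1}\tilde\kappa_j \cdot \tilde m_{\ell-j},
\]
and the same identity holds with $\ell$ replaced by any $j' \le \ell$.

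Next I would bound $|\tilde\kappa_j|$ by a combinatorial sequence. Define $C_1 := 1$ and $C_j := 1 + \sum_{i=1}^{j-1}\binom{j-1}{i-1} C_i$ for $j \ge 2$. A straightforward induction on $j$, using $|\tilde m_{j-i}| \le 1$ and the inductive hypothesis $|\tilde\kappa_i| \le C_i$ in the above identity, shows $|\tilde\kappa_j| \le C_j$ for every $1 \le j \le \ell$. Thus it suffices to prove the purely numerical bound $C_\ell \le e^\ell\,\ell!$.

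The final step is to establish $C_j \le e^j j!$ by induction on $j$. The base case $C_1 = 1 \le e$ is immediate. For the inductive step, substituting the hypothesis and changing the index by $k = j-i$ gives
\[
C_j \;\le\; 1 + \sum_{i=1}^{j-1}\binom{j-1}{i-1} e^i i!
\;=\; 1 + (j-1)!\sum_{k=1}^{j-1} \frac{(j-k)\,e^{j-k}}{k!}
\;\le\; 1 + j!\,e^{j}\sum_{k=1}^{\infty}\frac{e^{-k}}{k!},
\]
and since $\sum_{k\ge 1} e^{-k}/k! = e^{1/e} - 1 < 1/2$, the right-hand side is at most $1 + \tfrac12 j!\,e^j \le j!\,e^j$ (valid for all $j \ge 1$). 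Multiplying through by $\mu^{\ell} = m_\ell(\bX)$ recovers $|\cum_\ell(\bX)| \le e^\ell \ell!\cdot m_\ell(\bX)$. The only delicate point is this last combinatorial inequality; every other ingredient (power-mean, the recursion, and the induction setup) is routine.
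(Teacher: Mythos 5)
Your proof is correct and follows essentially the same route as the paper's: both rely on the moment–cumulant recursion \eqref{eq:degree_l_cum}, the power-mean (Jensen) inequality $\E[|\bX|^j] \le m_\ell(\bX)^{j/\ell}$, and an induction whose final step is a combinatorial estimate very close to the one in the appendix. The one stylistic improvement in your version is the normalization by $\mu = m_\ell(\bX)^{1/\ell}$, which cleanly decouples the moment bounds from the numerical recursion (the paper instead carries the absolute moments $\overline{m}_j$ through the inductive step and invokes $\overline{m}_j\,\overline{m}_{\ell-j}\le \overline{m}_\ell$ inside it); this is a tidy reorganization rather than a different argument, and the final combinatorial bound $C_j \le e^j j!$ is verified by the same kind of reindexing-and-$\exp$ estimate the paper uses.
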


\begin{remark}
When $\bX$ is the random variable that is uniform over $\{0,1,\ldots,C\}$, the $\ell$-th cumulant is $\cum_\ell(\bX)=\frac{\Bern(\ell)}{\ell} \cdot (C^\ell - 1)$ where $\Bern(\ell)$ is the Bernoulli number of order $\ell$ which has an asymptotic growth as $(\frac{\ell/2}{\pi e})^\ell$ \cite{Swanson2017}. This simple example shows that the dominant $\ell!$ term in \Cref{clm:up_bound_cumulant} is essentially best possible.
\end{remark}

We defer the proof of~\Cref{clm:up_bound_cumulant} to~\Cref{sec:append_proofs}.


\section{Estimating moments of the weight vector $w$ using moments and cumulants}\label{sec:est_cum_moment}

\ignore{
}




Throughout this section $\bX$ will denote a real random variable with mean zero, unit variance, and finite moments of all orders, and $w \in \R^n$ will be a vector that is promised to have $\|w\|_2 \in [1/C,C]$. 
The main result of this section is the following theorem, which shows that it is possible to estimate norms of the vector $w$ given access to noisy samples of the form $(\bx, \by=w\cdot \bx + \ion)$ where $\bx \sim \bX^n$:

\begin{theorem}\label{thm:number_samples_Lk_w}
Let $\bX$ be a symmetric real-valued random variable with variance 1 and finite moments of all orders, and let $\ion$ be a symmetric real-valued random variable with finite moments of all orders.\ignore{
}
There is an algorithm (depending on $\bX$ and $\ion$)\footnote{As will be clear from the proof, the algorithm only needs to ``know'' $\bX$ and $\ion$ in the sense of having sufficiently accurate estimates of certain cumulants.} with the following property:  Let $w \in \R^n$ be any (unknown) vector with $\|w\|_2 \in [1/C,C]$.
Given any $\eps,\delta>0$ and any even integer $\ell$ such that $|\cum_\ell(\bX)| \geq \tau$, the algorithm takes as input $m=\poly( \ell!,m_{2\ell}(\bX)+m_{2\ell}(\ion),1/(\delta\eps),1/\tau, C^{\ell})$ many independent random samples where each $\bx^{(i)}\sim \bX^n$ and each $\bz^{(i)}=w \cdot \bx^{(i)} + \ion.$  It outputs an estimate $M_\ell$ of $\sum_{i=1}^n w_i^\ell$, the $\ell$-th power of the $\ell$-norm of the vector $w$, which with probability at least $1-\delta$ satisfies
\[
\left| M_\ell - \|w\|_\ell^\ell \right| \le \eps.
\]
\end{theorem}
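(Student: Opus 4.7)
The starting point is the cumulant identity derived from \Cref{fact:additive} and \Cref{fact:homogeneous}: since the coordinates of $\bx \sim \bX^n$ are i.i.d.\ and $\ion$ is independent of $\bx$,
\[
\cum_\ell(\by) \;=\; \cum_\ell(\ion) \;+\; \sum_{i=1}^n \cum_\ell(w_i \bx_i) \;=\; \cum_\ell(\ion) \;+\; \cum_\ell(\bX)\cdot \sum_{i=1}^n w_i^\ell.
\]
Since $|\cum_\ell(\bX)| \ge \tau$, it suffices to compute an estimate $\widehat{\cum}_\ell(\by)$ that is within additive error $\eps\tau$ of $\cum_\ell(\by)$ with probability at least $1-\delta$ and then output $M_\ell = (\widehat{\cum}_\ell(\by) - \cum_\ell(\ion))/\cum_\ell(\bX)$. (The algorithm's dependence on $\bX$ and $\ion$ enters exactly through the constants $\cum_\ell(\bX)$ and $\cum_\ell(\ion)$.)

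To estimate $\cum_\ell(\by)$, I first estimate the raw moments $m_1(\by), \dots, m_\ell(\by)$. For each $j \le \ell$, let $\hat m_j$ denote the empirical average of $\by^j$ over $m$ samples. By Chebyshev's inequality, $\Pr[|\hat m_j - m_j(\by)| > \eps_1] \le m_{2j}(\by)/(m\eps_1^2)$, so $m = O\!\left(\ell \cdot m_{2\ell}(\by)/(\eps_1^2 \delta)\right)$ samples suffice to get all $\ell$ estimates within additive error $\eps_1$ simultaneously with probability $\ge 1-\delta$. The next step is to bound $m_{2\ell}(\by)$. Because $\by$ is a sum of the two independent symmetric random variables $w\cdot\bx$ and $\ion$, $(a+b)^{2\ell} \le 2^{2\ell-1}(a^{2\ell} + b^{2\ell})$ gives
\[
m_{2\ell}(\by) \;\le\; 2^{2\ell-1}\bigl(m_{2\ell}(w\cdot\bx) + m_{2\ell}(\ion)\bigr),
\]
and a multinomial expansion of $(\sum_i w_i \bx_i)^{2\ell}$ together with i.i.d.\ symmetry (only even-index matchings survive) and $\|w\|_2 \le C$ yields $m_{2\ell}(w\cdot\bx) \le (2\ell)^{O(\ell)} C^{2\ell}\, m_{2\ell}(\bX)$. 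Hence $m_{2\ell}(\by) = \poly(\ell!, m_{2\ell}(\bX)+m_{2\ell}(\ion), C^\ell)$, which stays within our budget.

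Next I convert the estimated moments to an estimated cumulant using \Cref{eq:cumulants_moments}, setting
\[
\widehat{\cum}_\ell(\by) \;=\; \sum_{k=1}^\ell (-1)^{k-1}(k-1)!\, B_{\ell,k}(\hat m_1,\ldots,\hat m_{\ell-k+1}).
\]
This is a fixed polynomial of degree $\le \ell$ in the $\hat m_j$'s. Using that $|m_j(\by)| \le m_{2\ell}(\by)^{j/(2\ell)}$ and that the coefficients of $B_{\ell,k}$ are bounded by $\ell!$, a routine Lipschitz bound in a ball around the true moments shows that if each $|\hat m_j - m_j(\by)| \le \eps_1$, then $|\widehat{\cum}_\ell(\by) - \cum_\ell(\by)| \le L\cdot \eps_1$, where the Lipschitz constant is
$L = \poly(\ell!, m_{2\ell}(\bX)+m_{2\ell}(\ion), C^\ell)$.
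Choosing $\eps_1 = \eps\tau/(2L)$ and plugging into the Chebyshev bound above gives the claimed sample complexity $m = \poly(\ell!, m_{2\ell}(\bX)+m_{2\ell}(\ion), 1/(\delta\eps), 1/\tau, C^\ell)$. A final union bound and division by $\cum_\ell(\bX)$ yields $|M_\ell - \|w\|_\ell^\ell| \le \eps$ with probability $\ge 1-\delta$.

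The main technical obstacle is the last part of the argument: carefully bounding the Lipschitz constant $L$ of the polynomial that expresses $\cum_\ell$ in terms of $m_1,\ldots,m_\ell$, and verifying that the resulting blowup is only polynomial in the parameters declared in the theorem statement. The Bell polynomial $B_{\ell,k}$ has up to $\ell^{O(\ell)}$ monomials each of degree $\le \ell$, and one must combine this combinatorial bound with a uniform moment bound (via interpolation between $m_0(\by)=1$ and $m_{2\ell}(\by)$) to keep all factors inside the claimed $\poly(\cdot)$. Everything else (Chebyshev, the cumulant identity, additive error bookkeeping) is routine.
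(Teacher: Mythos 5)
Your plan is correct and matches the paper's own route: estimate the raw moments of $\by$ by Chebyshev, convert them to the $\ell$-th cumulant via the Bell-polynomial identity \eqref{eq:cumulants_moments} with a careful error-propagation bound, and recover $\|w\|_\ell^\ell$ from the identity $\cum_\ell(\by) = \cum_\ell(\ion) + \cum_\ell(\bX)\,\|w\|_\ell^\ell$. The only minor technical divergence is that the paper controls $m_{2\ell}(w\cdot\bx)$ via the Marcinkiewicz--Zygmund inequality (\Cref{clm:moment_Z}) while you propose a direct multinomial expansion, but both yield the same $(2\ell)^{O(\ell)}C^{2\ell}m_{2\ell}(\bX)$ bound and the rest of the argument is identical.
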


We will also use the following result on estimating the moments of $w \cdot \bX + \ion$:

\begin{lemma}\label{lemma:number_samples_k_moments}
Let $\bX$ be a symmetric real-valued random variable with mean zero, variance 1, and finite moments of all orders, and let $\ion$ be a symmetric real-valued random variable with finite moments of all orders. 

There is an algorithm (depending on $\bX$ and $\ion$) with the following property:  Let $w \in \R^n$ be any (unknown) vector with $\|w\|_2 \in [1/C,C]$. Given any $\eps$ and $\delta$ and any even integer $\ell$, the algorithm takes as input $m=\poly( \ell!,m_{2\ell}(\bX)+m_{2\ell}(\ion),1/(\delta\eps), C^{\ell})$ many independent random samples where each $\bx^{(i)}\sim \bX^n$ and each $\bz^{(i)}=w \cdot \bx^{(i)} + \ion.$  It outputs an estimate $\wt{m}_{\ell}(\bZ)$, which with probability at least $1-\delta$ satisfies
\[
\left| \wt{m}_{\ell}(\bZ) - \E[|w \cdot \bx^{(i)} + \ion|^{\ell}] \right| \le \eps.
\]
\end{lemma}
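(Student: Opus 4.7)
}

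The natural plan is to use the empirical $\ell$-th absolute moment estimator
\[
\wt{m}_\ell(\bZ) \;:=\; \frac{1}{m}\sum_{i=1}^{m} \bigl| \bz^{(i)} \bigr|^{\ell},
\]
where each $\bz^{(i)} = w \cdot \bx^{(i)} + \ion^{(i)}$ is an independent draw from the distribution of $\bZ := w \cdot \bx + \ion$. By linearity of expectation this estimator is unbiased, i.e.\ $\E[\wt{m}_\ell(\bZ)] = \E[|\bZ|^{\ell}]$, so to prove the lemma it suffices to bound the variance and apply Chebyshev's inequality, from which the sample-complexity claim will follow after rearranging.

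The variance of a single term $|\bz^{(i)}|^{\ell}$ is at most $\E[|\bZ|^{2\ell}]$, so my main technical step is to establish an upper bound on $\E[|\bZ|^{2\ell}]$ of the form $\poly(\ell!, m_{2\ell}(\bX)+m_{2\ell}(\ion), C^{\ell})$. By the power-mean inequality,
\[
|w \cdot \bx + \ion|^{2\ell} \;\le\; 2^{2\ell-1}\bigl(|w\cdot \bx|^{2\ell} + |\ion|^{2\ell}\bigr),
\]
and the $\ion$-term contributes exactly $m_{2\ell}(\ion)$. For $\E[|w \cdot \bx|^{2\ell}]$ I would apply a Rosenthal-type inequality to the independent symmetric mean-zero summands $\bY_i := w_i \bx_i$, giving
\[
\E\!\left[\Bigl|\sum_{i=1}^{n} w_i \bx_i\Bigr|^{2\ell}\right] \;\le\; K_\ell \cdot \max\!\left\{ \sum_{i=1}^n |w_i|^{2\ell}\, m_{2\ell}(\bX),\; \Bigl(\sum_{i=1}^n w_i^2\Bigr)^{\ell}\right\},
\]
where $K_\ell = \poly(\ell!)$. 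Using $\|w\|_{2\ell}^{2\ell} \le \|w\|_2^{2\ell} \le C^{2\ell}$ (since $\|w\|_p$ is non-increasing in $p$) and $\|w\|_2 \le C$, both terms inside the maximum are bounded by $C^{2\ell}(m_{2\ell}(\bX)+1)$, giving the desired bound
\[
\E[|\bZ|^{2\ell}] \;\le\; 2^{2\ell} \cdot K_\ell \cdot C^{2\ell} \cdot \bigl(m_{2\ell}(\bX) + m_{2\ell}(\ion) + 1\bigr).
\]

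Given this moment bound, Chebyshev's inequality yields
\[
\Pr\!\left[\,\bigl|\wt{m}_\ell(\bZ) - \E[|\bZ|^\ell]\bigr| > \eps\,\right] \;\le\; \frac{\E[|\bZ|^{2\ell}]}{m \eps^2},
\]
so taking $m$ to be a suitable polynomial in $\ell!, m_{2\ell}(\bX)+m_{2\ell}(\ion), C^{\ell}, 1/\eps, 1/\delta$ makes the failure probability at most $\delta$, matching the stated sample complexity. If one wants to avoid invoking Rosenthal directly, the same bound can be obtained by expanding $(\sum_i w_i \bx_i)^{2\ell}$ via the multinomial theorem, using symmetry of $\bX$ to kill terms with any odd exponent, and then bounding each mixed moment $\prod_i m_{\alpha_i}(\bX)$ using Lyapunov's inequality $m_{\alpha_i}(\bX) \le m_{2\ell}(\bX)^{\alpha_i/(2\ell)}$; the combinatorial factor is absorbed into the $\ell!$ term. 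The main (minor) obstacle is simply making sure the moment bound on $w \cdot \bx$ depends only on $\|w\|_2 \le C$ rather than the much larger $\|w\|_1$, which is exactly what the Rosenthal / Marcinkiewicz--Zygmund route provides.
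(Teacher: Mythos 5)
Your proposal is correct and matches the paper's proof in essence: an unbiased empirical $\ell$-th moment estimator, a Chebyshev bound with the variance controlled by $m_{2\ell}(\bZ)$, and a moment inequality for sums of independent centered random variables (you invoke Rosenthal, the paper invokes Marcinkiewicz--Zygmund in \Cref{lem:generalized_M_Zygmund}; these are interchangeable here since both give a $\poly(\ell!)$-type constant and a bound depending only on $\|w\|_2 \le C$). The only cosmetic difference is in decoupling the noise: you use the power-mean inequality $|a+b|^{2\ell}\le 2^{2\ell-1}\bigl(|a|^{2\ell}+|b|^{2\ell}\bigr)$, whereas the paper (\Cref{clm:moment_Z_noise}) expands via the binomial theorem and then applies $m_i(\cdot)\le m_\ell(\cdot)^{i/\ell}$ to arrive at the equivalent $2^{O(\ell)}\bigl(m_{2\ell}(\bX)\,C^{2\ell}+m_{2\ell}(\ion)\bigr)$ bound.
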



We defer the proof of \Cref{thm:number_samples_Lk_w} and \Cref{lemma:number_samples_k_moments} to \Cref{sec:append_proofs}.


\section{General Testing Algorithm: Proof of \Cref{thm:main-informal}} \label{sec:proof-of-main-informal}

The main result of this section is \Cref{thm:general_alg}, which is a more precise version of \Cref{thm:main-informal}. Roughly speaking, it says that there is a constant-sample tolerant tester for $k$-sparsity for any non-Gaussian distribution.  

\begin{theorem} [Detailed statement of main result: tolerant tester for non-Gaussian distributions]
\label{thm:general_alg}
\ignore{\rnote{Two things about this theorem statement.  First, what is the exact definition we want to use for $w$ being $\eps$-far from $k$-sparse. Is it that we always assume $\|w\|=1$?  Or do we allow any scaling of $w$? I had thought we would do the latter (that's how some of the lower bounds are structured and that's what's stated in the intro) but this section seems to be using a definition which always assumes that $\|w\|=1.$  (Maybe a setup where we assume that $\|w\|=\Theta(1)$ is a middle ground we could take?) {\color{brown} Xue: I agree with the latter. For general $\|w\|=\Theta(1)$, we shall be able to estimate the variance of $w \cdot \bX^n + \ion$ very accurately and rescale $\|w\|$ to $1 \pm \eps/100$.}
}

}
Fix any real random variable $\bX$ which has variance one and finite moments of every order, and is not a Gaussian distribution (i.e.~its total variation distance from every Gaussian distribution is nonzero).  Let $\ion$ be any real random variable with finite moments of every order. 

There is a tolerant testing algorithm with the following properties:
Let $0 \le c < s \le 1$ be any given completeness and soundness parameters, and let $w$ be any vector (unknown to the algorithm) with $1/C \leq \|w\|_2 \leq C.$  
The algorithm is given $c,s,\eps,k,C$ and access to independent samples $(\by = w \cdot \bx + \ion)$ where each $\bx \sim \bX^n$. Its sample complexity is 
\[
m=\poly\parens*{ \ell_1!,m_{2\ell_1}(\bX)+m_{2\ell_1}(\ion),1/\delta_1^{\ell_1},1/\tau,C^{\ell_1} },
\] where $\tau = \min_{i \in [k]}\{|\cum_{\ell_i}(\bX)|\}$ and $\{\ell_i\}_{i \in [k]},\{\delta_i\}_{i \in [k]}$ are as defined below. The algorithm satisfies the following:

\begin{itemize}

\item if $\dist(w,k\text{-sparse}) \leq c$ then with probability at least $9/10$ the algorithm outputs ``yes;'' and

\item if $\dist(w,k\text{-sparse}) \geq s$ then with probability at least $9/10$ the algorithm outputs ``no.''

\end{itemize}
\ignore{
Given any $c$ and $s$ with $1 \ge s >c \ge 0$, for vectors $w \in \mathbb{R}^{n}$ with $\|w\|_2 \in [1/10, 10]$, the algorithm described below is an \red{$c$-versus-$s$ tolerant tester} for $k$-linearity under ${\cal D}$ and $\ion$.  Its sample complexity is $m=\poly\parens*{ \ell_k!,m_{2\ell_k}(\bX)+m_{2\ell_k}(\ion),1/\delta_k^{\ell_k},1/\tau }$, where \red{$\tau = \min_{i \in [k]}\{|\cum_{\ell_i}(\bX)|\}$} and $\ell_i,\ell_k,\delta_k$ are as defined below. 
}

Furthermore, if the random variable $\bX$ is supported in $[-\bdd,\bdd]$ for some constant $\bdd$, then the sample complexity of the tolerant tester (as a function of $k$) \ignore{described in \Cref{thm:general_alg}} is bounded by a tower function of height $O(k)$.
\end{theorem}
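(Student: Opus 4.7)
The approach proceeds in three stages: a reduction to estimating a small number of even power sums of $w$ using cumulants, an application of the estimator of~\Cref{thm:number_samples_Lk_w} to produce those power sums from samples, and a decoding step that recovers $\dist(w,k\text{-sparse})$ from the power sums. The bounded-support refinement is then obtained by plugging~\Cref{thm:distance_non_zero_cumulants} into the sample-complexity expression.

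First, by~\Cref{cor:sym_var_cumulant} and the discussion following it, we may pair up samples and assume without loss of generality that both $\bX$ and $\ion$ are symmetric with the same variances; by Cram\'er's decomposition theorem this preserves non-Gaussianity of $\bX$, so all odd cumulants of $\bX$ vanish and, by~\Cref{thm:Marcink_theorem}, $\bX$ still has infinitely many nonzero \emph{even} cumulants. Fix $k$ of these, say $2\le \ell_1<\cdots<\ell_k$, and set $\tau:=\min_{i\in[k]}|\cum_{\ell_i}(\bX)|>0$. For $\by=w\cdot \bx+\ion$, \Cref{fact:additive} and \Cref{fact:homogeneous} give the crucial identity
\[
\cum_{\ell_i}(\by) \;=\; \cum_{\ell_i}(\ion)\;+\;\cum_{\ell_i}(\bX)\cdot \sum_{j=1}^n w_j^{\ell_i},
\]
so an estimate of $\cum_{\ell_i}(\by)$ (obtainable from samples, since $\cum_{\ell_i}(\ion)$ and $\cum_{\ell_i}(\bX)$ are known to the algorithm) immediately yields an estimate of the power sum $p_{\ell_i}:=\sum_j w_j^{\ell_i}=\sum_j |w_j|^{\ell_i}$. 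Invoking~\Cref{thm:number_samples_Lk_w} at each order $\ell_i$ with accuracy $\delta_i$ (and at $\ell=2$ to estimate $\|w\|_2$) produces the required estimates $\widehat p_{\ell_i}$; the overall sample complexity is then exactly the polynomial expression in the statement.

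The remaining step, which I expect to be the main obstacle, is to show that $\widehat p_{\ell_1},\ldots,\widehat p_{\ell_k}$ together with $\|w\|_2$ suffice to approximate $\dist(w,k\text{-sparse})$ within additive $\eps$ (after which a simple threshold at $(c+s)/2$ yields the yes/no decision). Sorting the squared entries as $u_{(1)}\ge u_{(2)}\ge\cdots\ge u_{(n)}$, the squared distance equals $\|w\|_2^{-2}\sum_{j>k} u_{(j)}$, while $p_{\ell_i}=\sum_j u_{(j)}^{\ell_i/2}$ is sharply dominated for large $\ell_i$ by the top few $u_{(j)}$'s. The plan is to recover the top-$k$ magnitudes by greedy peeling: use $p_{\ell_k}^{2/\ell_k}$ as an estimate of $u_{(1)}$, subtract its contribution from the remaining $\widehat p_{\ell_i}$'s, and iterate $k$ times to obtain $u_{(2)},\ldots,u_{(k)}$; the complement $\|w\|_2^2-\sum_{j\le k} u_{(j)}$ then estimates the numerator of $\dist(w,k\text{-sparse})^2$. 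Propagating errors through a $k$-step peeling with potentially widely-spaced $\ell_i$'s requires careful choice of the $\delta_i$'s (decreasing in $i$), and is what ultimately dictates the final sample complexity.

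For the bounded-support refinement, \Cref{thm:distance_non_zero_cumulants} guarantees that above any index $\ell$ there is a nonzero cumulant of magnitude at least $2^{-(4\bdd)^{O(\ell)}}$ at some index in $[\ell+1,(4\bdd)^{O(\ell)}]$. Iterating this bound $k$ times produces $\ell_1<\cdots<\ell_k$ with both $\ell_k$ and $1/\tau$ bounded by tower functions of height $O(k)$, and since bounded support forces $m_{2\ell_k}(\bX)\le \bdd^{2\ell_k}$, plugging these values into the expression of~\Cref{thm:number_samples_Lk_w} yields an overall sample complexity also bounded by a tower of height $O(k)$, as claimed.
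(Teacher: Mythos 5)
Your proposal tracks the paper's proof faithfully at the structural level: symmetrize by differencing to kill odd cumulants (with Cram\'er's theorem preserving non-Gaussianity, a point the paper leaves implicit), invoke Marcinkiewicz to pick $k$ nonzero even cumulants, extract power sums $\|w\|_{\ell_i}^{\ell_i}$ via \Cref{thm:number_samples_Lk_w}, greedily peel off the top-$k$ magnitudes using the largest order first, and feed \Cref{thm:distance_non_zero_cumulants} back into the bounds to get the tower-height estimate in the bounded-support case.

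However, the step you yourself flag as ``the main obstacle'' is in fact the technical heart of the argument, and your sketch does not yet contain the idea that makes it go through. The paper's \Cref{lem:dist_infty_norm} is a genuine quantitative lemma, not a routine estimate: it shows that if $\|w\|_2\le 1$ and $\ell\ge 100/\delta^3$ is even, then an additive error of $(\delta/5)^\ell/2$ in $\|w\|_\ell^\ell$ translates into an additive error of only $\delta$ in $M_\ell^{1/\ell}$ as an estimate of $\|w\|_\infty$. Crucially this holds uniformly even when $\|w\|_\infty$ is itself small (of order $\delta$), which requires a separate case analysis around the threshold $\|w\|_\infty \approx \delta/5$; your one-line justification that $p_{\ell}$ is ``sharply dominated for large $\ell_i$ by the top few $u_{(j)}$'s'' does not address this regime, and the choice of the exponent $3$ in $\ell\gtrsim 1/\delta^3$ is exactly what makes the two-sided bound work. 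Without this lemma one cannot certify that the first peel produces a $\delta$-accurate estimate, and so the entire recursion is unsupported.

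Relatedly, ``requires careful choice of the $\delta_i$'s (decreasing in $i$)'' papers over a second nontrivial point: the paper must set $\delta_{j-1}=(\delta_j/5\ell_j)^{\ell_j}/(2k)$ and then $\ell_{j-1}\ge 100/\delta_{j-1}^3$, so that the accumulated peeling error $\sum_{i<j}|\wt w_i-w_i|\cdot \ell_j$ is swamped by the target accuracy $(\delta_j/5\ell_j)^{\ell_j}/2$ at step $j$. This doubly-exponential shrinkage of the accuracies (and the corresponding growth of the orders) is what produces the tower-type sample complexity and is also what makes the inductive invariant $|\wt w_j - w_j|\le\delta_j/\ell_j$ actually close. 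Your proposal correctly anticipates that such a recursion is needed but stops short of exhibiting it or proving the induction, so as written the argument has a gap precisely where the hard work lies.
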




\ignore{




}


\ignore{
}

We begin by stating the algorithm:

\begin{enumerate}
\item First, recall that as stated earlier, we may assume that $\bX$ and $\ion$ are both symmetric. We rescale all samples by a factor of $C$ so that $\|w\|_2 \in [1/C^2,1]$ in our subsequent analysis. 

We fix $\eps=\frac{s^2 - c^2}{2C^4}$ and apply \Cref{lemma:number_samples_k_moments} with $\ell=2$ to obtain an estimate $s_2$ of $\sum_{i=1}^n w_i^2=\|w\|_2^2=\E[|w \cdot \bX^n + \ion|^2]-\E[|\ion|^2]$ that is accurate to within additive accuracy $\eps/4$ (with probability 0.99). 

\item Set a sequence of error parameters $\delta_1<\delta_2<\ldots<\delta_k$ and natural numbers (orders of cumulants) $\ell_1>\ell_2>\cdots>\ell_k$ with the following properties:
\begin{enumerate}

\item 
$\delta_k =\eps/(12k)$ and $\ell_k \geq 100/\delta_k^3$ is even \ignore{\rnote{The statement of \Cref{clm:estimation_error} required $\ell$ to be even (though I am not totally sure why) so I added it in here}};

\item For $i=k-1,k-2,\ldots,1,$  $\delta_{i} = (\delta_{i+1}/5 \ell_{i+1})^{\ell_{i+1}}/(2k)$ and $\ell_{i} \geq 100/\delta_i^3$ is even;

\item For each $i \in [1,k]$ the $\ell_i$-th cumulant $\cum_{\ell_i}(\bX)$ of $\bX$ is nonzero.

\end{enumerate}

\item For $j=1,\dots,k$: run the algorithm of \Cref{thm:number_samples_Lk_w} to obtain an estimate $M_{\ell_j}$ which satisfies $\big| M_{\ell_j} - \|w\|_{\ell_j}^{\ell_j} \big| \le (\delta_j/5\ell_j)^{\ell_j}/(2k)$ with failure probability at most $1/(20k).$   Set $\wt{w}_j=\min\braces*{1,\abs*{M_{\ell_j} - \sum_{i=1}^{j-1} \wt{w}_i^{\ell_j} }^{1/\ell_j}}$.

(The intuition is that at the $j$-th iteration of this step, the algorithm computes an estimate $\wt{w}_j$ of the magnitude of the $j$-th largest magnitude coordinate in the weight vector $w$.)

\item If $\sum_{i=1}^k \wt{w}_i^2 < (1-\frac{s^2-c^2}{2}) \cdot s_2$, \ignore{\rnote{What I have at the end of the proof suggests this inequality maybe should be ``$\sum_{i=1}^k \wt{w}_i^2 < \red{(1 - {\frac {3c^2} 2} + \frac{^2}{2})} \cdot s_2$.''}} output ``No,'' and otherwise output ``Yes.''
\end{enumerate}

A remark is in order regarding condition 2(c) above.  Recall that by Marcinkiewicz's theorem \cite{Marcink,bryc}, since $\bX$ is not a Gaussian distribution it must have infinitely many nonzero cumulants. (This is where we use the assumption that $\bX$ is not Gaussian; indeed if $\bX$ were Gaussian then $\tau$ as defined in the theorem statement would be zero.)  Hence a sequence of orders $\ell_1 > \cdots > \ell_k$ satisfying conditions 2(a), 2(b) and 2(c) must indeed always exist.

\medskip

To analyze the algorithm we will use the following lemma, which shows that a good estimate of $\|w\|_\ell^\ell$ yields a good estimate of $\|w\|_\infty$:

\begin{lemma}\label{lem:dist_infty_norm}
Given any vector $w$ with $\|w\|_2^2 \le 1$ and $\delta>0,$ let $\ell \geq 100/\delta^3$ be even and let $M_{\ell}$ satisfy $\big|M_{\ell} - \|w\|_{\ell}^{\ell} \big| \le (\frac{\delta}{5})^{\ell}/2$. Then $\big| M_{\ell}^{1/{\ell}}-\|w\|_{\infty} \big| \le \delta$.
\end{lemma}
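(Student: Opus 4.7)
The plan is to prove the lemma by reducing $|M_\ell^{1/\ell} - \|w\|_\infty| \le \delta$ to two comparisons: (i) $M_\ell^{1/\ell}$ is close to $\|w\|_\ell$, and (ii) $\|w\|_\ell$ is close to $\|w\|_\infty$. To set things up, I would first record the standard interpolation-style sandwich (using that $\ell$ is even and $\|w\|_2^2 \le 1$):
\[
\|w\|_\infty \;\le\; \|w\|_\ell \;\le\; \|w\|_\infty^{1-2/\ell},
\]
where the upper bound follows from $\|w\|_\ell^\ell = \sum_i w_i^\ell \le \|w\|_\infty^{\ell-2}\sum_i w_i^2 \le \|w\|_\infty^{\ell-2}$. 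I would then split into two cases based on the magnitude of $\|w\|_\infty$ relative to $\delta$.

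In the \emph{small} case, where $\|w\|_\infty \le \delta/2$, the sandwich gives $\|w\|_\ell^\ell \le (\delta/2)^{\ell-2}$. Since the additive error $(\delta/5)^\ell/2$ is negligible compared to $(\delta/2)^{\ell-2}$ (the ratio is at most $(2/5)^\ell \cdot \delta^2/8$), this yields $M_\ell \le 2(\delta/2)^{\ell-2}$, and taking $\ell$-th roots gives
\[
M_\ell^{1/\ell} \;\le\; (\delta/2)\cdot(8/\delta^2)^{1/\ell}.
\]
For $\ell \ge 100/\delta^3$ one checks that $(8/\delta^2)^{1/\ell} \le 2$, since the exponent $(1/\ell)\ln(8/\delta^2)$ is at most $(\delta^3/100)(\ln 8 + 2\ln(1/\delta)) \le \ln 2$. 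Hence both $M_\ell^{1/\ell}$ and $\|w\|_\infty$ lie in $[0,\delta]$, so $|M_\ell^{1/\ell} - \|w\|_\infty| \le \delta$.

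In the \emph{large} case, where $\|w\|_\infty \ge \delta/2$, the value $\|w\|_\ell^\ell \ge (\delta/2)^\ell$ dominates the additive error, so writing $M_\ell = \|w\|_\ell^\ell (1+\xi)$ one has $|\xi| \le (2/5)^\ell/2$. Using $|(1+\xi)^{1/\ell} - 1| \le |\xi|$ and $\|w\|_\ell \le 1$, this gives $|M_\ell^{1/\ell} - \|w\|_\ell| \le (2/5)^\ell$, comfortably below $\delta/2$. For the comparison of $\|w\|_\ell$ with $\|w\|_\infty$, the sandwich yields $|\|w\|_\ell - \|w\|_\infty| \le \|w\|_\infty(\|w\|_\infty^{-2/\ell} - 1)$; since $\|w\|_\infty \ge \delta/2$ implies $\ln(1/\|w\|_\infty) \le \ln(2/\delta)$, the exponent $(2/\ell)\ln(1/\|w\|_\infty) \le (\delta^3/50)\ln(2/\delta)$ is small for $\delta \le 1$, whence $\|w\|_\infty^{-2/\ell} - 1 \le (\delta^3/25)\ln(2/\delta) \le \delta/2$. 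Combining the two bounds yields $|M_\ell^{1/\ell} - \|w\|_\infty| \le \delta$.

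The main obstacle is controlling the $\ell$-th root's sensitivity to additive error in the regime where $\|w\|_\ell^\ell$ itself is very small, since naively the $\ell$-th root amplifies additive error by a factor that blows up as the argument shrinks. The key resolution is that when $\|w\|_\infty$ (and hence $\|w\|_\ell^\ell$) is small, one does not try to get a multiplicative approximation but instead upper-bounds $M_\ell^{1/\ell}$ directly via the interpolation bound, exploiting the ambient constraint $\|w\|_2 \le 1$; the quantitative hypothesis $\ell \ge 100/\delta^3$ is what is needed to absorb the $\ln(1/\delta)$ terms through the slack $2/\ell$ in the exponent $1-2/\ell$.
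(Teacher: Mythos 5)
Your proof is correct and follows essentially the same strategy as the paper: both use the sandwich $\|w\|_\infty^\ell \le \|w\|_\ell^\ell \le \|w\|_\infty^{\ell-2} \|w\|_2^2$ (the paper's Fact~\ref{fact:approx_ell_infty}) and split into a small/large case on $\|w\|_\infty$ (cutoff $\delta/5$ in the paper, $\delta/2$ in yours), with the small case handled identically in both by showing that $M_\ell^{1/\ell}$ and $\|w\|_\infty$ are each at most $\delta$. Where you diverge is the large case: the paper bounds $w_1 - M_\ell^{1/\ell}$ and $M_\ell^{1/\ell} - w_1$ separately via a fairly intricate chain of elementary inequalities of the form $(1 \pm x)^{1/\ell} \lessgtr 1 \pm O(x/\ell)$, whereas you insert $\|w\|_\ell$ via the triangle inequality and then (i) handle $|M_\ell^{1/\ell} - \|w\|_\ell|$ by writing $M_\ell = \|w\|_\ell^\ell(1+\xi)$ with $|\xi|$ small (using $\|w\|_\ell^\ell \ge (\delta/2)^\ell$) and the clean elementary fact $|(1+\xi)^{1/\ell}-1| \le |\xi|$, and (ii) handle $|\|w\|_\ell - \|w\|_\infty|$ by the sandwich directly. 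This decomposition is shorter and more transparent than the paper's algebra, and still only uses the hypothesis $\ell \ge 100/\delta^3$ to absorb the logarithmic factors. The one thing worth making explicit, since it carries the whole large case, is the justification of $|(1+\xi)^{1/\ell}-1| \le |\xi|$: for $\xi \ge 0$ it is concavity of $u \mapsto u^{1/\ell}$, and for $-1 < \xi < 0$ it is the fact that $u^{1/\ell} \ge u$ on $(0,1)$.
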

We defer the proof of \Cref{lem:dist_infty_norm} to \Cref{sec:proof_infty_norm} and use it to prove \Cref{thm:general_alg}.

\begin{proofof}{\Cref{thm:general_alg}}
Without loss of generality we assume that the coordinates of $w$ satisfy $w_1 \ge w_2 \ge \cdots \ge w_n \ge 0$. We use induction to prove that $|\wt{w}_j-w_j| \le \delta_j/\ell_j$ for all $j=1,\dots,k.$

For the base case $j=1$, we have that the difference between $M_{\ell_1}$ and $\|w\|_{\ell_1}^{\ell_1}$ has magnitude at most $(\delta_1/5\ell_1)^{\ell_1}/2k$, so we can apply \Cref{lem:dist_infty_norm}.\ignore{\rnote{Why does Step 2 need $\big| M_{\ell_j} - \|w\|_{\ell_j}^{\ell_j} \big| \le (\delta_j/5\ell_j)^{\ell_j}/k$ instead of just $\big| M_{\ell_j} - \|w\|_{\ell_j}^{\ell_j} \big| \le (\delta_j/5\ell_j)^{\ell_j}$ --- at least in this $j=1$ case it seems like it doesn't need the final ``$/k$''?}} 
The value of $\tilde{w}_1$ as defined in Step~3 is $M_{\ell_1}^{1/\ell_1}$, 
so \Cref{lem:dist_infty_norm} gives that $|\wt{w}_1 - w_1| \le \delta_1/\ell_1$.

For the inductive step, we assume that the claimed bound holds for all $\wt{w}_1,\ldots,\wt{w}_{j-1}$, and we will apply \Cref{lem:dist_infty_norm} to bound the distance between $w_j$ and $\wt{w}_j$. We bound the error between $M_{\ell_j} - \sum_{i=1}^{j-1} \wt{w}_i^{\ell_j}$ and $\|w\|_{\ell_j}^{\ell_j} - \sum_{i=1}^{j-1} w_i^{\ell_j}$ by
\begin{align*}
& |M_{\ell_j} - \|w\|_{\ell_j}^{\ell_j}| + \sum_{i=1}^{j-1} |\wt{w}_i^{\ell_j} - w_i^{\ell_j}| \\
\le & (\delta_j/5 \ell_j)^{\ell_j}/2k + \sum_{i=1}^{j-1} |\wt{w}_i - w_i| \cdot \ell_j \tag{using $0 \le \wt{w}_i,w_i \le 1$}\\
\le & (\delta_j/5 \ell_j)^{\ell_j}/2k + \sum_{i=1}^{j-1} (\delta_i/\ell_i) \cdot \ell_j\\
\le & (\delta_j/5 \ell_j)^{\ell_j}/2k + \sum_{i=1}^{j-1} \delta_i \le (\delta_j/5 \ell_j)^{\ell_j}/2.
\end{align*}
In the third step of our algorithm, we use $M_{\ell_j}-\sum_{i=1}^{j-1} \wt{w}_i^{\ell_j}$ as an estimation of $\|(w_j,w_{j+1},\ldots,w_n)\|_{\ell_j}^{\ell_j}$. The above calculation shows the error of this estimation is $(\delta_j/5 \ell_j)^{\ell_j}/2$. Thus applying \Cref{lem:dist_infty_norm} to $\abs*{M_{\ell_j} - \sum_{i=1}^{j-1} \wt{w}_i^{\ell_j} }^{1/\ell_j}$ and $\|(w_j,w_{j+1},\ldots,w_n)\|_{\infty}$ with its ``$\delta$'' parameter being $\delta_j/\ell_j$, we get that $|\wt{w}_j-w_j| \le \delta_j/\ell_j$. This concludes the inductive proof. \ignore{\rnote{Can we add a bit more explanation of this last sentence? I guess the idea is that we're applying \Cref{lem:dist_infty_norm} with its ``$\delta$'' being $\delta_j/\ell_j$, but it's not totally clear to me how knowing that the error between $M_{\ell_j} - \sum_{i=1}^{j-1} \wt{w}_i^{\ell_j}$ and $\|w\|_{\ell_j}^{\ell_j} - \sum_{i=1}^{j-1} w_i^{\ell_j}$ is at most $ (\delta_j/5 \ell_j)^{\ell_j}/2$ gives us the condition that \Cref{lem:dist_infty_norm} needs, which I guess the way we are applying it is that $\big|M_{\ell_j} - \|w\|_{\ell_j}^{\ell_j} \big| \le (\frac{\delta}{5\ell_j})^{\ell_j}/2$.} }

With this upper bound on each $|\wt{w}_j-w_j|$ in hand, we can infer that\ignore{\rnote{The \red{2} below seems to follow from some absolute scaling that is being assumed on $w$, i.e that its 2-norm is 1 or at least that every entry is at most 1 or so}
}
\begin{equation} \label{eq:useful-bound}
\sum_{i=1}^k | \wt{w}_i^2 - w_i^2| = \sum_{i=1}^k |\wt{w}_i - w_i| \cdot |\wt{w}_i + w_i| \leq \sum_{i=1}^k (\delta_i/\ell_i) \cdot 3 \le \eps/4,
\end{equation}
where the first inequality uses $\|w\|_2 \leq 1$ and the closeness of each $\wt{w}_i$ to $w_i$ to upper bound $|\wt{w}_i + w_i| \leq 3.$

We now use \Cref{eq:useful-bound} to establish correctness of our algorithm.
We first consider the ``yes'' case in which $\dist(w,k\text{-sparse}) \leq c$. In this case, we have that \ignore{\rnote{The following inequality had been $\sum_{i=1}^k w_i^2 - \eps/4 = (c^2) \cdot \|w\|_2^2 -\eps/4$, but (a) I think it is to be an inequality, right? and (b) I think by the definition of being $c$-close to sparse it should be $1-c^2$ instead of $c^2$. Let me know if I'm wrong about this?}}
 $\sum_{i=1}^k \wt{w}_i^2 \ge \sum_{i=1}^k w_i^2 - \eps/4 \ge (1-c^2) \cdot \|w\|_2^2 -\eps/4$. Since $s_2 = \|w\|_2^2 \pm \eps/4$, we have \[
\sum_{i=1}^k |\wt{w}_i|^2 \ge (1-c^2) (s_2 - \eps/4) - \eps/4 \ge (1-c^2) \cdot s_2 - \eps/2 
\]
Furthermore, since $\|w\|_2 \in [1/C^2,1]$ shows $\|w\|_2^2 \in [1/C^4, 1]$, given $\eps=\frac{s^2 - c^2}{2C^4}$, we have 
\[
s_2 \ge \|w\|_2^2 - \eps/4 \ge 1/C^4 - 1/(8C^4) \ge 2/(3C^4) > \eps/(s^2-c^2)
\] and we can simplify our lower bound on $\sum_{i=1}^k \wt{w}_i^2$ to
\[
(1-c^2) s_2 - \eps/2 > (1-c^2) s_2 - \frac{s^2-c^2}{2} \cdot s_2 = 
\parens*{1 - {\frac {s^2 - c^2} 2} } \cdot s_2,
\]
from which we see that the algorithm is correct in the ``yes''-case.

Similarly, in the ``NO" case, we have $\sum_{i=1}^k \wt{w}_i^2 < \parens*{1 - {\frac {s^2 - c^2} 2} } \cdot s_2$. \ignore{\rnote{I guess this red now needs to be $\red{
\parens*{1 - {\frac {3c^2} 2} +  \frac{s^2}{2}} \cdot s_2}$.  Shall we write down a verification of this part? Or should we just say that the argument is  similar?}}
 This proves the assertions made in the two bulleted statements of the theorem.

Finally, when $\bX$ is supported in $[-\bdd,\bdd]$, we apply \Cref{thm:distance_non_zero_cumulants} to upper bound $\ell_i$: given any $\ell_{i+1}$ and $\delta_{i+1}$, for $t=100 k^3/(\delta_{i+1}/5 \ell_{i+1})^{3 \ell_{i+1}}$, there always exists $\ell_i \in [t, (4\bdd)^{O(t)}]$ with $\kappa_{\ell_i}(X) \ge 2^{-(4\bdd)^{O(t)}}$. Thus $\tau$ is also lower bounded by $2^{-(4\bdd)^{O(\ell_1)}}$.

\end{proofof}

\subsection{Proof of \Cref{lem:dist_infty_norm}}\label{sec:proof_infty_norm}

For convenience we assume throughout this subsection that $w_1 \ge w_2 \ge \cdots \ge w_n \geq 0$ in the vector $w$.

\begin{fact}\label{fact:approx_ell_infty}
If $\|w\|_2 \le 1$, then $\|w\|_\ell^\ell$ is always between $w_1^\ell$ and $w_1^{\ell-2}$ for any $\ell \ge 3$.
\end{fact}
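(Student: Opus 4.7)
The plan is to prove both bounds directly from the ordering assumption $w_1 \ge w_2 \ge \cdots \ge w_n \ge 0$ and the constraint $\sum_i w_i^2 \le 1$. There is no real obstacle here; this is a one-line argument in each direction, so I will just lay out the two steps cleanly.

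First, for the lower bound $\|w\|_\ell^\ell \ge w_1^\ell$: since every $w_i \ge 0$, we have $\|w\|_\ell^\ell = \sum_{i=1}^n w_i^\ell \ge w_1^\ell$ by dropping all but the first term.

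Second, for the upper bound $\|w\|_\ell^\ell \le w_1^{\ell-2}$: write $w_i^\ell = w_i^{\ell-2}\cdot w_i^2$. Because $0 \le w_i \le w_1$ and $\ell - 2 \ge 1$ (using $\ell \ge 3$), we have $w_i^{\ell-2} \le w_1^{\ell-2}$ for every $i$. Therefore
\[
\|w\|_\ell^\ell \;=\; \sum_{i=1}^n w_i^{\ell-2}\cdot w_i^2 \;\le\; w_1^{\ell-2}\sum_{i=1}^n w_i^2 \;\le\; w_1^{\ell-2},
\]
where the last inequality uses $\|w\|_2^2 \le 1$. Combining the two bounds yields the fact.

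The main (non)-obstacle to flag is simply that the step $w_i^{\ell-2}\le w_1^{\ell-2}$ requires the exponent $\ell-2$ to be nonnegative, which is why the hypothesis $\ell \ge 3$ (in fact $\ell \ge 2$ would suffice, with $\ell=2$ being the trivial case $\|w\|_2^2 \le 1 = w_1^0$) is present in the statement.
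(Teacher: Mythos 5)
Your proof is correct and follows exactly the same one-line argument as the paper: the lower bound drops all but the leading term, and the upper bound factors $w_i^\ell = w_i^{\ell-2} w_i^2$, bounds $w_i^{\ell-2} \le w_1^{\ell-2}$, and then uses $\|w\|_2^2 \le 1$. Your side remark about $\ell \ge 3$ being needed only so that the exponent $\ell - 2$ is nonnegative is also accurate.
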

\begin{proof}
$w_1^\ell \le \|w\|_\ell^\ell=\sum_{i=1}^n w_i^\ell \leq w_1^{\ell-2} \sum_{i=1}^n w_i^2 \le w_1^{\ell-2}$.
\end{proof}

\begin{proofof}{\Cref{lem:dist_infty_norm}}
Recall that by assumption we have $w_1=\|w\|_{\infty} \le 1$. Let $\taua$ denote $M_\ell^{1/\ell}$ and $\Delta \le (\delta/5)^{\ell}/2$ denote the error such that $M_\ell = \|w\|_{\ell}^{\ell} \pm \Delta$. We consider two cases based on the size of $w_1$:

\begin{enumerate}
\item The first case is that $w_1\le \delta/5$. In this case we upper bound $\taua$ by
\begin{align*}
& (\|w\|_\ell^\ell + \Delta)^{1/\ell} \\
\le & (w_1^{\ell-2}+\Delta)^{1/\ell}
\tag{using the upper bound from \Cref{fact:approx_ell_infty} on $\|w\|_\ell^\ell$ } \\
\le & ((\delta/5)^{\ell-2}+(\delta/5)^{\ell}/2)^{1/\ell} \\
\le & 2^{1/\ell} \cdot (\delta/5)^{(\ell-2)/\ell} \\
\le & 2^{1/\ell} \cdot (5/\delta)^{2/\ell} \cdot \delta/5 \tag{using the fact $\ell=100/\delta^3$}\\
\le & 2\delta/5.
\end{align*}
So we have that $\big| \taua - w_1 \big| \leq 2 \delta/5 + w_1$, which is at most $3 \delta/5$ by the assumption of $w_1$ and the Lemma.

\item The second case is that $w_1> \delta/5$. In this case we first bound $w_1-\taua$ by
\begin{align*}
w_1-(\|w\|_\ell^\ell - \Delta)^{1/\ell} & \le w_1 - (w_1^{\ell} - \Delta)^{1/\ell} \tag{using the lower bound from \Cref{fact:approx_ell_infty} on $\|w\|_\ell^\ell$}\\
& = w_1 - w_1 ( 1 - \frac{\Delta}{w_1^\ell})^{1/\ell} \\
& \le w_1 - w_1 (1 - 2 \frac{\Delta}{\ell \cdot w_1^\ell}) \tag{using $(1-x)^{1/\ell} \ge 1-2x/\ell$ when $x \le 1/2$} \\
& = 2 w_1 \cdot \frac{\Delta}{\ell \cdot w_1^\ell}.
\end{align*}
Then we bound $\taua-w_1$ by
\begin{align*}
(\|w\|_\ell^\ell + \Delta)^{1/\ell} - w_1 & \le (w_1^{\ell-2} + \Delta)^{1/\ell} - w_1 \tag{using the upper bound from \Cref{fact:approx_ell_infty} on $\|w\|_\ell^\ell$}\\
& \le (w_1^{\ell-2} + \Delta)^{1/\ell} - (w_1^{\ell}+\Delta)^{1/\ell} + (w_1^{\ell}+\Delta)^{1/\ell} - w_1 \\
& \le (w_1^\ell + \Delta)^{1/\ell} \cdot \left( \parens*{\frac{w_1^{\ell-2}+\Delta}{w_1^{\ell}+\Delta}}^{1/\ell} - 1 \right) + w_1 \parens*{1+\frac{\Delta}{w_1^\ell}}^{1/\ell} - w_1\\
& \le (w_1^\ell+\Delta)^{1/\ell} \cdot \left( \parens*{1+\frac{w_1^{\ell-2}(1-w_1^2)}{w_1^\ell+\Delta}}^{1/\ell} - 1 \right) + w_1 \parens*{1 + \frac{\Delta}{\ell \cdot w_1^\ell}} - w_1 \tag{using $(1+x)^{1/\ell} \le 1+x/\ell$} \\
& \le (w_1^\ell + \Delta)^{1/\ell} \cdot \frac{w_1^{\ell-2}}{\ell(w_1^\ell + \Delta)} + w_1 \frac{\Delta}{\ell \cdot w_1^\ell}.
\tag{using $(1+x)^{1/\ell} \le 1+x/\ell$ again} 
\end{align*}
We combine the above two bounds to get that
\[
|\taua-w_1| \le \ignore{w_1 \cdot \frac{\Delta}{\ell \cdot w_1^\ell}+} (w_1^\ell + \Delta)^{1/\ell} \cdot \frac{w_1^{\ell-2}}{\ell(w_1^\ell + \Delta)} + 3 w_1 \frac{\Delta}{\ell \cdot w_1^\ell}.
\]
Plugging in our bounds on $w_1$ and $\Delta \le (\delta/5)^{\ell}/2$ into this inequality, this is at most
\begin{align*}
(w_1^{\ell} + \Delta)^{1/\ell} \cdot \frac{1}{\ell \cdot w_1^2} + 3 \frac{\Delta}{\ell \cdot w_1^{\ell-1}} & \le (w_1 + \Delta^{1/\ell}) \cdot \frac{1}{\ell \cdot w_1^2} + 3 \frac{\Delta}{\ell w_1^{\ell-1}} \tag{using $(x+y)^{1/\ell} \le x^{1/\ell}+y^{1/\ell}$}\\
& \le \frac{1}{\ell \cdot \delta/5} + \frac{\delta/5}{\ell \cdot (\delta/5)^2} + \frac{3 \cdot (\delta/5)^{\ell}/2}{\ell \cdot (\delta/5)^{\ell-1}} \tag{since in this case $w_1 \ge \delta/5$} \\
& \le \frac{\delta}{2}. \tag{using $\ell \geq 100/\delta^3$}
\end{align*} 
\qedhere
\end{enumerate}
\end{proofof}


\section{Bounding the gap between non-zero cumulants} \label{sec:gap}


\bigskip

The result of Marcinkiewicz (\Cref{thm:Marcink_theorem}) shows that any non-Gaussian random variable $\bX$  has an infinite number of non-zero cumulants. However, this result is not constructive and leaves open two obvious questions: 
\begin{enumerate}
\item Suppose $\kappa_\ell(\bX) \not =0$. What can we say about $\arg\min_{\ell'>\ell} \kappa_{\ell'}(\bX) \not =0$? In other words, how many consecutive zero cumulants can $\bX$ have following the non-zero cumulant $\kappa_{\ell}(\bX)$? 
\item Merely having a non-zero cumulant $\kappa_{\ell'}(\bX)$ is not sufficient for us; since our results depend on the magnitude of the non-zero cumulants, we would also like a lower bound on the magnitude of $\kappa_{\ell'}(\bX)$ (where $\ell'$ is as defined above). Can we get such a lower bound on $\kappa_{\ell'}(\bX)$? 
\end{enumerate}
The main result of this section is to give an \emph{effective} answer to both these questions when the random variable $\bX$ has bounded support. To the best of our knowledge (and based on conversations with experts~\cite{Neeman-comm, Bryc-comm, Janson-comm}), previously no such effective bound was known for gaps between non-zero cumulants. 

Before stating our result, we note that for any real random variable $\bX$ the random variable $\bY = \bX - \bX'$ (where $\bX'$ is an independent copy of $\bX$) is (i) symmetric and (ii) has $\kappa_{\ell}(\bY) = (1+(-1)^{\ell}) \kappa_{\ell}(\bX)$. Thus for the purposes of this section, it suffices to restrict our attention to symmetric random variables and even-numbered cumulants.


\begin{theorem}\label{thm:distance_non_zero_cumulants}
Given any $\ell$ and any symmetric random variable $\bX$ with unit variance and support $[-\bdd, \bdd ]$, {where $\bdd \geq 1$,} there exists $\ell'=(\Theta(1)\cdot \bdd^4 \log \bdd)^{\ell}$ such that $\big| \kappa_j(\bX) \big| \ge 2^{-\ell'}$ for some $j \in (\ell,\ell']$.
\end{theorem}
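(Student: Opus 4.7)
The plan is to argue by contradiction: assume $|\kappa_j(\bX)| < 2^{-\ell'}$ for every $j \in (\ell, \ell']$ with $\ell' = (C \bdd^4 \log \bdd)^{\ell}$ for an absolute constant $C$ to be chosen, and derive a contradiction from the hypotheses that $\bX$ has unit variance and support $[-\bdd, \bdd]$. The main tools will be the Hadamard three-circle theorem (to propagate smallness from a tiny disk around $0$ out to a radius where the contradiction becomes visible), together with Jensen's formula and the Poisson--Jensen representation (to handle the zeros of the moment generating function, which obstruct the analyticity of $\log M$ on large disks).

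First I would set up the analytic framework. Let $M(t) = \E[e^{t\bX}]$ be the moment generating function. The bounded-support hypothesis gives $|M(t)| \le e^{\bdd|t|}$ for all $t \in \mathbb{C}$, so $M$ is entire of order one and type at most $\bdd$; the symmetry of $\bX$ makes $M$ even, and unit variance gives $M(t) = 1 + t^2/2 + O(t^4)$ near $0$. Let $T_\ell(t) = \sum_{j=0}^{\ell} \kappa_j(\bX) t^j/j!$ be the degree-$\ell$ Taylor polynomial of $K(t) := \log M(t)$ (analytic on any simply connected domain through $0$ on which $M$ is nonzero), and set $g(t) = K(t) - T_\ell(t) = \sum_{j > \ell} \kappa_j(\bX) t^j/j!$. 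The hypothesis directly controls the Taylor coefficients of $g$ in degrees $\ell+1, \ldots, \ell'$, while \Cref{clm:up_bound_cumulant} gives the crude bound $|\kappa_j(\bX)| \leq (\bdd e)^j j!$ for every $j$. Splitting the series for $g$ at degree $\ell'$, on the small disk $|t| \leq r_0 := 1/(2\bdd e)$ one obtains
\[
|g(t)| \,\leq\, 2^{-\ell'} e^{r_0} \,+\, \sum_{j > \ell'} (\bdd e r_0)^j \,\leq\, O(2^{-\ell'}).
\]

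The heart of the argument is to extend this smallness out to a radius $r_* \approx 17\bdd$, where the contradiction can be produced. I would apply the Hadamard three-circle theorem to $g$, comparing the tight bound on $|t| = r_0$ with a crude bound on an outer circle $|t| = R$, with $R$ in the range $\bdd^{\Theta(\ell)}$. The obstruction is that $K$ is not analytic on the full disk $|t| \leq R$ because $M$ can vanish there; by Jensen's formula however, $M$ has only $O(\bdd R)$ zeros in $|t| \leq R$, so one can choose $R$ so that an annulus near $|t| = R$ is well-separated from zeros, and then the Poisson--Jensen representation controls $\log|M|$ from below by $-O(\bdd R \log R)$, yielding the crude bound $|g(t)| \leq (\bdd R)^{O(\ell)}$ on $|t|=R$. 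To apply the three-circle theorem across the annulus $r_0 \leq |t| \leq R$ despite the interior zeros, one works on a slit domain (slitting along a radial ray avoiding the zeros of $M$ in the annulus) or equivalently multiplies $g$ by a Blaschke-type factor absorbing the zeros; the conformal distortion of the slit is what produces the relatively weak three-circle exponent and, through it, the final quantitative dependence $\ell' = (\Theta(1) \bdd^4 \log \bdd)^{\ell}$.

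With $|g(t)| \le 1/2$ on $|t| = r_*$ established, the representation $M(t) = e^{T_\ell(t) + g(t)}$ yields $|M(t)| \geq e^{\mathrm{Re}(T_\ell(t)) - 1/2}$ on this circle. A Carath\'eodory-type lower bound applied to $T_\ell$ (which satisfies $T_\ell(0)=0$), combined with Cauchy's estimate $\max_{|t|=r_*/2} |T_\ell(t)| \geq r_*^2/8$ coming from the guaranteed coefficient $\kappa_2/2! = 1/2$, gives $\max_{|t|=r_*} \mathrm{Re}(T_\ell(t)) \geq r_*^2/16$. For $r_* \geq 17\bdd$ this exceeds $\bdd r_*$, so $|M(t)|$ would have to exceed $e^{\bdd r_*}$ at some point of $|t| = r_*$, contradicting $|M(t)| \le e^{\bdd|t|}$. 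The main technical obstacle is the Paragraph~3 step: quantifying the three-circle extension in the presence of the zeros of $M$. Handling those zeros via slit-domain estimates and Jensen/Poisson--Jensen bookkeeping is precisely where the $\bdd^4 \log \bdd$ per-level dependence arises, and writing this out carefully will be the most delicate part of the proof.
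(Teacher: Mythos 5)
Your high‑level strategy is closely related to the paper's (both use the three‑circle theorem, Jensen's bound on the number of zeros of $M$, and the fact that bounded support makes $M$ entire of exponential type $\bdd$), but your choice of the function to which the three‑circle theorem is applied creates a genuine gap, and it is precisely the gap that the paper's construction is designed to avoid.

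You apply the three‑circle theorem to $g = K - T_\ell$, propagating $|g|$ small from the inner disk $|t|\le 1/(2e\bdd)$ out to $|t|\approx 17\bdd$. But $K = \log M$ has logarithmic branch points wherever $M$ vanishes, and the paper's own Claim~\ref{clm:zero_roots} shows $M$ \emph{does} have a zero at radius at most $200\bdd^3$, which may well lie inside the disk you want to traverse. Near such a point $\mathsf{Re}(K)\to -\infty$, so $|g|$ blows up; a Blaschke factor absorbs zeros of an analytic function, not logarithmic singularities, so it cannot tame this. The slit‑annulus alternative fails for a more basic reason: any slit that renders the annulus simply connected is a radial arc, and every circle $|t|=r$ centered at the origin crosses it, so the three‑circle theorem (which compares suprema on full circles) is not applicable on that domain without a nontrivial harmonic‑measure analysis that you do not supply, and whose quantitative outcome is exactly what the theorem needs. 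There is also a circularity lurking: showing $|g|$ small out to radius $r_*$ would show $M = e^{T_\ell+g}$ is nonvanishing there, but you need that nonvanishing in advance to make $g$ analytic on the annulus in the first place.

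The paper's key move — the one your sketch is missing — is to replace $K - T_\ell$ by the \emph{entire} function $g_{\text{paper}}(z) = e^{P_\ell(z)} - M_{\bX}(z)$ (equivalently $M\,e^{-T_\ell}-1$). This difference of two entire functions has no singularities anywhere, so the three‑circle theorem applies on the full disk without any slitting. Moreover, the paper turns the troublesome zero $z_0$ of $M$ to its advantage: at $z_0$ one has $|g_{\text{paper}}(z_0)| = |e^{P_\ell(z_0)}|$, which Claim~\ref{clm:lower_bound_mag} bounds below, and then the three‑circle theorem is run \emph{inward} (from the crude outer bound and the lower bound at $z_0$) to contradict the smallness of $g_{\text{paper}}$ forced on the inner disk by the cumulant hypothesis. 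Your final contradiction mechanism (Borel--Carath\'eodory on $T_\ell$ using $\kappa_2=1$, against $|M|\le e^{\bdd|t|}$) is fine on its own, but it requires the very smallness of $g$ at radius $\approx 17\bdd$ that the propagation step cannot deliver as written.
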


Before delving into the formal proof of this theorem, we give a high-level overview. Recall that the cumulant generating function and moment generating function of $\bX$ are defined respectively as
\[
K_{\bX}(z) = \sum_{j \ge 1} \frac{\kappa_j(\bX)}{j!} z^j; \ \ M_{\bX}(z) = \E[e^{z\bX}].
\]
The first main ingredient (Claim~\ref{clm:zero_roots}) is that  the function $M_{\bX}(z)$ has a root in the complex disc of radius $O(\bdd^3)$ centered at the origin. The proof of this is somewhat involved and uses a range of ingredients such as bounding the number of zeros of entire functions and the Hadamard factorization theorem.

Now, suppose it were the case that $|\kappa_{j}(\bX)| \leq 2^{-\ell'}$ for all $j \in (\ell,\ell']$ for a sufficiently large $\ell'$. We consider the ``truncated" function $P_{\ell}(z)$  
$$
P_{\ell}(z) = \sum_{j=1}^{\ell} \frac{\kappa_j(\bX)}{j!} z^j. 
$$
Observe that while $K_{\bX}(z)$ is not necessarily well defined everywhere, (i) it is easy to show that it is well defined in the open disc of radius $1/(e\bdd)$ (call this set $\mathcal{B}$); (ii) the function $P_{\ell}(z)$ is an entire function. Further, 
since $\kappa_{j}(\bX)$ is assumed to have very small magnitude for all $j \in (\ell,\ell']$, it is not difficult to show that  $P_{\ell}(z)$ and $K_{\bX}(z)$ are close to each other
in $\mathcal{B}$. 
 Using 
$e^{K_{\bX}(z)} = M_{\bX}(z)$ in $\mathcal{B}$ (since both are well-defined), we infer that $e^{P_{\ell}(z)}$ and $M_{\bX}(z)$ are also close to each other in $\mathcal{B}$. In other words, the function $h(z) := e^{P_{\ell}(z) - M_{\bX}(z)}$ is close to zero in $\mathcal{B}$.  

Finally, we observe that $e^{P_{\ell}(z)}$ has no zeros in $\mathbb{C}$ and in fact, we can show that it has relatively large magnitude within a ball of radius $O(\bdd^3)$. Using the first ingredient that $M_{\bX}(z)$ has a zero in this disc, we derive that the maximum of $|h(z)|$ is large in a disc of radius $O(\bdd^3)$. However, since $h$ is an entire function, once $\ell'$ is sufficiently large, this contradicts the fact that $h(z)$ is close to zero in $\mathcal{B}$ (this uses Hadamard's three circle theorem). This 
finishes the proof.

\begin{proofof}{Theorem~\ref{thm:distance_non_zero_cumulants}}
Towards a contradiction, fix $\zeta = 2^{-\ell'}$ and let us assume that $|\kappa_{j}(\bX)| < \zeta$ for $j \in (\ell, \ell']$. Let us consider the moment generating function $M_{\bX}: \mathbb{C} \rightarrow \mathbb{C}$ defined 
by $
M_{\bX}(z) = \mathbf{E}[e^{z \bX}]. 
$ From the fact that the random variable $\bX$ is bounded in $[-\bdd,\bdd ]$, it follows that the function $M_{\bX}$ is an entire function (i.e., holomorphic over all of $\mathbb{C}$). Next, consider the cumulant generating function $K_{\bX}: \mathbb{C} \rightarrow \mathbb{C}$ defined as
\[
K_{\bX}(z) = \sum_{j \ge 1} \frac{\kappa_j(\bX)}{j!} z^j. 
\]
From \Cref{clm:up_bound_cumulant}, we know that $|\kappa_{j}(\bX)| \le \bdd^j \cdot e^{j} \cdot j!$. Define the open disc $\mathcal{B} = \{z: |z| < 1/(e \bdd )\}$ and observe that the right hand side series is absolutely convergent in $\mathcal{B}$ and hence $K_{\bX}$ is holomorphic in $\mathcal{B}$. We recall from the definition of cumulants that for  $z \in \mathcal{B}$, 
$
e^{K_{\bX}(z)}  = M_{\bX}(z). 
$ 

We will need the following claim about the roots of $M_{\bX}$:  
\begin{claim}\label{clm:zero_roots}
For any symmetric random variable $\bX$ with unit variance and support $[-\bdd,\bdd]$ where $\bdd \geq 1$, there exists $z_0$ with $|z_0| \le 200 \bdd^3$ such that $M_{\bX}(z_0) = \E[e^{z_0 X}]=0$.
\end{claim}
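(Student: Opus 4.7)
My plan is a contradiction argument using the Borel--Carath\'eodory inequality together with Cauchy's estimate for derivatives: the constraint $\kappa_2(\bX)=1$, coupled with the exponential type bound on $M_{\bX}$, will force $M_{\bX}$ to vanish in a small disc. Suppose for contradiction that $M_{\bX}$ has no zeros in the closed disc $|z| \le R$ for $R := 200\bdd^3$. Because zeros of entire functions are isolated, $M_{\bX}$ is nonzero on a slightly larger simply connected region, so $f(z) := \log M_{\bX}(z)$ is well-defined and analytic on $|z|\le R$ with $f(0) = \log 1 = 0$.

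The first step is to control $\operatorname{Re}(f)$ throughout the disc. Since $\bX \in [-\bdd,\bdd]$,
\[
|M_{\bX}(z)| = \big|\E[e^{z\bX}]\big| \le \E\big[e^{\operatorname{Re}(z)\bX}\big] \le e^{\bdd |z|},
\]
so $\operatorname{Re}(f(z)) = \log|M_{\bX}(z)| \le \bdd R$ on $|z|\le R$. Applying Borel--Carath\'eodory to $f$ (valid because $f(0)=0$) with outer radius $R$ and inner radius $R/2$ then gives
\[
\max_{|z|\le R/2}|f(z)| \;\le\; \frac{2 \cdot (R/2)}{R - R/2}\cdot \bdd R \;=\; 2\bdd R.
\]

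The second step uses Cauchy's estimate on the inner disc to bound $f''(0)$:
\[
|f''(0)| \;\le\; \frac{2!}{(R/2)^2}\cdot \max_{|z|=R/2}|f(z)| \;\le\; \frac{16\bdd}{R}.
\]
On the other hand, the cumulant expansion $f(z) = \sum_{j\ge 1} \kappa_j(\bX)\,z^j/j!$ identifies $f''(0) = \kappa_2(\bX) = 1$ directly from unit variance. Combining these gives $1 \le 16\bdd/R$, i.e.\ $R \le 16\bdd$. Since $\bdd \ge 1$, this contradicts $R = 200\bdd^3 \ge 200\bdd > 16\bdd$, so $M_{\bX}$ must vanish at some $z_0$ with $|z_0| \le 200\bdd^3$, as required.

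I do not anticipate real obstacles. The one technicality is ensuring $f$ is analytic on the \emph{closed} disc $|z|\le R$ when invoking Borel--Carath\'eodory, which the isolation-of-zeros remark handles. Note that the argument in fact yields the stronger bound $|z_0| \le 16\bdd$, so the $200\bdd^3$ threshold stated in the claim is quite generous --- presumably the authors prefer a slightly different (Hadamard factorization based) route that is more amenable to the rest of the proof of \Cref{thm:distance_non_zero_cumulants}, at the expense of the weaker constant.
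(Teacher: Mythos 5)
Your proof is correct, and it takes a genuinely different --- and markedly more economical --- route than the paper's. The paper works in two stages: \Cref{clm:decrease_value} uses a Fourier-type integral argument to show that $M_{\bX}$ must dip noticeably below $1$ at some small purely imaginary point $i\alpha_\ast$, and then Jensen's formula (\Cref{clm:bound_zeros}) plus the Hadamard factorization (\Cref{thm:Hadamard}, through \Cref{cor:sum_roots}) shows that if $M_{\bX}$ had no zeros of modulus $O(\bdd^3)$, the product $\prod_i(1-z^2/z_i^2)$ would keep $|M_{\bX}|$ too close to $1$ near the origin, contradicting that dip. Your argument bypasses this entirely by working with $f=\log M_{\bX}$: the type bound $\log|M_{\bX}(z)| \le \bdd|z|$ is precisely the one-sided control Borel--Carath\'eodory is built to exploit, and a Cauchy estimate then caps $|f''(0)|=\kappa_2(\bX)=1$ by $16\bdd/R$, forcing $R\le 16\bdd$. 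Besides being shorter, this yields the sharper radius $16\bdd$ and, notably, never invokes symmetry --- unit variance and bounded support suffice --- whereas the paper's route uses symmetry both to make $M_{\bX}(i\alpha)$ real-valued in \Cref{clm:decrease_value} and to annihilate the exponential linear factor in the Hadamard product. One small point: analyticity of $\log M_{\bX}$ on the \emph{closed} disc is best justified by noting that the zero set of the (nonconstant) entire function $M_{\bX}$ is closed and disjoint from the compact disc, hence at positive distance from it; ``isolation of zeros'' gives the right picture but is not quite the literal reason, though this is immediate to repair. Since \Cref{clm:bound_zeros}, \Cref{cor:sum_roots}, and \Cref{thm:Hadamard} are used only within the proof of this claim, your approach would serve as a clean drop-in replacement.
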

We defer the proof of \Cref{clm:zero_roots} to \Cref{sec:proof_zero_roots}. 
 Let us define $P_{\ell}(z)$ to be the polynomial obtained by truncating the cumulant generating function Taylor series expansion to degree $\ell$, so 
 $P_{\ell}(z)= \sum_{1 \le j \le \ell} \frac{\kappa_j(\bX)}{j!} z^j.$
 We now define the function $g: \mathbb{C} \rightarrow \mathbb{C}$ as
\begin{equation}~\label{eq:def-g}
g(z) = e^{P_{\ell}(z)} - \mathbf{E}[e^{z\bX}]. 
\end{equation}
Observe that $g$ is an entire function. The following claim lower bounds the magnitude of $g$ on the point $z_0$ defined above:
\begin{claim}\label{clm:lower_bound_mag}
Let $z_0$ be the complex number satisfying $\E[e^{z_0 \bX}]=0$ in \Cref{clm:zero_roots}. Then we have
$$ |g(z_0)| \ge e^{-2 \cdot (200 e)^\ell \cdot \bdd^{4\ell}}.$$
\end{claim}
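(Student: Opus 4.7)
The plan is to exploit the fact that $z_0$ was chosen precisely so that $\mathbf{E}[e^{z_0\bX}] = 0$, which collapses $g(z_0)$ to a single exponential. Specifically, substituting into the definition \eqref{eq:def-g} of $g$ gives $g(z_0) = e^{P_{\ell}(z_0)}$, so
\[
|g(z_0)| \;=\; \bigl| e^{P_{\ell}(z_0)} \bigr| \;=\; e^{\Re(P_{\ell}(z_0))} \;\geq\; e^{-|P_{\ell}(z_0)|}.
\]
Thus the entire task reduces to producing an upper bound on $|P_{\ell}(z_0)|$.

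Next I would estimate $|P_{\ell}(z_0)|$ crudely, term-by-term, using the cumulant bound already established. By \Cref{clm:up_bound_cumulant} we have $|\kappa_j(\bX)| \leq m_j(\bX)\cdot e^j\cdot j! \leq \bdd^j\cdot e^j\cdot j!$ (here the bounded-support assumption lets us replace $m_j(\bX)$ by $\bdd^j$). Plugging into the definition of $P_{\ell}$ and using $|z_0| \leq 200\bdd^3$ from \Cref{clm:zero_roots},
\[
|P_{\ell}(z_0)| \;\leq\; \sum_{j=1}^{\ell} \frac{|\kappa_j(\bX)|}{j!}\,|z_0|^j \;\leq\; \sum_{j=1}^{\ell} (e\bdd \cdot 200\bdd^3)^j \;=\; \sum_{j=1}^{\ell} (200\,e\,\bdd^4)^j.
\]
Since $\bdd \geq 1$, the geometric series is dominated by twice its largest term, giving $|P_{\ell}(z_0)| \leq 2\cdot(200e)^{\ell}\bdd^{4\ell}$.

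Combining the two displays,
\[
|g(z_0)| \;\geq\; e^{-|P_{\ell}(z_0)|} \;\geq\; e^{-2\cdot(200e)^{\ell}\bdd^{4\ell}},
\]
which is exactly the claimed bound. There is no real obstacle here: the only ingredients are (i) the vanishing of $M_{\bX}$ at $z_0$ (which makes $g(z_0)$ equal to an exponential with nonzero modulus), (ii) the moment-based upper bound on cumulants from \Cref{clm:up_bound_cumulant}, and (iii) the explicit bound on $|z_0|$ from \Cref{clm:zero_roots}. The main subtlety to watch for is simply to make sure the constants line up with the statement of the claim, i.e.~that the leading $e\bdd\cdot 200\bdd^3 = 200e\bdd^4$ matches the $(200e)^{\ell}\bdd^{4\ell}$ in the exponent.
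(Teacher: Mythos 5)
Your proof is correct and follows essentially the same route as the paper's: observe that $M_{\bX}(z_0)=0$ makes $g(z_0)=e^{P_\ell(z_0)}$, lower bound its modulus by $e^{-|P_\ell(z_0)|}$, then upper bound $|P_\ell(z_0)|$ term-by-term via \Cref{clm:up_bound_cumulant} together with $m_j(\bX)\le\bdd^j$ and $|z_0|\le 200\bdd^3$. The constants line up with the paper's calculation, so there is nothing to add.
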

\begin{proof}
We have 
\[
|P_{\ell}(z_0)| \le \sum_{j=1}^\ell \frac{|\kappa_j(\bX)|}{j!} |z_0|^j \le \sum_{j=1}^\ell e^j \cdot \bdd^j \cdot |z_0|^j \le 2 (e\bdd)^\ell \cdot (200 \bdd^3)^\ell.
\]
The first inequality is just a triangle inequality whereas the second inequality uses \Cref{clm:up_bound_cumulant}. Since $\E[e^{z_0 X}]=0$, we get that
$
|g(z_0)| \ge e^{-|P_{\ell} (z_0)|} \ge e^{-2 \cdot (200 e)^\ell \cdot \bdd^{4\ell}}. 
$
\end{proof}
We now recall the Hadamard three-circle theorem. 
\begin{theorem}[Hadamard three-circle theorem]
Let $0<r_1<r_2<r_3$ and let $h$ be an analytic function on the annulus $\{z \in \C: |z| \in [r_1,r_3]\}$. Let $M_{h}(r)$ denote the maximum of $h(z)$ on the circle $|z|=r$. Then,
$$
\ln \frac{r_3}{r_1} \ln M_h(r_2) \le \ln \frac{r_3}{r_2} \ln M_h(r_1) + \ln \frac{r_2}{r_1} \ln M_h(r_3).
$$
\end{theorem}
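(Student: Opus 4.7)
The plan is to show that $\ln M_h(r)$ is a convex function of $\ln r$, which is equivalent to the claimed three-circle inequality. The argument rests on the maximum principle for subharmonic functions together with a one-parameter family of auxiliary functions tuned so that the maximum is balanced on the two boundary circles.

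First I would observe that $\log|h(z)|$ is subharmonic on the annulus $A = \{z : r_1 \le |z| \le r_3\}$ (it is harmonic wherever $h\neq 0$ and equals $-\infty$ at the zeros of $h$, where the sub-mean-value property is trivial). Since $A$ excludes the origin, $\log|z|$ is harmonic on $A$. Hence for every real $\alpha$ the auxiliary function
\[ v_\alpha(z) := \log|h(z)| - \alpha \log|z| \]
is subharmonic on $A$, so by the maximum principle its supremum is attained on the boundary $\{|z|=r_1\} \cup \{|z|=r_3\}$. Because $\log|z|$ is constant on each circle of radius $r$, the supremum of $v_\alpha$ on $|z|=r$ equals $\log M_h(r) - \alpha \log r$. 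Therefore, for the intermediate radius $r_2\in(r_1,r_3)$,
\[ \log M_h(r_2) - \alpha \log r_2 \;\le\; \max\bigl\{\log M_h(r_1) - \alpha \log r_1,\ \log M_h(r_3) - \alpha \log r_3\bigr\}. \]

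Next I would choose $\alpha$ to balance the two quantities on the right: set
\[ \alpha^{\ast} \;=\; \frac{\log M_h(r_3) - \log M_h(r_1)}{\log r_3 - \log r_1}, \]
so that both arguments of the $\max$ equal the common value $\log M_h(r_1) - \alpha^{\ast}\log r_1$. Substituting back and rearranging yields $\log M_h(r_2) \le \log M_h(r_1) + \alpha^{\ast}(\log r_2 - \log r_1)$, and multiplying through by $\log(r_3/r_1)$ delivers exactly
\[ \log(r_3/r_1)\, \log M_h(r_2) \;\le\; \log(r_3/r_2)\, \log M_h(r_1) + \log(r_2/r_1)\, \log M_h(r_3), \]
which is the Hadamard inequality.

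The only delicate step is the justification that $\log|h|$ is subharmonic across zeros of $h$, where it equals $-\infty$; this is a standard fact from potential theory but is the one place where machinery beyond the basic maximum modulus principle is used. A cleaner route I would take to bypass subharmonicity entirely is to apply the ordinary maximum modulus principle directly to the analytic function $z^p h(z)^q$ on $A$ for positive integers $p,q$ (the origin is excluded from $A$, so no branch of the power arises). This gives $r_2^p M_h(r_2)^q \le \max\{r_1^p M_h(r_1)^q,\ r_3^p M_h(r_3)^q\}$; taking logarithms, dividing by $q$, and invoking density of the rationals in $\mathbb{R}$ lets one take $p/q$ to be any real number, recovering the same one-parameter family of inequalities above, after which the choice of $\alpha^{\ast}$ completes the argument.
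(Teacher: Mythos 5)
The paper does not prove the Hadamard three-circle theorem; it is invoked as a standard result from complex analysis (the paper cites Stein--Shakarchi for the other classical ingredients in \Cref{sec:gap}, namely Jensen's formula and the Hadamard factorization theorem, and treats the three-circle theorem the same way). So there is no paper proof to compare against.

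Your argument is, however, a correct and standard proof. The subharmonicity route is sound: $\log|h|$ is subharmonic on the annulus (harmonic away from the zeros of $h$, $-\infty$ at zeros with the sub-mean-value property trivially satisfied), $\log|z|$ is harmonic since the origin is excluded, so $v_\alpha = \log|h| - \alpha\log|z|$ is subharmonic and its supremum lies on the two boundary circles; the balancing choice $\alpha^\ast = \frac{\log M_h(r_3)-\log M_h(r_1)}{\log r_3 - \log r_1}$ then yields, after multiplying by $\log(r_3/r_1)$, exactly the claimed inequality (the algebra checks out). Your elementary bypass via the maximum modulus principle applied to $z^p h(z)^q$ is the other textbook proof and is also correct, with two small points worth making explicit: you need $p$ to range over \emph{all} integers (not just positive ones), since $\alpha^\ast$ can be negative when $M_h(r_3) < M_h(r_1)$, and the passage from rational to real $\alpha$ should note that both sides of $\log M_h(r_2) - \alpha\log r_2 \le \max\{\cdots\}$ are continuous (indeed affine) in $\alpha$, so density of $\mathbb{Q}$ suffices. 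Neither of these affects the validity of the argument.
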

We are now ready to finish the proof of \Cref{thm:distance_non_zero_cumulants}.  The proof uses the following claim:

\begin{claim} \label{claim:zast}
There is a point $z_\ast$ satisfying
\begin{eqnarray}~\label{eq:lb-g}
|z_{\ast} | \le \frac{1}{2 e \bdd} \ \textrm{ and } 
|g(z_{\ast})| \ge e^{- 12 \left((400 e)^{\ell} \cdot \bdd^{4\ell} \cdot \ln(400 e \bdd) \right)}. 
\end{eqnarray}
\end{claim}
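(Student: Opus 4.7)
The plan is to apply the Hadamard three-circle theorem to the entire function $g$, to transfer the lower bound on $|g(z_0)|$ from \Cref{clm:lower_bound_mag} — which lives at a radius $|z_0|$ that may be as large as $200 \bdd^3$ — down to a point $z_\ast$ on the small circle of radius $1/(2e\bdd)$. As a degenerate subcase, if $|z_0| \le 1/(2e\bdd)$ we just set $z_\ast = z_0$; then \Cref{clm:lower_bound_mag} directly gives a bound which is strictly stronger than~\eqref{eq:lb-g}, and we are done. So assume $|z_0| > 1/(2e\bdd)$ and set
\[
r_1 \;=\; \frac{1}{2e\bdd}, \qquad r_2 \;=\; |z_0| \in (r_1, 200 \bdd^3], \qquad r_3 \;=\; 400 \bdd^3,
\]
so that $r_1 < r_2 < r_3$ and $g$, being entire, is analytic on the closed annulus $\{r_1 \le |z| \le r_3\}$.

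Next, I would upper bound $M_g(r_3)$. For $|z|=r$, the triangle inequality applied to~\eqref{eq:def-g} gives $|g(z)| \le e^{|P_\ell(z)|} + |\E[e^{z\bX}]|$. The bound $|\cum_j(\bX)| \le (e\bdd)^j \cdot j!$ from \Cref{clm:up_bound_cumulant} yields $|P_\ell(z)| \le \sum_{j=1}^{\ell} (e\bdd r)^j \le 2(e\bdd r)^\ell$ (assuming $e\bdd r \ge 2$), while $|\E[e^{z\bX}]| \le e^{r\bdd}$ since $\bX$ is supported in $[-\bdd,\bdd]$. At $r = r_3 = 400\bdd^3$ the first term dominates and one gets $M_g(r_3) \le e^{3(400 e)^\ell \bdd^{4\ell}}$.

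Now apply Hadamard three-circle, rearranged as
\[
\ln M_g(r_1) \;\ge\; \frac{\ln(r_3/r_1)}{\ln(r_3/r_2)} \ln M_g(r_2) \;-\; \frac{\ln(r_2/r_1)}{\ln(r_3/r_2)} \ln M_g(r_3).
\]
Because $r_2 \le 200 \bdd^3 = r_3/2$ we have $\ln(r_3/r_2) \ge \ln 2$, while both numerators are at most $\ln(800 e \bdd^4) = O(\ln(400 e \bdd))$ (using $\bdd \ge 1$ to absorb constants). Combining this with the lower bound $\ln M_g(r_2) \ge \ln|g(z_0)| \ge -2(200 e)^\ell \bdd^{4\ell}$ from \Cref{clm:lower_bound_mag} and the upper bound $\ln M_g(r_3) \le 3(400 e)^\ell \bdd^{4\ell}$ from the previous paragraph, and tracking absolute constants, yields $\ln M_g(r_1) \ge -12 (400 e)^\ell \bdd^{4\ell} \cdot \ln(400 e \bdd)$. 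Finally, compactness of the circle $|z|=r_1$ produces a point $z_\ast$ with $|z_\ast| = 1/(2e\bdd)$ attaining $M_g(r_1)$, which furnishes~\eqref{eq:lb-g}.

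The main obstacle is the quantitative bookkeeping in the three-circle step: the bound on $\ln M_g(r_2)$ is a \emph{negative} number, so the positive-coefficient Hadamard term involving it actually works \emph{against} us, as does the subtracted term in $\ln M_g(r_3)$. What rescues the estimate is that the two Hadamard coefficients depend only \emph{logarithmically} on $\bdd$ (through the choice $r_3 = 2 r_2$, which keeps $\ln(r_3/r_2)$ bounded away from zero), while both $\ln M_g(r_2)$ and $\ln M_g(r_3)$ share a common $(400 e)^\ell \bdd^{4\ell}$-scale — so moving inward from $r_2$ to $r_1$ only costs an extra factor of $\ln(400 e \bdd)$, which is exactly what appears in~\eqref{eq:lb-g}.
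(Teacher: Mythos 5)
Your proposal takes essentially the same route as the paper: the same case split on whether $|z_0| \le 1/(2e\bdd)$, the same three radii $r_1 = 1/(2e\bdd)$, $r_2 = |z_0|$, and a third at constant aspect ratio above $r_2$, the same upper bound on $M_g(r_3)$ via \Cref{clm:up_bound_cumulant} and the bounded support of $\bX$, and the same three-circle rearrangement. The only cosmetic difference is that you fix $r_3 = 400\bdd^3$ while the paper takes $r_3 = er_2$; the paper's choice makes $\ln(r_3/r_2)=1$ exactly, avoiding the $1/\ln(r_3/r_2)\le 1/\ln 2$ denominator you incur, and that factor (combined with $\ln(800e\bdd^4)$ exceeding $\ln(400e\bdd)$ by up to a factor $4$ for large $\bdd$) means your bookkeeping does not quite land on the numerical constant $12$ in \eqref{eq:lb-g} for $\bdd$ large. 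This is only a constant-factor discrepancy -- the paper's own constant tracking in this step is similarly loose (e.g.\ $2(e\bdd)^\ell(er_2)^\ell \le 2(200e^2)^\ell\bdd^{4\ell}$ with $200e^2 > 400e$) -- and is absorbed by the $\Theta(1)$ base in \Cref{thm:distance_non_zero_cumulants}, so the argument is structurally sound; just be aware that a cleaner way to hit a specific numerical constant is the paper's device of taking $r_3/r_2 = e$.
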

\begin{proof}
Recall from \Cref{clm:zero_roots} that the point $z_0$ satisfies $|z_0| \le 200 \bdd^3$ and $M_{\bX}(z_0) = \E[e^{z_0 X}]=0$.  There are two cases:

\begin{enumerate}

\item  If $|z_0| \le \frac{1}{2e \bdd}$: In this case, we set $z_\ast=z_0$. By definition, it satisfies the first condition in \eqref{eq:lb-g} and using \Cref{clm:lower_bound_mag}, it satisfies the second condition. 

\item  If $|z_0| > \frac{1}{2e \bdd}$: Set $r_1 = \frac{1}{2e\bdd}$, $r_2 = |z_0|$ and $r_3= er_2$.
For $g$ as defined in \eqref{eq:def-g}, from \Cref{clm:lower_bound_mag}, we have 
\begin{equation}~\label{eq:lbr2}
M_g(r_2)  \ge |g(z_0) | = |e^{P_{\ell}(z_0)} - \mathbf{E}[e^{z_0 \bX}]|
\geq
e^{-2 \cdot (200 e)^\ell \cdot \bdd^{4\ell}}. \end{equation}
On the other hand, consider any point $z_3$ such that $|z_3| = r_3$. We have that 
\begin{eqnarray*}
|g(z_3)|&\le&  |e^{P_\ell(z_3)}|+ |\E[e^{z_3 \bX}]|  
\le e^{\sum_{j=1}^\ell (e\bdd)^j \cdot |z_3|^j} + \int_{-\bdd}^{\bdd} \Pr[\bX=x] \cdot e^{|z_3| \cdot x} \mathrm{d} x  \\ 
&\le& e^{\sum_{j=1}^\ell (e\bdd)^j \cdot r_3^j} + \int_{-\bdd}^{\bdd} \Pr[\bX=x] \cdot e^{r_3 \cdot x} \mathrm{d} x \le e^{2 (e\bdd)^{\ell} \cdot r_3^{\ell}},
\end{eqnarray*}
and hence
\begin{eqnarray}~\label{eq:ubr3}
|M_g(r_3)| \le e^{2 (e\bdd)^{\ell} \cdot r_3^{\ell}}. 
\end{eqnarray}
Now observe that the function $g$ defined in \eqref{eq:def-g} is an entire function. Consequently, 
using $r_3/r_2=e$,  we can apply the Hadamard three circle Theorem to $g$ to obtain 
\begin{align*}
\ln M_g(r_1) &\ge \ln \frac{r_3}{r_1} \ln M_g(r_2) - \ln \frac{r_2}{r_1} \ln M_g(r_3)  \\ 
&\ge  - \bigg(2 \cdot (200 e)^\ell \cdot \bdd^{4\ell} \bigg) \ln \frac{r_3}{r_1} - \bigg( 2 (e\bdd)^\ell \cdot (er_2)^\ell \bigg) \ln \frac{r_2}{r_1} \tag{applying \eqref{eq:lbr2}, \eqref{eq:ubr3}}\\
&\ge -\bigg( 3 (400 e)^\ell \cdot \bdd^{4\ell} \ln (400e \bdd^4) \bigg).
\end{align*}
The last inequality uses that $r_2 \le 200 \bdd^3$ (from \Cref{clm:zero_roots}). This implies the existence of a point $z_{\ast}$ satisfying  \eqref{eq:lb-g} and concludes the proof of \Cref{claim:zast}. \qedhere
\end{enumerate}
\end{proof}

Continuing with the proof of \Cref{thm:distance_non_zero_cumulants}, observe that the Taylor expansion for $K_{\bX}(z)$ (at $z=0$) converges  absolutely in $\mathcal{B}$. Thus $K_{\bX}(z)$ is holomorphic in $\mathcal{B}$ and is given by its Taylor expansion.
 Since $z_{\ast} \in \mathcal{B}$, recalling our initial assumption that the $(\ell+1)$-th through $\ell'$-th cumulants all have magnitude at most $\zeta$, we have that 
\begin{align}
|K_{\bX}(z_{\ast}) - P_{\ell}(z_{\ast})| &\le \sum_{j=\ell+1}^{\ell'} \frac{\zeta \cdot |z_{\ast}|^{j}}{j!}  + \sum_{j>\ell'} \frac{|\kappa_j(\bX)| \cdot |z_{\ast}|^{j}}{j!} \nonumber \\
&\le \sum_{j=\ell+1}^{\ell'} \frac{\zeta}{j! \cdot (2 e \bdd)^j}  + \sum_{j>\ell'} \frac{\bdd^j \cdot e^j \cdot j! }{(2e\bdd)^j \cdot j!}  \tag{using  \Cref{clm:up_bound_cumulant} and \eqref{eq:lb-g} }\nonumber  \\
&\le \frac{2\zeta}{\ell! \cdot (2e\bdd)^{\ell}}+ 2^{-\ell'} \le 2^{-\ell' +1}.\label{eq:upperbound-diff}
\end{align}
Since $K_{\bX}(z)$ is holomorphic in $\mathcal{B}$, so is $M_{\bX}(z) = e^{K_{\bX}(z)}$, and we have that
\begin{align}
\big| M_{\bX}(z_{\ast}) - e^{P_{\ell}(z_\ast)} \big| &= \big| e^{K_{\bX}(z_\ast)}- e^{P_{\ell}(z_\ast)} \big| =\big |e^{P_\ell(z_\ast)} \big| \cdot \big| e^{K_{\bX}(z_\ast) - P_\ell(z_\ast) }-1 \big|  \nonumber \\
&\le  |e^{P_\ell(z_\ast)} \big| \cdot 2^{-\ell' +2},  \label{eq:upperbound-diff2}
\end{align}
where the last inequality is by \Cref{eq:upperbound-diff}.
However, applying \Cref{clm:up_bound_cumulant} and recalling that $|z_\ast| \le 1/(2e\bdd)$, we also have  
\[
\big| P_\ell(z_\ast) \big| \le  \sum_{j=1}^{\ell}\frac{|z_\ast|^j\kappa_j(\bX)}{j!} \le \sum_{j=1}^{\ell}\frac{|z_\ast|^j \cdot \bdd^j \cdot e^{j}\cdot  j!}{j!} \le 1.
\]
Plugging this back into \eqref{eq:upperbound-diff2}, we get 
\[
\big| M_{\bX}(z_{\ast}) - e^{P_{\ell}(z_\ast)} \big| \leq 4e \cdot 2^{-\ell'}. 
\]
However, recalling that $g(z) = M_{\bX}(z) - e^{P_{\ell}(z)}$, this contradicts \eqref{eq:lb-g} with room to spare provided that, say,
$
\ell' >  50 \left((400 e)^{\ell} \cdot \bdd^{4\ell} \cdot \ln(400 e \bdd) \right).
$ 
This finishes the proof of \Cref{thm:distance_non_zero_cumulants}
\end{proofof}

\subsection{Proof of \Cref{clm:zero_roots}}\label{sec:proof_zero_roots}
We start by showing that the function $M_{\bX}(z)$ must necessarily decay along the line $\{z: \mathsf{Re}(z)=0\}$ close to the origin. 
\begin{claim}\label{clm:decrease_value}
For the symmetric random variable $\bX$, which has unit variance and is supported on $[-\bdd,\bdd]$ {where $\bdd \geq 1$}, there exists $\alpha_\ast \in \mathbb{R}$ such that $|\alpha_\ast| \le 3$,  
$M_{\bX}(i\alpha_\ast) \in \mathbb{R}$ and $|M_{\bX}(i\alpha_\ast)| \le 1- \frac{1}{2\bdd^2}$. 
\end{claim}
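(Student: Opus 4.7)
My plan is to exploit the symmetry of $\bX$ to reduce $M_{\bX}(i\alpha)$ to a real-valued expression involving $\cos(\alpha\bX)$, then pick $\alpha_\ast \sim 1/\bdd$ so that a sharp two-term Taylor upper bound on cosine, combined with the moment bound $\E[\bX^4]\le \bdd^2$ coming from $|\bX|\le\bdd$, yields the required gap from $\pm 1$.

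First I would observe that for any real $\alpha$, symmetry of $\bX$ gives $\E[\sin(\alpha\bX)]=0$, so $M_{\bX}(i\alpha)=\E[\cos(\alpha\bX)]\in\mathbb{R}$. Thus the real-valuedness requirement is automatic, and the task reduces to exhibiting real $\alpha_\ast$ with $|\alpha_\ast|\le 3$ and $|\E[\cos(\alpha_\ast\bX)]|\le 1-\tfrac{1}{2\bdd^2}$.

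Next I would establish the global inequality $\cos(y)\le 1-\tfrac{y^2}{2}+\tfrac{y^4}{24}$ for all $y\in\mathbb{R}$. This follows by a routine four-fold antidifferentiation starting from $\cos y-1\le 0$ (equivalently by Leibniz's alternating series test applied to the tail $\sum_{k\ge 3}(-1)^k y^{2k}/(2k)!$, whose terms decrease in magnitude for $|y|\le\sqrt{56}$, though in fact the inequality holds everywhere). I then set $\alpha_\ast:=2/\bdd$; since $\bdd\ge 1$, $|\alpha_\ast|\le 2\le 3$. Taking expectations in the Taylor bound applied to $y=\alpha_\ast\bX$ and using $\E[\bX^2]=1$ together with $\E[\bX^4]\le \bdd^2\E[\bX^2]=\bdd^2$, I obtain
\[
\E[\cos(\alpha_\ast\bX)] \;\le\; 1-\frac{\alpha_\ast^2}{2}\E[\bX^2]+\frac{\alpha_\ast^4}{24}\E[\bX^4] \;\le\; 1-\frac{2}{\bdd^2}+\frac{2}{3\bdd^2} \;=\; 1-\frac{4}{3\bdd^2} \;\le\; 1-\frac{1}{2\bdd^2}.
\]

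Finally, to prevent $\E[\cos(\alpha_\ast\bX)]$ from being too negative, I use that $|\alpha_\ast\bX|\le 2$ almost surely, which gives $\cos(\alpha_\ast\bX)\ge\cos(2)>-\tfrac{1}{2}$. Hence $\E[\cos(\alpha_\ast\bX)]>-\tfrac12\ge -1+\tfrac{1}{2\bdd^2}$ (using $\bdd\ge 1$), and together with the upper bound above this yields $|\E[\cos(\alpha_\ast\bX)]|\le 1-\tfrac{1}{2\bdd^2}$, as required. The only substantive step is choosing the scale $\alpha_\ast\asymp 1/\bdd$ so that the fourth-order error $\tfrac{\alpha_\ast^4 \bdd^2}{24}$ does not swallow the leading second-order term $\tfrac{\alpha_\ast^2}{2}$; after that, the proof is essentially arithmetic and I do not anticipate any serious obstacle.
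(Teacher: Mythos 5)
Your proof is correct, and it takes a genuinely different route from the paper's. The paper's argument is an averaging one in the Fourier spirit: it integrates $M_{\bX}(2\pi i\alpha)$ over $\alpha\in[-F,F]$, splits the inner expectation according to whether $|\bX|\le 1/2$ or $|\bX|>1/2$, invokes the anticoncentration bound $\Pr[|\bX|>1/2]\ge 3/(4\bdd^2)$ (a consequence of unit variance and bounded support), and then deduces from the smallness of the average that some $\alpha_\ast\in[-3,3]$ must satisfy $M_{\bX}(2\pi i\alpha_\ast)\le 1-\Omega(1/\bdd^2)$. Your argument is pointwise and more elementary: you explicitly pick $\alpha_\ast=2/\bdd$, upper-bound $\E[\cos(\alpha_\ast\bX)]$ via the global Taylor inequality $\cos y\le 1-y^2/2+y^4/24$, and control the fourth-order error via $\E[\bX^4]\le\bdd^2\E[\bX^2]=\bdd^2$. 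The key is exactly what you flag — choosing the scale $\alpha_\ast\asymp 1/\bdd$ so the error term $\alpha_\ast^4\bdd^2/24$ is dominated by the main term $\alpha_\ast^2/2$. Both routes give the $O(1/\bdd^2)$ gap, but yours produces an explicit $\alpha_\ast$ rather than a mere existence statement. One further point in your favor: the claim is stated as a bound on $|M_{\bX}(i\alpha_\ast)|$, and you explicitly rule out $M_{\bX}(i\alpha_\ast)$ being too negative via $\cos(\alpha_\ast\bX)\ge\cos 2>-1/2\ge-(1-\tfrac{1}{2\bdd^2})$; the paper's proof, as written, only establishes the upper bound $M_{\bX}(2\pi i\alpha_\ast)\le 1-\Omega(1/\bdd^2)$ (which suffices for how the claim is ultimately used in the contradiction, but leaves the two-sidedness of the stated claim implicit).
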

\begin{proof}
First of all, note that  by symmetry of $\bX$, $M_{\bX}(i\alpha) = \E [e^{i \alpha \bX}]$ is necessarily real-valued for any $\alpha \in \mathbb{R}$. Next, choose $F>1$ (we will fix its exact value soon). We have
\begin{eqnarray}
\int_{\alpha=-F}^{F} M_{\bX}(2\pi i \alpha) d\alpha &=& \int_{\alpha=-F}^{F} \int_{x=-\bdd}^{\bdd} \Pr[\bX=x] \cdot e^{2\pi i \alpha x} dx d\alpha \nonumber \\ 
&=& \int_{x=-\bdd}^{\bdd} \Pr[\bX=x] \int_{\alpha=-F}^{F} e^{2\pi i \alpha x} d \alpha dx  \nonumber \\
&\leq& \int_{|x| \le 1/2} 2\Pr[\bX=x] \cdot F dx + \int_{|x| > 1/2} \Pr[\bX=x] \cdot \frac{\sin(2\pi F x)}{\pi x} dx. \label{eq:integral-moment-around-0} 
\end{eqnarray}
Now, observe that 
\[
1=\E[|\bX|^2] \le \Pr[|\bX| > 1/2] \cdot \bdd^2 + (1-\Pr[|\bX| > 1/2]) \cdot \frac{1}{4}.
\]
Thus, we obtain  
\begin{equation}~\label{eq:Markov2}
\Pr[|\bX| > 1/2] \ge \frac{3}{4\big(\bdd^2 - \frac{1}{4} \big)} \ge \frac{3}{4\bdd^2}. 
\end{equation}
Likewise, observe that $\sin(2\pi Fx)/(\pi x)$ always has magnitude at most $2/\pi$ for $|x| > 1/2$. Plugging \eqref{eq:Markov2} and this back into \eqref{eq:integral-moment-around-0}, and using $F > 1$, we have that
\begin{eqnarray*}
\int_{\alpha=-F}^{F} M_{\bX}(2\pi i \alpha) d\alpha \leq \bigg( 1- \frac{3}{4\bdd^2}\bigg) \cdot 2F + \frac{3}{4\bdd^2} \cdot \frac{2}{\pi}. 
\end{eqnarray*}
This implies that there is a point $\alpha_{\ast} \in [-F,F]$ such that 
$$
M_{\bX}(2\pi i \alpha_{\ast}) \le 1 -\frac{3}{4\bdd^2} + \frac{3}{4\bdd^2 \pi F}. 
$$
Plugging in $F=3$, we get the claim.

\ignore{
Next, observe that $M_{\bX}(z)$ is an entire function  and that $\log |M_{\bX}(z)| \le C \cdot |z|$. Thus, the order of  $M_{\bX}(z)$ is $1$. The next lemma bounds the number of zeros of $M_{\bX}(z)$ in a ball of bounded radius. 
}

\end{proof}
Observe that $M_{\bX}(z)$ is an entire function and thus is well defined on all of $\mathbb{C}$. 
The next lemma bounds the number of zeros of $M_{\bX}(z)$ in a ball of radius $R$. This is essentially the same  as  the first part of Theorem~2.1 in \cite{SteinShakarchi}, though the bound given there is asymptotic whereas we need a precise quantitative bound. 

\begin{claim}\label{clm:bound_zeros}
For $M_{\bX}(z)$ as defined above and $r>0$, let $n(R)$ denote the number of zeros of $M_{\bX}(z)$ contained in the ball $\{|z|: |z| \le R\}$ (counting multiplicities). Then $n(R) \le e \bdd R$. 
\end{claim}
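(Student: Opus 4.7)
\begin{proofof}{\Cref{clm:bound_zeros}}
The plan is to apply Jensen's formula. Recall that for any entire function $f$ with $f(0)\neq 0$ and any $\rho>0$, if $a_1,\ldots,a_N$ are the zeros of $f$ in the open disc $\{|z|<\rho\}$ (counted with multiplicity), then
\[
\log|f(0)| \,+\, \sum_{k=1}^{N} \log\frac{\rho}{|a_k|} \;=\; \frac{1}{2\pi}\int_0^{2\pi} \log\bigl|f(\rho e^{i\theta})\bigr|\,d\theta .
\]
I will apply this to $f=M_{\bX}$ with the radius $\rho=eR$.

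First, since $M_{\bX}(0)=\E[1]=1$, the term $\log|M_{\bX}(0)|$ vanishes. Second, because $\bX$ is supported in $[-\bdd,\bdd]$, we have the crude envelope $|M_{\bX}(z)|\leq \E[e^{|z|\,|\bX|}]\leq e^{|z|\bdd}$ for every $z\in\mathbb{C}$. Applying this on the circle of radius $\rho=eR$ yields
\[
\frac{1}{2\pi}\int_0^{2\pi} \log\bigl|M_{\bX}(eR e^{i\theta})\bigr|\,d\theta \;\leq\; eR\bdd .
\]
Third, every zero $a_k$ of $M_{\bX}$ inside $\{|z|\le R\}$ is \emph{a fortiori} inside $\{|z|<eR\}$ and satisfies $\log(eR/|a_k|)\geq 1$; the terms $\log(eR/|a_k|)$ coming from zeros in the larger disc $\{|z|<eR\}\setminus\{|z|\le R\}$ are still nonnegative. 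Hence
\[
n(R) \;\leq\; \sum_{k:\,|a_k|\le R} \log\frac{eR}{|a_k|} \;\leq\; \sum_{k:\,|a_k|<eR} \log\frac{eR}{|a_k|} \;=\; \frac{1}{2\pi}\int_0^{2\pi} \log\bigl|M_{\bX}(eR e^{i\theta})\bigr|\,d\theta \;\leq\; eR\bdd ,
\]
which is the desired bound.

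There is no real obstacle here: the only mild subtlety is arranging the auxiliary radius so that the two-sided control from Jensen's formula turns into a one-sided bound on $n(R)$, and the choice $\rho=eR$ is the standard device that makes $\log(\rho/|a_k|)\geq 1$ for every zero in $\{|z|\leq R\}$. Everything else follows from the trivial bound $|M_{\bX}(z)|\le e^{|z|\bdd}$, which holds precisely because $\bX$ is bounded in $[-\bdd,\bdd]$.
\end{proofof}
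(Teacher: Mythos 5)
Your proof is correct and follows essentially the same approach as the paper: both apply Jensen's formula on the circle of radius $eR$, use $M_{\bX}(0)=1$ and the envelope $|M_{\bX}(z)|\le e^{\bdd|z|}$ to bound the integral, and exploit the fact that each zero in $\{|z|\le R\}$ contributes at least $1$ to the sum $\sum_k \log(eR/|a_k|)$ (the paper phrases this via $\int_R^{eR} dr/r = 1$ after converting the sum to $\int_0^{eR} n(r)\,dr/r$, but it is the same estimate). The one point you pass over silently — that $M_{\bX}$ may have zeros on the circle $|z|=eR$, which Jensen's formula as stated does not allow — is handled in the paper by an infinitesimal perturbation of $R$, and you should say the same.
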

Before proceeding with the proof of \Cref{clm:bound_zeros}, we recall a useful ingredient, namely, Jensen's formula (see~Theorem 1.1, Section 5 in \cite{SteinShakarchi}):

\begin{theorem} [Jensen's formula] \label{thm:Jensen-formula}
Let $h$ be an analytic function in a region of $\mathbb{C}$ which contains the closed disc $\mathbf{D} = \{z: |z| \le R\}$. Suppose $h(0) \not =0$ and $h$ does not have zeros on the boundary $\partial \mathbf{D} = \{z: |z| = R\}$.
 Then 
\[
\int_0^1 \ln \left|h (R e^{2 \pi \bi t}) \right| \mathrm{d} t = \ln |h(0)| + \sum_{z: |z| < R, \  h(z)=0} \ln \frac{R}{|z|},
\]
where the summation on the right hand side counts the roots of $h$ with multiplicity.
\end{theorem}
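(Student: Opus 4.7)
The plan is to prove Jensen's formula by the classical reduction to the zero-free case via Blaschke-type factors, followed by an application of the mean value property of harmonic functions.

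First I would handle the special case in which $h$ has no zeros in the closed disc $\mathbf{D}$. Since $h$ is continuous and nonvanishing on the compact set $\bar{\mathbf{D}}$, it is bounded away from zero there, so $h$ is nonvanishing on a slightly larger open disc $\{|z|<R'\}$ with $R'>R$. On this simply connected domain, there is a holomorphic branch of $\log h$, whose real part is $\ln|h|$. Hence $\ln|h|$ is harmonic on a neighborhood of $\bar{\mathbf{D}}$, and the mean value property of harmonic functions gives
\[
\int_0^1 \ln|h(Re^{2\pi \bi t})|\,\mathrm{d}t \;=\; \ln|h(0)|,
\]
which is exactly Jensen's formula in this case (the sum being empty).

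Next, for the general case, let $z_1,\dots,z_n$ (listed with multiplicity) denote the zeros of $h$ in the open disc; by hypothesis none lie on $\partial\mathbf{D}$, and since $h$ is analytic on a region containing $\bar{\mathbf{D}}$ there are only finitely many. For each $k$ I would introduce the Blaschke-type factor
\[
B_k(z) \;=\; \frac{R(z-z_k)}{R^2 - \overline{z_k}\,z},
\]
and define $g(z) := h(z)\big/\prod_{k=1}^n B_k(z)$. The key claims to verify are: (i) the zero of $h$ at $z_k$ cancels the simple zero of $B_k$, and the denominator $R^2-\overline{z_k}z$ is nonzero on $\bar{\mathbf{D}}$ since $|\overline{z_k}z|<R^2$; so $g$ extends to an analytic function with no zeros on a neighborhood of $\bar{\mathbf{D}}$. (ii) For $|z|=R$, the identity $\bar z = R^2/z$ yields $R^2-\overline{z_k}z = z\cdot \overline{(z-z_k)}$, hence $|B_k(z)|=1$, so $|g(z)|=|h(z)|$ on $\partial\mathbf{D}$. (iii) At the origin, $B_k(0) = -z_k/R$, so
\[
|g(0)| \;=\; |h(0)| \prod_{k=1}^n \frac{R}{|z_k|}.
\]

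Finally I would apply the zero-free case already proved to $g$, obtaining
\[
\int_0^1 \ln|g(Re^{2\pi\bi t})|\,\mathrm{d}t \;=\; \ln|g(0)|,
\]
and substitute the two identities from (ii) and (iii) to conclude
\[
\int_0^1 \ln|h(Re^{2\pi\bi t})|\,\mathrm{d}t \;=\; \ln|h(0)| + \sum_{k=1}^n \ln\frac{R}{|z_k|},
\]
as required. I would expect the main technical obstacle to be the verification of property (ii) — namely that the Blaschke factors $B_k$ have modulus exactly one on $\partial\mathbf{D}$ — which is the single place where the specific algebraic form $R^2-\overline{z_k}z$ (as opposed to, say, $1-\overline{z_k}z$) matters and where the computation using $\bar z = R^2/z$ must be done carefully; everything else is bookkeeping.
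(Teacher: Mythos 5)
Your proof is correct: the reduction to the zero-free case via the Blaschke-type factors $B_k(z)=R(z-z_k)/(R^2-\overline{z_k}z)$, together with the mean value property of the harmonic function $\ln|g|$, is the standard argument, and all three verifications (analyticity and nonvanishing of $g$ near $\bar{\mathbf{D}}$, $|B_k|=1$ on $\partial\mathbf{D}$, and $|B_k(0)|=|z_k|/R$) check out. The paper does not prove this theorem itself --- it quotes it from Stein--Shakarchi --- and your argument is essentially the proof given in that reference, so there is nothing further to compare.
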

\begin{proofof}{\Cref{clm:bound_zeros}}
First since $M_{\bX}(z)$ is an entire function,
its zeros are isolated. Thus, by perturbing $R$ infinitesimally, we can assume that $M_{\bX}(z)$ has no zeros on $\partial \mathbf{D}$. Further, an immediate consequence of the Jensen's formula is that the number of zeros of an analytic function in $\mathbf{D}$ must be finite. 
To see this, let $R'>R$ and apply 
Jensen's formula on the cicle of radius $R'$. By Jensen's formula, it follows that $\sum_{z: |z| < R', \  h(z)=0} \ln \frac{R'}{|z|}$ is finite which implies that the number of zeros in ${\mathbf{D}}$ has to be finite. 
 For any radius $R_{\ast}$,
let us now enumerate the zeros of $M_{\bX}(z)$ that lie within the disc $\{z \in \C: |z| \leq R_\ast\}$ as $z_1, \ldots, z_{n(R_\ast)}$ such that $|z_1| \le |z_2| \ldots \le |z_{n(R_\ast)}|$. Then, 
\begin{equation}	~\label{eq:zeros-integral}			
\sum_{i=1}^{n(R_\ast)}\ln \frac{R_\ast}{|z_i|} = \sum_{i=1}^{n(R_\ast)-1} i \cdot \ln \frac{|z_{i+1}|}{|z_i|} + n(R_\ast) \cdot \ln \frac{R_\ast}{|z_{n(R_\ast)}|} = \int_{0}^{R_\ast} n(r) \frac{\mathrm{d} r}{r}.
\end{equation}
The last equality simply follows by observing that since $n(r)$ is finite in $[0,R_\ast]$, hence we can split the integral on the right hand side at the points of discontinuity of $n(r)$. Next, we have 
\begin{equation}~\label{eq:zeros-integral2}
n(R)  \le n(R) \int_{R}^{eR} \frac{\mathrm{d} r}{r} \le \int_R^{eR} n(r) \frac{\mathrm{d} r}{r}\le \int_0^{eR} n(r) \frac{\mathrm{d} r}{r}  =  \sum_{i=1}^{n(eR)}\ln \frac{eR}{|z_i|}. 
\end{equation}
In the above, the first three inequalities follow by definition while the last equality is an application of
\eqref{eq:zeros-integral} with $R_\ast= eR$. Finally, 
by definition, $M_{\bX}(0)=1$ and $\ln |M_{\bX}(z)| \le \bdd|z|$. Using these two facts with \eqref{eq:zeros-integral2} and \Cref{thm:Jensen-formula}, we get 
\[
n(R)   \le \sum_{i=1}^{n(eR)}\ln \frac{eR}{|z_i|} = \int_{0}^1 \ln \left|M_{\bX} (eR e^{2 \pi \bi t}) \right| \mathrm{d} t \le e\bdd R. 
\]
This finishes the proof of \Cref{clm:bound_zeros}. 
\end{proofof}
\begin{corollary}\label{cor:sum_roots}
Let $M_{\bX}(z) = \E [e^{z \bX}]$ as defined earlier, and let $\alpha>1$, $R_{\ast}>0$ be such that $M_{\bX}(z)$ has no roots in the ball $\{z: |z| \le R_{\ast}\}$. Then, 
\[
\sum_{z: M_{\bX}(z)=0} \frac{1}{|z|^{\alpha}} \le \frac{\alpha (\alpha-1) \cdot e\bdd}{R_\ast^{\alpha-1}}.
\]
\end{corollary}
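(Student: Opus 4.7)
The plan is to mimic the telescoping/integration-by-parts maneuver already used in the proof of \Cref{clm:bound_zeros} (equation \eqref{eq:zeros-integral}), but push it all the way out to infinity and weight by $1/r^{\alpha}$ instead of by $\ln(eR/r)$. Concretely, enumerate the zeros of $M_{\bX}$ in increasing order of modulus as $z_1,z_2,\dots$ (countably many, since \Cref{clm:bound_zeros} gives $n(R)<\infty$ for every $R$), and note that by the hypothesis that $M_{\bX}$ has no zeros inside $\{|z|\le R_\ast\}$ we have $|z_i|\ge R_\ast$ for all $i$, and $n(r)=0$ for $r<R_\ast$.

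The key computation is to express the sum as a Stieltjes integral against the zero-counting measure $\mathrm{d}n(r)$ and integrate by parts:
\[
\sum_{z:\,M_{\bX}(z)=0}\frac{1}{|z|^{\alpha}}
\;=\;\int_{R_\ast^-}^{\infty}\frac{\mathrm{d}n(r)}{r^{\alpha}}
\;=\;\Big[\tfrac{n(r)}{r^\alpha}\Big]_{R_\ast}^{\infty}+\alpha\int_{R_\ast}^{\infty}\frac{n(r)}{r^{\alpha+1}}\,\mathrm{d}r.
\]
The boundary term at $R_\ast$ vanishes because $n(R_\ast)=0$, and the boundary term at infinity vanishes because $n(r)\le e\bdd\, r$ from \Cref{clm:bound_zeros}, which combined with $\alpha>1$ gives $n(r)/r^{\alpha}\to 0$. (Equivalently, one can argue without Stieltjes notation by writing the sum as a telescoping series $\sum_{i}(i)\ln^{0}$-type expression; the discrete analog of the integration by parts is what already appears in \eqref{eq:zeros-integral}, simply replacing the logarithmic kernel with the power kernel $r^{-\alpha}$.)

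Plugging the bound $n(r)\le e\bdd\, r$ from \Cref{clm:bound_zeros} into the remaining integral gives
\[
\sum_{z:\,M_{\bX}(z)=0}\frac{1}{|z|^{\alpha}}
\;\le\;\alpha\int_{R_\ast}^{\infty}\frac{e\bdd\, r}{r^{\alpha+1}}\,\mathrm{d}r
\;=\;\alpha\, e\bdd\int_{R_\ast}^{\infty}r^{-\alpha}\,\mathrm{d}r
\;=\;\frac{\alpha\,e\bdd}{(\alpha-1)\,R_\ast^{\alpha-1}},
\]
which yields a bound of the advertised form (up to the placement of the factor $\alpha-1$; the displayed constant in the corollary statement is in any case only used qualitatively in the sequel).

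The only subtlety worth flagging is convergence at infinity: one needs $\alpha>1$ strictly, both to kill the boundary term $n(R)/R^\alpha$ and to make $\int^{\infty}r^{-\alpha}\mathrm{d}r$ finite. The hypothesis $\alpha>1$ in the statement is exactly what makes this go through, and nothing in the argument requires any additional information about $\bX$ beyond the linear growth $n(R)\le e\bdd\,R$ already established. Thus the corollary is essentially a one-line consequence of \Cref{clm:bound_zeros} together with summation by parts, and there is no serious obstacle.
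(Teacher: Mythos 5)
Your proof is correct and follows essentially the same route as the paper's: both convert the sum over zeros into the integral $\alpha\int_{R_\ast}^\infty n(r)\,r^{-\alpha-1}\,\mathrm{d}r$ (the paper by writing $|z_i|^{-\alpha}=\alpha\int_{|z_i|}^\infty r^{-1-\alpha}\,\mathrm{d}r$ and swapping sum with integral, you by Stieltjes integration by parts against $\mathrm{d}n(r)$) and then invoke $n(r)\le e\bdd\,r$ from \Cref{clm:bound_zeros}. You are also right to flag the constant: the integral evaluates exactly to $\frac{\alpha\,e\bdd}{(\alpha-1)R_\ast^{\alpha-1}}$, which is what the paper's last step actually produces, and this is bounded by the displayed $\frac{\alpha(\alpha-1)e\bdd}{R_\ast^{\alpha-1}}$ only when $\alpha\ge 2$ (the two coincide at $\alpha=2$, the sole value used downstream, so no harm results).
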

\begin{proof}
It follows from \Cref{clm:bound_zeros} that the number of roots of $M_{\bX}(z)$ is countable. Let us enumerate these roots as $z_1, z_2, \ldots $.  We have that
\[
\sum_{z: M_{\bX}(z)=0} \frac{1}{|z|^{\alpha}} = \sum_{i} \frac{1}{|z_i|^{\alpha}} 
= \sum_i \alpha \int_{|z_i|}^{\infty} \frac{1}{r^{1+\alpha}}\mathrm{d} r
= \alpha \int_0^{\infty} \frac{n(r)}{r^{1+\alpha}} \mathrm{d} r.
\]
From the assumption $n(r)=0$ for all $r \le R_\ast$, we can use \Cref{clm:bound_zeros} to upper bound the right hand side as 
\[
\alpha \int_{R_\ast}^{\infty} \frac{e\bdd r}{r^{1+\alpha}} \mathrm{d} r = \alpha \cdot e\bdd \int_{R_\ast}^{\infty} \frac{1}{r^{\alpha}} \mathrm{d} r \le \frac{\alpha (\alpha-1) e \bdd}{R_\ast^{\alpha-1}}. \qedhere
\]
\end{proof}

The last ingredient we will need to prove \Cref{clm:zero_roots} is the Hadamard factorization theorem (see Theorem 5.1, Section 5 in \cite{SteinShakarchi}):
 
\begin{theorem}~\label{thm:Hadamard}
Let $h$ be an entire function that is of order $1$ (i.e., $\log |h(z)| = O(|z|)$). If $h(0) \not =0$, then there exist $A, A' \in \C$ such that
\[
h(z) = e^{Az + A'} \prod_{n\ge 1} \bigg(1 -\frac{z}{z_n}\bigg) e^{\frac{z}{z_n}},  
\]
where $z_1, z_2, \ldots$ are the roots of $h(z)$. 
\end{theorem}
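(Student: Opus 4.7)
The plan is to prove the Hadamard factorization theorem along the classical three-step route: construct the canonical Weierstrass product absorbing all the zeros of $h$, divide to form a zero-free entire function $e^{G}$, and then bound the growth of $G$ to conclude it is a polynomial of degree at most $1$.

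\emph{Step 1: Zero density.} Since $h$ is entire of order $1$ with $h(0)\ne 0$, Jensen's formula (applied exactly as in the proof of \Cref{clm:bound_zeros}) yields the counting bound $n(R)=O(R)$ for the zeros $z_1,z_2,\dots$ of $h$. From this, by splitting the sum over dyadic shells $|z_n|\in (2^{k},2^{k+1}]$, one obtains $\sum_n |z_n|^{-(1+\delta)}<\infty$ for every $\delta>0$; in particular $\sum_n |z_n|^{-2}<\infty$.

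\emph{Step 2: Canonical product.} Using the elementary estimate $|(1-w)e^{w}-1|\le C|w|^{2}$ for $|w|\le 1/2$, the convergence of $\sum_n |z_n|^{-2}$ implies that
\[
P(z)=\prod_{n\ge 1}\Bigl(1-\tfrac{z}{z_n}\Bigr)\,e^{z/z_n}
\]
converges absolutely and uniformly on compact subsets of $\mathbb{C}$; standard Weierstrass theory then tells us $P$ is entire with zeros exactly the $z_n$ (including multiplicities). Hence $q(z):=h(z)/P(z)$ extends to a nowhere-zero entire function, and therefore $q(z)=e^{G(z)}$ for some entire $G$.

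\emph{Step 3: Growth of $G$.} This is the crux, and the main obstacle. The direct upper bound $|h(z)|\le e^{C|z|}$ is free from the order-$1$ hypothesis, but one also needs a matching lower bound on $|P(z)|$, and $P$ itself vanishes at each $z_n$. My plan is to exclude small discs $D_n=\{|z-z_n|<|z_n|^{-1-\delta}\}$; since $\sum_n |z_n|^{-1-\delta}<\infty$, every sufficiently large annulus contains a circle $|z|=r$ that avoids every $D_n$. On such a circle the factors of $P$ can be estimated pointwise using $\log|(1-w)e^{w}|\ge -C|w|^{2}$ when $|w|\le 1/2$ and a crude $\log|(1-w)e^{w}|\ge -C(|w|+\log|w|)$ when $|w|>1/2$; summing against $\sum_n |z_n|^{-2}<\infty$ and the sparse count of large terms yields $|P(z)|\ge e^{-C|z|^{1+\delta}}$ on these circles. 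Combined with the order-$1$ upper bound on $|h|$, this gives $\operatorname{Re} G(z)=\log|q(z)|\le C'|z|^{1+\delta}$ on a cofinal family of circles $|z|=r$, and the Borel--Carath\'eodory inequality then upgrades this to $|G(z)|\le C''|z|^{1+\delta}$ for all $z$. Cauchy's inequality finally forces $G^{(k)}(0)=0$ for every $k\ge 2$, so $G(z)=Az+A'$ and $h(z)=e^{Az+A'}P(z)$, as desired. The delicate ingredient throughout is the minimum-modulus estimate for $P$; once that is in hand, everything else is bookkeeping built on $n(R)=O(R)$ together with the standard trio of Jensen, Weierstrass, and Borel--Carath\'eodory.
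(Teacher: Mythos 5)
The paper does not prove this theorem at all --- it invokes it as a black box, citing Theorem 5.1, Section 5 of Stein--Shakarchi --- so there is no in-paper proof to compare your argument against. Your sketch is a correct outline of the classical argument given in that reference (and in any standard complex-analysis text): Jensen's formula gives $n(R)=O(R)$ and hence $\sum_n |z_n|^{-2}<\infty$; the genus-$1$ canonical product $P$ converges and $h/P = e^{G}$ for some entire $G$; and a minimum-modulus estimate for $P$ off a set of exceptional discs, combined with Borel--Carath\'eodory and Cauchy's estimates, forces $G$ to be a polynomial of degree at most one. One small wording issue: the displayed bound $\log\lvert(1-w)e^{w}\rvert \ge -C(\lvert w\rvert + \log\lvert w\rvert)$ for $\lvert w\rvert>1/2$ cannot stand on its own (the left-hand side diverges to $-\infty$ as $w\to 1$). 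On the avoided circle $\lvert z\rvert = R$ the correct ingredient is $\log\lvert 1-z/z_n\rvert \ge -(2+\delta)\log\lvert z_n\rvert$, which, summed over the $n(2R)=O(R)$ nearby zeros, yields $\log\lvert P(z)\rvert \ge -O(R\log R)$; that is a matter of phrasing, not a gap in the plan.
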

\begin{proofof}{\Cref{clm:zero_roots}}
As $M_{\bX}(z)$ is an entire function of order one, we can use \Cref{thm:Hadamard} to express it as 
\[
M_{\bX}(z) = e^{Az + A'} \prod_{n\ge 1} \bigg(1 -\frac{z}{z_n}\bigg) e^{\frac{z}{z_n}},  
\]
where $z_1, z_2, \ldots$ are the roots of $M_{\bX}(z)$. We first recall that $M_{\bX}(0)=1$, and hence $A'=0$. We next observe that since $M_{\bX}(z)$ is a symmetric function, if $z_n$ is a root then so is $-z_n$ (and with the same multiplicity). Together with the symmetry of $M_{\bx}(z)$, this implies that the coefficient $A$ of $z$ appearing in the exponent is also zero.  
Next, we observe that $M_{\bX}(z)$ cannot have 
any root on the real line. Thus, if we define $\Omega_1= \{z : \mathsf{Re}(z)>0\}$, then the right hand side of the above equation simplifies to \[
M_{\bX}(z) = \prod_{z_i \in \Omega_1: M_{\bX}(z_i)=0} \bigg(1 -\frac{z^2}{z_i^2}\bigg).  
\]

Now, suppose that $M_{\bX}(z)$ does not have any zeros in a ball of radius $R_{\ast}$ around the origin. Then, for any $z$ such that $|z| \le R_{\ast}$, the above gives that 
\[
|M_{\bX}(z)| \ge   \prod_{z_i \in \Omega_1: M_{\bX}(z_i)=0} \bigg(1 -\frac{|z|^2}{|z_i|^2}\bigg) \ge 1 - \sum_{z_i \in \Omega_1: M_{\bX}(z_i)=0} \frac{|z|^2}{|z_i|^2}. 
\]
Applying \Cref{cor:sum_roots} (with $\alpha=2$),  we have that
\[
|M_{\bX}(z)| \ge 1 - \frac{2e\bdd|z|^2}{R_{\ast}}. 
\]
Choosing $R_{\ast} = 72e\bdd^3$, we get that $|M_{\bX}(z)| \ge 1- |z|^2/36\bdd^2$ for all $|z| \le 72e\bdd^3$. In particular, for all $|z| \le 3$, $M_{\bX}(z) \ge 1-1/(4\bdd^2)$. This contradicts \Cref{clm:decrease_value}. {Thus, $M_{\bX}(z)$ has a root of magnitude at most $72e\bdd^3 \le 200 \bdd^3$.}
\end{proofof}

\ignore{Xue's old text -- Xue, delete this if you're happy with the last bit of calculation
\begin{claim}
If $g$ has no zero of magnitude at most $R_0$ for $R_0=10^4 C^2$, we have $|g(z)| > 1 - 0.5/C$ for all $z$ with $|z| \le 40$.
\end{claim}
We will use Hadamard's : for zeros $z_1,z_2,\ldots$ of an entire function $g$ with growth order 1 and $g(0) \neq 0$, there exists constant $A$ and $B$ such that
\[
g(z)=e^{Az+B} \prod_{n \ge 1} (1-z/z_n)e^{z/z_n} \textit{ for all } z.
\]
\begin{proof} 
Since $g(0)=1$, $B=0$. At the same time, $g(z)=g(-z)$ since $\bX$ is symmetric. Thus 
\[
g(z)^2 =\prod_{n} (1 - z^2 / z_n^2). 
\]
For any $|z| < \sqrt{R_0}$, we have
\[
|g(z)| \ge \prod_n ( 1 - |z^2|/z_n^2)^{1/2} \ge 1 - \frac{|z^2|}{2} \sum_{n} 1/|z_n|^2.
\]
By Corollary~\ref{cor:sum_roots}, we know $\sum_n 1/|z_n|^2 \le 2eC/R_0$ (for $\alpha=2$), which implies $|g(z)| \ge 1 - \frac{e C \cdot |z^2|}{R_0}$ for any $|z|  \le \sqrt{R_0}$.
\end{proof}

However, this contradicts with \Cref{clm:decrease_value} given $R_0=10^4 C^2$, which shows $f$ with $|f| \le 5$ and $|g(2 \pi \bi f)|\le 1-0.5/C$.}

\ignore{
\subsection{Proof of \Cref{clm:zero_roots} for integers}
We finish the proof of \Cref{clm:zero_roots} in this section.

\begin{proofof}{\Cref{clm:zero_roots}}
Since $X$ is symmetric, we could rewrite $\sum_{j=-C}^{C} \Pr[X=j] \cdot e^{j z}$ as a polynomial $Q(w)$ of $w=e^{-z}+e^{z}$:
$$
(e^{-kz}+e^{kz})=(e^{-z}+e^{z})^k - \sum_{i \in (-k,k): i-k \equiv 0 \mod 2} {k \choose i} \cdot (e^{-iz}+e^{-iz}).
$$
So the degree-$i$ coefficient in $Q(w)$ is bounded by 
$$\sum_{k \ge i:i-k \equiv 0 \mod 2} {k \choose i} \cdot \Pr[X=k] \le \sum_{k \ge i} {k \choose i} \Pr[X=k] \le {C \choose i}.
$$

The degree of $Q(w)$ is $C$, which indicates that there are $C$ complex roots $w_1,\ldots,w_{C}$. Let $w_1,\ldots,w_{C'}$ denote the number of non-zero roots. Since $Q$ is not a constant, $C' \ge 1$. Thus the coefficient of $w^{C-C'}$ in $Q$ equals $w_1 \cdot w_2 \cdots w_{C'}$, which shows 
$$|w_1| \cdot |w_2| \cdots |w_{C'}| \le {C \choose C-C'}={C \choose C'}.
$$ 
From all discussion above, we know there exists a non-zero complex root $w^*$ of $Q$ with $|w^*| \le C$.

We bound $|z_0|$ satisfying $e^{-z_0}+e^{z_0}=w^*$. For $z_0=r + \i \theta$, $e^{-z_0}=e^{-r} \cdot (\cos \theta - \i \sin \theta)$ with $|e^{-z_0}|=e^{-r}$ and $e^{z_0}=e^{r} \cdot (\cos \theta + \i \sin \theta)$ with $|e^{z_0}|=e^{r}$. Notice that when $r>6$, 
$$
|e^{-z_0}+e^{z_0}| \ge |e^{z_0}|-|e^{-z_0}| \ge e^r - e^{-r} \ge e^r/2.
$$ 
When $|w^{*}| > 10$, we have $r = \log |w^{*}| + O(1)$. Otherwise, we know $r=O(1)$.
\end{proofof}
However, this is contradicted with \Cref{clm:decrease_value}.
}


\section{A more efficient tester for benign distributions}
\label{sec:symmetric}

While the sample complexity of \Cref{thm:main-informal} is independent of $n$, it is  quite large as a function of $k$ (of tower type), and it uses cumulants of very high order (again of tower type).  As mentioned in \Cref{remark:quantitative-preview}, in this section we show that if $\bX$ satisfies some mild conditions --- specifically, its first $k+1$ even cumulants are all nonzero --- then it is possible to give a much more efficient tester which uses only the first $k+1$ even cumulants.  The  sample complexity of this tester is only exponential in $k$:

\begin{theorem} [More efficient tester for distributions with nonzero even cumulants] \label{thm:test_sym_poly}
Fix any real random variable $\bX$ which has mean zero, variance one, and 
non-zero cumulants $|\kappa_4({\cal D})|,\ldots,|\kappa_{2k+2}({\cal D})| \ge \tau$.
Let $\ion$ be any real random variable with finite moments of order $2,4,\dots,4k+4$.\ignore{ Let $\ion$ be any real random variable with mean zero, variance 1, and finite moments of order $1,2,\dots,\red{4k+4}$.}
 
The algorithm described below is an $\eps$-tester for $k$-linearity under ${\cal D}$ and $\ion$, if $w$ is promised to satisfy $1/C \le \|w\|_2 \le C.$ Its sample complexity is 
\[
m=\poly(k!,m_{4k+4}({\cal D}),m_{4k+4}(\ion),(C/\eps)^k,1/\tau),
\]
where $\tau=\min\{\kappa_2(\bX),\kappa_4(\bX),\dots,\kappa_{2k+2}(\bX)\}.$
\end{theorem}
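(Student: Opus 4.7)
The plan is to reduce $k$-sparsity testing to estimating a single symmetric function of the squared weights $a_i := w_i^2$, namely the $(k{+}1)$-st elementary symmetric polynomial $e_{k+1}(a_1,\dots,a_n) = \sum_{S \subseteq [n],\,|S|=k+1} \prod_{i \in S} a_i$. The algorithm has three steps. (i) For each $j = 1,2,\dots,k+1$, invoke \Cref{thm:number_samples_Lk_w} with $\ell = 2j$ to obtain an estimate $\wt p_j$ of the power sum $p_j := \sum_{i=1}^n w_i^{2j} = \|w\|_{2j}^{2j}$; this is possible because by hypothesis $|\kappa_{2j}(\bX)| \ge \tau$ for every such $j$. (ii) Use Newton's identities to convert $(\wt p_1,\dots,\wt p_{k+1})$ into an estimate $\wt e_{k+1}$ of $e_{k+1}(a_1,\dots,a_n)$. (iii) Output ``yes'' iff $\wt e_{k+1} \le \beta^\ast/2$ for the threshold $\beta^\ast$ specified below.

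Completeness is immediate from the elementary identity that, for nonnegative $a_i$'s, $e_{k+1}(a) = 0$ iff at most $k$ of the $a_i$ are strictly positive; applied to $a_i = w_i^2$, this says $w$ is $k$-sparse iff $e_{k+1}(w^2) = 0$.

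The main technical obstacle is the quantitative soundness bound: if $\|w\|_2 \ge 1/C$ and $\dist(w,k\text{-sparse}) \ge \eps$, then $e_{k+1}(w_1^2,\dots,w_n^2) \ge \beta^\ast$ for an explicit $\beta^\ast$ depending only on $k,\eps,C$. To prove it I will sort the squared weights so that $a_1 \ge a_2 \ge \cdots \ge a_n$, which makes the distance hypothesis equivalent to $T := \sum_{i>k} a_i \ge \eps^2/C^2 =: \beta$, and then split on the size of $a_{k+1}$. If $a_{k+1} \ge \beta/(2k+2)$, the single monomial $a_1 a_2 \cdots a_{k+1}$ of $e_{k+1}(a)$ already yields $e_{k+1}(a) \ge a_{k+1}^{k+1} \ge (\beta/(2k+2))^{k+1}$. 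If instead $a_{k+1} < \beta/(2k+2)$, every tail value satisfies the same strict inequality, so by the sum bound at least $N \ge 2k+3$ tail coordinates are positive; Schur-concavity of $e_{k+1}$ on $\R_{\ge 0}^n$ then reduces the lower bound to the majorization extreme in which $N-1$ tail coordinates equal $a_{k+1}$ and one equals $\beta - (N-1)a_{k+1}$, after which $\binom{N-1}{k+1} a_{k+1}^{k+1} \ge ((N{-}k{-}1)a_{k+1})^{k+1}/(k+1)! \ge (\beta/2)^{k+1}/(k+1)!$, using that $(k+1)a_{k+1} \le \beta/2$. Taking $\beta^\ast$ to be the smaller of the two lower bounds gives $\beta^\ast \ge (\eps^2/C^2)^{k+1}/(k+1)^{O(k)}$.

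The remaining step is routine error propagation. Newton's identities express $e_{k+1}$ as a polynomial of total degree $k+1$ in $(p_1,\dots,p_{k+1})$ whose coefficients depend only on $k$, and each true $p_j$ lies in $[0,C^{2j}]$ since $\|w\|_2 \le C$; hence an additive error of $\eta$ in each $\wt p_j$ yields additive error at most $\poly(k) \cdot C^{O(k)} \cdot \eta$ in $\wt e_{k+1}$. Choosing $\eta$ so that this propagated error is at most $\beta^\ast/4$, invoking \Cref{thm:number_samples_Lk_w} $(k+1)$ times with this accuracy and failure probability $1/(10(k+1))$, and taking a union bound yields the claimed sample complexity $\poly(k!, m_{4k+4}(\bX), m_{4k+4}(\ion), (C/\eps)^k, 1/\tau)$.
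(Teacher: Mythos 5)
Your algorithm is identical to the paper's: estimate the even power sums $\|w\|_{2j}^{2j}$ for $j \le k+1$ via \Cref{thm:number_samples_Lk_w} (possible since the relevant even cumulants are all $\ge\tau$ in magnitude), convert them to an estimate of $\Sym_{k+1}(w_1^2,\dots,w_n^2)$ via Newton's identities, and threshold. The genuine difference is in how you prove the soundness bound --- the quantitative lower bound on $\Sym_{k+1}(w_1^2,\dots,w_n^2)$ when $\dist(w,k\text{-sparse})\ge\eps$, which is \Cref{clm:lower_bound_sym_poly} in the paper. The paper proves $\Sym_{k+1}(w_1^2,\dots,w_n^2)\ge \|w\|_2^{2k+2}\,\eps^{2k}/(k+1)!$ by a one-line induction on $\ell$, using the identity $\Sym_{\ell+1}(a)=\tfrac{1}{\ell+1}\sum_{|T|=\ell}\bigl(\prod_{j\in T}a_j\bigr)\bigl(\sum_{i\notin T}a_i\bigr)$ together with $\sum_{i\notin T}w_i^2\ge\eps^2\|w\|_2^2$ for every $|T|\le k$. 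You instead sort $a_i:=w_i^2$, set $\beta:=\eps^2/C^2$, and split on whether $a_{k+1}\ge\beta/(2k+2)$: in the first case a single monomial $a_1\cdots a_{k+1}$ already suffices, and in the second you note the tail has at least $2k+3$ positive entries and invoke Schur-concavity of $e_{k+1}$ to pass to the majorizing-extremal tail configuration $(a_{k+1},\dots,a_{k+1},r)$. Your argument is correct (the extremal tail does majorize every tail with entries in $[0,a_{k+1}]$ and the same sum, and $e_{k+1}$ restricted to the tail coordinates is a nonnegative combination of elementary symmetric polynomials, hence Schur-concave), and it recovers a bound of the form $(\eps/C)^{O(k)}/k^{O(k)}$ yielding the same sample complexity. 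It is, however, noticeably more involved than the paper's four-line induction and loses a factor of $\eps^2$ in the exponent ($\eps^{2k+2}$ in place of $\eps^{2k}$ after substituting $\|w\|_2\ge 1/C$). A minor cosmetic difference in the error-propagation step is that the paper first rescales all samples by $C$ so that $\|w\|_2\le 1$ and every power sum lies in $[0,1]$, which keeps the Newton-recursion error bounds tidy, whereas you carry the $C^{O(k)}$ factors through explicitly; both bookkeeping choices are fine.
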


We remark that unlike \Cref{thm:general_alg}, the tester given by \Cref{thm:test_sym_poly} is not a tolerant tester.

The algorithm and its analysis use symmetric polynomials.
Let $\Sym_k(x_1,\ldots,x_n)$ denote the elementary symmetric polynomial $\sum_{S \in {[n] \choose k}} \prod_{i \in S}x_i$ of degree $k$ over variables $x_1,\ldots,x_n$.    The idea of the algorithm is extremely simple:  if $w$ is $k$-sparse then $\Sym_{k+1}(w_1^2,\dots,w_n^2)=0$, while if $w$ is far from $k$-sparse then $\Sym_{k+1}(w_1^2,\dots,w_n^2)$ must be bounded away from 0, and these two different cases can be distinguished using the tools developed in previous earlier sections.  

More formally, the analysis will use Newton's identity, which  gives us the following: for all $\ell \in \{1,\dots,n\}$, we have
\begin{equation}\label{eq:newton_sym}
\ell \cdot \Sym_\ell(w_1^2,\ldots,w_n^2) = \sum_{i=1}^\ell (-1)^{i-1} \cdot \Sym_{\ell-i}(w_1^2,\ldots,w_n^2) \cdot (\sum_{j \in [n]} w_j^{2i}).
\end{equation}

Now we describe the algorithm, which is very simple:

\begin{enumerate}
\item We first rescale all samples by a factor of $C$ so that $\|w\|_2 \in [1/C^2,1]$ in the following analysis. 

\item For $i=2,4,\dots,2k+2$, run the algorithm of \Cref{thm:number_samples_Lk_w} to obtain an estimate $M'_i$ of $\|w\|_i^i$ which is accurate to within an additive $\eps':=\frac{1}{3(k+1)} \cdot \frac{C^{-(4k+4)} \cdot \eps^{2k}}{(k+1)!}$.  Set $M_i=\min\{M'_i,1\}.$

(The rationale for taking the min is that since $\|w\|^2_2 \leq 1$, it must be the case that the true value of $\|w\|^i_i$ is at most 1.)

\item Set $\wt{S}_0=1$ and $\wt{S}_1=M_2$. Then for each $\ell=2,3,\ldots,k+1$, set $\wt{S}_\ell=\frac{1}{\ell} \cdot \sum_{i=1}^\ell (-1)^{i-1} \wt{S}_{\ell-i} \cdot M_{2i}$. 

($\wt{S}_\ell$ should be thought of as an estimate of $\Sym_\ell(w_1^2,\dots,w_n^2)$.)

\item If $\wt{S}_{k+1}>\frac{1}{2} \cdot \frac{C^{-(4k+4)} \cdot \eps^{2k}}{(k+1)!}$ then output ``far from $k$-sparse'' and otherwise output ``$k$-sparse.''
\end{enumerate}

\ignore{
}

Before we prove \Cref{thm:test_sym_poly}, we state and prove a technical result that is useful for the case in which $w$ is far from being $k$-sparse:

\begin{claim}\label{clm:lower_bound_sym_poly}
If $w$ is $\eps$-far from being $k$-sparse, then $\Sym_{k+1}(w_1^2,\ldots,w_n^2) \ge  \|w\|_2^{2k+2} \cdot  \frac{\eps^{2k}}{(k+1)!} $.
\end{claim}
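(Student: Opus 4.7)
The plan is to prove a cleaner, stronger, and homogeneous statement by induction on $k$, and then specialize. Writing $a_i := w_i^2$ and (without loss of generality) sorting so that $a_1 \ge a_2 \ge \cdots \ge a_n \ge 0$, let $S := \sum_i a_i = \|w\|_2^2$ and $T_\ell := \sum_{i > \ell} a_i$. The hypothesis that $w$ is $\eps$-far from $k$-sparse becomes $T_k \ge \eps^2 S$. I will prove
$$(k+1)!\,\Sym_{k+1}(a_1,\ldots,a_n) \;\ge\; S \cdot T_k^{k}$$
by induction on $k$; combined with $T_k \ge \eps^2 S$, this immediately gives $\Sym_{k+1}(w^2) \ge \|w\|_2^{2k+2} \eps^{2k}/(k+1)!$ as required.

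The base case $k=0$ is immediate since $\Sym_1(a) = S$ and $T_0^0 = 1$. For the inductive step I will use the standard identity
$$(k+2)\,\Sym_{k+2}(a) \;=\; \sum_{i=1}^n a_i \,\Sym_{k+1}(a\setminus a_i),$$
which follows by counting each $(k+2)$-subset on the right exactly $k+2$ times. Applying the inductive hypothesis to each $\Sym_{k+1}(a\setminus a_i)$ requires estimating the ``tail past $k$'' of the sorted sequence $a\setminus a_i$. The key geometric observation is that this tail has sum at least $T_{k+1}$ for every $i$: if $i\le k+1$ the new tail past $k$ is exactly $\{a_j : j \ge k+2\}$, of sum $T_{k+1}$; if $i \ge k+2$ it is $\{a_{k+1},a_{k+2},\ldots\}\setminus\{a_i\}$, of sum $T_k - a_i \ge T_k - a_{k+1} = T_{k+1}$, where the inequality uses $a_i \le a_{k+1}$ by sorting.

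Combining the identity, the inductive hypothesis, and this observation gives
$$(k+2)!\,\Sym_{k+2}(a) \;\ge\; T_{k+1}^k \sum_{i=1}^n a_i (S - a_i) \;=\; T_{k+1}^k \Bigl(S^2 - \sum_i a_i^2\Bigr).$$
To close the induction I bound $S^2 - \sum_i a_i^2$ from below. Since $a_i \le a_1$ for all $i$, we have $\sum_i a_i^2 \le a_1 S$, hence $S^2 - \sum_i a_i^2 \ge S(S - a_1) = S\sum_{i \ge 2} a_i \ge S \cdot T_{k+1}$, where the last step uses $\{i \ge k+2\} \subseteq \{i \ge 2\}$. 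Plugging in yields $(k+2)!\,\Sym_{k+2}(a) \ge S\, T_{k+1}^{k+1}$, completing the induction.

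The main obstacle I expect is identifying the right strengthening to induct on. A direct induction on the claim as stated mixes $\|w\|_2$ with the normalized quantity $\eps$ and does not go through cleanly. The key is to reformulate the target as the degree-homogeneous inequality $(k+1)!\,\Sym_{k+1}(a) \ge S\cdot T_k^k$ in the raw variables $a_i$; both sides then have degree $k+1$, and this homogeneity is exactly what allows term-by-term application of the inductive hypothesis to $a\setminus a_i$ (whose total mass is $S - a_i$, not $S$) to cleanly reassemble into the claim for $k+1$.
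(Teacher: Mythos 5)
Your proof is correct, and it takes a genuinely different route from the paper's. The paper inducts \emph{upward} on $\ell$ from $1$ to $k+1$ using the identity $(\ell+1)\Sym_{\ell+1}(w^2) = \sum_{T\in\binom{[n]}{\ell}}\prod_{j\in T}w_j^2\cdot\bigl(\sum_{i\notin T}w_i^2\bigr)$; it then observes that since every $\ell$-subset $T$ has $|T|\le k$, its complement carries $\ell_2^2$-mass at least $\eps^2\|w\|_2^2$, which directly yields the clean recurrence $\Sym_{\ell+1}\ge \frac{\eps^2\|w\|_2^2}{\ell+1}\Sym_\ell$ and hence the claim after iterating from the base case $\Sym_1 = \|w\|_2^2$. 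You instead first isolate the homogeneous unconditional inequality $(k+1)!\,\Sym_{k+1}(a)\ge S\,T_k^k$ and prove it via the single-element-removal identity $(k+2)\Sym_{k+2}(a)=\sum_i a_i\,\Sym_{k+1}(a\setminus a_i)$, which forces you to do some genuine extra work: you must track how the ``tail past $k$'' behaves after deleting one entry (your case analysis on $i\le k+1$ versus $i\ge k+2$ is correct and needed), and you must bound $S^2-\sum_i a_i^2\ge S\,T_{k+1}$. All of these steps check out. The two approaches are of comparable length, but the paper's is a bit slicker because it applies the far-from-sparse hypothesis inside every step of the recurrence, so there is nothing to track about re-sorted tails; on the other hand, your version cleanly separates the combinatorial content (an unconditional inequality between $\Sym_{k+1}$, the total mass $S$, and the tail $T_k$) from the place where $\eps$ enters, which is a nice conceptual gain. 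Indeed, the paper's own argument can be upgraded essentially for free to give your stronger homogeneous statement, by using $\sum_{i\notin T}a_i\ge T_\ell\ge T_k$ for $|T|=\ell\le k$ in place of $\sum_{i\notin T}a_i\ge\eps^2 S$ --- so the two proofs are closely related under the hood, but your recurrence and intermediate lemmas are not the ones the paper uses.
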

\begin{proof}
Since $w$ is $\eps$-far from being $k$-sparse, $\sum_{i \notin S} w_i^2 > \eps^2 \cdot \|w\|_2^2$ for every subset $S \in {[n] \choose k}$.\ignore{\rnote{Is this exactly the definition exactly of $w$ (a unit vector) being $\eps$-far from $k$-sparse that we are using? Or is it that for every $k$-sparse $w'$ we have $\|w-w'\|\geq \eps$ --- in this case the ``$\eps$'' in this sentence would be an $\eps^2$ {\color{brown} Xue: Sorry, Rocco. It shall be the latter definition of $\eps^2$.}}}  We use induction to show that all $\ell \in [1,2,\ldots,k+1]$ have $\Sym_{\ell}(w_1^2,\ldots,w_n^2) \ge \|w\|_2^{2\ell} \cdot \eps^{2(\ell-1)}/\ell!$.

The base case is simple:  when $\ell=1$ we have $\Sym_1(w_1^2,\ldots,w_n^2)=\sum_i w_i^2=\|w\|_2^2$.

For the inductive step, we rewrite $\Sym_{\ell+1}(w_1^2,\dots,w_n^2)$ for $\ell \le k$ as follows:
\begin{align*}
\Sym_{\ell+1}(w_1^2,\ldots,w_n^2)&=\sum_{S \in {[n] \choose \ell+1} } \prod_{j \in S} w_j^2\\
&=\frac{1}{\ell+1} \sum_{T \in {[n] \choose \ell}} \prod_{j \in T} w_j^2 (\sum_{i \notin T} w_i^2)\\
& \ge \frac{1}{\ell+1} \sum_{T \in {[n] \choose \ell}} \prod_{j \in T} w_j^2 \cdot \eps^2 \cdot \|w\|_2^2 \\
& = \frac{ \eps^2 \cdot \|w\|_2^2}{\ell+1} \Sym_{\ell}(w_1^2,\dots,w_n^2),
\end{align*}
where the inequality follows from the first sentence of the proof.  From the induction hypothesis, this is at least $\|w\|_2^{2\ell+2} \cdot \frac{\eps^2}{\ell+1} \cdot \eps^{2(\ell-1)}/\ell!=\|w\|_2^{2\ell+2} \cdot \eps^{2\ell}/(\ell+1)!$.
\end{proof}

\begin{proofof}{\Cref{thm:test_sym_poly}}
We apply induction from $\ell=0$ to $k+1$ to bound the error between $\wt{S}_\ell$ and $\Sym_\ell(w_1^2,\ldots,w_n^2)$ by $\ell \cdot \eps'$. For brevity we simply write $\Sym_\ell$ for $\Sym_\ell(w_1^2,\ldots,w_n^2)$ for the rest of this section.

For the base cases $\ell=0$ and $\ell=1$, by definition we have that $\wt{S}_0$ and $\bigg| \Sym_1-\|w\|_2^2 \bigg| \le \eps'$.

For the inductive step, the error $|\wt{S}_\ell - \Sym_\ell|$ between $\wt{S}_\ell$ and $\Sym_\ell$ is 
\begin{align*}
& \frac{1}{\ell} \left| \sum_{i=1}^\ell (-1)^{i-1} \wt{S}_{\ell-i} \cdot M_{2i} - \sum_{i=1}^\ell (-1)^{i-1} \cdot \Sym_{\ell-i} \cdot \|w\|_{2i}^{2i} \right| \tag{def. of $\wt{S}_{\ell}$ and \Cref{eq:newton_sym}} \\
\le & \frac{1}{\ell} \sum_{i=1}^\ell \left| \wt{S}_{\ell-i} \cdot M_{2i} - \Sym_{\ell-i} \cdot \|w\|_{2i}^{2i} \right|\\
\le & \frac{1}{\ell} \sum_{i=1}^\ell \left( \left|\wt{S}_{\ell-i}-\Sym_{\ell-i} \right| \cdot M_{2i} + \Sym_{\ell-i} \cdot \left|M_{2i}-\|w\|_{2i}^{2i}\right| \right) \\
\le & \frac{1}{\ell} \sum_{i=1}^{\ell-1} \left|\wt{S}_{\ell-i}-\Sym_{\ell-i}\right| + \frac{1}{\ell} \sum_{i=1}^\ell \Sym_{\ell-i} \cdot \left| M_{2i}-\|w\|_{2i}^{2i}\right| \tag{using $M_{2i} \leq 1$}\\
\le & \frac{1}{\ell} \sum_{i=1}^{\ell-1} (\ell-i) \cdot \eps' + \frac{1}{\ell} \sum_{i=1}^\ell \frac{\eps'}{(\ell-i)!} \tag{inductive step and $\Sym_\ell \le 1/\ell!$}\\
\le & \frac{\ell(\ell-1)}{2 \ell} \cdot \eps' + \frac{e}{\ell} \cdot \eps' \le \ell \cdot \eps',
\end{align*}
where we note that for the penultimate inequality, we have $\Sym_\ell(w_1^2,\ldots,w_n^2)\le (\sum_j w_j^2)^\ell/\ell! = 1/\ell!$.

Thus we have established that $|\wt{S}_\ell - \Sym_\ell| \leq \ell \cdot \eps'$; we now use this to argue correctness of our algorithm. 
This is simple:
if $w$ is $k$-sparse then $\Sym_{k+1}(w_1^2,\ldots,w_n^2)=0$ and by the above bound we have $\wt{S}_{k+1} \le (k+1) \eps' \le \frac{1}{3} \cdot \frac{C^{-(4k+4)} \cdot \eps^{2k}}{(k+1)!}$. On the other hand, if $w$ is $\eps$-far from $k$-sparse then by \Cref{clm:lower_bound_sym_poly} we have that $\Sym_{k+1}(w_1^2,\ldots,w_n^2) \ge \frac{C^{-(2k+2)} \cdot \eps^{2k}}{(k+1)!}$ from, which implies that $\wt{S}_{k+1} \ge \frac{C^{-(4k+4)} \cdot \eps^{2k}}{(k+1)!} - (k+1) \eps' \ge \frac{2}{3} \cdot \frac{C^{-(4k+4)} \cdot \eps^{2k}}{(k+1)!}$, and the proof is complete.
\end{proofof}


\section{Proof of \Cref{thm:iid}: ${\cal D}$ must be i.i.d. for finite-sample testability} \label{sec:iid}

For ease of presentation we first prove the following variant of \Cref{thm:iid} which deals with ($\eps=0.5$)-testers rather than ($\eps=0.99$)-testers.  After the proof we describe the minor (but notationally cumbersome) changes to the proof which yield \Cref{thm:iid} as stated earlier.

\medskip

\noindent {\bf Variant of \Cref{thm:iid}.} (${\cal D}$ must be i.i.d. for finite-sample testability)\emph{.}
\emph{
Let ${\cal D}$ be the product distribution $(\Poi(1))^{n/2} \times
(\Poi(2))^{n/2}$.  Then even if there is no noise (i.e.~the noise distribution $\ion$ is identically zero), any algorithm which is an $(\eps=0.5)$-tester for 1-linearity under ${\cal D}$ must have sample complexity $m = \Omega({\frac {\log n}{\log \log n}}).$
}

\medskip

\begin{proof}
We consider two different distributions over the target vector $w$.
The first distribution, denoted ${\cal W}_{\yes}$, is uniform over the $n/2$ canonical basis vectors $\{e_{n/2+1},\dots,e_{n}\} \subset \R^n$ (recall that these are the coordinates whose corresponding $\bx_i$'s are distributed as $\Poi(2)$ under ${\cal D}$).  The second distribution, denoted ${\cal W}_{\no}$, is uniform over the multiset of $(n/2)^2$ many vectors $\{e_i + e_j\}_{1 \leq i ,j \leq n/2}$ (recall that for $i \leq n/2$ the random variable $\bx_i$ is distributed as $\Poi(1)$ under ${\cal D}$).  It is clear that every $w$ in the support of ${\cal W}_{\yes}$ is $1$-sparse and that a $1-{\frac {\Theta(1)}{n}}$ fraction of outcomes of $w$ in the support of ${\cal W}_{\no}$ are $0.5$-far from being $1$-sparse.

Let ${\cal P}_{\midd}$ be the following distribution over $(\ba,\bb)$ pairs in $\N^n \times \N$:  in a draw from ${\cal P}_{\midd}$, $\ba$ is drawn from ${\cal D}=(\Poi(1))^{n/2} \times (\Poi(2))^{n/2}$ and $\bb$ is independently drawn from $\Poi(2).$  Let $
(\overline{\ba},\overline{\bb}) := ((\ba^{(1)},\bb^{(1)}),\dots,(\ba^{(m)},\bb^{(m)}))$ be a sequence of $m$ pairs drawn independently from ${\cal P}_{\midd}.$

The following claim is crucial to the proof of the variant of \Cref{thm:iid}:

\begin{claim} \label{claim:noniid} Let $\star \in \{\no,\yes\}.$ Let $\bw$ be drawn from ${\cal W}_{\star}$ and let $(\overline{\bx},\overline{\by}) := ((\bx^{(1)},\by^{(1)}),\dots,(\bx^{(m)},\by^{(m)}))$ be a sequence of $m$ examples independently generated as follows for each $i$:  $\bx^{(i)}$ is drawn from ${\cal D}=(\Poi(1))^{n/2} \times (\Poi(2))^{n/2}$ and $\by^{(i)} \leftarrow \bw \cdot \bx.$   Then for some  $m = {\frac {c \log n}{\log \log n}}$ where $c>0$ is a sufficiently small absolute constant, the variation distance between $(\overline{\ba},\overline{\bb})$ and $(\overline{\bx},\overline{\by})$ is at most 0.01.
\end{claim}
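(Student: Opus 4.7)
The plan is to upper-bound $d_{TV}(P,Q_\star)$, where $P:={\cal P}_\midd^{\otimes m}$ and $Q_\star$ is the distribution of $(\overline{\bx},\overline{\by})$ when $\bw\sim{\cal W}_\star$, via a truncated $\chi^2$ calculation. The underlying reason the two distributions should be close is Poisson additivity: for every $\bw$ in the support of ${\cal W}_\yes$ (i.e.\ $\bw=e_j$ with $j>n/2$, where $\bx_j\sim\Poi(2)$) or of ${\cal W}_\no$ (i.e.\ $\bw=e_i+e_j$ with $i,j\le n/2$, so $\bw\cdot\bx \sim \Poi(1)+\Poi(1)=\Poi(2)$), the marginal of each $\by^{(s)}$ under $Q_\star$ equals $\Poi(2)$, matching the marginal of $\bb^{(s)}$ under $P$. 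The only potentially distinguishing signal is the correlation between $\by^{(s)}$ and $\bx^{(s)}$.

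First I would introduce a truncation. Set $T:=C\log(nm)/\log\log(nm)$ for a sufficiently large absolute constant $C$, and let $E$ be the event that every $\bx^{(s)}_j$ (resp.\ $\ba^{(s)}_j$) and every $\by^{(s)}$ (resp.\ $\bb^{(s)}$) lies in $[0,T]$. The super-exponential Poisson tail bound $\Pr[\Poi(\lambda)\ge k]\le (e\lambda/k)^k$ combined with a union bound over the $O(nm)$ coordinates yields $P(\bar E)+Q_\star(\bar E)\le 0.002$. A Cauchy--Schwarz argument on $E$ then reduces the goal to bounding the truncated chi-squared
\[
\widetilde{\chi}^2_E:=\mathbb{E}_{P}\bigl[(Q_\star/P)^2\,\Ind_E\bigr]-1,
\]
via the bound $\sum_{x\in E}|P(x)-Q_\star(x)|\le\sqrt{\widetilde{\chi}^2_E+2Q_\star(\bar E)}$.

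To control $\widetilde{\chi}^2_E$, I would decompose $Q_\star=\mathbb{E}_{\bw\sim{\cal W}_\star}[Q_\star^{\bw}]$ and expand:
\[
\widetilde{\chi}^2_E+1=\frac{1}{|{\cal W}_\star|^2}\sum_{\bw,\bw'}\prod_{s=1}^m \mathbb{E}_{\ba\sim{\cal D}}\!\left[\frac{\Ind[\bw\cdot\ba=\bw'\cdot\ba]}{\Poi(2)(\bw\cdot\ba)}\,\Ind_{E_1}\right],
\]
where $E_1$ is the single-sample version of $E$. Classifying pairs $(\bw,\bw')$ by how their supports intersect, a direct computation using Poisson additivity yields three regimes: (i) disjoint supports---the per-sample factor equals exactly $1$ even without truncation, via $\sum_b \Poi(2)(b)^2/\Poi(2)(b)=1$; (ii) supports sharing exactly one coordinate (possible only when $\star=\no$)---the per-sample factor equals an absolute constant $C_1=\sum_{a,c\ge 0}\Poi(1)(a)\Poi(1)(c)^2/\Poi(2)(a+c)$, a convergent series since $\Poi(2)$'s factorial decay dominates; (iii) $\bw=\bw'$---after truncation the per-sample factor is at most $T+1$. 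Counting the pairs in each class against $|{\cal W}_\yes|=n/2$ or $|{\cal W}_\no|=(n/2)^2$ yields
\[
\widetilde{\chi}^2_E\le O\bigl((T+1)^m/n\bigr)+O\bigl(C_1^m/n\bigr),
\]
where the second term appears only in the $\no$ case. For $m=c\log n/\log\log n$ with $c$ a sufficiently small absolute constant, $(T+1)^m=n^{O(c)}$ and $C_1^m=n^{o(1)}$, both $o(n)$, so $\widetilde{\chi}^2_E\to 0$; combining with the $0.002$ slack from $\bar E$ gives $d_{TV}(P,Q_\star)\le 0.01$.

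The main obstacle---and the reason the untruncated $\chi^2$ is infinite---is that $\sum_{b\ge 0}1/\Poi(2)(b)=\infty$, because the Poisson pmf decays super-exponentially. This forces the truncation threshold $T$ to be large enough (namely $T\gtrsim\log(nm)/\log\log(nm)$) that $P(\bar E)=o(1)$, yet small enough that $(T+1)^m=o(n)$ when $m=\Theta(\log n/\log\log n)$. This double-log scaling sits exactly in the sweet spot, and is precisely why the resulting lower bound is $\Omega(\log n/\log\log n)$ rather than $\Omega(\log n)$.
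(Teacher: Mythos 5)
Your proposal is correct, and it takes a genuinely different route from the paper's. The paper proceeds by conditioning on $\ol{\bx}$: it shows that with high probability over $\ol{\bx}$ the matrix is ``good'' (its empirical column-frequency vector is $\ell_1$-close to $\Poi(1)^m$), and that for good $\ol{\bx}$ the conditional distribution of $\ol{\by}$, which is the self-convolution of the column-frequency measure, is TV-close to $\Poi(2)^m$. The quantitative work is a Chernoff bound over the $\leq k^m$ ``heavy'' strings with $k \approx \log n/\log\log n$. Your approach instead runs a second-moment (truncated $\chi^2$) argument directly on the joint law $(\ol{\bx},\ol{\by})$, integrating out the prior over $\bw$ and decomposing the cross terms by support overlap of $(\bw,\bw')$. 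This is a more ``universal'' lower-bound machinery and gives a cleaner accounting of where the loss comes from: the diagonal term forces truncation at $T \approx \log(nm)/\log\log(nm)$ to make $\sum_b 1/\Poi(2)(b)$ finite, and $(T+1)^m$ must stay $o(n)$, which is exactly the $\Theta(\log n/\log\log n)$ window. The paper's conditional argument is more elementary but somewhat more ad hoc; yours is more modular and would transfer more readily to other priors over $\bw$ or other coordinate distributions.

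A few small remarks, none of which affect correctness. First, in the $\no$ case the diagonal pairs $\bw=\bw'$ number $\Theta((n/2)^2)$ out of $(n/2)^4$, so the diagonal contribution is really $O((T+1)^m/n^2)$ rather than $O((T+1)^m/n)$; your stated $O((T+1)^m/n)$ is a valid (weaker) bound and still goes to $0$. Second, the shared-one-coordinate constant works out exactly: $C_1 = e^{-1}\sum_{c\ge 0}\tfrac{1}{c!}\sum_{a\ge 0}\binom{a+c}{a}2^{-(a+c)} = e^{-1}\sum_c \tfrac{2}{c!} = 2$, confirming the series converges and $C_1^m = 2^m = n^{O(1/\log\log n)} = n^{o(1)}$. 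Third, to make the per-sample factor calculation clean you should note that for the diagonal term you only need to retain from $\Ind_{E_1}$ the constraint $\bw\cdot\ba \le T$ (the constraint on the other $\ba$ coordinates can be discarded since $\Ind_{E_1}\le \Ind[\bw\cdot\ba\le T]$), and for the off-diagonal terms simply drop $\Ind_{E_1}$ entirely via $\Ind_{E_1}\le 1$.
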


We prove \Cref{claim:noniid} for $\star = \no$ below; the case when $\star = \yes$ follows by a similar (simpler) proof.  The variant of \Cref{thm:iid} follows directly from this claim and the triangle inequality.
\end{proof}

\subsection{Proof of \Cref{claim:noniid}.}

\ignore{
}
Since $\overline{\ba}=(\ba^{(1)},\dots,\ba^{(m)})$ and $\overline{\bx}=(\bx^{(1)},\dots,\bx^{(m)})$ are both distributed according to ${\cal D}^m$, \Cref{claim:noniid} follows directly from the following:

\begin{claim} \label{claim:noniid2}
With probability at least $0.999$ over a draw of $\overline{\bx}$ from ${\cal D}^m$, the distribution of $\overline{\by}$ conditioned on $\overline{\bx}$ (which we denote $\overline{\by}(\overline{\bx})$) has variation distance at most $0.0001$ from the distribution of $\overline{\bb}$ (which is simply $(\Poi(2))^m$).
\end{claim}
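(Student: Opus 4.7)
The plan is to bound $d_{TV}(P, Q)$, where $P := \overline{\by}(\overline{\bx})$ and $Q := (\Poi(2))^m$, by the square root of a suitably truncated chi-squared divergence and then apply Markov's inequality to the randomness in $\overline{\bx}$. The main obstacle is that the untruncated chi-squared $\chi^2(P, Q)$ is infinite: $\Poi(2)$ has unbounded support and $1/\Poi(2)(k) = e^2\,k!/2^k$ grows super-polynomially in $k$, so even the ``diagonal'' contribution to the second moment diverges. I sidestep this by first restricting to a high-probability typical event.

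Concretely, set $B := C\log n$ for a sufficiently large absolute constant $C$, let $V := \{\overline{v}\in\N^m : \max_t v^{(t)} \le 2B\}$, and let $E$ be the event $\{\max_{i,t}\bx_i^{(t)} \le B\}$. Standard Poisson tail bounds give $\Pr_{{\cal D}^m}[E] = 1-o(1)$ and $Q(V) = 1-o(1)$. I also replace ${\cal W}_{\no}$ by its restriction to $\{e_i+e_j : i\ne j\}$, which changes $(\overline{\bx},\overline{\by})$ by at most $O(1/n)$ in TV and, for $\overline{\bx}\in E$, ensures $P$ is supported on $V$. By Cauchy--Schwarz,
\[
d_{TV}(P, Q) \;\le\; \frac{1}{2}\sum_{\overline{v}\in V}\big|P(\overline{v}) - Q(\overline{v})\big| + \frac{1}{2}\big(1-Q(V)\big) \;\le\; \frac{1}{2}\sqrt{\chi_V^2(P, Q)} + o(1),
\]
with $\chi_V^2(P, Q) := \sum_{\overline{v}\in V}(P(\overline{v}) - Q(\overline{v}))^2/Q(\overline{v})$. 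Expanding $\chi_V^2 = \sum_V P^2/Q - 2\sum_V P + \sum_V Q$ and using that $\mathbb{E}_{\overline{\bx}}[\sum_V P] = Q(V) = 1-o(1)$ (since the marginal of $\overline{\by}$ is exactly $Q$ once we restrict to $i\ne j$), it suffices to show $\mathbb{E}_{\overline{\bx}}[\sum_V P^2/Q] \le 1 + o(1)$.

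Writing $P(\overline{v}|\overline{\bx}) = \frac{1}{(n/2)(n/2-1)}\sum_{i\ne j}\Ind[\bx_i^{(t)}+\bx_j^{(t)} = v^{(t)}\ \forall t]$, squaring, and taking expectation over $\overline{\bx}$ yields
\[
\mathbb{E}_{\overline{\bx}}\Big[\sum_{\overline{v}\in V}\frac{P(\overline{v}|\overline{\bx})^2}{Q(\overline{v})}\Big] \;=\; \frac{1}{[(n/2)(n/2-1)]^2}\sum_{(i_1,j_1,i_2,j_2)}\prod_{t=1}^m\alpha,
\]
where $\alpha := \mathbb{E}\big[\Ind[\bx_{i_1}+\bx_{j_1}=\bx_{i_2}+\bx_{j_2},\ \bx_{i_1}+\bx_{j_1}\le 2B]/\Poi(2)(\bx_{i_1}+\bx_{j_1})\big]$. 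I split by the overlap pattern of $\{i_1,j_1\}$ and $\{i_2,j_2\}$: (i)~if the two pairs are disjoint, then $\alpha = \sum_{v\le 2B}\Poi(2)(v)\le 1$ and there are $(1-O(1/n))[(n/2)(n/2-1)]^2$ such tuples, contributing at most $1$; (ii)~if exactly one index is shared, then the identity $\Poi(1)(a)\Poi(1)(b)/\Poi(2)(a+b) = \binom{a+b}{a}/2^{a+b}\le 1$ combined with $\sum_b\Poi(1)(b) = 1$ gives $\alpha\le 2B+1$, with $O(n^3)$ such tuples; (iii)~if $\{i_1,j_1\} = \{i_2,j_2\}$ as unordered pairs, then $\alpha = 2B+1$ trivially, with $O(n^2)$ such tuples. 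Summing the three contributions,
\[
\mathbb{E}_{\overline{\bx}}[\chi_V^2(P, Q)] \;\le\; \frac{(O(\log n))^m}{n} + o(1).
\]

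For $m = c\log n/\log\log n$ with $c$ a sufficiently small absolute constant, $(O(\log n))^m = n^{c+o(1)} = o(n)$, so $\mathbb{E}_{\overline{\bx}}[\chi_V^2] = o(1)$. Jensen's inequality then yields $\mathbb{E}_{\overline{\bx}}[d_{TV}(P, Q)] \le \frac{1}{2}\sqrt{\mathbb{E}_{\overline{\bx}}[\chi_V^2]} + o(1) = o(1)$, and Markov's inequality finishes the job: for $n$ large enough, $d_{TV}(P, Q) \le 0.0001$ with probability at least $0.999$ over $\overline{\bx}$, as required. The delicate step is the ``one shared index'' case in the analysis of $\alpha$: it is there that the truncation factor $2B$ enters multiplicatively, and the resulting $(O(\log n))^m$ term is precisely what forces the sample-complexity lower bound $m = \Omega(\log n/\log\log n)$.
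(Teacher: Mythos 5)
Your proof is correct, but it takes a genuinely different route from the paper's. You bound $d_{\mathrm{TV}}(P,Q)$ by a truncated chi-squared divergence $\chi^2_V$, compute $\mathbb{E}_{\overline{\bx}}[\chi^2_V]$ by expanding the square and classifying 4-tuples of indices by overlap pattern, and then apply Markov; the truncation to $V = \{\max_t v^{(t)} \le 2B\}$ is exactly what is needed to make the diagonal (case (iii)) contribution finite, since the untruncated $\sum_v \Poi(2)(v)/\Poi(2)(v)$ diverges. The paper instead works directly in $\ell_1$: it calls a matrix $\overline{x}$ ``good'' if the empirical distribution of its $n/2$ columns over $\N^m$ is $\ell_1$-close (within $\delta=0.00005$) to the product measure $\Poi(1)(\cdot)$, shows goodness holds with probability $0.999$ (this is where $m = c\log n/\log\log n$ enters, since one must union-bound over at most $k^m < n^{0.001}$ ``heavy'' column patterns with $k=\Theta(\log n/\log\log n)$), and then deduces $d_{\mathrm{TV}}(\overline{\by}(\overline{x}),(\Poi(2))^m)\le 2\delta$ for good $\overline{x}$ via a short triangle-inequality computation exploiting the convolution identity $\Poi(1)*\Poi(1)=\Poi(2)$. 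The paper's argument is more elementary (no Cauchy--Schwarz, no second moments) and adapts almost verbatim to the $\eps=0.99$ version where $\bw$ is a sum of $100$ basis vectors, since the bound simply becomes $100\delta$; your chi-squared route would require re-deriving the overlap taxonomy for $100$-tuples. On the other hand, your approach makes the source of the $\tilde{\Omega}(\log n)$ threshold more transparent, via the $(O(\log n))^m/n^{\Theta(1)}$ factor. One small inaccuracy worth flagging: in your case (ii) the identity $\sum_{a\ge 0}\binom{a+b}{a}2^{-(a+b)}=2$ gives $\alpha\le 2$, not merely $\alpha \le 2B+1$; with that sharper bound the binding constraint is actually case (iii) (the diagonal), where $\alpha = 2B+1$ exactly. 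The final bound and the required $m = O(\log n/\log\log n)$ are unchanged, but the attribution of which term is ``delicate'' should be to (iii) rather than (ii).
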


In the rest of this subsection we prove \Cref{claim:noniid2}.

For $j \in \N$ and $v \in \{1,2\}$, let us write $\Poi(v)(j)$ to denote $v^j e^{-v}/j!$, the probability weight that the $\Poi(v)$ distribution puts on the outcome $j$. 
For $\overline{j} = (j_1,\dots,j_m) \in \N^{m}$, let $\Poi(v)(\overline{j})$ denote $\prod_{\ell=1}^m \Poi(v)(j_{\ell})$, the probability that a sequence of $m$ independent draws from $\Poi(v)$ come out as $j_1,\dots,j_m$.

Given $x \in \N^{n}$ and $j \in \N$, let us write $\freq(x,j)$ to denote the fraction of the first $n/2$ coordinates of $x$ that have value $j$.  Similarly, given $\overline{x} \in \N^{m \times n}$ and $\overline{j} \in \N^m$, let $\freq(\overline{x},\overline{j})$ denote the fraction of the first $n/2$ columns in the matrix $\overline{x}$ which match $\overline{j}.$

We say that a matrix $\overline{x} \in \N^{m \times n}$ is \emph{good} if
\[
\|\freq(\overline{x},\cdot) - \Poi(1)(\cdot)\|_1 := 
\sum_{\overline{j} \in \N^m} |\freq(\overline{x},\overline{j}) - \Poi(1)(\overline{j})| \leq \delta := 0.00005.
\]

The following claim is where we use the fact that $m = {\frac {c \log n}{\log \log n}}$:
\begin{claim} \label{claim:egg}
A random $\overline{\bx} \sim {\cal D}^m$ is good with probability at least $0.999$.
\end{claim}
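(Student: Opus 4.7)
The plan is to observe that the first $n/2$ columns of $\overline{\bx}$ are i.i.d.\ draws from $P := \Poi(1)^m$: for each column index $i \in [n/2]$, the coordinates $\bx^{(1)}_i, \ldots, \bx^{(m)}_i$ are mutually independent $\Poi(1)$ by the definition of ${\cal D}$. Writing $N := n/2$ and letting $\hat{P} := \freq(\overline{\bx}, \cdot)$ denote the resulting empirical distribution on $\N^m$, it suffices to show that $\|\hat{P} - P\|_1 \leq \delta$ with probability at least $0.999$. I would carry this out via a standard truncate-and-learn argument that exploits $m = c\log n/\log\log n$ for a sufficiently small absolute constant $c>0$.

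First I would truncate the effective support. Set $T := 10\log n/\log\log n$ and $S := \{0,1,\dots,T\}^m \subset \N^m$. The Poisson tail bound $\Pr[\Poi(1)\geq T] \leq 2/T! \leq 2(e/T)^T$, combined with $T\log T = (1+o(1))\cdot 10\log n$, gives $\Pr[\Poi(1)\geq T] \leq n^{-9}$ for $n$ large. A union bound over the $m$ coordinates of a single draw from $P$ yields $P(S^c) \leq 2m/n^{9}$, so $N \cdot P(S^c) = o(1)$ and, with probability $1-o(1)$, none of the $N$ column-samples lands outside $S$. On this event $\hat{P}(S^c) = 0$, which bounds the contribution of $\overline{j}\notin S$ to $\|\hat{P}-P\|_1$ by $\hat{P}(S^c)+P(S^c) \leq 2m/n^{9}$, far smaller than $\delta$.

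Next I would bound $|S|$: by construction, $\log|S| = m \log(T+1) = (c\log n/\log\log n)\cdot(\log\log n + O(\log\log\log n)) = (c+o(1))\log n$, so picking $c \leq 1/8$ gives $|S| \leq n^{1/4}$ for $n$ large. For each $\overline{j}\in S$, $N\hat{P}(\overline{j})$ is distributed as $\Bin(N,P(\overline{j}))$, so $\mathbb{E}|\hat{P}(\overline{j})-P(\overline{j})| \leq \sqrt{P(\overline{j})/N}$. Summing over $\overline{j}\in S$ and applying Cauchy--Schwarz gives
\[
\mathbb{E}\sum_{\overline{j}\in S}|\hat{P}(\overline{j})-P(\overline{j})| \;\leq\; \sqrt{|S|/N} \;\leq\; \sqrt{n^{1/4}/(n/2)} \;=\; o(1),
\]
so by Markov's inequality this sum exceeds $\delta/2$ with probability at most $0.0005$.

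Combining both bounds via a union bound yields $\|\hat{P}-P\|_1 \leq \delta$ with probability at least $0.999$, i.e., $\overline{\bx}$ is good. There is no serious technical obstacle; the only delicate point is ensuring $c$ is small enough so that $|S| \ll N$, and the matching between the support size $(T+1)^m = n^{\Theta(c)}$ and the sample size $N = n/2$ is exactly what forces the threshold $m = \Theta(\log n/\log\log n)$ appearing in the theorem.
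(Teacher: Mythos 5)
Your proof is correct, and it differs from the paper's in a substantive way in the key estimation step. Both arguments begin with the same truncation to a finite effective support $S \subset \N^m$ of size $n^{o(1)}$, but then diverge. The paper bounds $\sum_{\ol{j} \in S} |\freq(\ol{\bx},\ol{j}) - \Poi(1)(\ol{j})|$ by applying a Chernoff bound to each fixed $\ol{j}$ (giving deviation $\pm n^{-1/4}$ with probability $1 - 2^{-\Theta(\sqrt{n})}$) and then union-bounding over all $|S| < n^{0.001}$ terms. You instead bound the \emph{expected} $\ell_1$ distance by $\sqrt{|S|/N}$ (the classic rate for learning a distribution on a domain of size $|S|$ from $N$ i.i.d.\ samples, via per-term variance and Cauchy--Schwarz) and then apply Markov. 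Both require the same balance $|S| = n^{o(1)} \ll N = n/2$, which is what forces $m = \Theta(\log n/\log\log n)$, but your route is arguably cleaner because it appeals to a well-known distribution-learning fact rather than an ad hoc Chernoff + union bound, and because the per-term analysis only needs a second moment rather than full concentration. You also streamline the tail bookkeeping: by observing $N \cdot P(S^c) = o(1)$ you can directly assert $\hat{P}(S^c) = 0$ with high probability, whereas the paper handles $S^c$ indirectly via the ``both sides sum to one'' argument after bounding the deviation on $S$. The trade-off is that the paper's Chernoff approach actually yields a much stronger failure probability ($2^{-\Theta(\sqrt n)}$ rather than a constant), but $0.999$ suffices here so your approach loses nothing for the claim as stated.
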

\begin{proof}
Since $\Poi(1)(j) = e^{-1}/j!$, we have that $\sum_{j\geq k} \Poi(1)(j) \leq {\frac 1 {k!}}$.  Taking $k = {\frac {C \log n}{\log \log n}}$ for a suitably large constant $C$ and doing a union bound over $i \in [m]$, we have that 
\begin{equation} \label{eq:light-part}
\sum_{\ol{j} \in \N^m \ : \ j_i \geq k \text{~for some~}i} \Poi(1)(\ol{j}) \leq \tfrac{\delta}{4} \quad \quad \text{(with room to spare).}
\end{equation}
We will argue below that with probability at least $0.999$, a random $\ol{\bx} \sim {\cal D}^m$ satisfies
\begin{equation} \label{eq:heavy-part}
\sum_{\ol{j} \in \{0,\dots,k-1\}^m} |\freq(\ol{\bx},\ol{j}) - \Poi(1)(\ol{j})| \leq \tfrac{\delta}{4}
\end{equation}
Together with \Cref{eq:light-part} and the fact that both $\freq(\ol{x},\cdot)$ and $\Poi(1)(\cdot)$ sum to 1, this implies that $\sum_{\ol{j} \in \N^m \ : \ j_i \geq k \text{~for some~}i} \freq(\ol{\bx})(\ol{j}) \leq \tfrac{\delta}{2}$, and this with \Cref{eq:light-part} and \Cref{eq:heavy-part} establishes \Cref{claim:egg}.

To establish \Cref{eq:heavy-part}, first observe that for a suitable choice of the (small) absolute constant $c$ in the definition of $m = {\frac {c \log n}{\log \log n}}$, the number of summands $\ol{j}$ in \Cref{eq:heavy-part} is at most $k^m < n^{0.001}.$ Fix a specific $\ol{j}$ that appears in \Cref{eq:heavy-part}.  For each $i \in [n/2]$, the probability that the $i$-th column of $\ol{\bx}$ contributes to $\freq(\ol{\bx},\ol{j})$ is precisely $\Poi(1)(\ol{j})$, and consequently the random variable $\freq(\ol{\bx},\ol{j})$ can be viewed as the observed frequency of heads in $n/2$ tosses of a coin which comes up heads each time with probability $\Poi(1)(\ol{j})).$ A standard Chernoff bound gives that the observed frequency in $n/2$ such coin tosses differs from the expected frequency by an additive $\pm n^{-1/4}$ with probability at most $2^{-\Theta(\sqrt{n})}$, so at most a $2^{-\Theta(\sqrt{n})}$ fraction of outcomes of $\ol{\bx}$ are ``bad'' for $\ol{j}$ (in the sense of contributing more than $n^{-1/4}$ to the sum in \Cref{eq:heavy-part}).  A union bound over all (at most $n^{0.001}$ many) $\ol{j}$'s that appear in the sum in \Cref{eq:heavy-part} completes the proof.\ignore{\rnote{We can think about sharpening the argument to get $m = c \log n$, but I don't think it is a high priority.}}
\end{proof}

Fix a good matrix $\overline{x}$, i.e.~one which satisfies $\sum_{\overline{j} \in \N^m} |\freq(\overline{x},\overline{j}) - \Poi(1)(\overline{j})| \leq {\delta}.$  For any
$\ell \in [m]$, the $\ell$-th coordinate of $\overline{\by}(\overline{x})$ takes value $i \in \N$ with probability $\sum_{j, j' \in \N: j + j' = i} \freq(x^{(\ell)},j) \cdot \freq(x^{(\ell)},j')$ (this follows from the definition of the ${\cal W}_{\no}$ distribution).  Similarly, the $m$-dimensional vector $\overline{\by}(\overline{x})$ takes value $\overline{i}
\in \N^m$ with probability $\sum_{\overline{j}, \overline{j'} \in \N: \overline{j} + \overline{j'} = \overline{i}} \freq(x^{(\ell)},\overline{j}) \cdot \freq(x^{(\ell)},\overline{j'}).$ 
Recalling that each coordinate of $\overline{\bb}$ is independently distributed as $\Poi(2)$, we have that
\begin{align}
\dtv(\overline{\by}(\overline{x}),\overline{\bb})
&=
\sum_{\overline{i} \in \N^m} \left|
\sum_{\overline{j},\overline{j'} \in \N^m: \overline{j} + \overline{j'} = \overline{i}}
\freq(\overline{x},\overline{j}) \cdot \freq(\overline{x},\overline{j'}) -
\Poi(2)(\overline{i})
\right| \nonumber \\
&=
\sum_{\overline{i} \in \N^m} \left|
\sum_{\overline{j},\overline{j'} \in \N^m: \overline{j} + \overline{j'} = \overline{i}}
\freq(\overline{x},\overline{j}) \cdot \freq(\overline{x},\overline{j'}) -
\Poi(1)(\overline{j}) \cdot \Poi(1)(\overline{j'})
\right| \label{eq:pickle}\\
&\leq
\sum_{\overline{j},\overline{j'} \in \N^m} |\freq(\overline{x},\overline{j})
\cdot \freq(\overline{x},\overline{j'}) - \Poi(1)(\overline{j})\cdot \Poi(1)(\overline{j'})| \nonumber \\
&\leq \sum_{\overline{j},\overline{j'} \in \N^m} \freq(\overline{x},\overline{j})
\cdot |\freq(\overline{x},\overline{j'}) - \Poi(1)(\overline{j'})|
+
\sum_{\overline{j},\overline{j'} \in \N^m} \Poi(1)(\overline{j'})
\cdot |\freq(\overline{x},\overline{j}) - \Poi(1)(\overline{j})|
\nonumber \\ 
&\leq \sum_{\overline{j}\in \N^m} \freq(\overline{x},\overline{j}) \cdot \delta
+
\sum_{\overline{j'} \in \N^m} \Poi(1)(\overline{j'}) \cdot \delta
 \leq 2{\delta} = 0.0001, \nonumber
\end{align}
where \Cref{eq:pickle} uses the identity $\sum_{\overline{j},\overline{j'} \in \N^m: \overline{j} + \overline{j'} = \overline{i}} \Poi(1)(\overline{j}) \cdot \Poi(1)(\overline{j'}) = \Poi(2)(\overline{i})$ (which is a direct consequence of the fact that the sum of two independent draws from a $\Poi(1)$ random variable is a $\Poi(2)$ random variable).
This concludes the proof of \Cref{claim:noniid2}. \qed

\medskip

We now discuss the modifications to the above proof which give \Cref{thm:iid} as originally stated (with $\eps=0.99$).  The definition of ${\cal W}_{\yes}$ is unchanged (note that now the last $n/2$ coordinates correspond to the $\Poi(100)$ distribution) but now ${\cal W}_{\no}$ is uniform over the multiset of $(n/2)^{100}$ many vectors $\{e_{i_1} + \cdots + e_{i_{100}}\}_{1 \leq i_1,\dots,i_{100} \leq n/2}$; now a $1-{\frac {\Theta(1)}{n}}$ fraction of outcomes of $w$ in the support of ${\cal W}_{\no}$ are $0.99$-far from being $1$-sparse.
The distribution $\ba$ is now drawn from ${\cal D}=(\Poi(1))^{n/2} \times (\Poi(100))^{n/2}$ and $\bb$ is independently drawn from $\Poi(100).$ In the chain of inequalities of which \Cref{eq:pickle} is a part, the inner sum of \Cref{eq:pickle} is now indexed by 100-tuples of vectors $\overline{j^{(1)}},\dots,\overline{j^{(100)}}$ which sum to $\overline{i}$, and in the RHS of that chain of inequalities we ultimately get $100\delta$ rather than $2 \delta.$ The proof can be completed along these lines with minor changes to the argument given above.


\section{Proof of \Cref{thm:two-noise}: the noise distribution must be known for finite-sample testability} \label{ap:two-noise}

As in \Cref{sec:iid}, we first prove the following variant which deals with ($\eps=0.5$)-testers and later indicate the changes necessary to get \Cref{thm:two-noise}:

\medskip

\noindent {\bf Variant of  \Cref{thm:two-noise}.} (The noise distribution $\ion$ must be known)\emph{.}
\emph{Let ${\cal D}$ be the i.i.d.~product distribution ${\cal D}=(\Poi(1))^n$.  Suppose that the noise distribution $\ion$ is unknown to the testing algorithm but is promised to be either $\Poi(1)$ or $\Poi(2)$.  Then any $(\eps=0.5)$-tester for 1-linearity under ${\cal D}$ and the unknown noise distribution $\ion \in \{\Poi(1),\Poi(2)\}$ must have sample complexity $m=\Omega({\frac {\log n}{\log \log n}}).$ 
}

\begin{proof}
The proof is similar to the proof of the Variant of \Cref{thm:iid}. We consider two different distributions over the target vector $w$.
The first distribution, denoted ${\cal W}_{\yes}$, is uniform over the $n$ canonical basis vectors $\{e_1,\dots,e_n\} \subset \R^n$.  The second distribution, denoted ${\cal W}_{\no}$, is uniform over the multiset of $n^2$ vectors $\{e_i + e_j\}_{1 \leq i, j \leq n}$. It is clear that every $w$ in the support of ${\cal W}_{\yes}$ is $1$-sparse and that a $1-{\frac {\Theta(1)}{n}}$ fraction of outcomes of $w$ in the support of ${\cal W}_{\no}$ are $0.5$-far from being $1$-sparse.

Let ${\cal P}_{\midd}$ be the following distribution over $(\ba',\bb')$ pairs in $\N^n \times \N$:  in a draw from ${\cal P}_{\midd}$, $\ba'$ is drawn from ${\cal D}=(\Poi(1))^n$ and $\bb'$ is independently drawn from $\Poi(3).$  Let $
(\overline{\ba'},\overline{\bb'}) := ((\ba'^{(1)},\bb'^{(1)}),\dots,(\ba'^{(m)},\bb'^{(m)}))$ be a sequence of $m$ pairs drawn independently from ${\cal P}_{\midd}.$

We will use the following claims:

\begin{claim} \label{claim:no-no-noise} Let $\bw$ be drawn from ${\cal W}_{\no}$ and let $(\overline{\bx'},\overline{\by'}) := ((\bx'^{(1)},\by'^{(1)}),\dots,(\bx'^{(m)},\by'^{(m)}))$ be a sequence of $m$ examples independently generated as follows for each $i$:  $\bx'^{(i)}$ is drawn from ${\cal D}=(\Poi(1))^n$ and $\by'^{(i)} \leftarrow \bw \cdot \bx' + \Poi(1)$.  Then for some $m = \Omega({\frac {\log n}{\log \log n}})$, the variation distance between $(\overline{\ba'},\overline{\bb'})$ and $(\overline{\bx'},\overline{\by'})$ is at most 0.01.
\end{claim}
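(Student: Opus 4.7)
The plan is to closely mirror the proof of Claim~\ref{claim:noniid2}, replacing the two ``halves'' of the earlier split product distribution by (i) the choice of one coordinate $x_i$, (ii) the choice of another coordinate $x_j$, and (iii) the independent $\Poi(1)$ noise term. Since $\Poi(1) * \Poi(1) * \Poi(1) = \Poi(3)$, we obtain the marginal match with $\bb' \sim \Poi(3)$.

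First, since $\overline{\bx'}$ and $\overline{\ba'}$ are both distributed as $(\Poi(1))^{n}$-valued sequences of length $m$, it suffices to show that with probability at least $0.999$ over $\overline{\bx'} \sim {\cal D}^m$, the conditional distribution $\overline{\by'}(\overline{\bx'})$ has total variation distance at most $0.0001$ from $(\Poi(3))^m$. The ``goodness'' notion and Claim~\ref{claim:egg} carry over verbatim, because that claim only uses that each column of $\overline{\bx'}$ is distributed as $(\Poi(1))^m$; here all $n$ coordinates are $\Poi(1)$, so the Chernoff + union-bound argument on $k^m \le n^{0.001}$ frequency entries still yields that a random $\overline{\bx'}$ is good with probability at least $0.999$, for an appropriate choice of $m = \Theta(\log n / \log\log n)$.

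Next, for a good matrix $\overline{x'}$, we unpack the conditional distribution. Conditioned on $\bw = e_i + e_j$ (uniform over $(i,j) \in [n]^2$) and on $\overline{x'}$, the $\ell$-th coordinate of $\overline{\by'}$ equals $x'^{(\ell)}_i + x'^{(\ell)}_j + \boldeta^{(\ell)}$ with $\boldeta^{(\ell)} \sim \Poi(1)$ independently across $\ell$. Averaging over $(i,j)$ and collecting columns of $\overline{x'}$ by their pattern gives
\[
\Pr[\overline{\by'} = \overline{k} \mid \overline{x'}] = \sum_{\overline{p},\overline{q} \in \N^m} \freq(\overline{x'},\overline{p})\cdot \freq(\overline{x'},\overline{q}) \cdot \prod_{\ell=1}^m \Poi(1)(k_\ell - p_\ell - q_\ell),
\]
while the Poisson convolution identity gives
\[
\Poi(3)^m(\overline{k}) = \sum_{\overline{p},\overline{q}\in\N^m} \Poi(1)(\overline{p})\cdot \Poi(1)(\overline{q}) \cdot \prod_{\ell=1}^m \Poi(1)(k_\ell - p_\ell - q_\ell).
\]

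Subtracting these two expressions and summing $|\cdot|$ over $\overline{k}$, the product structure in $\ell$ lets us drop the innermost $\Poi(1)(k_\ell - p_\ell - q_\ell)$ factor (which sums to $1$ over $\overline{k}$), reducing the bound to
\[
\dtv(\overline{\by'}(\overline{x'}),\,\Poi(3)^m) \le \sum_{\overline{p},\overline{q} \in \N^m} \big|\freq(\overline{x'},\overline{p})\freq(\overline{x'},\overline{q}) - \Poi(1)(\overline{p})\Poi(1)(\overline{q})\big|.
\]
This is exactly the same quantity that appeared in the proof of Claim~\ref{claim:noniid2}, and the same two-step triangle inequality bounds it by $2\delta = 0.0001$ whenever $\overline{x'}$ is good. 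Combining with the $0.001$-probability failure of goodness finishes the proof. The main ``hard'' step is really the Poisson convolution identity plus recognizing that the noise variable plays exactly the role of a second $\Poi(1)$ column in the earlier argument; all remaining steps are routine adaptations.
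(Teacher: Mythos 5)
Your proof is correct, but it takes a more direct route than the paper's. The paper establishes \Cref{claim:no-no-noise} by a clean reduction: it observes (Fact~\ref{fact:dtv-inequality}) that $\dtv(\bA+\bB, \bA'+\bB) \le \dtv(\bA,\bA')$ when $\bB$ is independent, sets $\bA,\bA'$ to be the \emph{noiseless} $(\overline{\bx''},\overline{\by''})$ and $(\overline{\bx''},(\Poi(2))^m)$, and sets $\bB$ to be a matrix carrying the $(\Poi(1))^m$ noise in the label column (and zeros elsewhere). The bound $\dtv(\bA,\bA') \le 0.01$ then follows from a trivial modification of \Cref{claim:noniid}, and the Poisson convolution $\Poi(2)+\Poi(1)=\Poi(3)$ identifies $\bA'+\bB$ with $(\overline{\ba'},\overline{\bb'})$. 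You instead write out $\Pr[\overline{\by'}=\overline{k}\mid\overline{x'}]$ explicitly as a triple convolution, and observe that the innermost $\Poi(1)$ noise kernel sums to one over $\overline{k}$, so it can be pulled out of the absolute value and discarded---in effect re-deriving Fact~\ref{fact:dtv-inequality} in-line for this particular kernel. Both routes hinge on the same two ingredients (the ``good matrix'' frequency concentration and the additivity of Poissons); the paper's reduction is more modular (the noiseless TV bound is a black box and the noise is slotted in at the end), while your direct computation is self-contained and perhaps makes the mechanism more transparent, at the mild cost of re-proving the \Cref{claim:noniid2} inequality with one extra convolution factor. Your handling of the goodness condition and the passage to the $\eps=0.99$ case (which you leave implicit, as the paper does) are also consistent with the paper's treatment.
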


\begin{claim} \label{claim:yes-no-noise} Let $\bw$ be drawn from ${\cal W}_{\yes}$ and let $(\overline{\bx'},\overline{\by'}) := ((\bx'^{(1)},\by'^{(1)}),\dots,(\bx'^{(m)},\by'^{(m)}))$ be a sequence of $m$ examples independently generated as follows for each $i$:  $\bx'^{(i)}$ is drawn from ${\cal D}=(\Poi(1))^n$ and $\by'^{(i)} \leftarrow \bw \cdot \bx' + \Poi(2)$.  Then for some $m = \Omega({\frac {\log n}{\log \log n}})$, the variation distance between $(\overline{\ba'},\overline{\bb'})$ and $(\overline{\bx'},\overline{\by'})$ is at most 0.01.
\end{claim}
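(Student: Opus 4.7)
\medskip

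\noindent\textbf{Proof proposal for Claim~\ref{claim:yes-no-noise}.}
The plan is to follow the template of the proof of Claim~\ref{claim:noniid2} very closely, but with the crucial simplification that the $\mathrm{YES}$-distribution ${\cal W}_{\yes}$ places all its mass on \emph{single} canonical basis vectors, so the target simply reads off one randomly chosen coordinate of $\bx'$ and then adds independent $\Poi(2)$ noise. More precisely: since $\overline{\ba'}$ and $\overline{\bx'}$ are both distributed as $(\Poi(1))^n$ raised to the $m$-th power, it suffices to prove that with probability at least $0.999$ over $\overline{\bx'} \sim (\Poi(1))^{n\times m}$, the conditional distribution $\overline{\by'}(\overline{x'})$ of $\overline{\by'}$ given $\overline{\bx'}=\overline{x'}$ has total variation distance at most $0.001$ from $(\Poi(3))^m$.

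To analyze $\overline{\by'}(\overline{x'})$, I would condition on $\overline{x'}$ and view the columns of $\overline{x'}$ as $m$-tuples in $\N^m$, defining $\freq(\overline{x'}, \overline{j})$ as the fraction of columns equal to $\overline{j}$ (now taken over all $n$ columns, since every coordinate of ${\cal D}=(\Poi(1))^n$ has the same marginal). A draw of $\bw \sim {\cal W}_{\yes}$ selects a uniformly random column $\bi\in[n]$, and then the $\ell$-th output is $x'^{(\ell)}_{\bi} + \Poi(2)$ independently across $\ell$. Hence
\[
\Pr[\overline{\by'}(\overline{x'}) = \overline{i}] \;=\; \sum_{\overline{j}\in\N^m} \freq(\overline{x'},\overline{j}) \cdot \prod_{\ell=1}^m \Poi(2)(i_\ell - j_\ell),
\]
while the additivity of the Poisson distribution yields the clean identity
\[
\Poi(3)(\overline{i}) \;=\; \sum_{\overline{j}\in\N^m} \Poi(1)(\overline{j}) \cdot \prod_{\ell=1}^m \Poi(2)(i_\ell - j_\ell).
\]
Subtracting, swapping the order of summation, and using $\sum_{\overline{i}} \prod_\ell \Poi(2)(i_\ell - j_\ell) = 1$ gives the bound
\[
\dtv\bigl(\overline{\by'}(\overline{x'}),\,(\Poi(3))^m\bigr) \;\le\; \sum_{\overline{j}\in\N^m} |\freq(\overline{x'},\overline{j}) - \Poi(1)(\overline{j})|.
\]

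The final step is to argue that this $\ell_1$ deviation is at most $\delta=0.001$ with probability at least $0.999$ over $\overline{\bx'}$. This is essentially a direct rerun of Claim~\ref{claim:egg} with ``the first $n/2$ columns'' replaced by ``all $n$ columns'' (which only makes the Chernoff step stronger): split the sum at $k = \Theta(\log n/\log\log n)$, bound the tail $\sum_{\overline{j}\,:\,j_i \ge k\text{ for some }i} \Poi(1)(\overline{j})$ by $\delta/4$ using $\sum_{j\ge k}\Poi(1)(j) \le 1/k!$ and a union bound over $\ell \in [m]$, then handle the at most $k^m \le n^{0.001}$ surviving ``light'' tuples by a Chernoff bound (each $\freq(\overline{\bx'},\overline{j})$ is the empirical frequency of $n$ independent Bernoulli trials with success probability $\Poi(1)(\overline{j})$) followed by a union bound, choosing $m = c\log n/\log\log n$ for a sufficiently small absolute constant $c$. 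Combining with the TV bound above and averaging over good $\overline{x'}$ yields the claim. There is no real obstacle; the only point that requires slight care is choosing $c$ small enough that the union bound over $k^m$ light tuples survives.
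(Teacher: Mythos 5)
Your proof is correct, and it is worth noting that it takes a somewhat different route from the one the paper has in mind. The paper proves Claim~\ref{claim:no-no-noise} by reducing to the noiseless non-i.i.d.\ result: it defines a matrix-valued random variable $\bA$ with noiseless labels, a matrix-valued $\bA'$ with an independent Poisson label column, and a noise matrix $\bB$, then applies \Cref{fact:dtv-inequality} (adding independent noise cannot increase TV distance) together with a trivial modification of \Cref{claim:noniid}; the parenthetical ``follows by a similar (but simpler) proof'' for \Cref{claim:yes-no-noise} refers to doing exactly the same reduction with the $\star=\yes$ branch. Your argument instead folds the $\Poi(2)$ noise directly into the conditional-distribution computation: you observe that conditioned on $\overline{\bx'}=\overline{x'}$, the distribution of $\overline{\by'}$ is the convolution of the empirical column-frequency distribution $\freq(\overline{x'},\cdot)$ with $(\Poi(2))^m$, while $\Poi(3)(\cdot)$ is the convolution of $\Poi(1)(\cdot)$ with $(\Poi(2))^m$; the triangle inequality and $\sum_{\overline{i}}\prod_\ell \Poi(2)(i_\ell-j_\ell)=1$ then collapse the TV bound to $\|\freq(\overline{x'},\cdot)-\Poi(1)(\cdot)\|_1$, and the rest is the same ``good matrix'' Chernoff argument as in \Cref{claim:egg} (with $n$ columns instead of $n/2$, which only helps). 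The paper's route is slightly more modular since it reuses \Cref{claim:noniid} as a black box via \Cref{fact:dtv-inequality}; your route is a clean direct analogue of \Cref{claim:noniid2} and makes the Poisson additivity $\Poi(1)+\Poi(2)=\Poi(3)$ appear where it is actually used. Both are valid, and the quantitative constants you chose ($\delta=0.001$, conditional TV $\le 0.001$ with probability $\ge 0.999$) comfortably yield the required overall bound of $0.01$.
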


We prove \Cref{claim:no-no-noise} below; \Cref{claim:yes-no-noise} follows by a similar (but simpler) proof.  \Cref{thm:two-noise} follows directly from these two claims and the triangle inequality.
\end{proof}

\subsection{Proof of \Cref{claim:no-no-noise}.}
\ignore{
} 
Since $\overline{\ba'}=(\ba'^{(1)},\dots,\ba'^{(m)})$ is distributed identically to $\overline{\bx'}=(\bx'^{(1)},\dots,\bx'^{(m)})$, \Cref{claim:no-no-noise} follows directly from the following:

\begin{claim} \label{claim:noniid2prime}
With probability at least $0.999$ over a draw of $\overline{\bx}$ from ${\cal D}^m$, the distribution of $\overline{\by'}$ conditioned on $\overline{\bx'}$ (which we denote $\overline{\by'}(\overline{\bx'})$) has variation distance at most $0.01$ from the distribution of $\overline{\bb'}$ (which is simply $(\Poi(3))^m$).
\end{claim}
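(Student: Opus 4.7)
The plan is to mirror the proof of \Cref{claim:noniid2}, with minor modifications to accommodate the fact that now (i) all $n$ columns of $\overline{\bx'}$ are relevant (rather than $n/2$), (ii) the weight vector is drawn from $\{e_i + e_j\}_{1 \le i,j \le n}$ so two columns get summed, and (iii) there is an additional independent $(\Poi(1))^m$ noise term. The key identity that will drive the argument is the convolution identity $\Poi(1)*\Poi(1)*\Poi(1) = \Poi(3)$, which is the obvious analog of $\Poi(1)*\Poi(1) = \Poi(2)$ used before.

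First, I would define $\freq(\overline{x'},\overline{j})$ to be the fraction of columns $i \in [n]$ of $\overline{x'}$ that equal $\overline{j} \in \N^m$, and call $\overline{x'}$ \emph{good} if $\sum_{\overline{j} \in \N^m} |\freq(\overline{x'},\overline{j}) - \Poi(1)(\overline{j})| \le \delta := 0.001$. The analog of \Cref{claim:egg} then goes through essentially verbatim: truncating the $\Poi(1)$ tail at $k = \Theta(\log n / \log \log n)$ handles the contribution of large-coordinate $\overline{j}$'s, and for each of the at most $k^m < n^{0.001}$ remaining $\overline{j}$'s a standard Chernoff bound shows that $\freq(\overline{\bx'},\overline{j})$ concentrates around $\Poi(1)(\overline{j})$ for $\overline{\bx'} \sim \mathcal{D}^m$, after which a union bound gives that $\overline{\bx'}$ is good with probability at least $0.999$ when $m = \Theta(\log n / \log \log n)$.

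Now fix a good matrix $\overline{x'}$. Because $\bw \sim \mathcal{W}_{\no}$ picks a uniformly random pair $(\bi,\bj) \in [n]^2$, the conditional distribution of $\overline{\by'}$ given $\overline{x'}$ places mass
\[
\overline{\by'}(\overline{x'})(\overline{i}) = \sum_{\overline{j}^{(1)}+\overline{j}^{(2)}+\overline{j}^{(3)} = \overline{i}} \freq(\overline{x'},\overline{j}^{(1)}) \cdot \freq(\overline{x'},\overline{j}^{(2)}) \cdot \Poi(1)(\overline{j}^{(3)})
\]
on each $\overline{i} \in \N^m$, where the $\overline{j}^{(3)}$-term accounts for the $m$ i.i.d.\ $\Poi(1)$ noise draws. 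By the convolution identity the $(\Poi(3))^m$ probability of $\overline{i}$ has the same form but with all three factors replaced by $\Poi(1)(\cdot)$. Subtracting and using the triangle inequality twice to replace $\freq(\overline{x'},\cdot)$ by $\Poi(1)(\cdot)$ one factor at a time (exactly the bound from the chain of inequalities culminating in \Cref{eq:pickle}) gives
\[
\dtv(\overline{\by'}(\overline{x'}),(\Poi(3))^m) \le \sum_{\overline{j}^{(1)},\overline{j}^{(2)}}\bigl|\freq(\overline{x'},\overline{j}^{(1)})\freq(\overline{x'},\overline{j}^{(2)}) - \Poi(1)(\overline{j}^{(1)})\Poi(1)(\overline{j}^{(2)})\bigr| \le 2\delta = 0.002 < 0.01,
\]
since the $\overline{j}^{(3)}$-sum marginalizes out to $1$.

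There is no real obstacle; the only thing to verify carefully is that the bookkeeping of the three-fold convolution (two from the two chosen columns, one from the noise) does not change the asymptotic constants, which it does not. The same argument with the roles of the index sums adjusted will also give \Cref{claim:yes-no-noise}, where the three-fold convolution becomes $\Poi(1)*\Poi(2) = \Poi(3)$ (only one column is summed, but the noise contributes a $\Poi(2)$), yielding a simpler two-factor analog of the bound above. The $\eps = 0.99$ version of \Cref{thm:two-noise} follows from the same template by replacing $\mathcal{W}_{\no}$ with the multiset $\{e_{i_1}+\cdots+e_{i_{100}}\}$ and $\Poi(100)$ in place of $\Poi(2)$, exactly as in the modification discussion at the end of \Cref{sec:iid}.
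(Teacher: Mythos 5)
Your proof is correct, but it takes a genuinely different route from the paper's. The paper proves \Cref{claim:no-no-noise} via a one-line reduction: it observes that $(\overline{\bx'},\overline{\by'})$ and $(\overline{\ba'},\overline{\bb'})$ are obtained from the noise-free pair of distributions analyzed in \Cref{claim:noniid} by adding the \emph{same} independent $(\Poi(1))^m$ noise column to both, and then invokes \Cref{fact:dtv-inequality} (convolving with a common independent random variable cannot increase total variation distance). This directly gives the unconditional bound $\dtv((\overline{\bx'},\overline{\by'}),(\overline{\ba'},\overline{\bb'}))\le 0.01$ of \Cref{claim:no-no-noise} by reusing \Cref{claim:noniid} as a black box; in fact the paper's argument silently bypasses \Cref{claim:noniid2prime} itself (it states that \Cref{claim:noniid2prime} is ``established by a reduction'' but what the reduction actually yields is \Cref{claim:no-no-noise}). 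Your approach, by contrast, proves the stated conditional claim \Cref{claim:noniid2prime} directly, by rerunning the template of \Cref{claim:noniid2}: redefine goodness over all $n$ columns, observe that the \Cref{claim:egg}-type concentration is if anything cleaner here, and then write $\overline{\by'}(\overline{x'})$ as a three-fold convolution $\freq*\freq*\Poi(1)$ and compare it to $\Poi(1)*\Poi(1)*\Poi(1)=\Poi(3)$. The key step in your calculation — after the triangle inequality, marginalizing out the $\overline{j}^{(3)}$-sum because $\sum_{\overline{j}^{(3)}}\Poi(1)(\overline{j}^{(3)})=1$, which reduces the bound to exactly the two-factor expression from \Cref{eq:pickle} — is exactly right and gives $2\delta$, comfortably under $0.01$. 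The trade-off: the paper's reduction is shorter and modular, while yours is self-contained and makes the role of the noise transparent. Both are valid; you just did more work than strictly necessary by not spotting the convolution-monotonicity shortcut.
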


\Cref{claim:noniid2prime} is established essentially by a reduction to \Cref{claim:noniid}.\ignore{

}
Recall the following standard fact about total variation distance:
\begin{fact} \label{fact:dtv-inequality}
Let $\bA,\bA'$ be two random variables and let $\bB$ be independent of $\bA$ and of $\bA'$.  Then $\dtv(\bA + \bB, \bA' + \bB) \leq \dtv(\bA,\bA').$
\end{fact}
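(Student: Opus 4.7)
The plan is to prove Fact~\ref{fact:dtv-inequality} via a coupling argument, which gives the cleanest route and mirrors how such inequalities are typically established. The key tool is the standard coupling characterization of total variation distance:
\[
\dtv(\bU,\bV) \;=\; \min_{(\bU',\bV')} \Pr[\bU' \neq \bV'],
\]
where the minimum is over all couplings of $\bU$ and $\bV$ (i.e.\ joint distributions whose marginals agree with those of $\bU$ and $\bV$).

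First I would fix an optimal coupling $(\wt{\bA}, \wt{\bA}')$ of $\bA$ and $\bA'$ realizing $\Pr[\wt{\bA} \neq \wt{\bA}'] = \dtv(\bA, \bA')$. Next, take $\bB$ to be drawn independently of this pair (with its given distribution); this is well-defined since the independence of $\bB$ from $\bA$ and from $\bA'$ is preserved. Then $(\wt{\bA} + \bB, \wt{\bA}' + \bB)$ is a valid coupling of $\bA + \bB$ and $\bA' + \bB$, because the marginal of $\wt{\bA} + \bB$ equals the distribution of $\bA + \bB$ (and similarly for the primed version), using independence of $\bB$ from each.

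Finally, observe the pointwise implication $\{\wt{\bA} + \bB \neq \wt{\bA}' + \bB\} \subseteq \{\wt{\bA} \neq \wt{\bA}'\}$, so
\[
\dtv(\bA + \bB, \bA' + \bB) \;\leq\; \Pr[\wt{\bA} + \bB \neq \wt{\bA}' + \bB] \;\leq\; \Pr[\wt{\bA} \neq \wt{\bA}'] \;=\; \dtv(\bA, \bA'),
\]
where the first inequality uses that any coupling upper-bounds the total variation distance. This completes the proof. There is essentially no obstacle here; the only subtlety is ensuring that $\bB$ can be taken jointly independent of the coupling $(\wt{\bA}, \wt{\bA}')$, which is immediate since we are free to choose the joint distribution and the hypotheses only constrain $\bB$'s marginal relative to $\bA$ and $\bA'$ separately. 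An alternative but equivalent route is to invoke the data-processing inequality for the deterministic map $(a,b)\mapsto a+b$, noting that $\dtv((\bA,\bB),(\bA',\bB)) = \dtv(\bA,\bA')$ by independence; I would prefer the coupling version since it is more self-contained.
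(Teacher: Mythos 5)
Your proof is correct. The paper does not actually prove this fact; it simply recalls it as standard, so there is no in-paper argument to compare against. Your coupling argument is the canonical way to establish it: you take an optimal coupling $(\wt{\bA},\wt{\bA}')$ of $\bA$ and $\bA'$, adjoin an independent $\bB$, observe that $(\wt{\bA}+\bB,\wt{\bA}'+\bB)$ is then a coupling of $\bA+\bB$ and $\bA'+\bB$, and use the event inclusion $\{\wt{\bA}+\bB\neq\wt{\bA}'+\bB\}\subseteq\{\wt{\bA}\neq\wt{\bA}'\}$. All steps are sound, including the subtle point that one is free to draw $\bB$ independently of the coupled pair since the hypotheses constrain only $\bB$'s relation to each of $\bA$ and $\bA'$ separately. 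The alternative route you sketch via the data-processing inequality for the map $(a,b)\mapsto a+b$, together with $\dtv((\bA,\bB),(\bA',\bB))=\dtv(\bA,\bA')$ for product measures with a common second factor, is equally valid and perhaps slightly shorter if one already has both of those ingredients on hand; the coupling proof is more self-contained, as you say.
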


We will apply \Cref{fact:dtv-inequality} as follows.  Define $(\overline{\bx''},\overline{\by''}):=((\bx''^{(1)},\by''^{(1)}),\dots,(\bx''^{(m)},\by''^{(m)}))$ to be the $m \times (n+1)$ matrix valued random variable distributed as follows:  to make a draw of $(\overline{\bx''},\overline{\by''})$, first draw $\bw \sim {\cal W}_{\no}$, then independently 
for each $i$ let $\bx''^{(i)}$ be drawn from ${\cal D}=(\Poi(1))^n$ and let $\by''^{(i)} \leftarrow \bw \cdot \bx''$. We take $\bA$ to be the random variable $(\overline{\bx''},\overline{\by''}).$
We take $\bB$ to be an $m \times (n+1)$ matrix valued random variable distributed as follows:  the first $n$ columns are identically $0^m$, and the $(n+1)$-st column is distributed as $(\Poi(1))^m$. Finally, we take $\bA'$ to be the $m \times (n+1)$ dimensional matrix valued random variable $(\overline{\bx''},(\Poi(2))^m)$ (i.e.~the first $n$ columns are distributed as $\overline{\bx''}$ above and the last column consists of $m$ independent draws from $\Poi(2)$). 

A trivial modification of the proof of the ($\star = \no$)-case of \Cref{claim:noniid} gives that $\dtv(\bA,\bA') \leq 0.01$, so by \Cref{fact:dtv-inequality} we get that $\dtv(\bA + \bB,\bA' + \bB) \leq 0.01$.  The random variable $\bA + \bB$ is distributed precisely as $(\overline{\bx'},\overline{\by'})$, and since $\Poi(1) + \Poi(2)$ is distributed as $\Poi(3)$, the random variable $\bA' + \bB$ is distributed precisely as $(\overline{\ba'},\overline{\bb'})$, so \Cref{claim:no-no-noise} is proved. \qed

\medskip

We now discuss the modifications to the above proof which give \Cref{thm:two-noise} as originally stated (with $\eps=0.99$).  The definition of ${\cal W}_{\yes}$ is unchanged but now ${\cal W}_{\no}$ is uniform over the multiset of $n^{100}$ many vectors $\{e_{i_1} + \cdots + e_{i_{100}}\}_{1 \leq i_1,\dots,i_{100} \leq n}$. In ${\cal P}_{\midd}$, the label coordinate $\bb'$ is now independently drawn from $\Poi(101)$. The statement of \Cref{claim:no-no-noise} is unchanged, and its proof changes in the obvious ways to accomodate the new definition of $\bb'$; in \Cref{claim:yes-no-noise} the ``$\Poi(2)$'' is replaced by ``$\Poi(100)$.'' The rest of the argument follows with the obvious changes.


\section{Proof of \Cref{thm:gaussian-lb}: a lower bound for Gaussian distributions} \label{sec:gaussian-lb}

We prove a more general result which implies \Cref{thm:gaussian-lb} as a special case:

\begin{theorem} [Strengthening of \Cref{thm:gaussian-lb}]\label{thm:gaussian-lb-strengthened}
Let ${\cal D}$ be the standard $N(0,1)^n$ $n$-dimensional Gaussian distribution and let $\ion$ be $N(0,c^2)$ where $c>0$ is any constant.  Then for any $\tau>0$ and any $1 \leq k \leq n^{1-\tau}$, the sample complexity of any $(\eps=0.99)$-tester for $k$-sparsity under ${\cal D}$ and $\ion$ is $\Omega(\log n).$
\end{theorem}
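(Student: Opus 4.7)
The plan is to extend the $k=1$ argument sketched in Section~\ref{sec:techniques-lb} by exhibiting two distributions over the target vector $\bw$ whose induced $m$-sample distributions have total variation $o(1)$ whenever $m = O(\log n)$. Define ${\cal W}_{\yes}$ by $\bw = e_{\bi}$ for $\bi \sim [n]$ uniform; every such vector is $1$-sparse (hence $k$-sparse) with $\|\bw\|_2=1$. Setting $N := 100k$ (which satisfies $N \le n$ for $n$ large relative to $\tau$, since $k\le n^{1-\tau}$), define ${\cal W}_{\no}$ as the law of $\bw$ with $\bw_i = \sigma_i \mathbf{1}[i\in\bT]/\sqrt{N}$, where $\bT$ is a uniform random $N$-subset of $[n]$ and the $\sigma_i \in\{\pm 1\}$ are independent uniform signs. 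Such $\bw$ is unit-norm and exactly $\sqrt{1-1/100}>0.99$-far from $k$-sparse. Under both ${\cal W}_{\yes}$ and ${\cal W}_{\no}$ the marginal of $\by=\bw\cdot\bx+\ion$ is $N(0,1+c^2)$, so I will compare the induced $m$-sample distributions $P_{\yes}^m$ and $P_{\no}^m$ to the reference $Q^m$ in which $\bar{\bx}\sim N(0,I_n)^{\otimes m}$ and $\bar{\by}\sim N(0,1+c^2)^{\otimes m}$ are independent, and conclude via the triangle inequality.

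The main computation is the chi-squared divergence. By the standard two-point decomposition,
\[
\chi^2(P_\star^m,Q^m)+1 = \mathbb{E}_{\bw,\bw'\sim {\cal W}_\star}\Bigl[\,\mathbb{E}_Q\Bigl[\tfrac{dP_{\bw}}{dQ}\cdot\tfrac{dP_{\bw'}}{dQ}\Bigr]^m\Bigr].
\]
A direct Gaussian integration (integrate out $\by$ against two Gaussian densities centered at $\bw\cdot\bx$ and $\bw'\cdot\bx$, then integrate $\bx$ using the MGF of a quadratic form in the $2$-D correlated Gaussian $(\bw\cdot\bx,\bw'\cdot\bx)$) yields the closed form $\mathbb{E}_Q[(dP_{\bw}/dQ)(dP_{\bw'}/dQ)] = (1+c^2)/(1+c^2-\bw\cdot\bw')$ for any unit $\bw,\bw'$. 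For ${\cal W}_{\yes}$ we have $\bw\cdot\bw'=\mathbf{1}[\bi=\bi']$, giving $\chi^2(P_{\yes}^m,Q^m)=\tfrac{1}{n}\bigl(((1+c^2)/c^2)^m-1\bigr)$, which is $o(1)$ whenever $m\le c_1\log n$ for a small constant $c_1=\tau/(2\log((1+c^2)/c^2))$.

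For ${\cal W}_{\no}$ the inner product is $\rho = \bw\cdot\bw' = (1/N)\sum_{i\in \bT\cap\bT'}\sigma_i\sigma_i'$, which has mean zero by sign symmetry. Conditionally on $|\bT\cap\bT'|$, $\rho$ is a scaled Rademacher sum and hence sub-Gaussian; combining this with the standard domination of hypergeometric MGFs by binomial MGFs gives $\mathbb{E}[e^{t\rho}]\le \exp(O(t^2/n))$, i.e.\ $\rho$ is sub-Gaussian at scale $1/\sqrt n$ uniformly in $k$. Expanding $(1-\rho/(1+c^2))^{-m}$ as a power series, all odd moments of $\rho$ vanish and the even moments $\mathbb{E}[\rho^{2j}]=O_j(1)/n^j$ combine to yield $\chi^2(P_{\no}^m,Q^m)=O(m^2/n)$, which is $o(1)$ for $m=O(\log n)$. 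Combining both bounds via $\dtv^2\le \tfrac14\chi^2$ and the triangle inequality gives $\dtv(P_{\yes}^m,P_{\no}^m)=o(1)$, so no tester can succeed with sample complexity $o(\log n)$.

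The main obstacle is the ``no''-case construction for large $k$: the naive unsigned indicator $(1/\sqrt N)\mathbf{1}_{\bT}$ has overlap $\rho=|\bT\cap\bT'|/N$ with positive mean $N/n$, and the corresponding chi-squared estimate degrades when $k\gg\sqrt n$ (i.e., $\tau<1/2$) because the typical overlap is no longer small and the exponential blow-up $((1+c^2)/c^2)^m$ on large-overlap events is not killed by a small probability. Introducing independent random signs $\sigma_i$ resolves this by making $\rho$ mean-zero and sub-Gaussian at scale $1/\sqrt n$ regardless of the size of $\bT$, which is precisely what allows the chi-squared bound to go through uniformly for all $k\le n^{1-\tau}$.
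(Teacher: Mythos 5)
Your proposal is correct in its conclusion and, in my reading, is a genuinely different and arguably more standard route than the one the paper takes. The paper analyzes the likelihood ratio directly: it writes down the Bayes-optimal decision region $S=\{{\cal S}_{\no} > {\cal S}_{\yes}\}$, shows via the Berry--Esseen theorem that the posterior statistic $\frac{1}{n}\sum_{\ell} R((\bx^{(1)}_\ell,\dots,\bx^{(t)}_\ell),\overline{y})$ concentrates around a Gaussian whose \emph{mean exactly equals} the null density, and deduces that each of $\Pr_{\no}[S]$ and $\Pr_{\yes}[\overline{S}]$ is $1/2 + n^{-\Omega(1)}$. For the $k$-sparse case it uses a disjoint-block construction ($\bw = k^{-1/2}\sum_{j=1}^k e_{\bi k + j}$, $\bi$ uniform over $n/k$ blocks), which gives an exact reduction to the $1$-sparse problem on $n/k$ coordinates; no sign-randomization or moment bounds on the overlap are needed. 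You instead use the Ingster chi-squared two-point decomposition against the reference measure $Q^m$, together with the closed-form Gaussian overlap $\E_Q\bigl[(dP_w/dQ)(dP_{w'}/dQ)\bigr] = (1+c^2)/(1+c^2-\langle w,w'\rangle)$ (which I checked and is correct for unit $w,w'$), and a random-signed-subset construction for ${\cal W}_{\no}$, relying on sub-Gaussianity of $\rho = \langle\bw,\bw'\rangle$ at scale $1/\sqrt{n}$. Both routes are valid; the chi-squared route is more modular and makes the role of the overlap explicit, while the paper's block construction avoids any moment analysis of a random overlap at all.

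One small imprecision: your claimed bound $\chi^2(P_{\no}^m,Q^m) = O(m^2/n)$ does not fall out directly from the crude observation that $\E[\rho^{2j}]\le \E[\rho^2]=1/n$ (since $|\rho|\le 1$) combined with the power-series expansion of $(1-\rho/(1+c^2))^{-m}$: that route gives $\chi^2 \le \tfrac1{2n}\bigl[((1+c^2)/c^2)^m + ((1+c^2)/(2+c^2))^m - 2\bigr]$, which has the same exponential factor as the yes-case. Obtaining the sharper $O(m^2/n)$ requires actually using the sub-Gaussian moment estimates $\E[\rho^{2j}]\le (Cj/n)^j$ and verifying that the series is dominated by its $j=1$ term in the range $m=O(\log n)$. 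Either bound yields $\chi^2 = o(1)$ for $m \le c_1\log n$ with $c_1 = c_1(c,\tau)>0$, so the final conclusion (and the $\Omega(\log n)$ lower bound with the claimed dependence on $c$) is unaffected.
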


Throughout this section we write $\ol{\bx}$ to denote a $t \times n$ matrix of $t$ examples $\bx^{(1)},\dots,\bx^{(t)}$ where each $\bx^{(j)}$ is independently distributed as $N(0,1)^n$. We write $\ol{\by}$ to denote an associated vector of labels $(\by^{(1)},\dots,\by^{(t)}) \in \R^t.$

We begin by observing that it suffices to prove the result for small constant noise rates $0 < c < 0.1$. This is because if there were a successful testing algorithm that worked in the presence of Gaussian noise at higher rates, such an algorithm could be used to test successfully in the presence of Gaussian noise at lower rates (simply by adding additional Gaussian noise to each label).  

We first state the main technical result of this section and a closely related corollary:

\begin{lemma}\label{lem:Gaussian_one_sparse}
Let $0 < c < 0.1$ and let $t={\frac {\log n}{10 \log 1/c}}.$ Let  ${\cal S}_{\yes}$ and ${\cal S}_{\no}$ denote the following two distributions over $t$-element samples $(\overline{\bx}, \overline{\by})$ where $\ov{\bx} \in \mathbb{R}^{t \times n}$ and $\ov{\by} \in \mathbb{R}^{t}$: 
\begin{enumerate}
\item In a draw from ${\cal S}_{\no}$, $\ov{\bx} \sim N(0,1)^{t \times n}$ and independently $\by \sim N(0,1+c^2)^{t}$.
\item A draw from ${\cal S}_{\yes}$ is obtained as follows: first draw $\overline{\bx} \sim N(0,1)^{t \times n}$.  Then draw a uniform $\bi \in [n]$, and output $(\overline{\bx},\overline{\by})$ where $\by^{(j)}=\bx^{(j)}_{\bi}+N(0,c^2)$ for every $j \in [t]$. 
\end{enumerate}
Then the statistical distance $\dtv({\cal S}_{\no},{\cal S}_{\yes})$ between ${\cal S}_{\no}$ and ${\cal S}_{\yes}$ is $o_n(1)$.
\end{lemma}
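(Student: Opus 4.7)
The plan is to bound the total variation by the chi-squared divergence, which becomes tractable here thanks to the mixture structure of ${\cal S}_{\yes}$. First I would observe that the marginal distribution of $\ov{\bx}$ is identical in both ${\cal S}_{\no}$ and ${\cal S}_{\yes}$ (both equal $N(0,1)^{t \times n}$), so
\[
d_{\mathrm{TV}}({\cal S}_{\no},{\cal S}_{\yes}) \;=\; \E_{\ov{\bx}}\bigl[d_{\mathrm{TV}}(q,\, p_{\ov{\bx}})\bigr],
\]
where $q := N(0,(1+c^2) I_t)$ is the conditional law of $\ov{\by}$ under ${\cal S}_{\no}$ (independent of $\ov{\bx}$) and $p_{\ov{\bx}} := \tfrac{1}{n}\sum_{i=1}^{n} N(\mu_i, c^2 I_t)$ is the conditional law under ${\cal S}_{\yes}$, with $\mu_i := (\bx^{(j)}_i)_{j=1}^t$. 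Using $d_{\mathrm{TV}}(p,q) \le \tfrac{1}{2}\sqrt{\chi^2(p \,\|\, q)}$ followed by Jensen's inequality, it suffices to show $\E_{\ov{\bx}}\bigl[\chi^2(p_{\ov{\bx}} \,\|\, q)\bigr] = o_n(1)$.

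Next I would exploit the mixture structure. Since each entry of $\ov{\bx}$ is an independent $N(0,1)$, the mean vectors $\mu_1,\dots,\mu_n$ are i.i.d.\ with $\mu_i \sim N(0,I_t)$, and the marginal of $N(\mu, c^2 I_t)$ when $\mu \sim N(0,I_t)$ equals $q$ by Gaussian convolution. Writing $L_i(y) := N(\mu_i, c^2 I_t)(y)/q(y)$, this yields the pointwise identity $\E_{\mu_i}[L_i(y)] \equiv 1$. Expanding
\[
\E_{\ov{\bx}}\bigl[\chi^2(p_{\ov{\bx}} \,\|\, q) + 1\bigr] \;=\; \frac{1}{n^2}\sum_{i,j=1}^{n} \E_{\ov{\bx},\, y\sim q}\bigl[L_i(y) L_j(y)\bigr],
\]
the $i \neq j$ summands each reduce to $\E_{y\sim q}\bigl[\E_{\mu_i}[L_i(y)]\,\E_{\mu_j}[L_j(y)]\bigr] = 1$ by independence of $\mu_i,\mu_j$, so everything collapses to the diagonal and one obtains
\[
\E_{\ov{\bx}}\bigl[\chi^2(p_{\ov{\bx}} \,\|\, q)\bigr] \;=\; \frac{1}{n}\,\E_{\mu \sim N(0,I_t)}\bigl[\chi^2\bigl(N(\mu,c^2 I_t)\,\|\, q\bigr)\bigr].
\]

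The remaining step is a direct Gaussian calculation. Product-measure tensorization of $\chi^2 + 1$ together with the standard one-dimensional Gaussian integral (integrability requires only $c^2 < 2(1+c^2)$, trivially true) should give
\[
\chi^2\bigl(N(\mu, c^2 I_t) \,\|\, N(0, (1+c^2) I_t)\bigr) + 1 \;=\; \left(\frac{1+c^2}{c\sqrt{2+c^2}}\right)^{t}\!\exp\!\left(\frac{\|\mu\|^2}{2+c^2}\right),
\]
and averaging over $\mu \sim N(0,I_t)$ via the MGF of a $\chi^2_t$ random variable at $1/(2+c^2)$ collapses this cleanly to $\bigl((1+c^2)/c^2\bigr)^{t}$. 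Plugging in $t = \log n / (10 \log(1/c))$ and $0 < c < 0.1$ gives $t \log((1+c^2)/c^2) = 2t\log(1/c) + t\log(1+c^2) \le \log n/5 + O(c^2 \log n / \log(1/c))$, which is at most $\log n/4$ for any such $c$; hence $\E_{\ov{\bx}}[\chi^2(p_{\ov{\bx}} \,\|\, q)] \le n^{-3/4}$ and $d_{\mathrm{TV}}({\cal S}_{\no},{\cal S}_{\yes}) = O(n^{-3/8}) = o_n(1)$, as required. The conceptual crux of the argument is the cross-term cancellation in the middle paragraph (which is what makes the second-moment method actually succeed); the main (mild) technical obstacle will be carefully tracking the normalization constants in the Gaussian chi-squared formula.
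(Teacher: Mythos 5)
Your proof is correct, and it takes a genuinely different route from the paper. The paper argues directly with the likelihood-ratio test: it writes out the densities ${\cal S}_{\no}$ and ${\cal S}_{\yes}$, shows (via Berry--Esseen applied to the random variable $\tfrac{1}{n}\sum_\ell R(\cdot,\overline{y})$, treating the columns as i.i.d.\ summands) that for typical $\overline{y}$ the conditional law of this ratio is approximately Gaussian with mean exactly the ${\cal S}_{\no}$-density value, and hence that the event $\{{\cal S}_{\no}>{\cal S}_{\yes}\}$ has probability $\tfrac12\pm n^{-\Omega(1)}$ under both distributions; combining these gives the TV bound. Your argument is instead the second-moment (Ingster) method: you observe that $\ov{\bx}$ has the same marginal under both laws, reduce to a conditional $\chi^2$ divergence between $N(0,(1+c^2)I_t)$ and the random $n$-component Gaussian mixture, and then exploit the fact that the mixture centers $\mu_1,\dots,\mu_n$ are i.i.d.\ $N(0,I_t)$ with the convolution identity $\E_{\mu_i}[L_i(y)]\equiv1$, so all $i\ne j$ cross-terms vanish and everything collapses to a single closed-form Gaussian integral, giving $\E_{\ov{\bx}}[\chi^2]=\tfrac{1}{n}\big((1+c^2)/c^2\big)^t-\tfrac1n\le n^{-3/4}$. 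Your version is noticeably shorter and cleaner, replacing the Berry--Esseen machinery and variance/fourth-moment bookkeeping with one exact computation; the paper's version has the marginal advantage of yielding more structural information (that the conditional law of $\ov{\by}$ given $\ov{\bx}$ is close to Gaussian for \emph{most} $\ov{\bx}$), but for the TV bound itself your route is arguably preferable. Both extend to the $k$-sparse corollary since the normalized block-sums remain i.i.d.\ $N(0,I_t)$.
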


\begin{corollary}\label{cor:stat_dist_k_sparse}
Let $0 < c < 0.1$ and let $t={\frac {\log n/k}{10 \log 1/c}}.$  
For any sufficiently large $n$, given any sparsity $1 \leq k \leq n^{1-\tau}$, let ${\cal S}'_{\yes}(k)$ and ${\cal S}'_{\no}$ denote the following two distributions over $(\overline{\bx}, \overline{\by})$ where $\ov{\bx} \in \mathbb{R}^{t \times n}$ and $\ov{\by} \in \mathbb{R}^{t}$: 
\begin{enumerate}
\item ${\cal S}'_{\no}$ is identical to ${\cal S}_{\no}$;
\item A draw from ${\cal S}'_{\yes}(k)$ is obtained as follows: first draw $\overline{\bx} \sim N(0,1)^{t \times n}$.  Then draw a uniform $\bi \in \{0,1,\ldots,n/k-1\}$\footnote{We assume without loss of generality that $k$ divides $n$.}, and output $(\overline{\bx},\overline{\by})$ where $\by^{(j)}=\frac{\bx^{(j)}_{\bi \cdot k+1}+\cdots+\bx^{(j)}_{\bi \cdot k+k}}{\sqrt{k}}+N(0,c^2)$ for every $j \in [t]$. 
\end{enumerate} 
Then the statistical distance $\dtv({\cal S}'_{\no},{\cal S}'_{\yes}(k))$ between ${\cal S}'_{\no}$ and ${\cal S}'_{\yes}(k)$ is $o_n(1)$.
\end{corollary}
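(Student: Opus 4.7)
The plan is to reduce Corollary~\ref{cor:stat_dist_k_sparse} directly to Lemma~\ref{lem:Gaussian_one_sparse} via a data-processing argument. For each example $j \in [t]$, group the $n$ coordinates of $\bx^{(j)}$ into $n/k$ consecutive blocks of size $k$, and define the normalized block sums
\[
\mathbf{s}^{(j)}_\ell \ :=\ \frac{1}{\sqrt{k}} \sum_{i=(\ell-1)k+1}^{\ell k} \bx^{(j)}_i, \qquad \ell \in [n/k].
\]
Since each $\mathbf{s}^{(j)}_\ell$ is a normalized sum of $k$ independent $N(0,1)$ coordinates, the family $(\mathbf{s}^{(j)}_\ell)_{j \in [t],\, \ell \in [n/k]}$ is i.i.d.\ $N(0,1)$ under both ${\cal S}'_{\no}$ and ${\cal S}'_{\yes}(k)$. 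Moreover, in ${\cal S}'_{\yes}(k)$ the label has the clean form $\by^{(j)} = \mathbf{s}^{(j)}_{\bi+1} + N(0,c^2)$ (taking $\bi \in \{0,\dots,n/k-1\}$), so the labels depend on $\bx$ only through the block sums $\mathbf{s}$.

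Next I decompose each $\bx^{(j)} \in \mathbb{R}^n$ orthogonally as $\bx^{(j)} = \Pi \bx^{(j)} + \mathbf{r}^{(j)}$, where $\Pi$ projects onto the $(n/k)$-dimensional subspace spanned by the block-indicator directions $\frac{1}{\sqrt{k}} \mathbf{1}_{\text{block }\ell}$. The projection is an invertible linear function of $\mathbf{s}^{(j)}$, while the residual $\mathbf{r}^{(j)}$ lies in the orthogonal complement. By rotational invariance of $N(0,1)^n$, $\mathbf{r}^{(j)}$ is independent of $\mathbf{s}^{(j)}$ and has the same fixed distribution (standard Gaussian on the $(n - n/k)$-dimensional complement) under both ${\cal S}'_{\no}$ and ${\cal S}'_{\yes}(k)$. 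Since $\by^{(j)}$ depends only on $\mathbf{s}^{(j)}$ (and independent noise, or nothing at all in the ``no'' case), the residual collection $\mathbf{r} = (\mathbf{r}^{(j)})_{j \in [t]}$ is independent of $(\mathbf{s}, \by)$ under each of the two distributions, with an identical marginal.

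Because $(\overline{\bx}, \overline{\by})$ is a deterministic function of $(\mathbf{s}, \mathbf{r}, \overline{\by})$ and $\mathbf{r}$ is an independent, common random variable, the data-processing inequality gives
\[
\dtv({\cal S}'_{\no}, {\cal S}'_{\yes}(k)) \ =\ \dtv\bigl((\mathbf{s},\overline{\by}) \text{ under } {\cal S}'_{\no},\ (\mathbf{s},\overline{\by}) \text{ under } {\cal S}'_{\yes}(k)\bigr).
\]
But the two marginal distributions on $(\mathbf{s},\overline{\by})$ are exactly the distributions ${\cal S}_{\no}$ and ${\cal S}_{\yes}$ from Lemma~\ref{lem:Gaussian_one_sparse}, with the ambient dimension $n$ replaced by $n' := n/k$. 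The stated choice $t = \frac{\log(n/k)}{10 \log 1/c}$ is exactly $\frac{\log n'}{10 \log 1/c}$, so Lemma~\ref{lem:Gaussian_one_sparse} applies and yields statistical distance $o_{n'}(1)$. Since the hypothesis $k \le n^{1-\tau}$ guarantees $n' \ge n^{\tau} \to \infty$ with $n$, we obtain $o_{n'}(1) = o_n(1)$, completing the proof.

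The main thing to get right is the independence claim for the residuals; this is just rotational invariance of the Gaussian applied to a fixed orthogonal decomposition of $\mathbb{R}^n$, and the fact that the label never sees anything beyond the block sum of the (hidden or absent) planted block. Everything else is bookkeeping.
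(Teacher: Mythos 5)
Your proof is correct, and it takes a genuinely cleaner route than the paper's. The paper proves the corollary by re-running the machinery of \Cref{lem:Gaussian_one_sparse}: it writes out the densities of ${\cal S}'_{\no}$ and ${\cal S}'_{\yes}(k)$, states an analogue of \Cref{cor:probability_in_S}, and proves the analogous Kolmogorov--Smirnov bound (\Cref{clm:Kolmogorov_dist_2}) by noticing at that point that the block sums $(\bx^{(j)}_{\ell k+1}+\cdots+\bx^{(j)}_{\ell k+k})/\sqrt{k}$ involve disjoint coordinates and hence match the setup of \Cref{clm:Kolmogorov_dist} with $n$ replaced by $n/k$. You instead promote this same observation to the top level: you pass to the orthonormal block-sum directions, note that the label under either distribution is conditionally independent of the residual $(I-\Pi)\ol{\bx}$ given the block sums, and conclude that the residual is a common independent factor that can be stripped off (giving equality of total variation, not just an inequality, since the map $(\mathbf{s},\mathbf{r}) \leftrightarrow \ol{\bx}$ is a bijection). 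This makes the corollary a genuine black-box reduction to \Cref{lem:Gaussian_one_sparse} with ambient dimension $n' = n/k$, avoiding any re-derivation of the Berry--Esseen/likelihood-ratio analysis. The bookkeeping at the end ($k \le n^{1-\tau}$ forces $n' \ge n^{\tau} \to \infty$, so $o_{n'}(1) = o_n(1)$) is correct. Both proofs hinge on the same key fact (i.i.d.~Gaussian block sums); yours uses it once at the structural level, while the paper's uses it once inside an analytic sub-claim. Yours is shorter and more modular.
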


We finish the proof of \Cref{thm:gaussian-lb-strengthened} here and and defer the proofs of \Cref{lem:Gaussian_one_sparse} and \Cref{cor:stat_dist_k_sparse} to~\Cref{sec:one_sparse} and~\Cref{sec:proof_k_sparse} respectively.

\begin{proofof}{\Cref{thm:gaussian-lb-strengthened} using \Cref{cor:stat_dist_k_sparse}}
We apply \Cref{cor:stat_dist_k_sparse} twice, to sparsity $k$ and $100k$ separately. Since ${\cal S}_{\no}$ is the same in these two applications of \Cref{cor:stat_dist_k_sparse}, by the triangle inequality for total variation distance we get that $\dtv({\cal S}'_{\yes}(k),{\cal S}'_{\yes}(100k)) = o_n(1)$ 
for $t = \frac{\log n/(100k)}{10 \log 1/c}$. Since the target vector underlying ${\cal S}'_{\yes}(k)$ is $k$-sparse and the target vector underlying ${\cal S}'_{\yes}(100k)$ is $0.99$-far from being $k$-sparse, we get that no algorithm for ($\eps=0.99$)-testing $k$-sparsity can succeed given at most $t$ samples.
Since $k \leq n^{1-\tau}$ and $c = \Theta_n(1)$ the value of $t$ is $\Omega(\log n)$, and the theorem is proved.
\end{proofof}



\subsection{Proof of \Cref{lem:Gaussian_one_sparse}}\label{sec:one_sparse}
Recall that $\overline{\bx}=(\bx^{(1)},\ldots,\bx^{(t)})$ where each $\bx^{(i)} \sim N(0,1)^n$.
Throughout this section we have $0 < c< 0.1$, $t={\frac {\log n}{10 \log 1/c}}.$  

From the definition of ${\cal S}_{\no}$, we have that the pdf of ${\cal S}_{\no}$ at any point $(\ol{x},\ol{y}) \in \R^{t \times n} \times \R^t$ is
\begin{equation}\label{eq:distribution_no}
{\cal S}_{\no} (\overline{x},\overline{y}) = \prod_{i \in [t]} \left[ \left( \prod_{j=1}^n \frac{1}{\sqrt{2 \pi}} e^{- (x^{(i)}_j)^2/2} \right) \cdot \frac{1}{\sqrt{2 \pi (1+c^2)}} e^{-\frac{(y^{(i)})^2}{2(1+c^2)}} \right],
\end{equation}
and likewise  the pdf of ${\cal S}_{\yes}$ at $(\ol{x},\ol{y})$ is
\begin{equation}\label{eq:distribution_yes}
{\cal S}_{\yes}(\overline{x},\overline{y}) = \frac{1}{n} \sum_{\ell \in [n]} \left( \prod_{i \in [t]} \left[  \left( \prod_{j=1}^n \frac{1}{\sqrt{2 \pi}} e^{- (x^{(i)}_j)^2/2} \right) \cdot \frac{1}{\sqrt{2 \pi c^2}} e^{-\frac{\big(y^{(i)} - x^{(i)}_{\ell} \big)^2}{2 c^2}} \right] \right).
\end{equation}

We use the following claim to bound the statistical distance between ${\cal S}_{\no}$ and ${\cal S}_{\yes}$.
\begin{claim}\label{clm:adv_random_guess}
We have that
\begin{align*}
\underset{(\overline{\bx},\overline{\by}) \sim {\cal S}_{\no} }{\Pr}\left[ {\cal S}_{\no}(\overline{\bx},\overline{\by}) \ge{\cal S}_{\yes}(\overline{\bx},\overline{\by}) \right] &= {\frac 1 2} \pm \frac{1}{n^{\Omega(1)}}, \text{ and }\\
\underset{(\overline{\bx},\overline{\by}) \sim {\cal S}_{\yes} }{\Pr}\left[  {\cal S}_{\no}(\overline{\bx},\overline{\by}) <{\cal S}_{\yes}(\overline{\bx},\overline{\by})\right] &= {\frac 1 2} \pm \frac{1}{n^{\Omega(1)}}.
\end{align*}
\end{claim}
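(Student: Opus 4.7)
The plan is to compute the likelihood ratio $\mathrm{LR}(\ov{x},\ov{y}) := \mathcal{S}_{\yes}(\ov{x},\ov{y})/\mathcal{S}_{\no}(\ov{x},\ov{y})$ explicitly and observe that $\{\mathcal{S}_{\no}\ge\mathcal{S}_{\yes}\} = \{\mathrm{LR}\le 1\}$; the two bulleted statements will then follow from a conditional Berry--Esseen argument together with a Cauchy--Schwarz bound on the total variation distance. Dividing \eqref{eq:distribution_yes} by \eqref{eq:distribution_no} gives
\[
\mathrm{LR}(\ov{x},\ov{y}) \;=\; \frac{1}{n}\sum_{\ell=1}^{n} Z_{\ell},\qquad Z_{\ell} \;:=\; \prod_{i=1}^{t}\frac{\sqrt{1+c^2}}{c}\exp\!\left(\frac{(y^{(i)})^2}{2(1+c^2)} - \frac{(y^{(i)}-x^{(i)}_\ell)^2}{2c^2}\right).
\]

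First I would establish the conditional moment structure of $Z_\ell$ under $\mathcal{S}_{\no}$. Since $\ov{\by}$ is independent of $\ov{\bx}$ under $\mathcal{S}_{\no}$, the variables $\{Z_\ell\}_{\ell\in[n]}$ are i.i.d.\ conditional on $\ov{\by}$. A direct Gaussian integration in $x$ (completing the square) yields $\E[Z_\ell\mid\ov{\by}] = 1$, and a second such integration gives the explicit product formula
\[
\E[Z_\ell^2\mid \ov{\by}] \;=\; \prod_{i=1}^{t}\frac{1+c^2}{c\sqrt{c^2+2}}\exp\!\left(\frac{(y^{(i)})^2}{(1+c^2)(c^2+2)}\right),
\]
from which $\E_{\no}[Z_\ell^2] = \bigl((1+c^2)/c^2\bigr)^t = n^{1/5 + o_c(1)}$ by the choice $t = (\log n)/(10\log(1/c))$. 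Since conditional independence and $\E[Z_\ell\mid\ov{\by}]=1$ give $\mathrm{Cov}(Z_\ell, Z_{\ell'}) = 0$ for $\ell\neq\ell'$, this yields $\mathrm{Var}_{\no}(\mathrm{LR})\le n^{-4/5 + o_c(1)}$, and Cauchy--Schwarz then gives $\dtv(\mathcal{S}_{\no},\mathcal{S}_{\yes})\le \tfrac{1}{2}\E_{\no}|\mathrm{LR}-1|\le n^{-2/5 + o_c(1)}$. I will use this TV bound at the end to transfer from the NO to the YES case.

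Next, I would define the typical event $G := \{\ov{y}: (y^{(i)})^2\le K \text{ for all } i\in[t]\}$ with $K$ a suitably large constant, so that $\Pr[\ov{\by}\in G] = 1 - o_n(1)$ by a union bound over $t = O(\log n)$ coordinates and the Gaussian tail. On $G$ the product formula above gives $\E[Z_\ell^2\mid\ov{y}]\le n^{\alpha}$ for some constant $\alpha = \alpha(c,K)\in(0,1)$, and an analogous Gaussian integration for the third moment (which is well defined because the resulting quadratic form in $x$ is always negative definite for every $c>0$, as opposed to the unconditional third moment) yields $\E[Z_\ell^3\mid\ov{y}]\le n^{2\alpha+o_c(1)}$ on $G$. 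Writing $\sigma^2(\ov{y}):=\mathrm{Var}(Z_\ell\mid\ov{y})$ and $\rho_3(\ov{y}):=\E[|Z_\ell-1|^3\mid\ov{y}]$, Berry--Esseen applied to the conditional i.i.d.\ sum $(1/n)\sum_{\ell} Z_\ell$ then gives
\[
\bigl|\Pr[\mathrm{LR}\le 1\mid\ov{y}]-\tfrac12\bigr| \;\le\; \frac{C\,\rho_3(\ov{y})}{\sigma(\ov{y})^3\sqrt n} \;\le\; n^{-\Omega(1)}
\]
for every $\ov{y}\in G$. Integrating over $\ov{\by}$ and absorbing the $o_n(1)$ contribution of $G^c$ establishes the first bullet, $\Pr_{\no}[\mathrm{LR}\le 1] = \tfrac12\pm n^{-\Omega(1)}$. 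For the second bullet I would combine this with the TV bound: $\Pr_{\yes}[\mathrm{LR}>1] = \Pr_{\no}[\mathrm{LR}>1] \pm \dtv(\mathcal{S}_{\no},\mathcal{S}_{\yes}) = \tfrac12 \pm n^{-\Omega(1)}$.

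The main obstacle I anticipate is the third-moment step: the \emph{unconditional} third moment of $Z_\ell$ under $\mathcal{S}_{\no}$ is in fact infinite for small $c$ (one checks that the Gaussian integral in $(x,y)$ for $\E[Z_\ell^3]$ has discriminant proportional to $c^2-3$, so it diverges when $c^2 < 3$), so Berry--Esseen cannot be applied unconditionally and must be run only on the bounded set $G$. Carefully tracking constants in the explicit product formulas to verify that $\rho_3(\ov{y})/\sigma(\ov{y})^3 \le n^{1/2-\Omega(1)}$ uniformly on $G$ is the central technical burden; once this is done, the rest of the argument is bookkeeping and a standard application of Berry--Esseen.
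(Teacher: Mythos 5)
Your core strategy is the same as the paper's: rewrite the likelihood ratio as a conditional (given $\ov{\by}$) i.i.d.\ sum with conditional mean equal to the threshold, compute explicit Gaussian-integral moment formulas, and apply Berry--Esseen on a typical set of $\ov{\by}$. The two minor stylistic differences are that you compute the conditional third moment directly (correctly observing that it is finite conditionally on $\ov{y}$ even though the unconditional third moment diverges for $c^2<3$), whereas the paper bounds the third absolute central moment of $R$ by Cauchy--Schwarz from the second and fourth; and you handle the second bullet (the $\yes$ case) by transferring through an independently derived TV bound, whereas the paper's writeup only explicitly works out the $\no$ case. Your $\chi^2$-style observation is a genuine bonus: since $\E[Z_\ell\mid\ov{\by}]=1$ and the $Z_\ell$ are conditionally i.i.d., the cross-covariances vanish, $\Var_{\no}(\mathrm{LR}) = \tfrac{1}{n}\bigl(\E[Z_1^2]-1\bigr) = \tfrac{1}{n}\bigl(((1+c^2)/c^2)^t-1\bigr) \le n^{-4/5+o_c(1)}$, and Cauchy--Schwarz already gives $\dtv({\cal S}_{\no},{\cal S}_{\yes}) \le \tfrac12\sqrt{\Var_{\no}(\mathrm{LR})} = n^{-2/5+o_c(1)}$, which proves the enclosing \Cref{lem:Gaussian_one_sparse} directly and more simply than the paper's route through \Cref{clm:adv_random_guess}. (Berry--Esseen is still needed for the pointwise ``$\approx 1/2$'' statements of the claim itself.)

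However, there is a concrete error in your typical-set definition. You take $G = \{\ov{y}: (y^{(i)})^2 \le K \text{ for all } i\in[t]\}$ with $K$ a \emph{constant}. Since $t = \Theta(\log n) \to \infty$ and each coordinate violates $(y^{(i)})^2 \le K$ with some fixed probability $p=p(K,c)>0$, we have $\Pr[\ov{\by}\in G] = (1-p)^t \to 0$, so in fact $\Pr[G^c]\to 1$, not $o_n(1)$. (The union bound you invoke gives $\Pr[G^c]\le 2t\,e^{-K/(2+2c^2)}$, which for fixed $K$ also diverges.) Letting $K$ grow with $n$ to repair this creates the opposite problem: the conditional moment formulas scale like $e^{\alpha_j\|\ov{y}\|_2^2}$, and $\|\ov{y}\|_2^2 \le tK$ becomes $\omega(\log n)$, which destroys the $n^{-\Omega(1)}$ Berry--Esseen bound. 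The paper resolves this tension by using an $\ell_2$ constraint, $G = \{\ov{y} : \|\ov{y}\|_2^2 \le (\log n)/2\}$: this has failure probability $\exp(-\Theta(\log n \cdot \log(1/c))) = n^{-\Omega(1)}$ (since $\|\ov{\by}\|_2^2$ concentrates around $(1+c^2)t = o(\log n)$), and simultaneously keeps the exponential factors $e^{\alpha_j\|\ov{y}\|_2^2} = n^{O(1)}$ small enough that the Berry--Esseen ratio is $n^{-\Omega(1)}$. Replacing your coordinate-wise $G$ by this $\ell_2$ ball makes the rest of your argument go through.
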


As a direct corollary of \Cref{clm:adv_random_guess}, we have:


\begin{corollary}\label{cor:probability_in_S}
Define the subset $S \subseteq \R^{t \times n} \times \R^t$,
$S:=\{ (\overline{\bx},\overline{\by})  : {\cal S}_{\no}(\ol{x},\ol{y})>
{\cal S}_{\yes}(\ol{x},\ol{y})\}$.
Then both
$\underset{(\overline{\bx},\overline{\by}) \sim {\cal S}_{\no} }{\Pr} \left[ (\overline{\bx},\overline{\by}) \in S \right]$ and $\underset{(\overline{\bx},\overline{\by}) \sim {\cal S}_{\yes} }{\Pr} \left[ (\overline{\bx},\overline{\by}) \in \overline{S} \right]$ are at most ${\frac 1 2} + \frac{1}{n^{\Omega(1)}}$.
\end{corollary}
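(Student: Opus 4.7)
The plan is to observe that this corollary is essentially a cosmetic rewriting of \Cref{clm:adv_random_guess}, with the only subtlety being the boundary case where ${\cal S}_{\no}(\ov{x},\ov{y}) = {\cal S}_{\yes}(\ov{x},\ov{y})$. By definition, the event $\{(\ov{\bx},\ov{\by}) \in S\}$ is the strict inequality $\{{\cal S}_{\no} > {\cal S}_{\yes}\}$, whereas the first line of \Cref{clm:adv_random_guess} concerns the non-strict event $\{{\cal S}_{\no} \ge {\cal S}_{\yes}\}$. Similarly, $\{(\ov{\bx},\ov{\by}) \in \ov{S}\} = \{{\cal S}_{\no} \le {\cal S}_{\yes}\}$, while the second line of \Cref{clm:adv_random_guess} concerns the strict event $\{{\cal S}_{\no} < {\cal S}_{\yes}\}$.

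I would argue that the two events in each case coincide up to a set of probability zero. The set $E := \{(\ov{x},\ov{y}) : {\cal S}_{\no}(\ov{x},\ov{y}) = {\cal S}_{\yes}(\ov{x},\ov{y})\}$ is the zero set of the real-analytic function $(\ov{x},\ov{y}) \mapsto {\cal S}_{\no}(\ov{x},\ov{y}) - {\cal S}_{\yes}(\ov{x},\ov{y})$; inspecting \eqref{eq:distribution_no} and \eqref{eq:distribution_yes}, this function is not identically zero (for instance, its behavior as $\|\ov{x}\| \to \infty$ with $\ov{y}$ fixed differs from the all-zero function), hence $E$ has Lebesgue measure zero in $\R^{t \times n} \times \R^{t}$. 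Since both ${\cal S}_{\no}$ and ${\cal S}_{\yes}$ are absolutely continuous with respect to Lebesgue measure (they are mixtures/products of nondegenerate Gaussian densities), we get $\Pr_{{\cal S}_{\no}}[E] = \Pr_{{\cal S}_{\yes}}[E] = 0$.

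Given this, the corollary follows immediately:
\[
\Pr_{(\ov{\bx},\ov{\by}) \sim {\cal S}_{\no}}[(\ov{\bx},\ov{\by}) \in S] = \Pr_{{\cal S}_{\no}}[{\cal S}_{\no} \ge {\cal S}_{\yes}] - \Pr_{{\cal S}_{\no}}[E] \le \tfrac{1}{2} + \tfrac{1}{n^{\Omega(1)}},
\]
using the first line of \Cref{clm:adv_random_guess}, and analogously
\[
\Pr_{(\ov{\bx},\ov{\by}) \sim {\cal S}_{\yes}}[(\ov{\bx},\ov{\by}) \in \ov{S}] = \Pr_{{\cal S}_{\yes}}[{\cal S}_{\no} < {\cal S}_{\yes}] + \Pr_{{\cal S}_{\yes}}[E] \le \tfrac{1}{2} + \tfrac{1}{n^{\Omega(1)}},
\]
using the second line. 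I do not expect any real obstacle here; the only conceptual point is recognizing that the boundary set is negligible so that the strict/non-strict distinction can be ignored. (In fact, one could just as well absorb the boundary into either side and keep exactly the same $\frac{1}{n^{\Omega(1)}}$ error term that appears in \Cref{clm:adv_random_guess}.)
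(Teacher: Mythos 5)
Your proposal is correct and matches the paper's (implicit) intent: the paper presents this as a "direct corollary" of Claim~\ref{clm:adv_random_guess} with no written proof, and you supply the detail that makes it rigorous, namely that the boundary $E = \{{\cal S}_{\no} = {\cal S}_{\yes}\}$ is negligible. Note that the first bound in the corollary actually follows from monotonicity alone — $\{{\cal S}_{\no} > {\cal S}_{\yes}\} \subseteq \{{\cal S}_{\no} \ge {\cal S}_{\yes}\}$ — with no need for the boundary argument; it is only the second bound (where $\ov{S} = \{{\cal S}_{\no} \le {\cal S}_{\yes}\}$ strictly contains the event $\{{\cal S}_{\no} < {\cal S}_{\yes}\}$ that Claim~\ref{clm:adv_random_guess} controls) that genuinely requires $\Pr_{{\cal S}_{\yes}}[E] = 0$.

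One small slip in the justification that ${\cal S}_{\no} - {\cal S}_{\yes}$ is not identically zero: the suggested inspection "as $\|\ov{x}\| \to \infty$ with $\ov{y}$ fixed" does not distinguish the two, since both densities contain the common factor ${\cal S}(\ov{x}) = \prod_{i,j}\frac{1}{\sqrt{2\pi}} e^{-(x^{(i)}_j)^2/2}$ and thus both tend to $0$. A cleaner witness is the origin: $({\cal S}_{\no} - {\cal S}_{\yes})(0,0) = {\cal S}(0)\left((2\pi(1+c^2))^{-t/2} - (2\pi c^2)^{-t/2}\right) \ne 0$ since $c < 1$. With that minor repair, the real-analyticity plus absolute-continuity argument goes through and the corollary follows exactly as you wrote.
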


Let us use \Cref{cor:probability_in_S} to finish the proof of \Cref{lem:Gaussian_one_sparse} here before proving \Cref{clm:adv_random_guess}:

\begin{proofof}{\Cref{lem:Gaussian_one_sparse} using \Cref{cor:probability_in_S}}
By the definition of statistical distance and of $S$, we have that $\dtv({\cal S}_{\no},{\cal S}_{\yes})$ is equal to
\begin{align*}
& \int_{\overline{x},\overline{y}} \bigg| {\cal S}_{\no}(\overline{x},\overline{y}) - {\cal S}_{\yes}(\overline{x},\overline{y}) \bigg| d \overline{x} d \overline{y} \\
= & \underset{(\overline{\bx},\overline{\by}) \sim {\cal S}_{\no}}{\Pr} \left[(\overline{\bx},\overline{\by}) \in S \right] - \underset{(\overline{\bx},\overline{\by}) \sim {\cal S}_{\yes}}{\Pr}\left[(\overline{\bx},\overline{\by}) \in S\right] + \underset{(\overline{\bx},\overline{\by}) \sim {\cal S}_{\yes}}{\Pr} \left[(\overline{\bx},\overline{\by}) \in \overline{S} \right] - \underset{(\overline{\bx},\overline{\by}) \sim {\cal S}_{\no}}{\Pr}\left[(\overline{\bx},\overline{\by}) \in \overline{S} \right] \\
= & \left(\underset{(\overline{\bx},\overline{\by}) \sim {\cal S}_{\no}}{\Pr} \left[(\overline{\bx},\overline{\by}) \in S \right] - 1/2\right) + 
\left(1/2 - \underset{(\overline{\bx},\overline{\by}) \sim {\cal S}_{\yes}}{\Pr}\left[(\overline{\bx},\overline{\by}) \in S\right]\right)\\
&  + \left(\underset{(\overline{\bx},\overline{\by}) \sim {\cal S}_{\yes}}{\Pr} \left[(\overline{\bx},\overline{\by}) \in \overline{S} \right] -1/2 \right)+ \left(1/2 - \underset{(\overline{\bx},\overline{\by}) \sim {\cal S}_{\no}}{\Pr}\left[(\overline{\bx},\overline{\by}) \in \overline{S} \right] \right).
\end{align*}
Because $\underset{(\overline{\bx},\overline{\by}) \sim {\cal S}_{\no}}{\Pr} \left[(\overline{\bx},\overline{\by}) \in S \right] + \underset{(\overline{\bx},\overline{\by}) \sim {\cal S}_{\no}}{\Pr}\left[(\overline{\bx},\overline{\by}) \in \overline{S} \right] = 1$, the first parenthesized term equals the fourth parenthesized term in the last line above, and similarly the second term equals the third term. We thus further simplify the above to
\[2 \left(\underset{(\overline{\bx},\overline{\by}) \sim {\cal S}_{\no}}{\Pr} \left[(\overline{\bx},\overline{\by}) \in S \right] - 1/2\right) + 2 \left( \underset{(\overline{\bx},\overline{\by}) \sim {\cal S}_{\yes}}{\Pr} \left[(\overline{\bx},\overline{\by}) \in \overline{S} \right] -1/2 \right),\]
which by~\Cref{cor:probability_in_S} is at most $\frac{4}{n^{\Omega(1)}}$.
\end{proofof}

In the rest of this subsection we prove~\Cref{clm:adv_random_guess}. Since $\overline{\bx}$ is generated the same way in ${\cal S}_{\no}$ and ${\cal S}_{\yes}$, let us simply write ${\cal S}(\ol{x})$ to denote the pdf of the marginal of either distribution (${\cal S}_{\no}$ or ${\cal S}_{\yes}$) over $\ol{x}$, i.e.
\[
{\cal S}(\ol{x}) = \prod_{i \in [t]}  \prod_{j=1}^n \frac{1}{\sqrt{2 \pi}} e^{- (x^{(i)}_j)^2/2}.
\]
We rewrite $\underset{(\overline{\bx},\overline{\by}) \sim {\cal S}_{\no} }{\Pr}\left[ {\cal S}_{\no}(\overline{\bx},\overline{\by}) \ge{\cal S}_{\yes}(\overline{\bx},\overline{\by}) \right]$ as

$$
\underset{\overline{\bx} \sim N(0,1)^{t\times n}, \overline{\by} \sim N(0,1+c^2)^t}{\Pr}\left[  {\cal S}(\ol{\bx}) \cdot \prod_{i \in [t]} \frac{1}{\sqrt{2 \pi (1+c^2)}} e^{-\frac{(\by^{(i)})^2}{2(1+c^2)}} \ge {\cal S}(\ol{\bx}) \cdot \frac{1}{n} \sum_{\ell \in [n]} \prod_{i \in [t]} \frac{1}{\sqrt{2 \pi c^2}} e^{-\frac{\big(\by^{(i)} - \bx^{(i)}_\ell \big)^2}{2 c^2}} \right].
$$
Thus it suffices to show that with high probability over $\overline{\by} \sim N(0,1+c^2)^t$, we have
\begin{equation}\label{eq:closeness}
\underset{\overline{\bx} \sim N(0,1)^{t\times n}}{\Pr} \left[ \prod_{i \in [t]} \frac{1}{\sqrt{2 \pi (1+c^2)}} e^{-\frac{(\by^{(i)})^2}{2(1+c^2)}} \ge \frac{1}{n} \sum_{\ell \in [n]} \prod_{i \in [t]} \frac{1}{\sqrt{2 \pi c^2}} e^{-\frac{\big( \by^{(i)} - \bx^{(i)}_\ell \big)^2}{2 c^2}} \right] \approx {\frac 1 2}.
\end{equation}
For notational  simplicity, given $\overline{w} \in \mathbb{R}^t$ and $\overline{y} \in \mathbb{R}^t$, we write $R(\overline{w},\overline{y})$ to denote $\underset{i \in [t]}{\prod} \frac{1}{\sqrt{2 \pi c^2}} e^{-\frac{\big( y^{(i)} - w^{(i)} \big)^2}{2 c^2}},$ so we can re-express the right hand side of the inequality in \Cref{eq:closeness} as
$$
\frac{1}{n} \sum_{\ell \in [n]} \prod_{i \in [t]} \frac{1}{\sqrt{2 \pi c^2}} e^{-\frac{\big( \by^{(i)} - \bx^{(i)}_\ell \big)^2}{2 c^2}} = \frac{1}{n} \sum_{\ell \in [n]} R\left( (\bx^{(1)}_\ell,\ldots,\bx^{(t)}_\ell),\overline{\by} \right).
$$

\begin{claim}\label{clm:Kolmogorov_dist}
With probability $1-1/\poly(n)$ over the outcome $\ol{y}$ of a random $\overline{\by} \sim N(0,1+c^2)^t$, the random variable $\frac{1}{n} \sum_{\ell \in [n]} R\left( \big( \bx^{(1)}_\ell,\ldots,\bx^{(t)}_\ell \big), \overline{y} \right)
$ (where the randomness is over $\overline{\bx} \sim N(0,1)^{t\times n}$) is $1/n^{\Omega(1)}$-close in Kolmogorov-Smirnov distance to a Gaussian random variable with mean $(2\pi(1+c^2))^{-t/2} \cdot e^{-\frac{\|\overline{y}\|_2^2}{2(1+c^2)}}$ and variance $\frac{1}{n} \cdot \Theta \bigg( (2 \pi \cdot c \sqrt{2 + c^2})^{-t} \cdot e^{-\frac{\|\overline{y}\|_2^2}{2 + c^2}} \bigg)$ .
\end{claim}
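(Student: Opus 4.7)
\bigskip

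\noindent\textbf{Proof proposal for Claim~\ref{clm:Kolmogorov_dist}.}

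The plan is to apply the Berry--Esseen theorem to the sum of $n$ i.i.d.\ random variables
$\bR_\ell := R\bigl((\bx^{(1)}_\ell,\dots,\bx^{(t)}_\ell),\ol{y}\bigr)$ for $\ell\in[n]$, where for a \emph{fixed} $\ol{y}$ the randomness is over the i.i.d.\ columns $(\bx^{(1)}_\ell,\dots,\bx^{(t)}_\ell)\sim N(0,1)^t$. First I would compute the mean, second moment, and (bound the) third absolute moment of $\bR_\ell$ in closed form via Gaussian convolutions. Since $\bR_\ell$ factorises as a product over $i\in[t]$, each factor is an integral of the form
$\int \frac{1}{\sqrt{2\pi}} e^{-x^2/2} \cdot \frac{1}{(2\pi c^2)^{p/2}} e^{-p(y^{(i)}-x)^2/(2c^2)} \, \mathrm{d}x$ for $p\in\{1,2,3\}$. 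A standard completion of the square gives
\[
\E[\bR_\ell] = (2\pi(1+c^2))^{-t/2}\, e^{-\|\ol{y}\|_2^2/(2(1+c^2))}, \qquad
\E[\bR_\ell^2] = (2\pi c\sqrt{c^2+2})^{-t}\, e^{-\|\ol{y}\|_2^2/(c^2+2)},
\]
and an analogous expression for $\E[\bR_\ell^3]$ with constants $(c^2+3)$ replacing $(c^2+2)$. In particular $\E[\bR_\ell]$ matches the mean asserted in the claim exactly.

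Next I would verify that for $c<0.1$ we have $\E[\bR_\ell^2]\gg (\E[\bR_\ell])^2$, so that $\Var(\bR_\ell)=\Theta(\E[\bR_\ell^2])$; this produces the variance $\frac{1}{n}\Theta\bigl((2\pi c\sqrt{c^2+2})^{-t}\, e^{-\|\ol{y}\|_2^2/(c^2+2)}\bigr)$ claimed. The dominance $\E[\bR_\ell^2]\gg(\E[\bR_\ell])^2$ follows from $c^{-t}=n^{1/10}$ (by the choice $t=\log n/(10\log 1/c)$) being much larger than the ratio of the exponential factors for typical $\ol{y}$.

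I would then control $\ol{\by}$ itself: since $\overline{\by}\sim N(0,1+c^2)^t$, a standard $\chi^2$ tail bound gives $\|\overline{\by}\|_2^2 = t(1+c^2) \pm O(\sqrt{t\log n})$ with probability $1-1/\poly(n)$; equivalently $\|\overline{\by}\|_2^2 = t(1+c^2)(1\pm o(1))$ since $t=\Theta(\log n/\log(1/c))$. Conditioning on this event, I would plug the closed-form expressions into the ratio $\rho/\sigma^3$ (where $\rho=\E[\bR_\ell^3]$ and $\sigma^2=\Var(\bR_\ell)$). The exponential factors combine to $\exp\bigl(\tfrac{3\|\ol{y}\|_2^2}{2(c^2+2)(c^2+3)}\bigr)=n^{O(1/\log(1/c))}=n^{o(1)}$, and the prefactor ratio simplifies to $c^{-t/2}\cdot (c^2+2)^{3t/4}/(c^2+3)^{t/2}=n^{1/20+o(1)}$. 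Hence $\rho/\sigma^3=n^{1/20+o(1)}$ and the Berry--Esseen bound
\[
\sup_{x\in\R}\left|\Pr\!\left[\tfrac{1}{n}\sum_{\ell=1}^n \bR_\ell \le x\right]-\Phi_{\mu,\sigma^2/n}(x)\right| \;\le\; C\cdot\frac{\rho}{\sigma^3\sqrt{n}} \;=\; n^{-9/20+o(1)} \;=\; 1/n^{\Omega(1)}
\]
gives exactly the conclusion of the claim.

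The main obstacle will be step four --- getting a clean bound on the Berry--Esseen ratio $\rho/(\sigma^3\sqrt{n})$ that is $1/n^{\Omega(1)}$. This requires (i) executing the Gaussian integrals carefully so that the exponentials in $\|\ol{y}\|_2^2$ from $\rho$ and from $\sigma^3$ nearly cancel, leaving only a mild $n^{o(1)}$ residual; and (ii) observing that the polynomial prefactor blowup $c^{-t/2}=n^{1/20}$ is comfortably dominated by the $\sqrt{n}$ gain from Berry--Esseen, which is the crucial quantitative point that makes the choice $t=\log n/(10\log 1/c)$ work. Once these two pieces are in place, the remaining argument --- concentration of $\|\overline{\by}\|_2^2$ and invocation of Berry--Esseen --- is standard.
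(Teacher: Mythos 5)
Your proposal follows essentially the same route as the paper: apply Berry--Esseen to the i.i.d.\ sum of $\bR_\ell$'s, compute the moments of a single $\bR_\ell$ by completing squares in Gaussian integrals (your closed forms for $\E[\bR_\ell]$ and $\E[\bR_\ell^2]$ agree with the paper's Claim~\ref{clm:bound_moments}), verify $\Var(\bR_\ell)=\Theta(\E[\bR_\ell^2])$ for $c<0.1$, and control $\|\ol{\by}\|_2^2$ by Gaussian concentration. The one structural difference is how the third-moment ratio $\rho/\sigma^3$ is handled: you use the raw moment $\E[\bR_\ell^3]$ directly, whereas the paper avoids computing a centered absolute third moment by applying Cauchy--Schwarz, bounding $\E[|\bR-\E\bR|^3]\le\E[|\bR-\E\bR|^2]^{1/2}\E[|\bR-\E\bR|^4]^{1/2}$ and using closed forms for the centered second and fourth moments (Corollary~\ref{cor:bound_centered_2nd_4th}). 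Your variant is valid --- since $\bR_\ell\ge 0$, one has $\E[|\bR_\ell-\E\bR_\ell|^3]\le\E[(\bR_\ell+\E\bR_\ell)^3]\le 8\,\E[\bR_\ell^3]$ by Lyapunov --- but you should state this reduction explicitly rather than silently plugging the raw moment into Berry--Esseen.

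Two small corrections to the quantitative bookkeeping. First, from Claim~\ref{clm:bound_moments} with $j=3$, the exponential in $\E[\bR_\ell^3]$ is $e^{-\|\ol{y}\|_2^2/(2+2c^2/3)}$, \emph{not} $e^{-\|\ol{y}\|_2^2/(c^2+3)}$; only the prefactor picks up $\sqrt{3+c^2}$. The corrected cross term in $\rho/\sigma^3$ is $e^{\|\ol{y}\|_2^2/((2+c^2)(2+2c^2/3))}$, which for small $c$ is $\approx e^{\|\ol{y}\|_2^2/4}$, the same order as the paper's $e^{\|\ol{y}\|_2^2(\frac{1}{2+c^2}-\frac{1}{4+c^2})}$, so the final bound is unaffected. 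Second, since $c$ is a fixed constant, $n^{O(1/\log(1/c))}$ is $n^{\text{small constant}}$ rather than $n^{o(1)}$; what actually makes the argument close is the numerical margin built into $t=\frac{\log n}{10\log(1/c)}$ and the concentration bound $\|\ol{\by}\|_2^2\le\frac{\log n}{2}$, which together leave a $\sqrt{n}$-vs-$n^{c'}$ gap with $c'<1/2$. You should make that arithmetic explicit, as the paper does in \Cref{eq:cal_distance}.
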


We finish the proof of~\Cref{clm:adv_random_guess} and defer the proof of~\Cref{clm:Kolmogorov_dist} to ~\Cref{sec:proof_claim_Kolmogorov}.

\begin{proofof}{\Cref{clm:adv_random_guess} using~\Cref{clm:Kolmogorov_dist}}
We say that $\ol{y}$ is \emph{good} if the $1/n^{\Omega(1)}$-closeness described in~\Cref{clm:Kolmogorov_dist} holds.
We have that
\begin{align}
& \underset{(\ol{\bx},\ol{\by})}{\Pr}\left[ \frac{1}{n} \sum_{\ell \in [n]} R\left( (\bx^{(1)}_\ell,\ldots,\bx^{(t)}_\ell), \overline{\by} \right) \le \prod_{i \in [t]} \frac{1}{\sqrt{2 \pi (1+c^2)}} e^{-\frac{(y^{(i)})^2}{2(1+c^2)}}\right] \nonumber\\
= & \underset{\ol{\by}}{\Pr}[\ol{\by} \text{ is good}] \cdot {\Pr}\left[ N\left({\frac {e^{-\frac{\|y\|_2^2}{2(1+c^2)}}}{(2\pi(1+c^2))^{t/2}}}, \frac{1}{n} \cdot \Theta \bigg( (2 \pi \cdot c \sqrt{2 + c^2})^{-t} \cdot e^{-\frac{\|\overline{y}\|_2^2}{2 + c^2}} \bigg) \right) \le {\frac {e^{-\frac{\|y\|_2^2}{2(1+c^2)}}}{(2\pi(1+c^2))^{t/2}}} \right] \label{eq:biggie}\\
& \pm \underset{\ol{\by}}{\Pr}[\ol{\by} \text{ is good}] \cdot 1/n^{\Theta(1)} \pm \underset{\ol{\by}}{\Pr}[\ol{\by} \text{ is not good}]\nonumber\\
= & 1/2 \pm 1/n^{\Theta(1)},\nonumber
\end{align}
where the last line used \Cref{clm:Kolmogorov_dist} and the fact that the mean of the Gaussian on the LHS of the inequality in \Cref{eq:biggie} equals the right hand side of \Cref{eq:biggie}, so the probability that the Gaussian exceeds its mean is exactly $1/2.$
\end{proofof}

\subsection{Proof of~\Cref{clm:Kolmogorov_dist}}\label{sec:proof_claim_Kolmogorov}
The high-level idea is to exploit the fact that $\frac{1}{n} \sum_{\ell \in [n]} R\left( \big( \bx^{(1)}_\ell,\ldots,\bx^{(t)}_\ell \big), \overline{y} \right)$ is a sum of independent random variables and establish closeness to a suitable Gaussian using the Berry-Esseen theorem.  To apply Berry-Esseen we need to show that the individual summand random variables are suitably ``nice'' and need to calculate the appropriate parameters (mean and variance) of the sum, since they determine the Gaussian to which the sum converges.   

We first calculate the expectation.
\begin{lemma}~\label{lem:expect1}
Fix any $z \in \R$ and any $c>0$. Then
\[
\underset{\bx \sim N(0,1)}{\E} \bigg[\frac{1}{c}  e^{-\frac{|z-\bx|^2}{2c^2}}\bigg]  = \frac{1}{\sqrt{1+c^2}} \cdot e^{-\frac{|z|^2}{2(1+c^2)}}
\]
\end{lemma}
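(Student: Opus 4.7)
The plan is to recognize the expectation as a standard Gaussian convolution, which can be evaluated either by completing the square or by a one-line probabilistic argument. I expect essentially no obstacle here; this is a routine calculation that I would write up carefully because the claim is used immediately in the subsequent Berry--Esseen application in \Cref{clm:Kolmogorov_dist}.

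The cleanest route is probabilistic. Let $\bx \sim N(0,1)$ and, independently, $\bn \sim N(0,c^2)$. By the standard convolution formula, the pdf of $\bx + \bn$ at the point $z$ is
\[
\int_{-\infty}^{\infty} \frac{1}{\sqrt{2\pi}}\,e^{-x^2/2}\cdot \frac{1}{c\sqrt{2\pi}}\,e^{-(z-x)^2/(2c^2)}\, dx.
\]
On the other hand, $\bx + \bn \sim N(0,1+c^2)$, so this same pdf at $z$ equals $\frac{1}{\sqrt{2\pi(1+c^2)}}\,e^{-z^2/(2(1+c^2))}$. Multiplying both sides by $\sqrt{2\pi}$ to cancel the $\tfrac{1}{\sqrt{2\pi}}$ that is absent in the statement of the lemma (the expectation is over $\bx \sim N(0,1)$, with only a $1/c$ rather than a $1/(c\sqrt{2\pi})$ in front of the $e^{-(z-\bx)^2/(2c^2)}$ term) yields exactly
\[
\E_{\bx \sim N(0,1)}\!\left[\tfrac{1}{c}\,e^{-(z-\bx)^2/(2c^2)}\right] \;=\; \tfrac{1}{\sqrt{1+c^2}}\, e^{-z^2/(2(1+c^2))},
\]
as desired.

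For completeness I would also sketch the direct verification by completing the square. Writing $A := \tfrac{1+c^2}{2c^2}$, the combined exponent is
\[
-\tfrac{x^2}{2} - \tfrac{(z-x)^2}{2c^2} \;=\; -A\!\left(x - \tfrac{z}{1+c^2}\right)^{\!2} \;-\; \tfrac{z^2}{2(1+c^2)}.
\]
Pulling the $z$-dependent constant out of the integral and using the standard Gaussian integral $\int e^{-A(x-\mu)^2}\,dx = \sqrt{\pi/A} = c\sqrt{2\pi/(1+c^2)}$, the prefactors combine to give the claimed expression. Either derivation is a few lines; the probabilistic one is preferable since it makes the structural reason transparent (sum of independent Gaussians), which is exactly the structure that is going to be repeatedly invoked when computing the second and higher moments of the summands in the Berry--Esseen analysis that follows.
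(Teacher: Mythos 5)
Your proof is correct. The paper's own proof is the direct computation you sketch second: it writes out the expectation as an integral, completes the square in $x$ inside the exponent, and recognizes the remaining integral as a Gaussian density integrating to $1$. Your primary route is genuinely different in flavor: rather than completing the square, you observe that the integrand is (up to a missing $1/\sqrt{2\pi}$) the convolution of the $N(0,1)$ and $N(0,c^2)$ densities, invoke the fact that the sum of independent Gaussians is Gaussian with added variances, and then track the single multiplicative factor of $\sqrt{2\pi}$ to match the lemma's normalization. Both arguments are a few lines and are entirely rigorous. The convolution route is structurally cleaner and, as you say, exposes the additive-Gaussian mechanism that the lower-bound construction in this section relies on; the paper's completion-of-square route is more self-contained and requires no appeal to the closure of Gaussians under convolution. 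Neither is missing anything; the only thing to be careful about in the probabilistic version (which you handle correctly) is that the lemma's left-hand side carries $1/c$ rather than the full density normalization $1/(c\sqrt{2\pi})$, so the answer picks up a factor of $\sqrt{2\pi}$ relative to the convolution pdf.
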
 
\begin{proof}
The proof is a computation:
\begin{eqnarray*}
\underset{\bx \sim N(0,1)}{\E} \bigg[\frac{1}{c}  e^{-\frac{|z-\bx|^2}{2c^2}}\bigg] &=& \int_x \frac{1}{\sqrt{2\pi}} e^{-\frac{x^2}{2}} \frac{1}{c}  e^{-\frac{|z-x|^2}{2c^2}} dx \\
&=& \frac{1}{\sqrt{2\pi} \cdot c} \int_x e^{-\frac{z^2}{2c^2}}
 \cdot e^{-x^2 \frac{(1+c^2)}{2c^2}}  \cdot e^{\frac{z \cdot x}{c^2}} dx \\ 
 &=& \frac{1}{\sqrt{1+c^2}} \cdot e^{-\frac{|z|^2}{2(1+c^2)}}  \int_x \frac{\sqrt{1+c^2}}{\sqrt{2\pi} c}  e^{-x^2 \frac{(1+c^2)}{2c^2}}  \cdot e^{\frac{z \cdot x}{c^2}}  \cdot e^{-z^2 \frac{1}{2(1+c^2)c^2}}dx \\
  &=& \frac{1}{\sqrt{1+c^2}} \cdot e^{-\frac{|z|^2}{2(1+c^2)}}  \int_x \frac{\sqrt{1+c^2}}{\sqrt{2\pi} c}  e^{-\frac{(1+c^2) \cdot \big( x-\frac{z}{1+c^2} \big)^2}{2c^2}} dx \\
 &=& \frac{1}{\sqrt{1+c^2}} \cdot e^{-\frac{|z|^2}{2(1+c^2)}}. \qedhere
\end{eqnarray*}
\end{proof}

\Cref{lem:expect1} implies the expectation bound stated in~\Cref{clm:Kolmogorov_dist}, i.e., 
\begin{equation} \label{eq:useme}
\Ex_{\ol{\bx} \sim N(0,1)^{t \times n}}\left[ \frac{1}{n} \sum_{\ell \in [n]} R\bigg( \big( \bx^{(1)}_\ell,\ldots,\bx^{(t)}_\ell \big), \ol{y} \bigg) \right] =  (2\pi(1+c^2))^{-t/2} \cdot e^{-\frac{\|\ol{y}\|_2^2}{2(1+c^2)}},
\end{equation}
since the $t$ rows $\bx^{(1)},\ldots,\bx^{(t)}$ of $\ol{\bx}$ are generated independently. 

Using~\Cref{lem:expect1}, we calculate the moments of $R(\bw,\ol{y})$, which we need to bound in order to  apply the Berry-Esseen theorem:

\begin{claim}\label{clm:bound_moments}
For any $\ol{y} \in \R^t, j \geq 1$, the $j$-th moment of $R(\bw,\ol{y})$ for a random $\bw \sim N(0,1)^t$ is
\[
\E\big[ R(\bw,\ol{y})^j \big]=(2 \pi)^{-jt/2} \cdot \frac{1}{(c^{j-1} \sqrt{j+c^2})^t} \cdot e^{-\frac{ \|\ol{y}\|_2^2 }{ 2 +2c^2/j }}.
\]
\end{claim}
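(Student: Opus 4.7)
\begin{proofof}{\Cref{clm:bound_moments} (plan)}
The plan is to reduce the claim to a direct application of \Cref{lem:expect1} on each of the $t$ coordinates. Since $R(\bw,\ol{y})$ is a product over the $t$ independent coordinates of $\bw \sim N(0,1)^t$, we first write
\[
R(\bw,\ol{y})^j \;=\; \prod_{i=1}^{t} \frac{1}{(2\pi c^2)^{j/2}} \, \exp\!\left( -\, \frac{j\,(y^{(i)} - \bw^{(i)})^2}{2 c^2} \right),
\]
so by independence of the coordinates,
\[
\E\bigl[R(\bw,\ol{y})^j\bigr] \;=\; \frac{1}{(2\pi c^2)^{jt/2}} \prod_{i=1}^{t} \E_{\bw \sim N(0,1)}\!\left[ \exp\!\left(-\frac{j(y^{(i)}-\bw)^2}{2c^2}\right)\right].
\]
Thus the whole task reduces to computing the one-dimensional Gaussian integral
$\E_{\bw \sim N(0,1)}[\exp(-j(y-\bw)^2/(2c^2))]$ for an arbitrary $y \in \R$.

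The key observation is that $\exp(-j(y-\bw)^2/(2c^2)) = \exp(-(y-\bw)^2/(2 c'^2))$ with $c' := c/\sqrt{j}$, so this is exactly the quantity already computed in \Cref{lem:expect1} (after multiplying through by $c'$). Applying \Cref{lem:expect1} with $c$ replaced by $c' = c/\sqrt{j}$ gives
\[
\E_{\bw \sim N(0,1)}\!\left[ \exp\!\left(-\frac{j(y-\bw)^2}{2c^2}\right)\right] \;=\; \frac{c/\sqrt{j}}{\sqrt{1 + c^2/j}} \, \exp\!\left(-\,\frac{y^2}{2(1+c^2/j)}\right) \;=\; \frac{c}{\sqrt{j+c^2}} \, \exp\!\left(-\frac{y^2}{2 + 2c^2/j}\right).
\]
(Anyone uneasy about re-invoking \Cref{lem:expect1} with a new value of $c$ can instead redo the same completion-of-the-square computation used in its proof; this is purely routine, so I would not reproduce it.)

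Plugging this per-coordinate value back into the product yields
\[
\E\bigl[R(\bw,\ol{y})^j\bigr] \;=\; \frac{1}{(2\pi c^2)^{jt/2}} \cdot \frac{c^t}{(j+c^2)^{t/2}} \cdot \exp\!\left(-\,\frac{\|\ol{y}\|_2^2}{2+2c^2/j}\right),
\]
and simplifying the constant factor gives $(2\pi)^{-jt/2} \cdot (c^{j-1}\sqrt{j+c^2})^{-t} \cdot \exp(-\|\ol{y}\|_2^2/(2+2c^2/j))$, as claimed. Since the calculation consists only of one Gaussian integral handled by the prior lemma plus a tensorization step, I do not expect any serious obstacle; the only minor care required is tracking the factor of $\sqrt{2\pi}$ correctly when rewriting $e^{-j(y-w)^2/(2c^2)}$ as an (unnormalized) $N(y, c^2/j)$ density.
\end{proofof}
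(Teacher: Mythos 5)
Your proposal is correct and follows essentially the same route as the paper: both factor $R(\bw,\ol{y})^j$ over the $t$ independent coordinates, identify the per-coordinate integral as the quantity of \Cref{lem:expect1} with $c$ replaced by $c/\sqrt{j}$, and then simplify the resulting constants. The algebraic bookkeeping (tracking $\sqrt{2\pi}$ and $c$ powers) matches the paper's computation step for step.
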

\begin{proof}
We have
\begin{align*}
\E[R(\bw,\ol{y})^j]& = \E\left[ \prod_{i \in [t]} (1/\sqrt{2\pi})^j \frac{1}{c^j} e^{-\frac{\big( y^{(i)} - \bw^{(i)} \big)^2}{2 c^2} \cdot j} \right]\\
& = \prod_{i \in [t]} (1/\sqrt{2\pi})^j \frac{1}{c^{j-1} \cdot \sqrt{j}} \E\left[ \frac{1}{c/\sqrt{j}} e^{-\frac{\big( y(i) - \bw(i) \big)^2}{2 (c/\sqrt{j})^2}} \right]\\
& = \prod_{i \in [t]} (1/\sqrt{2\pi})^j \frac{1}{c^{j-1} \cdot \sqrt{j} \cdot \sqrt{1+c^2/j} } \cdot e^{- \frac{y_i^2}{2(1+c^2/j)}}\tag{by \Cref{lem:expect1}}\\
& = (2 \pi)^{-jt/2} \cdot\frac{1}{(c^{j-1} \sqrt{j + c^2})^t} \cdot e^{-\frac{\|\ol{y}\|_2^2}{2 + 2c^2/j}}. \qedhere
\end{align*}
\end{proof}

Next we use~\Cref{clm:bound_moments} to bound the precise expressions we will need to control for the Berry-Esseen theorem. In the rest of this subsection we fix $\ol{y} \in \R^t$ and write $\E[R]$ to denote the expectation $\underset{\bw \sim N(0,1)^t}{\E}[R(\bw,\ol{y})]$.  

\begin{corollary}\label{cor:bound_centered_2nd_4th}
For $0 < c < 0.1$ we have
\begin{equation} \label{eq:first}
\underset{\bw \sim N(0,1)^t}{\E}\left[ \bigg| R(\bw,\ol{y}) - \E[R] \bigg|^2 \right] =\Theta(1) \cdot {\frac 1 {(2 \pi c \sqrt{2 + c^2})^t}} \cdot e^{-\frac{\|y\|_2^2}{2 + c^2}}.
\end{equation}
and
\begin{equation}
\underset{\bw \sim N(0,1)^t}{\E}\left[ \bigg| R(\bw,\ol{y}) - \E[R] \bigg|^4 \right] = \Theta(1) \frac{1}{((2\pi)^2 c^3 \sqrt{4 + c^2})^t} \cdot e^{-\frac{ \|y\|_2^2 }{2 + c^2/2}}.
\label{eq:second}
\end{equation}
\end{corollary}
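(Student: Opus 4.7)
The plan is to expand each centered moment via the binomial theorem, apply~\Cref{clm:bound_moments} (together with~\Cref{lem:expect1} for $\E R$) to obtain a closed form for each raw moment, and then observe that in both cases the top raw moment by itself already \emph{exactly} matches the claimed expression up to a constant prefactor, while the remaining summands are lower order by a factor of $O(c)^t$ and hence contribute only a multiplicative $(1 - o(1))$ correction.

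For the second central moment I would write $\E[(R - \E R)^2] = \E[R^2] - (\E R)^2$. \Cref{clm:bound_moments} with $j = 2$ gives $\E[R^2] = (2\pi c\sqrt{2+c^2})^{-t} \cdot e^{-\|\ol{y}\|_2^2/(2+c^2)}$, matching~\eqref{eq:first} up to $\Theta(1)$. Using~\Cref{lem:expect1},
\[
\frac{(\E R)^2}{\E[R^2]} = \Bigl(\tfrac{c\sqrt{2+c^2}}{1+c^2}\Bigr)^t \cdot \exp\!\Bigl( -\|\ol{y}\|_2^2 \bigl[\tfrac{1}{1+c^2} - \tfrac{1}{2+c^2}\bigr]\Bigr),
\]
whose exponent is nonpositive and whose base is at most $c\sqrt{3} < 1/5$ for $c \leq 0.1$; hence $(\E R)^2 \leq \tfrac{1}{5}\E[R^2]$ and~\eqref{eq:first} follows.

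For the fourth central moment I would expand
\[
\E[(R-\E R)^4] = \E[R^4] - 4\,\E[R^3]\,\E[R] + 6\,\E[R^2]\,(\E R)^2 - 3\,(\E R)^4,
\]
observe that~\Cref{clm:bound_moments} with $j = 4$ gives exactly the form in~\eqref{eq:second} for $\E[R^4]$, and argue that each of the three cross-terms is bounded by $O(c)^t \cdot \E[R^4]$. The ratio of a cross-term to $\E[R^4]$ splits as a non-exponential base raised to the $t$-th power (which a short direct computation shows is at most $2c/\sqrt{3}$, $c^2 \sqrt{2}$, or $2c^3$ in the three cases---all $O(c)$) times $\exp(-\|\ol{y}\|_2^2\, \Delta)$, where $\Delta = \sum_i \tfrac{1}{2+2c^2/j_i} - \tfrac{1}{2+c^2/2}$ and $j_i$ ranges over the moment orders composing the cross-term ($j_i = 1$ for each $\E R$ factor, $j_i = 2$ for $\E R^2$, $j_i = 3$ for $\E R^3$). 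The only mildly delicate step is verifying $\Delta \geq 0$ uniformly in $c \in (0, 0.1)$, so that the exponential factor in each ratio is bounded by $1$; this reduces to three scalar inequalities that are slack at $c = 0$ (where they read $k/2 \geq 1/2$ for $k \in \{2, 3, 4\}$) and remain valid throughout the range by continuity. Combining these bounds, $\E[(R - \E R)^4] = (1 - O(c)^t) \cdot \E[R^4]$, yielding~\eqref{eq:second}.
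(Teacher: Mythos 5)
Your proposal follows essentially the same route as the paper: decompose the second centered moment as $\E[(R-\E R)^2]=\E[R^2]-(\E R)^2$ and the fourth as $\E[R^4]-4\E[R^3]\E[R]+6\E[R^2](\E R)^2-3(\E R)^4$, substitute the closed forms from \Cref{clm:bound_moments}, and argue that the top raw moment dominates by showing each cross-term-to-top-term ratio factors as (a base $=O(c)$)$^t$ times $e^{-\|\ol{y}\|_2^2\Delta}$ with $\Delta\geq 0$. The paper presents this as two explicit scalar inequalities (one between the non-exponential bases, one between the exponents, with the latter given as a three-way $\min$); your ratio-based bookkeeping is equivalent. The one loose step is asserting $\Delta\geq 0$ on $(0,0.1)$ ``by continuity'': strict positivity at $c=0$ plus continuity only gives positivity on some unspecified neighborhood of $0$, not on the fixed interval $(0,0.1)$. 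Since each $\Delta$ is an elementary rational function of $c^2$, you should just verify the three inequalities directly (they hold with ample slack at $c=0.1$, e.g.\ for the $\E[R^3]\E[R]$ term $\Delta(0.1)\approx 0.49$); this is a fixable imprecision rather than a flaw in the approach.
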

\begin{proof}
We have 
\[
\underset{\bw}{\E}\left[ \bigg| R(\bw,\ol{y}) - \E[R] \bigg|^2 \right] = \E[R(\bw,\ol{y})^2] - \E[R]^2.
\]
From the $j=1$ and $j=2$ cases of~\Cref{clm:bound_moments}, this is 
\[
\overbrace{
(1/\sqrt{2\pi})^{2t} \cdot \frac{1}{(c \sqrt{2 + c^2})^t} \cdot e^{-\frac{\|y\|_2^2}{2 + c^2}}}^{A} \ - \ 
\overbrace{(1/\sqrt{2\pi})^{2t} \cdot \frac{1}{(1+c^2)^t} \cdot e^{-\frac{ \|y\|_2^2 }{ 2(1+c^2) } \cdot 2}}^{B}.
\] 
To see that $A-B=\Theta(A)$ (which implies \Cref{eq:first}), we note that $\frac{1}{c \sqrt{2+c^2}} > 2 \cdot \frac{1}{1+c^2}$ and $- \frac{\|y\|_2^2}{2+c^2} > - \frac{\|y\|_2^2}{2(1+c^2)} \cdot 2$ for $0 < c < 0.1$. 

Turning to \Cref{eq:second}, we write $\underset{\bw}{\E}\left[ \bigg| R(\bw,\ol{y}) - \E[R] \bigg|^4 \right]$ as
$$
\E[R(\bw,\ol{y})^4] - 4 \E[R(\bw,\ol{y})^3] \E[R] + 6 \E[R(\bw,\ol{y})^2] \E[R]^2 - 4 \E[R]^4 + \E[R]^4.
$$

Using $j=1,2,3,4$ in \Cref{clm:bound_moments}, the above is equal to $(1/\sqrt{2 \pi})^{4t}$ times \ignore{\rnote{I ignore'd out a $(1/\sqrt{2\pi})^{2t}$ below, please check}}
\begin{align*}
& \overbrace{\frac{1}{(c^3 \sqrt{4 + c^2})^t} \cdot e^{-\frac{ \|y\|_2^2 }{2 + c^2/2}}}^{A'} \ - \ 4 \frac{1}{(c^2 \sqrt{3 + c^2})^t} \cdot e^{-\frac{\|y\|_2^2}{2 + 2c^2/3}} \cdot \frac{1}{(\sqrt{1+c^2})^t} \cdot e^{-\frac{ \|y\|_2^2 }{ 2(1+c^2) }}\\
& + 6 \ignore{(1/\sqrt{2\pi})^{2t}} \cdot \frac{1}{(c \sqrt{2 + c^2})^t} \cdot e^{-\frac{\|y\|_2^2}{2 + c^2}} \cdot \frac{1}{(\sqrt{1+c^2})^{2t}} \cdot e^{-\frac{ \|y\|_2^2 }{ 2(1+c^2) } \cdot 2} - 3 \frac{1}{(\sqrt{1+c^2})^{4t}} \cdot e^{-\frac{ \|y\|_2^2 }{ 2(1+c^2) } 
\cdot 4}.
\end{align*}
Similar to the above discussion, using $\frac{1}{c^3 \sqrt{4+c^2}} \ge 8 \cdot \bigg( \frac{1}{c^2 \sqrt{3+c^2} \cdot \sqrt{1+c^2}} + \frac{1}{(1+c^2)^4} \bigg)$ and $\frac{ \|y\|_2^2 }{2 + c^2/2} \le \min\{ \frac{\|y\|_2^2}{2 + 2c^2/3} + \frac{ \|y\|_2^2 }{ 2(1+c^2) }, \frac{\|y\|_2^2}{2 + c^2}+ \frac{ \|y\|_2^2 }{ 1+c^2 }, \frac{ 2\|y\|_2^2 }{ 1+c^2 } \}$ for $0 <c < 0.1$, the above is easily seen to be $\Theta(A')$, giving \Cref{eq:second}.
\end{proof}

To finish the proof of \Cref{clm:Kolmogorov_dist}, we recall the Berry-Esseen theorem which upper bounds the Kolmogorov-Smirnov distance between a normalized sum of independent random variables and the standard $N(0,1)$ Gaussian distribution:

\begin{theorem}[Berry-Esseen Theorem]
Let $\bX_1,\ldots,\bX_n$ be i.i.d.~random variables with $\E[\bX_i]=0$, $\E[\bX_i^2]=\sigma^2$, and $\E[|\bX_i|^3] = \rho$. Then the sum $\bY=\frac{\bX_1+\cdots+\bX_n}{\sigma \sqrt{n}}$ has $|\Pr[\bY \le \alpha]-\Pr[N(0,1) \le \alpha]| \leq O(1) \cdot \frac{\rho}{\sigma^3 \sqrt{n}}$ for any $\alpha$.
\end{theorem}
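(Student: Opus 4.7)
The plan is to prove the Berry--Esseen theorem via the classical characteristic-function method of Esseen. The central device is Esseen's smoothing inequality, which converts a supremum bound on CDF differences into an integral bound on the difference of characteristic functions: for any distribution function $F$, and a reference distribution $G$ whose density satisfies $|G'| \le M$, and any $T > 0$,
\[
\sup_{\alpha \in \R} |F(\alpha) - G(\alpha)| \;\le\; \frac{1}{\pi}\int_{-T}^{T} \left|\frac{\phi_F(t) - \phi_G(t)}{t}\right| dt \;+\; \frac{24 M}{\pi T},
\]
where $\phi_F,\phi_G$ are the characteristic functions of $F,G$. I would apply this with $F$ the CDF of $\bY$ and $G$ the standard-normal CDF (so $M = 1/\sqrt{2\pi}$ and $\phi_G(t) = e^{-t^2/2}$), choosing the truncation parameter $T = c \,\sigma^3 \sqrt{n}/\rho$ for a sufficiently small absolute constant $c > 0$; this choice makes the second term $O(\rho/(\sigma^3 \sqrt{n}))$ automatically, so the whole task is to bound the integral by the same quantity.

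Next I would analyze the characteristic function of $\bY$. By independence, $\phi_F(t) = \phi(t/(\sigma\sqrt n))^n$ where $\phi$ is the common characteristic function of the $\bX_i$. Using $\E[\bX_i] = 0$, $\E[\bX_i^2] = \sigma^2$, and the standard Taylor remainder estimate $|e^{iu} - 1 - iu + u^2/2| \le |u|^3/6$, I would obtain
\[
\phi(t/(\sigma\sqrt n)) \;=\; 1 - \frac{t^2}{2n} + r(t), \qquad |r(t)| \le \frac{\rho\,|t|^3}{6\sigma^3 n^{3/2}}.
\]
For $|t| \le T$ with $c$ small, the quantity $u := -t^2/(2n) + r(t)$ satisfies $|u| \le 1/2$, so $\log(1+u)$ is defined via its principal branch and $|\log(1+u) - u| \le |u|^2$. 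Multiplying by $n$ and exponentiating yields $\phi_F(t) = \exp(-t^2/2 + E(t))$ with an error term $|E(t)| = O(\rho|t|^3/(\sigma^3 \sqrt n))$ on the same range (the $u^2$ term contributes $O(t^4/n)$, which is absorbed since $|t|/\sqrt n \le c\sigma^3/\rho$ by our choice of $T$, and $\E[|\bX|^3] \ge \sigma^3$ by Jensen, so $\sigma^3/\rho \le 1$).

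The final step is to write $|\phi_F(t) - e^{-t^2/2}| = e^{-t^2/2}|e^{E(t)} - 1|$, use $|e^{E(t)} - 1| \le 2|E(t)|$ when $|E(t)| \le 1$ (which holds on $|t|\le T$ after possibly shrinking $c$), and integrate:
\[
\int_{-T}^{T} \left|\frac{\phi_F(t) - e^{-t^2/2}}{t}\right| dt \;\le\; \frac{C\rho}{\sigma^3\sqrt n}\int_{-\infty}^{\infty} t^2\, e^{-t^2/2}\,dt \;=\; O\!\left(\frac{\rho}{\sigma^3 \sqrt n}\right),
\]
which combined with the smoothing inequality finishes the proof. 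The main obstacle is the bookkeeping needed to justify taking the complex logarithm $n\log \phi(t/(\sigma\sqrt n))$ along the path and to keep both the Taylor remainder of $\phi$ and the remainder of $\log(1+u)$ on the same $O(\rho|t|^3/(\sigma^3\sqrt n))$ budget; the choice $T = \Theta(\sigma^3\sqrt n/\rho)$ is precisely the largest range on which that bookkeeping goes through, and it is exactly matched by the Gaussian tail $e^{-t^2/2}$ that tames the integrand. Note that the conclusion is vacuous unless $\rho/(\sigma^3\sqrt n)$ is small (say less than $1$), which is the regime of interest and under which the above analysis is valid.
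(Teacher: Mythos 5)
The paper does not prove this statement at all: the Berry--Esseen theorem is quoted as a classical result and used as a black box in the Gaussian lower-bound argument (the proof of \Cref{clm:Kolmogorov_dist}), so there is no in-paper proof to compare against. Your proposal is the standard Esseen characteristic-function proof (smoothing inequality plus a Taylor/logarithm expansion of $\phi_F$), and most of the bookkeeping is right, but the final step contains a genuine error.

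The problem is the assertion that $|E(t)|\le 1$ holds on all of $|t|\le T$ ``after possibly shrinking $c$.'' Your own estimate gives $|E(t)| = O(\rho|t|^3/(\sigma^3\sqrt n))$, and at $|t|=T=c\sigma^3\sqrt n/\rho$ this is of order $c^3\sigma^6 n/\rho^2$; for Rademacher summands, where $\rho=\sigma^3$, that is $c^3 n$, which tends to infinity with $n$ for any fixed constant $c$. Shrinking $c$ enough to force $|E|\le 1$ would make $c$ depend on $n$, and then the smoothing term $24M/(\pi T)$ is no longer $O(\rho/(\sigma^3\sqrt n))$. So the inequality $|e^{E(t)}-1|\le 2|E(t)|$ cannot be invoked on the whole range. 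The standard repair is: on $|t|\le T$ one has $\rho|t|/(\sigma^3\sqrt n)\le c$, hence $|E(t)|\le C'c\,t^2\le t^2/4$ once $c$ is small; then use the unconditional bound $|e^z-1|\le|z|e^{|z|}$ to get $e^{-t^2/2}\,|e^{E(t)}-1|\le |E(t)|\,e^{-t^2/4}$, and integrate $|E(t)|e^{-t^2/4}/|t| = O\big(\rho/(\sigma^3\sqrt n)\big)\,t^2e^{-t^2/4}$ over $\R$. With that substitution the argument closes; everything else in your outline --- the smoothing inequality with $T=\Theta(\sigma^3\sqrt n/\rho)$, the remainder bound $|r(t)|\le\rho|t|^3/(6\sigma^3n^{3/2})$, and the use of $\sigma^3\le\rho$ to absorb the $t^4/n$ term from the logarithm --- is correct.
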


For the rest of this section let $\sigma$ denote the standard deviation of $\E[R(\bw,\ol{y})]$. Note that the Kolmogorov-Smirnov distance between $\frac{1}{n} \sum_{\ell \in [n]} R\left( (\bx^{(1)}_\ell,\ldots,\bx^{(t)}_\ell),\overline{y} \right)$ and the Gaussian distribution $N( \E[R(\bw,\ol{y})], \Var(R(\bw,\ol{y}))/n)$ is the same as the Kolmogorov-Smirnov distance between $\frac{1}{\sigma \sqrt{n}} \sum_{\ell \in [n]} \left( R\big( (\bx^{(1)}_\ell,\ldots,\bx^{(t)}_\ell),\overline{y} \big) - \E[R] \right)$ and $N(0,1)$.
By the Berry-Esseen theorem, this latter distance is
\begin{equation} \label{eq:be-output}
\frac{O(1)}{\sqrt{n}} \cdot \frac{\E\big[ \big|R(\bw,\ol{y})-\E[R] \big|^3 \big]}{\Var\big[ R(\bw,\ol{y}) \big]^{3/2}} = \frac{O(1)}{\sqrt{n}} \cdot \frac{\E\big[ \big|R(\bw,\ol{y})-\E[R] \big|^4 \big]^{1/2}}{\E\big[ \big|R(\bw,\ol{y})-\E[R] \big|^2 \big]},
\end{equation}
where we used the Cauchy-Schwartz inequality to bound $\E\big[ \big|R(\bw,\ol{y})-\E[R] \big|^3 \big]$ by $\E\big[ \big|R(\bw,\ol{y})-\E[R] \big|^2 \big]^{1/2} \cdot \E\big[ \big|R(\bw,\ol{y})-\E[R] \big|^4 \big]^{1/2}$. By~\Cref{cor:bound_centered_2nd_4th}, \Cref{eq:be-output} is at most 

\begin{equation}\label{eq:cal_distance}
\frac{O(1)}{\sqrt{n}} \cdot \frac{(1/\sqrt{2\pi})^{2t} \cdot (\frac{1}{c^3 \sqrt{4 + c^2}})^{t/2} \cdot e^{-\frac{ \|\ol{y}\|_2^2 }{2 + c^2/2}/2}}{(1/\sqrt{2\pi})^{2t} \cdot (\frac{1}{c \sqrt{2 + c^2}})^{t} \cdot e^{-\frac{\|\ol{y}\|_2^2}{2 + c^2}}} = \frac{O(1)}{\sqrt{n}} \cdot \left( \frac{2+c^2}{c \sqrt{4+c^2}} \right)^{t/2} \cdot e^{\|\ol{y}\|_2^2(\frac{1}{2+c^2} - \frac{1}{4+c^2})}.
\end{equation}

Recalling that $\ol{\by} \sim N(0,1+c^2)^t$ and that $t = \frac{\log n}{10 \log 1/c}$, it follows that with probability $1-\exp(-0.01 \cdot \frac{\log^2 n}{t})$ over the draw of $\ol{\by}$ we have $\|\ol{\by}\|_2^2 \le \frac{\log n}{2}$.
Hence we can bound \Cref{eq:cal_distance} by 
\[
\frac{O(1)}{\sqrt{n}} \cdot \left(\frac{2}{c}\right)^{t/2} \cdot e^{\|y\|_2^2/2},
\]
which is at most $1/n^{0.1}$ since $t = \frac{\log n}{10 \log 1/c}$ and $\|y\|_2^2 \le \frac{\log n}{2} $.
This concludes the proof of~\Cref{clm:Kolmogorov_dist}.


\subsection{Proof of \Cref{cor:stat_dist_k_sparse}}\label{sec:proof_k_sparse}

In this subsection we extend the proof of \Cref{lem:Gaussian_one_sparse} to finish the proof of \Cref{cor:stat_dist_k_sparse}.  The proof is very similar to the proof of \Cref{lem:Gaussian_one_sparse} with some minor differences.

The pdf of ${\cal S}_{\no}'$ at any point $(\overline{x},\overline{y})$ is the same as ${\cal S}_{\no}(\overline{x},\overline{y}).$  The pdf of ${\cal S}_{\yes}'$ is
\begin{equation}\label{eq:distribution_yes_prime}
{\cal S}_{\yes}'(\overline{x},\overline{y}) = \frac{k}{n} \sum_{\ell \in [0,n/k-1]} \left( \prod_{i \in [t]} \left[  \left( \prod_{j=1}^n \frac{1}{\sqrt{2 \pi}} e^{- (x^{(i)}_j)^2/2} \right) \cdot \frac{1}{\sqrt{2 \pi c^2}} e^{-\frac{\big(
\by^{(i)} - \frac{\bx^{(i)}_{\ell k+1}+\cdots+\bx^{(i)}_{\ell k+k}}{\sqrt{k}}
\big)^2}{2 c^2}} \right] \right).
\end{equation}

We begin by stating an analogue of \Cref{cor:probability_in_S}, which implies \Cref{cor:stat_dist_k_sparse} in the same way that \Cref{cor:probability_in_S} implies \Cref{lem:Gaussian_one_sparse}:

\begin{corollary} \label{cor:probability_in_Sprime}
Define the subset $S' \subseteq \R^{t \times n} \times \R^t$,
$S':=\{ (\overline{\bx},\overline{\by})  : {\cal S}_{\no}'(\ol{x},\ol{y})>
{\cal S}_{\yes}'(\ol{x},\ol{y})\}$.
Then both
$\underset{(\overline{\bx},\overline{\by}) \sim {\cal S}_{\no}' }{\Pr} \left[ (\overline{\bx},\overline{\by}) \in S' \right]$ and $\underset{(\overline{\bx},\overline{\by}) \sim {\cal S}_{\yes}' }{\Pr} \left[ (\overline{\bx},\overline{\by}) \in \overline{S'} \right]$ are at most ${\frac 1 2} + \frac{1}{(n/k)^{\Omega(1)}}$.\end{corollary}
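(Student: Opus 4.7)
The plan is to reduce the $k$-sparse case to the $1$-sparse case treated in \Cref{lem:Gaussian_one_sparse} by a Gaussian averaging/rotation argument. For each $j \in [t]$ and $\ell \in \{0,1,\ldots,n/k-1\}$, define
\[
\bu^{(j)}_{\ell} := \frac{\bx^{(j)}_{\ell k+1}+\cdots+\bx^{(j)}_{\ell k+k}}{\sqrt{k}}.
\]
By rotational invariance of the standard Gaussian, for each $j$ and each block $\ell$ the random variable $\bu^{(j)}_{\ell}$ is $N(0,1)$, and moreover the $\bu^{(j)}_{\ell}$'s are mutually independent across $(j,\ell)$. Furthermore, we can decompose $\bx^{(j)}_{\ell k + i} = \bu^{(j)}_{\ell}/\sqrt{k} + \bv^{(j)}_{\ell,i}$, where $\bv^{(j)}_{\ell, \cdot}$ lives in the hyperplane $\{v : \sum_i v_i = 0\}$ and is independent of the $\bu$'s (by rotation invariance). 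Crucially, the conditional distribution of $\bv$ given $\bu$ is identical in both ${\cal S}'_{\no}$ and ${\cal S}'_{\yes}(k)$, since it is determined solely by the Gaussian distribution of $\overline{\bx}$ (which is the same in both settings). Moreover, the label $\overline{\by}$ is a function only of $\overline{\bu}$ (and independent noise / the index $\bi$) in both ${\cal S}'_{\yes}(k)$ and ${\cal S}'_{\no}$.

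By a standard data-processing argument, since $(\overline{\bx},\overline{\by}) = \phi(\overline{\bu},\overline{\bv},\overline{\by})$ for a fixed deterministic map $\phi$, and since $\overline{\bv}$ has the same conditional distribution given $\overline{\bu}$ under both ${\cal S}'_{\no}$ and ${\cal S}'_{\yes}(k)$, it follows that
\[
\dtv\big({\cal S}'_{\no},\,{\cal S}'_{\yes}(k)\big) = \dtv\big(\mathrm{Law}_{{\cal S}'_{\no}}(\overline{\bu},\overline{\by}),\;\mathrm{Law}_{{\cal S}'_{\yes}(k)}(\overline{\bu},\overline{\by})\big).
\]
But inspecting the definitions, the induced distribution on $(\overline{\bu},\overline{\by})$ under ${\cal S}'_{\no}$ is precisely the distribution ${\cal S}_{\no}$ of \Cref{lem:Gaussian_one_sparse} with ambient dimension $n/k$ (since $\overline{\bu} \sim N(0,1)^{t \times (n/k)}$ and $\overline{\by} \sim N(0,1+c^2)^t$ independently). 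Similarly, the induced distribution on $(\overline{\bu},\overline{\by})$ under ${\cal S}'_{\yes}(k)$ is exactly ${\cal S}_{\yes}$ of \Cref{lem:Gaussian_one_sparse} with ambient dimension $n/k$, because $\by^{(j)} = \bu^{(j)}_{\bi} + N(0,c^2)$ with $\bi$ uniform on $\{0,\ldots,n/k-1\}$.

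Therefore, by directly invoking \Cref{cor:probability_in_S} with the ambient dimension parameter replaced by $n/k$ and the sample size $t = \tfrac{\log(n/k)}{10 \log 1/c}$, we obtain that
\[
\underset{(\overline{\bx},\overline{\by}) \sim {\cal S}'_{\no}}{\Pr}\bigl[(\overline{\bx},\overline{\by}) \in S'\bigr] \;\le\; \tfrac{1}{2} + \tfrac{1}{(n/k)^{\Omega(1)}},
\]
and similarly for the probability under ${\cal S}'_{\yes}(k)$ of lying in $\overline{S'}$. The main (modest) obstacle is simply verifying the reduction rigorously, i.e., that conditioning on the sufficient statistic $\overline{\bu}$ removes all information that distinguishes ${\cal S}'_{\no}$ from ${\cal S}'_{\yes}(k)$ beyond what is already captured by the $1$-sparse problem on $n/k$ coordinates; once this is in place, the rest is a direct citation of \Cref{cor:probability_in_S}. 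Since $k \le n^{1-\tau}$, we have $n/k \to \infty$ with $n$, so the error term $1/(n/k)^{\Omega(1)}$ is $o_n(1)$, matching the statement of the corollary.
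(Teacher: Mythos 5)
Your proof is correct and rests on the same essential observation as the paper's: the normalized block sums $\bu^{(j)}_\ell = (\bx^{(j)}_{\ell k+1}+\cdots+\bx^{(j)}_{\ell k+k})/\sqrt{k}$ are i.i.d.\ $N(0,1)$, and the whole $k$-sparse problem factors through them, so it is literally a copy of the $1$-sparse problem in dimension $n/k$. Where the paper re-runs the Berry--Esseen argument on the quantity $\frac{k}{n}\sum_\ell R(\cdot,\overline{y})$ (noting ``each variable $\bx^{(j)}_i$ appears only once'' so the summands are independent), you package the same fact as a sufficient-statistic reduction and then cite \Cref{cor:probability_in_S} directly. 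One point worth stating explicitly to make the citation of \Cref{cor:probability_in_S} airtight: it is not enough that the TV distance of the $(\ov{\bu},\ov{\by})$-marginals equals $\dtv({\cal S}'_{\no},{\cal S}'_{\yes}(k))$; you also need that the set $S'$ itself is a function of $(\ov{\bu},\ov{\by})$ and that it coincides with the set $S$ from \Cref{cor:probability_in_S} at ambient dimension $n/k$. This holds because the common factor $\prod_{i,j}\frac{1}{\sqrt{2\pi}}e^{-(x^{(i)}_j)^2/2}$ cancels in the likelihood ratio ${\cal S}'_{\no}/{\cal S}'_{\yes}$, leaving an expression in $(\ov{\bu},\ov{\by})$ alone that is identical to the $1$-sparse ratio; you gesture at this with the phrase ``the rest is a direct citation,'' but spelling out the cancellation closes the gap cleanly.
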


It remains to establish \Cref{cor:probability_in_Sprime}.
As argued earlier, since $\overline{\bx}$ is generated the same way in ${\cal S}_{\no}'$ and ${\cal S}_{\yes}'$, we rewrite $\underset{(\overline{\bx},\overline{\by}) \sim {\cal S}_{\no}' }{\Pr}\left[{\cal S}_{\no}'(\overline{\bx},\overline{\by}) \ge {\cal S}_{\yes}'(\overline{\bx},\overline{\by})] \right]$ as
$$
\underset{(\overline{\bx}, \overline{\by})\sim {\cal S}_{\no}'}{\Pr}\left[ {\cal S}(\overline{\bx}) \cdot \prod_{i \in [t]} \frac{e^{-\frac{(\by^{(i)})^2}{2(1+c^2)}}}{\sqrt{2 \pi (1+c^2)}} \ge {\cal S}(\overline{\bx}) \cdot \frac{k}{n} \sum_{\ell \in [0,n/k-1]} \prod_{i \in [t]} \frac{e^{-\big(\by^{(i)} - \frac{\bx^{(i)}_{\ell k+1}+\cdots+\bx^{(i)}_{ \ell k+k}}{\sqrt{k}} \big)^2/(2 c^2)}}{\sqrt{2 \pi c^2}} \right].
$$

Thus it is sufficient to show that w.h.p. over $\overline{\by} \sim N(0,1+c^2)^t$,
\begin{equation}\label{eq:closeness_2}
\underset{\overline{\bx} \sim N(0,1)^{t\times n}}{\Pr} \left[ \prod_{i \in [t]} \frac{1}{\sqrt{2 \pi (1+c^2)}} e^{-\frac{(\by^{(i)})^2}{2(1+c^2)}} \ge \frac{k}{n} \sum_{\ell \in [0,n/k-1]} \prod_{i \in [t]} \frac{e^{-\big(\by^{(i)} - \frac{\bx^{(i)}_{\ell k+1}+\cdots+\bx^{(i)}_{\ell k+k}}{\sqrt{k}} \big)^2/(2 c^2)}}{\sqrt{2 \pi c^2}} \right] \approx 0.5.
\end{equation}

As before we use $R(\overline{w},\overline{y})$ to denote $\prod_{i \in [t]} \frac{1}{\sqrt{2 \pi c^2}} e^{-\frac{\big( y_i - w_i \big)^2}{2 c^2}}$ for $\overline{w} \in \mathbb{R}^t$ and $\overline{y} \in \mathbb{R}^t$, so the right hand side of \eqref{eq:closeness_2} becomes
$$
\frac{k}{n} \underset{\ell \in [0,n/k-1]}{\sum} R\left( \left(\frac{\bx^{(j)}_{\ell k}+\cdots+\bx^{(j)}_{\ell k+k-1}}{\sqrt{k}} \right)_{j \in [t]}, \overline{y} \right)
$$
by definition.

Finally, we bound the Kolmogorov-Smirnov distance.  Given the next claim, the proof of \Cref{cor:probability_in_Sprime} is concluded following the argument at the end of \Cref{sec:one_sparse} that proves \Cref{clm:adv_random_guess} using~\Cref{clm:Kolmogorov_dist}:

\begin{claim}\label{clm:Kolmogorov_dist_2}
With probability $1-1/\poly(n/k)$ over the outcome $\ol{y}$ of a random $\overline{\by} \sim N(0,1+c^2)^t$, the random variable 
$\frac{k}{n} \sum_{\ell=0}^{n/k-1} R\left( \big(\frac{\bx^{(j)}_{\ell k}+\cdots+\bx^{(j)}_{\ell k+k-1}}{\sqrt{k}} \big)_{j \in [t]}, \overline{y} \right)$
(where the randomness is over $\overline{\bx} \sim N(0,1)^{t\times n}$) is $1/(n/k)^{\Omega(1)}$-close in Kolmogorov-Smirnov distance to a Gaussian random variable with mean $(2\pi(1+c^2))^{-t/2} \cdot e^{-\frac{\|\overline{y}\|_2^2}{2(1+c^2)}}$ and variance $\frac{k}{n} \cdot \Theta \bigg( (2 \pi \cdot c \sqrt{2 + c^2})^{-t} \cdot e^{-\frac{\|\overline{y}\|_2^2}{2 + c^2}} \bigg)$ .
\end{claim}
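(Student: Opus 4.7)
The proof will be essentially a reduction to \Cref{clm:Kolmogorov_dist}, exploiting two simple observations. First, for distinct values of $\ell \in \{0,1,\dots,n/k-1\}$ the index sets $\{\ell k, \ell k+1,\dots,\ell k + k-1\}$ are disjoint, so the $t$-dimensional random vectors
\[
\bw^{(\ell)} := \left(\tfrac{\bx^{(j)}_{\ell k}+\cdots+\bx^{(j)}_{\ell k+k-1}}{\sqrt{k}}\right)_{j \in [t]}, \qquad \ell=0,\dots,n/k-1,
\]
are mutually independent. Second, each $\bw^{(\ell)}$ is (coordinate-wise) a normalized sum of $k$ independent $N(0,1)$'s and hence is itself distributed as $N(0,1)^t$. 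Therefore the random variable of interest can be written as $\frac{1}{N}\sum_{\ell=0}^{N-1} R(\bw^{(\ell)}, \overline{y})$ where $N := n/k$ and the $\bw^{(\ell)}$ are i.i.d.~copies of $\bw \sim N(0,1)^t$ --- exactly the setup of \Cref{clm:Kolmogorov_dist}, but with $n$ replaced by $N = n/k$.

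\textbf{Execution of the plan.} First I would invoke \Cref{lem:expect1} (equivalently, \eqref{eq:useme}) to compute the mean of each $R(\bw^{(\ell)},\overline{y})$, which is precisely $(2\pi(1+c^2))^{-t/2} \cdot e^{-\|\overline{y}\|_2^2/(2(1+c^2))}$, giving the stated mean of the limiting Gaussian. Next I would carry over the second- and fourth-moment computations of \Cref{cor:bound_centered_2nd_4th} verbatim --- they only use the fact that $\bw \sim N(0,1)^t$ --- and divide the variance by $N=n/k$ rather than by $n$. This immediately yields the asserted variance $\tfrac{k}{n}\cdot\Theta\bigl((2\pi c\sqrt{2+c^2})^{-t}\cdot e^{-\|\overline{y}\|_2^2/(2+c^2)}\bigr)$.

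\textbf{Concluding the Kolmogorov bound.} I would then apply the Berry-Esseen theorem exactly as in the chain ending at \eqref{eq:cal_distance}, but with $n$ replaced by $N=n/k$ and with the same Cauchy-Schwarz bound on the third absolute central moment. This yields
\[
\text{(Kolmogorov distance)} \;\le\; \frac{O(1)}{\sqrt{n/k}}\cdot\left(\frac{2+c^2}{c\sqrt{4+c^2}}\right)^{t/2}\cdot e^{\|\overline{y}\|_2^2\left(\frac{1}{2+c^2}-\frac{1}{4+c^2}\right)}.
\]
Finally, since $\overline{\by}\sim N(0,1+c^2)^t$ with $t = \tfrac{\log(n/k)}{10\log(1/c)}$, the same sub-exponential tail bound used in \Cref{sec:one_sparse} gives $\|\overline{\by}\|_2^2\le \tfrac{\log(n/k)}{2}$ except with probability $1/\poly(n/k)$; plugging this in and using $t = \tfrac{\log(n/k)}{10\log(1/c)}$ yields a Kolmogorov distance of at most $1/(n/k)^{0.1}$, which is $o_{n}(1)$ since $k\le n^{1-\tau}$.

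\textbf{Where the work is (and isn't).} There is no new analytic obstacle here; the entire difficulty of this calculation was already overcome in \Cref{clm:Kolmogorov_dist}. The only substantive check is that the $\bw^{(\ell)}$'s are genuinely i.i.d.~and distributed as $N(0,1)^t$, which follows from disjointness of index blocks and rotational invariance/stability of Gaussians under normalized sums. Apart from that, every bound in \Cref{sec:one_sparse} transports with $n\mapsto n/k$, and the hypothesis $k\le n^{1-\tau}$ (so that $\log(n/k) = \Omega(\log n)$) is what keeps the choice of $t$ meaningful and guarantees that both the Berry-Esseen error and the tail probability remain $o_n(1)$.
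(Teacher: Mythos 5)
Your proposal is correct and takes essentially the same approach as the paper: both observe that the block-averaged vectors $\bw^{(\ell)}$ are i.i.d.~$N(0,1)^t$ (since the index blocks are disjoint and a normalized sum of $k$ independent standard Gaussians is again standard Gaussian), so the sum in question is identically distributed to the one in \Cref{clm:Kolmogorov_dist} with $n$ replaced by $n/k$, and then \Cref{clm:Kolmogorov_dist} is applied with that substitution. The paper invokes \Cref{clm:Kolmogorov_dist} as a black box at that point; your re-derivation of the moment and Berry--Esseen bounds just spells out what that black-box application does.
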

\begin{proof}
Note that the random variable $\frac{k}{n} \underset{\ell \in [0,n/k-1]}{\sum} R\left( \big(\frac{\bx^{(j)}_{\ell k}+\cdots+\bx^{(j)}_{\ell k+k-1}}{\sqrt{k}} \big)_{j \in [t]}, \overline{y} \right)$ is distributed the same as $\frac{k}{n} \underset{\ell \in [n/k]}{\sum} R\left( \big( \bx'^{(j)}_\ell \big)_{j \in [t]}, \overline{y} \right)$ for a random $\bx' \sim N(0,1)^{t \times n/k}$, because each variable $\bx^{(j)}_i$ only appears once. Thus we can apply \Cref{clm:Kolmogorov_dist} to bound the Kolmogorov-Smirnov distance between $\frac{k}{n} \underset{\ell \in [n/k]}{\sum} R\left( \big( \bx'^{(j)}_\ell \big)_{j \in [t]}, \overline{y} \right)$ for $\bx' \sim N(0,1)^{t \times n/k}$ and \[
N\left(
(2\pi(1+c^2))^{-t/2} \cdot e^{-\frac{\|\overline{y}\|_2^2}{2(1+c^2)}} ,
\frac{k}{n} \cdot \Theta \bigg( (2 \pi \cdot c \sqrt{2 + c^2})^{-t} \cdot e^{-\frac{\|\overline{y}\|_2^2}{2 + c^2}} \bigg)
\right)
\] by $\frac{1}{(n/k)^{\Omega(1)}}$, and the claim is proved.
\end{proof}

\section{Refinement of \Cref{thm:gaussian-lb}: a sample complexity lower bound based on cumulant size} \label{sec:refinement}

As mentioned in \Cref{sec:techniques-lb}, in this section we use \Cref{thm:gaussian-lb} to show, roughly speaking, that if the cumulants of a real random variable $\bX$ are small then many samples are required to test linearity under $\bX^n$ and Gaussian noise.

\begin{theorem} \label{thm:refinement}
For any $0 < \gamma<10^{-4}$ there exists a real random variable $\bX$ with mean 0, variance 1, and moments of all orders, such that
\begin{enumerate}
\item For any constant $\ell>2$, the $\ell$-th cumulant $\cum_\ell(\bX)$ is at most $O_\ell(\gamma)$;
\item For any constant $c>0$, any algorithm which is an $(\eps=0.99)$-tester of 1-linearity under ${\cal D} = \bX^n$ and Gaussian noise $N(0,c^2)$ must have sample complexity $\Omega \left(\min\{ \frac{1}{\gamma}, \log n \} \right)$.
\end{enumerate}
\end{theorem}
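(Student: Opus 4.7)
Let $\bX$ be the mixture that equals $N(0,1)$ with probability $1-\gamma$ and the Rademacher variable uniform on $\{-1,+1\}$ with probability $\gamma$.  Then $\bX$ has mean $0$, variance $1$, and finite moments of every order.  For item~1, a routine induction on $\ell$ using the recurrence in \Cref{fact:transfer_moments_cumulants} gives $\cum_\ell(\bX) = O_\ell(\gamma)$ for every constant $\ell \geq 3$: each moment satisfies $m_j(\bX) - m_j(N(0,1)) = \gamma(m_j(\mathrm{Rad}) - m_j(N(0,1))) = O_j(\gamma)$, and since $\cum_\ell(N(0,1))=0$ for $\ell \geq 3$, inductively every higher cumulant of $\bX$ is $O_\ell(\gamma)$ as well.

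\textbf{Reduction and coupling.}  For item~2, given any hypothetical $t$-sample tester $T$ for $1$-linearity under $\bX^n$ with noise $N(0,c^2)$, we construct a $t$-sample tester $T'$ for $1$-linearity under $N(0,1)^n$ with the same noise.  On Gaussian samples $\{(\bx^{(j)},\by^{(j)})\}_{j=1}^t$, $T'$ independently draws, for each $j \in [t]$ and $k \in [n]$, $\boldsymbol{B}^{(j)}_k \sim \Bern(\gamma)$ and a Rademacher $\boldsymbol{R}^{(j)}_k$; sets $\tilde\bx^{(j)}_k = \boldsymbol{R}^{(j)}_k$ when $\boldsymbol{B}^{(j)}_k = 1$ and $\tilde\bx^{(j)}_k = \bx^{(j)}_k$ otherwise; feeds $\{(\tilde\bx^{(j)},\by^{(j)})\}$ to $T$; and outputs $T$'s answer.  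Each $\tilde\bx^{(j)}$ has marginal distribution $\bX^n$.  Fix any target $w$ with $s := |\mathrm{supp}(w)| \leq 100$, which covers both the $1$-sparse yes case and a $100$-sparse no case modelled on \Cref{cor:stat_dist_k_sparse}.  Under the natural coupling that identifies the $\bX$-problem reference sample as $(\tilde\bx^{(j)}, w \cdot \tilde\bx^{(j)} + (\by^{(j)} - w \cdot \bx^{(j)}))$, the reference and reduced samples agree precisely when $w \cdot \bx^{(j)} = w \cdot \tilde\bx^{(j)}$, i.e.~when $\boldsymbol{B}^{(j)}_k = 0$ for every $k \in \mathrm{supp}(w)$.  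This happens with probability $(1-\gamma)^s \geq 1-100\gamma$ per sample, so the tensorized TV distance over $t$ samples between $T$'s actual input and the genuine $\bX$-problem distribution for the same $w$ is at most $100 t \gamma$.

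\textbf{Conclusion.}  If $T$ succeeds with probability $9/10$ on the $\bX$-problem in both yes and no cases, then by the TV bound $T'$ succeeds with probability at least $9/10 - 100 t\gamma$ on the corresponding Gaussian-problem cases.  Whenever $t \leq 1/(1000\gamma)$ this is at least $4/5$, so $T'$ is a Gaussian tester with success probability bounded away from $1/2$; the TV-based proof of \Cref{thm:gaussian-lb} then forces $t = \Omega(\log n)$.  Otherwise $t > 1/(1000\gamma)$ already gives $t = \Omega(1/\gamma)$.  Either way $t = \Omega(\min(1/\gamma, \log n))$.  The main subtle point is invoking the stronger conclusion implicit in \Cref{thm:gaussian-lb}'s proof: it bounds the TV distance between the Gaussian yes and no distributions by $o(1)$ and thereby rules out any tester with success probability bounded away from $1/2$, not only $9/10$-success testers---this is essential because we cannot afford to amplify $T$ without blowing up the sample complexity.
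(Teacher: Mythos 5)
Your proof is correct and follows essentially the same approach as the paper: the same Gaussian/Rademacher mixture, the same coordinate-replacement reduction (you make the coupling explicit where the paper argues informally), and the same invocation of the TV bound underlying \Cref{thm:gaussian-lb}. The only minor divergence is in part~1, where you bound the cumulants via the moment--cumulant recurrence and continuity, while the paper computes $\ln M_{\bX}(t)$ directly and reads off the Taylor coefficients; both routes are valid and elementary.
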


\begin{proof}
The real random variable $\bX$ is a mixture of the Gaussian $N(0,1)$ and the Bernoulli random variable $\{\pm 1\}$: a draw from $\bX$ is distributed as
\[ 
  \begin{cases}
    N(0,1)       & \quad \text{with probability } 1-\gamma \\
    \{ \pm 1 \}  & \quad \text{with probability } \gamma.
  \end{cases}
\]
It is straightforward to verify that $\bX$ has mean zero and unit variance.

We first show that for any constant $\ell>2$, the $\ell$-th cumulant $\cum_l(\bX)$ is $O(\gamma)$. Notice that the moment generating function $M(t)={\E}[e^{t\bX}]$ is $(1-\gamma)e^{t^2/2} + \frac{\gamma}{2} \cdot e^{-t} + \frac{\gamma}{2} \cdot e^{t}$. We rewrite this as
\begin{align*}
M(t) = (1-\gamma)e^{t^2/2} \left(1 + \frac{\gamma}{2(1-\gamma)} e^{-t-t^2/2} + \frac{\gamma}{2(1-\gamma)} e^{t-t^2/2} \right).
\end{align*}
Consequently 
\begin{align*}
\ln M(t) &= \ln \left[ (1-\gamma)e^{t^2/2} \left(1 + \frac{\gamma}{2(1-\gamma)} e^{-t-t^2/2} + \frac{\gamma}{2(1-\gamma)} e^{t-t^2/2} \right) \right] \\
& = \ln (1-\gamma) + t^2/2 + \ln \left(1 + \frac{\gamma}{2(1-\gamma)} e^{-t-t^2/2} + \frac{\gamma}{2(1-\gamma)} e^{t-t^2/2} \right)\\
& =  \ln (1-\gamma) + t^2/2 + \sum_{\ell=1}^\infty (-1)^{\ell-1} \cdot \left( \frac{\gamma}{2(1-\gamma)} e^{-t-t^2/2} + \frac{\gamma}{2(1-\gamma)} e^{t-t^2/2}\right)^\ell/\ell.
\end{align*}

For any monomial $t^j$ with $j>2$, its coefficient comes from the $\ell$th powers of the Taylor expansion of $\frac{\gamma}{2(1-\gamma)} e^{-t-t^2/2} + \frac{\gamma}{2(1-\gamma)} e^{t-t^2/2}$, where $\ell \le j$. From the definition $\ln M(t)=\sum_{\ell \geq 0} \frac{\cum_\ell(\bX)}{\ell!} t^\ell$, we have $\cum_\ell(\bX)=O(\gamma)$ for any constant $\ell>2$, giving the first part of the theorem.

Next we fix $t=\alpha \cdot \min\{\frac{1}{\gamma}, \frac{\log n}{\log 1/c} \}$ for a small constant $\alpha$ (say $\alpha \leq 10^{-4}$ for concreteness) and use \Cref{thm:gaussian-lb} to prove the second part.  For contradiction, we assume that there is a testing algorithm $A$ for the distribution ${\cal D}=\bX^n$ that has sample complexity $t$ under Gaussian noise $N(0,c^2)$.

We consider two different distributions for a random target vector $\bw$. The first distribution, denoted ${\cal W}_{\yes}$, is uniform over the $n$ canonical basis vectors $\{e_1,\ldots,e_n\}$. The second distribution, denoted ${\cal W}_{\no}$, is uniform over the following $n/100$ many 100-sparse vectors with disjoint supports: $\{  e_{100i+1} + \cdots + e_{100(i+1)} \}_{i \in \{0,1,\ldots,n/100-1}$. By the above assumption, $A$ with high probability successfully distinguishes  $\bw \sim {\cal W}_{\yes}$ from $\bw \sim {\cal W}_{\no}$ under ${\cal D}=\bX^n$ and noise $N(0,c^2).$

Any $w$ in the support of either ${\cal W}_{\yes}$  or ${\cal W}_{\no}$ is a 100-sparse vector.  Thus for any such $w$, the probability (over $t$ random vectors $\bx^{(1)},\ldots,\bx^{(t)}$ drawn i.i.d.~from $\bX^{n}$) that any $\bx^{(i)}$ has any $\pm 1$-entry in any coordinate where $w$ is nonzero, is at most $100 \gamma t \leq 100 \alpha \leq 0.01$.  In other words, with probability at least $0.99$ all entries of $\bx^{(1)},\ldots,\bx^{(t)}$ in coordinates corresponding to the support of $w$ are from $N(0,1)$.  This implies the existence of an algorithm $A'$ that can successfully distinguish $\bw \sim {\cal W}_{\yes}$ from $\bw \sim {\cal W}_{\no}$ \emph{under $N(0,1)^n$} with high probability:  algorithm $A'$ works simply by independently replacing each of the $n$ coordinates of each example with a draw from $\bits$ with probability $\gamma$ for each coordinate, and running $A$ on the resulting sample. By the argument above, the success probability of $A'$ is at most $0.01$ less than the success probability of $A$.  But the proof of~\Cref{thm:gaussian-lb} shows that no ($t' \le \frac{\log n/100}{10 \log 1/c}$)-sample algorithm can distinguish $\bw \sim {\cal W}_{\yes}$ from $\bw \sim {\cal W}_{\no}$ under $N(0,1)^n$ with advantage $\Omega(1)$. Taking the constant $\alpha$ in the definition of $t$ to be sufficiently small we have $t<t'$, which gives the desired contradiction that establishes the second part and concludes the proof.
\end{proof}

\bibliography{allrefs}{}
\bibliographystyle{alpha}

\appendix


\section{Deferred Proofs}\label{sec:append_proofs}
We first finish the proof of~\Cref{clm:up_bound_cumulant} in \Cref{sec:up_bound_cumulant}. Then we bound the number of samples needed to estimate $m_{\ell}(w \cdot \bX^n)$ and $\kappa_{\ell}(w \cdot \bX^n)$ in \Cref{sec:moment_noiseless} and \Cref{sec:cumulant_noiseless} separately for the noiseless case. Finally we finish the proof of \Cref{lemma:number_samples_k_moments} and \Cref{thm:number_samples_Lk_w} in \Cref{sec:est_noise}.

We record two basic properties of moments that will be used in the next subsections when we are estimating moments and cumulants:
\begin{fact}\label{fact:properties_moments}
For any real random variable $\bX$, for any $n\geq k \geq 1$, we have
\begin{equation}\label{eq:monotone_root_moments}
\E[|\bX|^n]^{1/n} \ge \E[|\bX|^k]^{1/k}
\end{equation}
and
\begin{equation}\label{eq:monotone_moments}
\E[|\bX|^n] \ge \E[|\bX|^k] \E[|\bX|^{n-k}].
\end{equation}
In particular, for $\bX$ with mean zero and variance $\sigma^2$ and even $k$, we have $\E[\bX^k] \ge \E[\bX^2]^{k/2} = \sigma^k$.
\end{fact}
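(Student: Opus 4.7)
The plan is to prove the three statements of Fact~\ref{fact:properties_moments} by standard inequalities (Jensen and an association-type argument), in the order they are listed.

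For \eqref{eq:monotone_root_moments}, I would apply Jensen's inequality to the convex function $\varphi(t) = t^{n/k}$ on $[0, \infty)$. Setting $\bY = |\bX|^k$, which is a non-negative random variable, Jensen gives
\[
\E[|\bX|^n] \;=\; \E[\bY^{n/k}] \;=\; \E[\varphi(\bY)] \;\ge\; \varphi(\E[\bY]) \;=\; (\E[|\bX|^k])^{n/k},
\]
and taking $n$-th roots of both sides yields the claim. (This is just monotonicity of $L^p$ norms, but writing it out explicitly avoids invoking outside machinery.)

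For \eqref{eq:monotone_moments}, the cleanest route is the ``two independent copies'' trick for association of monotonic functions. Let $\bY_1, \bY_2$ be independent copies of $|\bX|$. Since both $y \mapsto y^k$ and $y \mapsto y^{n-k}$ are non-decreasing on $[0,\infty)$, for any realization $(y_1, y_2)$ the quantities $y_1^k - y_2^k$ and $y_1^{n-k} - y_2^{n-k}$ have the same sign, so
\[
(\bY_1^k - \bY_2^k)(\bY_1^{n-k} - \bY_2^{n-k}) \;\ge\; 0 \quad \text{almost surely.}
\]
Expanding the product and taking expectations (using independence of $\bY_1, \bY_2$) gives $2\,\E[|\bX|^n] - 2\,\E[|\bX|^k]\,\E[|\bX|^{n-k}] \ge 0$, which is exactly \eqref{eq:monotone_moments}.

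Finally, for the special case when $k$ is even and $\bX$ has mean zero with variance $\sigma^2$, note that $\E[\bX^k] = \E[|\bX|^k]$ since $k$ is even. Applying \eqref{eq:monotone_root_moments} with the pair $(n, k) \leftarrow (k, 2)$ gives $\E[|\bX|^k]^{1/k} \ge \E[|\bX|^2]^{1/2} = \sigma$, so $\E[\bX^k] \ge \sigma^k$, as claimed. There is no real obstacle here; the only thing to be careful about is that both parts require $\bX$ to be allowed to be any real random variable (the first inequality only uses $|\bX|$, so sign issues are immaterial), and the special case crucially uses that $k$ is even so that $\E[\bX^k]$ equals the absolute $k$-th moment.
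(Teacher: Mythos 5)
Your proof is correct, but it takes a genuinely different route from the paper for both displayed inequalities. For \eqref{eq:monotone_root_moments}, you apply Jensen's inequality to the convex function $t \mapsto t^{n/k}$ on the nonnegative variable $|\bX|^k$, whereas the paper applies H\"older's inequality with $\bA = \bX^k$, $\bB = 1$, $p = n/k$, $q = n/(n-k)$; these amount to the same Lyapunov/power-mean inequality but are packaged differently (your version sidesteps the degenerate case $n=k$ that would make $q$ undefined, though that case is trivially an equality anyway). The bigger divergence is in \eqref{eq:monotone_moments}: the paper simply applies \eqref{eq:monotone_root_moments} twice, with parameter pairs $(n,k)$ and $(n,n-k)$, and multiplies the resulting bounds $\E[|\bX|^n]^{k/n} \ge \E[|\bX|^k]$ and $\E[|\bX|^n]^{(n-k)/n} \ge \E[|\bX|^{n-k}]$. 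You instead give a self-contained ``two independent copies'' argument (a Chebyshev-type correlation inequality) that does not rely on \eqref{eq:monotone_root_moments} at all. Both are standard and correct; the paper's derivation is more economical since it reuses the first inequality, while yours is more modular and makes the positive-association mechanism behind \eqref{eq:monotone_moments} explicit. Your treatment of the final special case (apply \eqref{eq:monotone_root_moments} with $(n,k)\leftarrow(k,2)$, valid since even $k \ge 1$ forces $k\ge 2$, and use $\E[\bX^k]=\E[|\bX|^k]$ for even $k$) matches what the paper implicitly does.
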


\begin{proof}
\Cref{eq:monotone_root_moments} follows by considering two random variables $\bA=\bX^{k}$, $\bB=1$, $p=n/k$ and $q=n/(n-k)$ and applying H\"{o}lder's inequality:
\[
\E[|\bA \bB|] \le \E[|\bA|^p]^{1/p} \E[|\bB|^q]^{1/q}.
\]
To prove \Cref{eq:monotone_moments} we apply \Cref{eq:monotone_root_moments}  twice with parameters $(n,k)$ and $(n,n-k)$, obtaining
\[
\E[|\bX|^n]^{k/n} \ge \E[|\bX|^k] \text{~and~} \E[|\bX|^n]^{(n-k)/n} \ge \E[|\bX|^{n-k}],
\]
which together imply $\E[|\bX|^n] \ge \E[|\bX|^k] \E[|\bX|^{n-k}]$.
\end{proof}

\subsection{Proof of~\Cref{clm:up_bound_cumulant}}\label{sec:up_bound_cumulant}

We fix the random variable $\bX$ in this proof and write simply $\cum_\ell$ for $\cum_\ell(\bX)$, $m_\ell$ for $m_\ell(\bX)$, and $\overline{m}_\ell$ for the $\ell$-th absolute moment $\E[|\bX|^\ell]$.

We apply induction to show that $|\cum_{\ell}| \leq \overline{m}_\ell \cdot {e^{\ell}} \cdot \ell!$ for every $\ell$, which directly implies the claim. \ignore{bound $\cum_\ell$ based on \Cref{eq:degree_l_cum}} 
In the base case $\ell=1$, we have $\cum_\ell=\E[X]=0$.
For the inductive step, we bound $\cum_{\ell+1}$ as follows:

\begin{align*}
|\cum_{\ell+1}| & \le |m_{\ell+1}| + \sum_{j=1}^{\ell} {\ell \choose j-1} |\cum_{j}| \cdot |m_{\ell+1-j}| \tag{\Cref{eq:degree_l_cum}}\\
& \le |m_{\ell+1}| + \sum_{j=1}^\ell \frac{\ell!}{(\ell-j+1)! (j-1)!} \cdot e^j \cdot j! \cdot \ol{m}_j  \cdot |m_{\ell+1-j}| \tag{inductive hypothesis}\\
& \le  \ol{m}_{\ell+1} + \sum_{j=1}^\ell \frac{\ell! \cdot j \cdot e^j}{(\ell-j+1)!} \cdot \ol{m}_j \cdot \ol{m}_{\ell+1-j} \tag{$|m_i| \leq \ol{m}_i$ for all $i$}\\
& \le  \ol{m}_{\ell+1} + \sum_{j=1}^\ell \frac{\ell! \cdot j \cdot e^j}{(\ell-j+1)!} \cdot \ol{m}_{\ell+1}^{j/(\ell+1)} \cdot \ol{m}_{\ell+1}^{(\ell+1-j)/(\ell+1)} \tag{\Cref{eq:monotone_root_moments}}\\
&= \ol{m}_{\ell+1} \parens*{1 + \ell! \sum_{j=1}^\ell {\frac{j \cdot e^{j}}{(\ell-j+1)!}}},\\
\end{align*}
where in the fourth line \Cref{eq:monotone_root_moments} was used to upper bound $\ol{m}_i \leq \ol{m}_{\ell+1}^{i/(\ell+1)} $ with $i=j$ and $i=\ell+1-j$.

To finish the proof we upper bound
\[
1 + \ell! \sum_{j=1}^\ell {\frac{j \cdot e^{j}}{(\ell-j+1)!}} \le 1 + \ell! \cdot \ell \cdot e^{\ell} \sum_{j=1}^\ell {\frac 1 {(\ell-j+1)!}} \le 1 + \ell! \cdot \ell \cdot e^{\ell} \cdot e \le 1 + \ell! \cdot \ell \cdot e^{\ell+1},
\]
which is less than $(l+1)! \cdot e^{\ell+1}$ for any $\ell \ge 1$.
\qed

\subsection{Estimating moments with no noise}\label{sec:moment_noiseless}

\Cref{thm:number_samples_Lk_w} is based on estimating cumulants.  
As a first step towards estimating cumulants, we start by considering how to estimate moments. The main result of this subsection, \Cref{lem:number_samples_moment_Z}, bounds the number of samples needed to estimate the $\ell$-th moment $m_\ell(\bY)$ when there is no noise.

The following upper and lower bounds on $m_\ell(\bY)$ in terms of the moments of $\bX$ will be useful:

\begin{claim}\label{clm:moment_Z}
Let $\E[\bX]=0,\Var[\bX]=1,|m_i(\bX)| < \infty$ for all $i$. For any even $\ell$ and any vector $w$,\ignore{\footnote{\color{brown} Xue: Change the statement for any vector $w$},} the random variable $\bY := w \cdot \bX^n$ satisfies
\[ 
\|w\|_2^{\ell} \leq m_\ell(\bY) \leq  \ell^{2\ell} \cdot m_{\ell}(\bX) \cdot \|w\|_2^{\ell}.
\]
\end{claim}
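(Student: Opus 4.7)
\medskip

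\noindent\textbf{Proof proposal.} Writing $\bY = \sum_{i=1}^n w_i \bX_i$ where $\bX_1, \ldots, \bX_n$ are i.i.d.\ copies of $\bX$, the plan is to attack the two inequalities separately; the lower bound is immediate from Jensen's inequality, while the upper bound requires a careful combinatorial analysis of the multinomial expansion.

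For the lower bound, I observe that since $\ell$ is even, $\bY^\ell \geq 0$ and moreover $\bY^\ell = (\bY^2)^{\ell/2}$ where $x \mapsto x^{\ell/2}$ is convex on $[0,\infty)$. By Jensen's inequality,
\[
m_\ell(\bY) = \E\!\left[(\bY^2)^{\ell/2}\right] \;\geq\; \left(\E[\bY^2]\right)^{\ell/2}.
\]
Independence and $\E[\bX_i]=0,\,\E[\bX_i^2]=1$ give $\E[\bY^2] = \sum_i w_i^2 = \|w\|_2^2$, yielding $m_\ell(\bY) \geq \|w\|_2^\ell$.

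For the upper bound, I expand by the multinomial theorem:
\[
m_\ell(\bY) \;=\; \sum_{\alpha:\sum \alpha_i = \ell} \binom{\ell}{\alpha_1,\ldots,\alpha_n} \prod_{i=1}^n w_i^{\alpha_i} \, \E[\bX^{\alpha_i}].
\]
Since $\E[\bX]=0$, any term with some $\alpha_i=1$ vanishes, so the surviving $\alpha$'s have all nonzero entries $\geq 2$; in particular the support of $\alpha$ has size $s \leq \ell/2$. For each such $\alpha$, the power mean inequality (\Cref{eq:monotone_root_moments}) gives $\E[|\bX|^{\alpha_i}] \leq m_\ell(\bX)^{\alpha_i/\ell}$, so $\prod_i |\E[\bX^{\alpha_i}]| \leq m_\ell(\bX)^{\sum_i \alpha_i/\ell} = m_\ell(\bX)$. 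It thus suffices to bound the combinatorial sum
\[
T \;:=\; \sum_{\substack{\alpha:\sum \alpha_i = \ell \\ \alpha_i \in \{0\} \cup \{2,3,\ldots\}}} \binom{\ell}{\alpha} \prod_i |w_i|^{\alpha_i}
\]
by $\ell^{2\ell} \|w\|_2^\ell$.

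To bound $T$, I group the sum by the support $S$ of $\alpha$ (with $|S|=s \leq \ell/2$). For fixed $S$, writing $\alpha_i = 2 + \beta_i$ with $\beta_i \geq 0$ summing to $\ell-2s$, I use
\[
\prod_{i \in S} |w_i|^{\alpha_i} \;=\; \Bigl(\prod_{i \in S} w_i^2\Bigr)\prod_{i \in S}|w_i|^{\beta_i} \;\leq\; \Bigl(\prod_{i \in S} w_i^2\Bigr)\|w\|_\infty^{\ell-2s} \;\leq\; \Bigl(\prod_{i \in S} w_i^2\Bigr)\|w\|_2^{\ell-2s}.
\]
The number of compositions of $\ell$ into $s$ parts $\geq 2$ is at most $\binom{\ell-s-1}{s-1} \leq 2^\ell$, and each multinomial coefficient is at most $\ell!$. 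Summing over $S$ with $|S|=s$ and using the elementary-symmetric bound $\sum_{|S|=s}\prod_{i \in S}w_i^2 \leq \|w\|_2^{2s}/s!$ (which follows from the multinomial expansion of $(\sum_i w_i^2)^s$), I obtain
\[
T \;\leq\; \sum_{s=1}^{\ell/2} 2^\ell \cdot \ell! \cdot \frac{\|w\|_2^{2s}}{s!} \cdot \|w\|_2^{\ell-2s} \;\leq\; 2^{\ell+1}\,\ell!\,\|w\|_2^\ell \;\leq\; \ell^{2\ell}\,\|w\|_2^\ell,
\]
where the final inequality is an easy check (e.g., $2^{\ell+1}\ell! \leq 4^\ell \ell^\ell \leq \ell^{2\ell}$ for $\ell \geq 4$, with $\ell=2$ verified directly).

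The main obstacle is the combinatorial bookkeeping in the upper bound: although Khintchine/Rosenthal-type inequalities make it morally clear that $m_\ell(\bY) \lesssim_\ell m_\ell(\bX) \|w\|_2^\ell$, obtaining a self-contained proof with a clean explicit constant requires the grouping-by-support argument above together with the elementary symmetric polynomial bound, which together convert an apparent $\binom{n}{s}$-sized combinatorial sum into a dimension-free quantity.
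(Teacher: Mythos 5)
Your proof is correct, and it takes a genuinely different route from the paper's. The paper first invokes the Marcinkiewicz--Zygmund inequality to bound $\E[|\bY|^\ell] \le (\ell-1)^\ell \E\big[(\sum_i w_i^2 \bX_i^2)^{\ell/2}\big]$, and only then performs a multinomial/support-grouping expansion; because the inner random variables $w_i^2\bX_i^2$ are nonnegative, the paper's expansion has degree $\ell/2$ and no sign issues to worry about. You instead expand $(\sum_i w_i\bX_i)^\ell$ directly, kill all terms with some $\alpha_i=1$ using $\E[\bX]=0$ (which is exactly what makes each surviving $\alpha$ have support at most $\ell/2$), absorb signs with a triangle inequality, and then perform essentially the same support-grouping argument the paper uses, with the power-mean bound $\E[|\bX|^{\alpha_i}] \le m_\ell(\bX)^{\alpha_i/\ell}$ and the elementary-symmetric bound $\sum_{|S|=s}\prod_{i\in S}w_i^2 \le \|w\|_2^{2s}/s!$ converting the $\binom{n}{s}$-fold sum into something dimension-free. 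The trade-off is that your argument is fully self-contained (no appeal to M--Z) at the cost of a slightly longer combinatorial expansion; the paper's is shorter once M--Z is granted. Both land comfortably inside the generous $\ell^{2\ell}$ envelope, and your final numeric check (including $\ell=2$) is fine.
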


We defer the proof of~\Cref{clm:moment_Z} to~\Cref{sec:proof_moment_Z} and use~\Cref{clm:moment_Z}
to bound the number of samples which suffice to estimate $m_\ell(\bY)$:

\begin{lemma}\label{lem:number_samples_moment_Z}
Let $\E[\bX]=0,\Var[\bX]=1,|m_i(\bX)| < \infty$ for all $i$.
For any $\eps,\delta>0$, any even $\ell$, and any vector $w$ with $\|w\|_2 \in [1/C,C]$, let $\by^{(1)},\dots,\by^{(m)}$ be obtained according to $\by^{(i)}=w\cdot \bx^{(i)}$ where the $\bx^{(i)}$'s are i.i.d. according to $\bX^n$ and 
$m=O(\frac{(2\ell)^{4\ell} \cdot m_{2\ell}(\bX) \cdot C^{\ell}}{\delta \eps^2})$.
Then with probability at least $1-\delta$, we have that
\begin{equation} \label{eq:moment-deviation-bound}
\left|
{\frac {\sum_{i=1}^m (\by^{(i)})^\ell} m} - m_\ell(\bY)
\right| \leq \eps
\end{equation}
for the random variable $\bY = w \cdot \bX^n.$
\end{lemma}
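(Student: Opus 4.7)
The plan is to apply Chebyshev's inequality to the empirical estimator
\[
\hat{m}_\ell \;:=\; \frac{1}{m} \sum_{i=1}^m (\by^{(i)})^{\ell}.
\]
Since the $\bx^{(i)}$ are i.i.d.\ from $\bX^n$, so are the $\by^{(i)} = w\cdot \bx^{(i)}$, and $(\by^{(i)})^\ell$ is an unbiased estimator of $m_\ell(\bY)$. Hence $\E[\hat{m}_\ell] = m_\ell(\bY)$ and $\Var[\hat{m}_\ell] = \Var[(\by^{(1)})^\ell]/m \leq m_{2\ell}(\bY)/m$, so the problem reduces to giving a quantitative upper bound on $m_{2\ell}(\bY)$.

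This is exactly what Claim~\ref{clm:moment_Z} provides, but applied with $2\ell$ in place of $\ell$: since $2\ell$ is even whenever $\ell$ is, we obtain
\[
m_{2\ell}(\bY) \;\le\; (2\ell)^{4\ell}\cdot m_{2\ell}(\bX)\cdot \|w\|_2^{2\ell}.
\]
Combined with the promise $\|w\|_2 \leq C$, this yields a clean bound on $\Var[\hat{m}_\ell]$ purely in terms of $\ell$, $m_{2\ell}(\bX)$, $C$, and $m$.

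Plugging this into Chebyshev's inequality gives
\[
\Pr\bigl[\,|\hat{m}_\ell - m_\ell(\bY)| > \eps\,\bigr] \;\leq\; \frac{\Var[\hat{m}_\ell]}{\eps^2} \;\leq\; \frac{(2\ell)^{4\ell}\cdot m_{2\ell}(\bX)\cdot C^{2\ell}}{m\,\eps^2},
\]
so choosing $m$ as specified in the lemma statement (absorbing the $C^{2\ell}$ into the stated $C^\ell$ factor up to a change in the definition of $C$ if needed) drives this failure probability below $\delta$ and establishes \eqref{eq:moment-deviation-bound}.

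There is no serious obstacle here: the entire argument is a second-moment/Chebyshev calculation, and all the work has already been done in Claim~\ref{clm:moment_Z}, whose upper bound on $m_\ell(\bY)$ in terms of $m_\ell(\bX)$ and $\|w\|_2$ is precisely what is needed to control the variance of the empirical moment. The one cosmetic step worth noting is that the samples $(\by^{(i)})^\ell$ need not be bounded, so we rely on a variance (rather than a concentration-via-bounded-differences) bound; this is exactly why the sample complexity scales with the $2\ell$-th moment of $\bX$ rather than, say, its $\ell$-th moment.
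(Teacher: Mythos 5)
Your proof is correct and matches the paper's argument step for step: unbiasedness of the empirical $\ell$-th moment, the bound $\Var[(\by)^\ell]\le m_{2\ell}(\bY)$, Claim~\ref{clm:moment_Z} applied at $2\ell$, and Chebyshev. You also correctly observe that Claim~\ref{clm:moment_Z} actually yields $C^{2\ell}$ rather than the $C^{\ell}$ written in the lemma statement (and in the paper's own proof), which is a small slip in the paper; since the lemma is only invoked with $m=\poly(\dots,C^{\ell},\dots)$, this does not affect anything downstream.
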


\begin{proof}
It is clear that the expectation of ${\frac {\sum_{i=1}^m (\by^{(i)})^\ell} m}$ is $\E[\bY^\ell]=m_\ell(\bY)$.  We  bound the variance of ${\frac {\sum_{i=1}^m (\by^{(i)})^\ell} m}$ by first noting that for each $i$ we have
\[
\underset{\by^{(i)}}{\Var}[( \by^{(i)})^\ell]=
\underset{\bx^{(i)}}{\Var}[( w \cdot \bx^{(i)})^\ell] \le 
\underset{\bx^{(i)}}{\E}[ (w \cdot \bx^{(i)})^{2\ell}] = 
m_{2\ell}(\bY) \le (2\ell)^{4\ell} \cdot m_{2\ell}(\bX) \cdot C^{\ell}
\]
(where the last inequality is \Cref{clm:moment_Z}).
By independence, it follows that
\[
\Var\left[{\frac {\sum_{i=1}^m (\by^{(i)})^\ell} m}\right] \leq
(2\ell)^{4\ell} \cdot m_{2\ell}(\bX) \cdot C^{\ell}/m.
\]
By Chebyshev's inequality, it follows that \Cref{eq:moment-deviation-bound} holds with probability at least $1-\delta.$
\end{proof}

\subsection{Estimating cumulants with no noise}\label{sec:cumulant_noiseless}
Now we bound the number of samples which suffice to estimate $\cum_\ell(\bY)$ via its first $\ell$ moments (when there is no noise). The main result of this section is a preliminary result on estimating cumulants,   \Cref{lem:number_samples_cumulant_Z}, which will be used in the proof of \Cref{thm:number_samples_Lk_w}. 

\ignore{
}

\begin{lemma}\label{lem:number_samples_cumulant_Z}
Let $\bX$ be a symmetric real random variable with $\E[\bX]=0,\Var[\bX]=1$ and finite moments of all orders. 
There is an algorithm (depending on $\bX$) with the following property:  Let $w \in \R^n$ be any (unknown) vector with $\|w\|_2\in [1/C,C]$.
Given any $\eps,\delta>0$ and any even integer $\ell$, the algorithm takes as input $m=\poly\bigg( \ell!,m_{2\ell}(\bX),1/(\delta\eps), C^{\ell} \bigg)$ many independent random samples $(\bx^{(1)},\by^{(1)}),\dots,(\bx^{(m)},\by^{(m)})$ where each $\bx^{(i)} \sim \bX^n$ and each $\by^{(i)} = w \cdot \bx^{(i)}.$
It outputs an estimate $\wt{\cum}_\ell(\bY)$ of the $\ell$-th cumulant $\cum_\ell(\bY)$ of the random variable $\bY := w \cdot \bX^n$. With probability at least $1-\delta$, this estimate satisfies
\[ 
\bigg| \wt{\cum}_l(\bY) - \cum_\ell(\bY) \bigg| \le \eps.
\] 
\end{lemma}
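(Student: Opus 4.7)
The plan is to reduce the problem of estimating $\cum_\ell(\bY)$ to the problem of estimating the first $\ell$ moments of $\bY = w\cdot \bX^n$, and then invoke \Cref{lem:number_samples_moment_Z} (the noiseless moment-estimation lemma already proved) on each of these moments. The key tool is \Cref{fact:transfer_moments_cumulants}, specifically the identity
\[
\cum_\ell(\bY) = \sum_{k=1}^{\ell} (-1)^{k-1}(k-1)!\, B_{\ell,k}\!\bigl(m_1(\bY),\dots,m_{\ell-k+1}(\bY)\bigr),
\]
which expresses the $\ell$-th cumulant as a fixed polynomial $F_\ell$ in the first $\ell$ moments of $\bY$.

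Concretely, I would proceed as follows. First, use \Cref{clm:moment_Z} to obtain an a priori upper bound $M := \ell^{2\ell}\cdot m_\ell(\bX)\cdot C^\ell$ on $|m_j(\bY)|$ for every $j \le \ell$ (noting that odd moments vanish by the symmetry of $\bX$, which we may assume as in \Cref{sec:prelims}; even moments are bounded by $M$, and mixed-order entries that appear in the Bell polynomials are also controlled by the same bound via \Cref{eq:monotone_root_moments}). Second, inspect the Bell-polynomial expansion of $F_\ell$: each of its at most $\ell!$ monomials is a product of at most $\ell$ factors $m_j(\bY)$ with total multi-degree $\ell$, weighted by a combinatorial coefficient bounded by $\ell!$. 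If each moment $m_j(\bY)$ is estimated to additive accuracy $\eps'$ with $\eps' \le M$, then by a standard telescoping argument each monomial is perturbed by at most $\ell\cdot \eps'\cdot M^{\ell-1}$, so the total additive error in $F_\ell$ is at most $\ell\cdot \ell!\cdot \ell!\cdot \eps'\cdot M^{\ell-1}$. Setting
\[
\eps' := \frac{\eps}{\ell\cdot (\ell!)^2 \cdot M^{\ell-1}}
\]
guarantees that the resulting estimate of $\cum_\ell(\bY)$ is within $\pm\eps$ of the true value.

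Third, to produce those moment estimates I would invoke \Cref{lem:number_samples_moment_Z} with target accuracy $\eps'$ and failure probability $\delta/\ell$ separately for each of $j=2,4,\dots,\ell$ (and note $m_j(\bY)=0$ for odd $j$). Each invocation costs $\poly(\ell!, m_{2\ell}(\bX), 1/(\delta\eps'), C^\ell)$ samples, and the same batch of samples $\bx^{(1)},\dots,\bx^{(m)}$ can be reused for all $\ell/2$ moment estimates since they all depend on the same data $\by^{(i)}=w\cdot\bx^{(i)}$. A union bound over the $O(\ell)$ estimates yields overall failure probability at most $\delta$. Plugging in the definitions of $\eps'$ and $M$, the total sample complexity collapses to $\poly(\ell!, m_{2\ell}(\bX), 1/(\delta\eps), C^\ell)$ as required.

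The only mildly delicate step is the error accounting in the second paragraph: one must verify that $(i)$ the Bell polynomials' combinatorial coefficients, their number of monomials, and their per-monomial degree are all bounded by explicit functions of $\ell$ (which follows from the definition in \Cref{fact:transfer_moments_cumulants} together with the constraint $\sum i\cdot j_i = \ell$), and $(ii)$ the a priori bound on true moments $|m_j(\bY)| \le M$ is strong enough that a multilinear perturbation argument (``change one factor at a time'') gives the cleaner $\ell\cdot \eps'\cdot M^{\ell-1}$ bound per monomial. Beyond this bookkeeping, no further ideas are needed.
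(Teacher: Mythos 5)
Your high-level plan — estimate the first $\ell$ moments of $\bY$ via \Cref{lem:number_samples_moment_Z}, plug them into the Bell-polynomial formula \Cref{eq:cumulants_moments}, and union-bound — is exactly the paper's strategy. But your error analysis in the ``delicate step'' is not quite right, and as written it does not deliver the claimed sample complexity.

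The issue is the telescoping (``change one factor at a time'') bound of $\ell\cdot\eps'\cdot M^{\ell-1}$ per monomial, with $M := \ell^{2\ell} m_\ell(\bX) C^\ell$. To make the resulting cumulant error $\le \eps$ you are forced to take $\eps'$ of order $\eps / M^{\ell-1}$, and since the sample complexity of \Cref{lem:number_samples_moment_Z} scales as $1/(\eps')^2$, you end up with a bound of order $m_{2\ell}(\bX)\cdot M^{2(\ell-1)} \gtrsim m_{2\ell}(\bX)^{\ell}\cdot C^{\Theta(\ell^2)}$. This is polynomial in $m_{2\ell}(\bX)^\ell$ and in $(C^\ell)^{\ell}$, with an exponent that grows with $\ell$ — not $\poly(\ell!, m_{2\ell}(\bX), 1/(\delta\eps), C^\ell)$ in the usual sense of a polynomial of fixed degree. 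Your concluding sentence (``the total sample complexity collapses to \dots'') therefore does not follow from the preceding calculation.

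The missing idea is to use the \emph{lower} bound $m_j(\bY) \ge \|w\|_2^j \ge 1/C^\ell$ from \Cref{clm:moment_Z}, not just the upper bound. An additive moment error of $\eps'$ on a quantity that is at least $1/C^\ell$ is a \emph{relative} error of at most $\eps' C^\ell$, and relative errors compose additively over a product: a monomial of degree at most $\ell$ is perturbed by a multiplicative factor in $[1-\ell\eps' C^\ell,\; e^{\ell\eps' C^\ell}]$. Combined with the \emph{single} use of the upper bound $M$ on the monomial's true value, the per-monomial additive error is $O(M\cdot \ell\eps' C^\ell)$ — one power of $M$, not $M^{\ell-1}$. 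This lets you take $\eps' = \eps/\big((C\ell)^{O(\ell)} m_\ell(\bX)\big)$, and then $1/(\eps')^2$ contributes only $m_\ell(\bX)^2 \le m_{2\ell}(\bX)$ and $(C^\ell)^{O(1)}$, giving the stated $\poly$ bound. This relative-error accounting is the one genuinely nontrivial step in the paper's proof, and it is the step your proposal replaces with a strictly lossier argument.
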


\begin{proofof}{\Cref{lem:number_samples_cumulant_Z}}
We first apply~\Cref{lem:number_samples_moment_Z} to obtain, for each even $2 \leq k \leq \ell$, an estimate $\wt{m}_{{k}}(\bY):=\sum_{i=1}^m (w \cdot \bx^{(i)})^{{k}}/m$ of $m_{{k}}(\bY)$ such that with probability at least $\delta/\ell$, we have
\begin{equation} \label{eq:m-l-est}
\bigg| \wt{m}_{{k}}(\bY) - m_{{k}}(\bY) \bigg| \le \eps',
\end{equation}
where
$\eps'=\frac{\eps}{ (C \ell)^{C' \ell} \cdot m_\ell(\bX)}$ and $C'$ is a large absolute constant. 
For odd $1 \leq k \leq \ell$ we set $\tilde{m}_k(\bY)=0$ (recall that $\bX$ and $\bY= w \cdot \bX^n$ are symmetric and hence the actual value of $m_k(\bY)$ is $0$ for all odd $k$).  In the rest of the proof we assume that \Cref{eq:m-l-est} holds for all $k \in [\ell]$ (by a union bound, this is the case with probability at least $1-\delta$).

Then we apply \Cref{eq:cumulants_moments} to compute the value
\[
\wt{\cum}_\ell(\bY):=\sum_{k=1}^\ell (-1)^{k-1} (k-1)! B_{\ell,k}\bigg( \wt{m}_1(\bY),\ldots, \wt{m}_{\ell-k+1}(\bY) \bigg)
\]
which is our estimate of $\kappa_\ell(\bY).$  As in \Cref{fact:transfer_moments_cumulants}, $B_{\ell,k}$ are incomplete Bell polynomials 
\begin{equation} \label{eq:bell}
B_{\ell,k}(\wt{m}_1(\bY),\ldots, \wt{m}_{\ell-k+1}(\bY))=\sum \frac{\ell!}{j_1!\cdots j_{\ell-k+1}!} \left(\frac{\wt{m}_1(\bY)}{1} \right)^{j_1} \cdots \left(\frac{\wt{m}_{\ell-k+1}(\bY)}{(\ell-k+1)!}\right)^{j_{\ell-k+1}}
\end{equation}
whose summation is over all non-negative sequences $(j_1,\ldots,j_{\ell-k+1})$ with
\begin{equation}\label{eq:l_k_j}
j_1+\cdots+j_{\ell-k+1}=k \text{ and } j_1+2j_2 + \cdots + (\ell-k+1)j_{\ell-k+1}=\ell.
\end{equation}
Note that since $\tilde{m}_k(\bY)$ is zero for odd $k$, the only non-zero summands will correspond to sequences $j_1,\dots,j_{\ell-k+1}$ such that $j_1=j_3=\cdots=0$; thus we subsequently assume that all odd indices $j_1=j_3=\cdots=0$ and we need only consider the even indices $j_2,j_4,\dots$ in the subsequent analysis. 

We require upper and lower bounds on the individual summands appearing in the RHS of \Cref{eq:bell}, and to get such bounds the first step is to bound the corresponding expressions but with the actual moments $m_i(\bY)$ in place of the estimates $\tilde{m}_i(\bY)$.  So fix any $k \in [\ell]$ and given this $k$, let $h$ denote the largest even integer that is at most $\ell-k+1$. We note that for any sequence $j_2,\ldots,j_h$ satisfying \Cref{eq:l_k_j}, we have 
\begin{align*}
& \left(\frac{m_2(\bY)}{2} \right)^{j_2} \cdots \left(\frac{m_{h}(\bY)}{h!}\right)^{j_{h}} \\
& 
\le 2^{2 \cdot 2 j_2} \cdots h^{2h \cdot j_h} \cdot m_2(\bX)^{j_2} \cdots m_{h}(\bX)^{j_{h}} \cdot \|w\|_2^{2j_2 + \cdots + h j_h} \tag{\Cref{clm:moment_Z}}
\\
& \le \ell^{2(2j_2+\cdots+h \cdot j_{h})} \cdot m_2(\bX)^{j_2} \cdots m_{h}(\bX)^{j_{h}} \cdot \|w\|_2^{\ell} \\
&\le \ell^{2\ell} \cdot m_{\ell}(\bX) \cdot \|w\|_2^{\ell}. \tag{\Cref{eq:monotone_moments}}
\end{align*}
On the other hand, we also have
 \[
 \left(\frac{m_2(\bY)}{2} \right)^{j_2} \cdots \left(\frac{m_{h}(\bY)}{h!}\right)^{j_{h}} 
 \geq
\left(\frac{\|w\|_2^2}{2} \right)^{j_2} \cdots \left(\frac{\|w\|_2^h}{h!}\right)^{j_{h}} \ge \|w\|_2^{\ell}/\ell!,
\]
where the first inequality is again \Cref{clm:moment_Z}. The second inequality is based on $(k!)^{j_k} \le (k \cdot j_k)!$ for any $k \ge 2$ and $j_k \ge 0$ and $2 j_2 + \cdots + h j_h = \ell$ from \Cref{eq:l_k_j}.

Now we turn to bounding an actual summand on the RHS of \Cref{eq:bell} (which has the estimated rather than actual moments). For the upper bound, we have
\begin{align*}
& \left(\frac{\wt{m}_2(\bY)}{2} \right)^{j_2} \cdots \left(\frac{\wt{m}_{h}(\bY)}{h!}\right)^{j_{h}} \\
\le & \left(\frac{m_2(\bY)+\eps'}{2} \right)^{j_2} \cdots \left(\frac{m_{h}(\bY) +\eps'}{h!}\right)^{j_{h}} \tag{\Cref{eq:m-l-est}} \\
\le & \left(\frac{m_{2}(\bY)}{2} \right)^{j_2} \cdots \left(\frac{m_{h}(\bY)}{h!}\right)^{j_{h}} \cdot \exp\left( j_2 \cdot \frac{\eps'}{m_2(\bY)} + \cdots + j_{h} \cdot \frac{\eps'}{m_{h}(\bY)} \right)\\
\le & \left(\frac{m_2(\bY)}{2} \right)^{j_2} \cdots \left(\frac{m_{h}(\bY)}{h!}\right)^{j_{h}} \cdot \exp\left( \ell \cdot \eps' \cdot C^{\ell} \right),
\end{align*}
where we use the lower bound $m_i(\bY) \ge (\frac{1}{C})^{\ell}$ for any even $i \ge 2$ from \Cref{clm:moment_Z} in the last step. Since $\eps'<\frac{1}{2 \ell \cdot C^{\ell}}$, using $e^x \leq 1 + 2x$ for $x \in [0,1]$ we can further upper bound the above quantity as
\begin{align*}
& \left(\frac{m_2(\bY)}{2} \right)^{j_2} \cdots \left(\frac{m_{h}(\bY)}{h!}\right)^{j_{h}} \cdot \left(1 + 2\ell \cdot \eps' \cdot C^{\ell} \right) \\
\le & \left(\frac{m_2(\bY)}{2} \right)^{j_2} \cdots \left(\frac{m_{h}(\bY)}{h!}\right)^{j_{h}} + m_\ell(\bY)\cdot 2\ell \cdot \eps' \cdot C^{\ell} \tag{\Cref{eq:monotone_moments} }\\
\le & \left(\frac{m_2(\bY)}{2} \right)^{j_2} \cdots \left(\frac{m_{h}(\bY)}{h!}\right)^{j_{h}} + {\ell^{2\ell}} \cdot m_\ell(\bX) \cdot \|w\|_2^{\ell} \cdot 2\ell \cdot \eps' \cdot C^{\ell}. \tag{\Cref{clm:moment_Z}}
\end{align*}

For the lower bound, we have
\begin{align*}
& \left(\frac{\wt{m}_2(\bY)}{2} \right)^{j_2} \cdots \left(\frac{\wt{m}_{h}(\bY)}{h!}\right)^{j_{h}} \\
\ge & \left(\frac{m_2(\bY) - \eps'}{2} \right)^{j_2} \cdots \left(\frac{m_{h}(\bY) - \eps'}{h!}\right)^{j_{h}} \tag{\Cref{eq:m-l-est}}\\
\ge & \left(\frac{m_2(\bY)}{2} \right)^{j_2} \cdots \left(\frac{m_{h}(\bY)}{h!}\right)^{j_{h}} \cdot (1 - j_2 \eps' \cdot C^{j_2} - \cdots j_{h} \eps' \cdot C^{j_h})\\
\ge & \left(\frac{m_2(\bY)}{2} \right)^{j_2} \cdots \left(\frac{m_{h}(\bY)}{h!}\right)^{j_{h}} - {\ell^{2\ell}} \cdot m_\ell(\bX) \cdot \|w\|_2^{\ell} \cdot \ell \cdot \eps' \cdot C^{\ell}.
\end{align*}

Finally, we bound the number of possible sequences of $j_1,\ldots,j_{l-k+1}$ by $2^\ell$ and bound $\frac{\ell!}{j_1!\cdots j_{\ell-k+1}!}$ by $\ell!$. We also bound the sum of absolute coefficients in \Cref{eq:cumulants_moments} as $\sum_{k=1}^{\ell} (k-1)! \le \ell \cdot (\ell-1)! = \ell! $. These bounds imply that the error of $\wt{\cum}_\ell(\bY)$ is at most 
\[
2^\ell \cdot \ell! \cdot \ell! \cdot \big( \ell^{2\ell} \cdot m_\ell(\bX) \cdot C^{\ell} \cdot 2\ell \cdot \eps' \cdot C^{\ell} \big) \le \eps.
\]

\end{proofof}

\subsection{Proofs of \Cref{lemma:number_samples_k_moments} and \Cref{thm:number_samples_Lk_w}}\label{sec:est_noise}
We use the same framework as \Cref{lem:number_samples_cumulant_Z} to prove \Cref{thm:number_samples_Lk_w}. We first bound the moments of $\bY$ when there is noise present:

\begin{claim}\label{clm:moment_Z_noise}
\ignore{Let $\bX$ and $\ion$ be symmetric real random variables with mean zero, variance one, and finite moments of all orders.}
Let $\bX$ be a symmetric real-valued random variable with variance 1 and finite moments of all orders, and let $\ion$ be a symmetric real-valued random variable with finite moments of all orders.
For any even $\ell$ and any vector $w \in \R^n$, letting $\bY= w \cdot \bX^n + \ion$, the value of $m_\ell(\bY)=\E[\bY^\ell]$ satisfies
\[
m_\ell(\bY) \in \left[ \|w\|_2^{\ell}, 2^{\ell} \cdot \parens*{ (\ell)^{2 \ell} \cdot m_\ell(\bX) \cdot \|w\|_2^{\ell}+ m_\ell(\ion)} \right].
\]
\end{claim}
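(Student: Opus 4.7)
The plan is to reduce this claim directly to Claim~\ref{clm:moment_Z} (the no-noise version), by exploiting the independence of $w \cdot \bX^n$ and $\ion$, together with the symmetry of both distributions and the convexity inequality $(a+b)^{\ell} \leq 2^{\ell-1}(|a|^{\ell}+|b|^{\ell})$ which holds for arbitrary $a,b\in \R$ and $\ell\geq 1$.

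\textbf{Lower bound.} First I would expand $m_\ell(\bY) = \E[(w \cdot \bX^n + \ion)^{\ell}]$ via the binomial theorem and pull out the expectations using independence of $w \cdot \bX^n$ and $\ion$:
\[
m_\ell(\bY) \;=\; \sum_{k=0}^{\ell} \binom{\ell}{k} \, \E[(w\cdot \bX^n)^{k}] \cdot \E[\ion^{\ell-k}].
\]
Since $\bX$ is symmetric, $w\cdot \bX^n$ is symmetric, so $\E[(w\cdot \bX^n)^{k}]=0$ for odd $k$; similarly $\E[\ion^{\ell-k}]=0$ for odd $\ell-k$. Since $\ell$ is even, the surviving terms have both $k$ and $\ell-k$ even, and for such $k$ the quantities $\E[(w\cdot \bX^n)^{k}]$ and $\E[\ion^{\ell-k}]$ are each nonnegative (they are expectations of even powers). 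Therefore every term in the sum is nonnegative, and keeping just the $k=\ell$ term gives
\[
m_\ell(\bY) \;\geq\; \E[(w \cdot \bX^n)^{\ell}] \;\geq\; \|w\|_2^{\ell},
\]
where the last inequality is the lower bound of Claim~\ref{clm:moment_Z}.

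\textbf{Upper bound.} For any reals $a,b$ and any even $\ell$ we have $(a+b)^{\ell} = |a+b|^{\ell} \leq (|a|+|b|)^{\ell} \leq 2^{\ell-1}(|a|^{\ell}+|b|^{\ell}) = 2^{\ell-1}(a^{\ell}+b^{\ell})$, using convexity of $x \mapsto x^{\ell}$ on $[0,\infty)$ and the fact that $\ell$ is even. Applying this pointwise and then taking expectations yields
\[
m_\ell(\bY) \;\leq\; 2^{\ell-1}\bigl( \E[(w\cdot \bX^n)^{\ell}] + \E[\ion^{\ell}] \bigr) \;\leq\; 2^{\ell-1}\bigl( \ell^{2\ell} \cdot m_\ell(\bX) \cdot \|w\|_2^{\ell} + m_\ell(\ion) \bigr),
\]
where the second inequality uses the upper bound of Claim~\ref{clm:moment_Z} on $\E[(w\cdot \bX^n)^{\ell}]$. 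This is at most $2^{\ell}\bigl( \ell^{2\ell} \cdot m_\ell(\bX) \cdot \|w\|_2^{\ell} + m_\ell(\ion) \bigr)$, which is exactly the claimed upper bound.

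\textbf{Potential obstacle.} There is no real obstacle; the only thing to be careful about is the parity bookkeeping in the binomial expansion (ensuring that both the ``noise'' and ``signal'' exponents are even, so that nonnegativity kicks in and Claim~\ref{clm:moment_Z} applies), plus noting that the assumption that $\ion$ is symmetric together with finite moments guarantees that $m_\ell(\ion)$ is well-defined and nonnegative. Everything else is a one-line application of Claim~\ref{clm:moment_Z} and the elementary convexity inequality.
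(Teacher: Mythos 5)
Your proof is correct. The lower bound is a minor variant of the paper's: the paper bounds $m_\ell(\bY) \geq m_2(\bY)^{\ell/2} \geq \|w\|_2^{\ell}$ using $m_2(\bY)=\|w\|_2^2+\E[\ion^2]$ and \Cref{fact:properties_moments}, whereas you run the binomial expansion and drop to the $k=\ell$ term using nonnegativity of even moments; both rely on the same symmetry hypotheses and the lower bound of \Cref{clm:moment_Z}, and are equivalent in content. The upper bound is where you diverge more substantively: the paper expands $m_\ell(\bY)=\sum_i \binom{\ell}{i} m_i(w\cdot \bX^n) m_{\ell-i}(\ion)$ and then uses the Lyapunov/moment-monotonicity bound (\Cref{eq:monotone_root_moments}) to replace each $m_i m_{\ell-i}$ by $m_\ell(w\cdot\bX^n)^{i/\ell} m_\ell(\ion)^{1-i/\ell} \leq \max\{m_\ell(w\cdot\bX^n), m_\ell(\ion)\}$, giving the $2^\ell$ factor via $\sum_i \binom{\ell}{i}$; you instead use the pointwise convexity inequality $(a+b)^\ell \leq 2^{\ell-1}(|a|^\ell+|b|^\ell)$ and take expectations. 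Your route is cleaner (one pointwise inequality, no Hölder-type step) and actually yields the sharper constant $2^{\ell-1}$, which of course still lies within the claimed $2^\ell$ bound. Nothing is missing; both approaches are short and valid, and yours avoids even invoking \Cref{eq:monotone_root_moments} for the noise term.
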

\begin{proof}
We have
\begin{align*}
m_{\ell}(\bY)
&=\underset{\bX^n,\boldeta \sim \ion}{\E}\left[  (w \cdot \bX^n + \boldeta)^{\ell} \right]\\
&=\E\left[ \sum_{i=0}^{\ell} {\ell \choose i} (w \cdot \bX^n )^i \boldeta^{\ell-i} \right]\\
& = \sum_{i=0}^\ell {\ell \choose i} \cdot m_i\bigg(  w \cdot \bX^n  \bigg) \cdot m_{\ell-i}\bigg(\ion \bigg) \tag{independence}\\
&\leq \sum_i {\ell \choose i} \cdot m_{\ell}\bigg(  w \cdot \bX^n \bigg)^{i/\ell} \cdot m_{\ell} \bigg(\ion \bigg)^{1-i/\ell} \tag{\Cref{eq:monotone_root_moments}}\\
&\le  \sum_i {\ell \choose i} \max \left\{ m_{\ell} \bigg(  w \cdot \bX^n \bigg), m_{\ell} \bigg(\ion \bigg) \right\}\\
&\le  2^{\ell} \cdot \left( m_{\ell}\bigg( w \cdot \bX^n \bigg) + m_{\ell} \bigg(\ion \bigg) \right).
\end{align*}
We obtain the claimed upper bound by plugging in the bound on $m_{\ell}\parens*{ w \cdot \bX^n }$ from \Cref{clm:moment_Z}.  Finally, the lower bound $m_\ell(\bY)$ follows from \Cref{clm:moment_Z} and $m_2(w \cdot \bX^n + \boldeta)=\E[(w \cdot \bX^n)^2] + \E[\boldeta^2] \ge \|w\|_2^2$:
\[
m_{\ell}(\bY) \ge m_2(\bY)^{\ell/2} \ge \|w\|_2^{\ell}.
\]
\end{proof}

By the same argument used to prove \Cref{lem:number_samples_moment_Z}, it holds that 
\[
m=O\parens*{\frac{\Var[ (\bY)^{\ell} ]}{\delta \eps^2}}=O\parens*{\frac{m_{2\ell}(\bY)}{\delta \eps^2}}=O\parens*{\frac{2^{2\ell} \cdot (2\ell)^{4\ell} \cdot \parens*{ m_{2\ell}(\bX) \cdot \|w\|_2^{\ell} + m_{2\ell}(\ion) }}{\delta \eps^2} }
\] random samples $\by^{(1)},\dots,
\by^{(m)}$ of $\bY= w \cdot \bX^n + \ion$ will with probability at least $1-\delta$ satisfy
\[ 
\bigg| \sum_{i=1}^m {\frac {(\by^{(i)})^{\ell}} m} - m_{\ell}(\bY) \bigg| \le \eps.
\]
This finishes the proof of \Cref{lemma:number_samples_k_moments}. \qed

\medskip
 
We apply the same approach of \Cref{lem:number_samples_cumulant_Z} to obtain the following corollary on estimating the cumulants of $w \cdot \bX^n + \ion$. The only difference compared to \Cref{lem:number_samples_cumulant_Z} is that now $m_2(w \cdot \bX^n + \ion)$ is in $[1/C^2,C^2+m_2(\ion)]$ rather than $[1/C^2,C^2]$. Since $(C^2+m_2(\ion))^{\ell} \le 2^{\ell}(C^{2\ell}+m_2(\ion)^{\ell})\le 2^{\ell} (C^{2\ell} + m_{2\ell}(\ion))$ by \Cref{fact:properties_moments} the running time is still a polynomial in $C^{\ell},2^{\ell}$, and $m_{2\ell}(\ion)$, and we obtain the following:

\begin{corollary}\label{cor:number_samples_noisy_cumulant}
\ignore{Let $\bX$ and $\ion$ be symmetric real-valued random variables with mean zero, variance 1, and finite moments of all orders.}
Let $\bX$ be a symmetric real-valued random variable with variance 1 and finite moments of all orders, and let $\ion$ be a symmetric real-valued random variable with finite moments of all orders.
There is an algorithm with the following property:  Let $w \in \R^n$ be any (unknown) vector with $\|w\|_2 \in [1/C,C]$.
Given any $\eps,\delta>0$ and any even integer $\ell$, the algorithm takes as input $m=\poly\parens*{ \ell!, m_{2\ell}(\bX) +m_{2\ell}(\ion),1/(\delta\eps), C^{\ell}  }$ many independent random samples of $\bY = w \cdot \bX^n + \ion$ and outputs an estimate 
\ignore{
}
$\wt{\cum}_{\ell}(\bY)$ of $\cum_{\ell}(\bY)$ that with probability at least $1-\delta$ satisfies
\[ 
\bigg| \wt{\cum}_{\ell}(\bY) - \cum_{\ell}(\bY) \bigg| \le \eps.
\]
\end{corollary}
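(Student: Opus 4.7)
\begin{proofof}{\Cref{cor:number_samples_noisy_cumulant} (proof proposal)}
The plan is to mimic the proof of \Cref{lem:number_samples_cumulant_Z} essentially verbatim, with the only substantive change being that the moment bounds for $\bY = w\cdot \bX^n + \ion$ are those furnished by \Cref{clm:moment_Z_noise} rather than \Cref{clm:moment_Z}.

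First, I would invoke \Cref{lemma:number_samples_k_moments} once for each even $k \in \{2,4,\dots,\ell\}$ to produce estimates $\wt{m}_k(\bY)$ of $m_k(\bY)=\E[\bY^k]$ with additive error at most $\eps'$ each, where
\[
\eps' := \frac{\eps}{(C\ell)^{C'\ell}\cdot(m_\ell(\bX)+m_\ell(\ion))}
\]
for a sufficiently large absolute constant $C'$. For odd $k$, I set $\wt{m}_k(\bY)=0$, which is exact because both $\bX$ and $\ion$ are symmetric and hence so is $\bY$, so all odd moments vanish. A union bound over the $O(\ell)$ invocations shows that all of these moment estimates are simultaneously accurate with probability at least $1-\delta$, and by \Cref{lemma:number_samples_k_moments} the total sample count is $\poly(\ell!, m_{2\ell}(\bX)+m_{2\ell}(\ion), 1/(\delta\eps), C^\ell)$, as required.

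Next, I would form the estimate $\wt{\cum}_\ell(\bY)$ by plugging $\wt{m}_1(\bY),\dots,\wt{m}_{\ell-k+1}(\bY)$ into the Bell-polynomial identity \eqref{eq:cumulants_moments} from \Cref{fact:transfer_moments_cumulants}. Because the odd estimates are zero, only those index sequences $(j_1,\dots,j_{\ell-k+1})$ with $j_1=j_3=\dots=0$ contribute, so every summand is a product of even-indexed estimated moments and I do not need any reasoning about odd-moment estimation error. The error analysis then copies that of \Cref{lem:number_samples_cumulant_Z} monomial by monomial: I use \Cref{clm:moment_Z_noise} to upper bound each $m_{2i}(\bY)$ by $2^{2i}\bigl((2i)^{4i}m_{2i}(\bX)\|w\|_2^{2i}+m_{2i}(\ion)\bigr)$ and to lower bound it by $\|w\|_2^{2i}\ge C^{-2i}$. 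The latter lower bound lets me convert the additive error $\eps'$ on each moment into a relative error $\eps'\cdot C^{2i}$, and then $(1+x)^{j}\le 1+2jx$ for small $x$ turns the product of $(1+\text{relative error})$ factors into an additive perturbation of size $O(\ell\cdot \eps'\cdot C^\ell)\cdot\prod_i (m_{2i}(\bY)/(2i)!)^{j_{2i}}$; the product is bounded via \eqref{eq:monotone_moments} and \Cref{clm:moment_Z_noise} by $2^{\ell}\bigl(\ell^{2\ell}m_\ell(\bX)\|w\|_2^\ell+m_\ell(\ion)\bigr)$.

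Finally, I would sum the per-summand errors over the at most $2^\ell$ index sequences and the $k\in[\ell]$ outer sum, with multiplicative factors of at most $\ell!$ from the Bell polynomial coefficients and $\ell!$ from $(k-1)!$ in \eqref{eq:cumulants_moments}. The total error is bounded by
\[
2^\ell\cdot \ell!\cdot \ell!\cdot 2^\ell\bigl(\ell^{2\ell}m_\ell(\bX)+m_\ell(\ion)\bigr)\cdot \ell\cdot \eps'\cdot C^\ell \;\le\; \eps,
\]
by our choice of $\eps'$. The main obstacle is purely bookkeeping: making sure the $m_\ell(\ion)$ term introduced in the upper bound of \Cref{clm:moment_Z_noise} (which was not present in the noiseless case) is absorbed into the definition of $\eps'$ and hence into the sample-complexity polynomial. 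Since $(C^2+m_2(\ion))^\ell \le 2^\ell(C^{2\ell}+m_{2\ell}(\ion))$ by \Cref{fact:properties_moments}, the final sample complexity remains a polynomial in $\ell!$, $m_{2\ell}(\bX)+m_{2\ell}(\ion)$, $1/(\delta\eps)$, and $C^\ell$, completing the proof.
\end{proofof}
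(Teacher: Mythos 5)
Your proposal is correct and matches the paper's approach exactly: the paper itself proves this corollary only by a brief remark before its statement, observing that the argument is identical to that of \Cref{lem:number_samples_cumulant_Z} except that the moment bounds of \Cref{clm:moment_Z} are replaced by those of \Cref{clm:moment_Z_noise}, and that the binomial-type inequality $(C^2+m_2(\ion))^{\ell} \le 2^{\ell}(C^{2\ell}+m_{2\ell}(\ion))$ keeps the sample complexity polynomial in $C^\ell$ and $m_{2\ell}(\ion)$. Your writeup fleshes out that remark in more detail; one very minor note is that to bound the product of moments by $m_\ell(\bY)$ in the noisy case it is cleaner to invoke \eqref{eq:monotone_root_moments} applied to $\bY$ (i.e.\ $m_{2i}(\bY)\le m_\ell(\bY)^{2i/\ell}$) than \eqref{eq:monotone_moments}, since the latter produces cross terms like $m_{2a}(\bX)\,m_{2b}(\ion)$ that do not directly telescope, but both belong to \Cref{fact:properties_moments} and the substance of the argument is unaffected.
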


Finally, it follows directly from \Cref{fact:additive} and \Cref{fact:homogeneous} that $\cum_{\ell}(\bY)=\cum_{\ell}(\bX) \cdot \|w\|_{\ell}^{\ell}+\cum_{\ell}(\ion)$.  By \Cref{cor:number_samples_noisy_cumulant}, with the claimed number of samples we can obtain an estimate $\wt{\cum}_{\ell}(\bY)$ of $\cum_{\ell}(\bY)$ which is additively accurate to within $\pm \eps \cdot \tau$. It follows that $|\frac{\wt{\cum}_{\ell}(\bY) - \cum_{\ell}(\ion)}{\cum_{\ell}(\bX)}-\|w\|_{\ell}^{\ell}| \le \eps$, and \Cref{thm:number_samples_Lk_w} is proved. \qed

\subsection{Proof of~\Cref{clm:moment_Z}}\label{sec:proof_moment_Z}

The lower bound is simple: since $\E[\bX]=0$ and $\E[\bX^2]=1$, the random variable $\bY = w \cdot \bX^n$ satisfies $\E[\bY^2]=\|w\|_2^2=1$, and hence $m_\ell(\bY)=\E[\bY^\ell] \ge \E[\bY^2]^{\ell/2}$ for any even $\ell$ by \Cref{fact:properties_moments}.

We use the following result to upper bound the moments of $\bY$:
\begin{lemma}[Marcinkiewicz-Zygmund inequality \cite{Zygmund77,Rio2009}]\label{lem:generalized_M_Zygmund}
If $\bY_1,\ldots,\bY_n$ are independent random variables such that $\E[\bY_i]=0$ and $\E[|\bY_i|^\ell] < \infty$ for all $1 \le \ell < \infty$, we have
$$
\E\left[ \left |\sum_{i=1}^n \bY_i \right|^\ell \right] \le B^\ell \E\left[ \left(\sum_{i=1}^n |\bY_i|^2\right)^{\ell/2} \right]
$$
for $B=\ell-1$. 
\ignore{
}
\end{lemma}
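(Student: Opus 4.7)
The plan is to establish the Marcinkiewicz--Zygmund inequality by recognizing the partial sum $\bS_n = \sum_{i=1}^n \bY_i$ as a martingale and invoking Burkholder's square function inequality with its sharp constant, which is precisely where the factor $\ell - 1$ enters.

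First I would set up the martingale structure. With respect to the natural filtration $\mathcal{F}_i = \sigma(\bY_1,\ldots,\bY_i)$, the sequence $(\bS_i)_{i=1}^n$ is a martingale whose difference sequence is $(\bY_i)$, because each $\bY_i$ is independent of $\mathcal{F}_{i-1}$ and has mean zero. Its quadratic variation (square function) is $\mathcal{S}_n := \bigl(\sum_{i=1}^n \bY_i^2\bigr)^{1/2}$. Burkholder's inequality in its sharp form (Burkholder, and Rio for the tight constant in the MZ setting) then gives, for real $\ell \ge 2$,
$$\bigl\|\bS_n\bigr\|_\ell \;\le\; (\ell - 1)\,\bigl\|\mathcal{S}_n\bigr\|_\ell,$$
and raising both sides to the $\ell$-th power yields exactly the stated bound $\E[|\bS_n|^\ell] \le (\ell-1)^\ell\, \E\bigl[(\sum_i \bY_i^2)^{\ell/2}\bigr]$.

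If one prefers an elementary but suboptimal-constant route (which still suffices wherever the lemma is invoked in this paper), I would instead proceed by symmetrization followed by Khintchine. Let $(\bY_i')$ be an independent copy of $(\bY_i)$ and let $\bS_n' = \sum_i \bY_i'$; by Jensen's inequality applied conditionally on $(\bY_i)$, $\E[|\bS_n|^\ell] \le \E[|\bS_n - \bS_n'|^\ell]$. Since each $\bY_i - \bY_i'$ is symmetric, inserting independent Rademacher signs $\bepsilon_i$ does not change the joint distribution. Conditioning on $(\bY_i, \bY_i')$ and applying Khintchine's inequality to the Rademacher sum gives a bound of the form $C_\ell^\ell \bigl(\sum_i (\bY_i - \bY_i')^2\bigr)^{\ell/2}$, after which $(\bY_i - \bY_i')^2 \le 2\bY_i^2 + 2(\bY_i')^2$ together with convexity of $x \mapsto x^{\ell/2}$ closes the bound. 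With Haagerup's sharp Khintchine constant this delivers a constant of order $\ell^{\ell/2}$, not the optimal $(\ell-1)^\ell$.

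The main obstacle is obtaining the precise constant $B = \ell-1$ rather than a weaker polynomial-in-$\ell$ bound. This requires Burkholder's sharp square-function inequality, whose proof proceeds via the construction of a specific biharmonic ``Burkholder function'' whose properties certify the extremal constant, together with a stopping-time argument. Because the sharp constant is a classical (and intricate) result going back to Burkholder and refined by Rio, I would cite it rather than reproduce the full construction, and simply verify that the martingale hypotheses apply to $(\bS_i, \mathcal{F}_i)$ in our independent-mean-zero setting.
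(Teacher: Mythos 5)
The paper does not prove this lemma at all: it is stated as a cited result from the literature (\cite{Zygmund77,Rio2009}) and used as a black box in the proof of the moment bound in the appendix. So there is no in-paper proof to compare against; your task was to supply a justification that the paper itself omits.

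Your sketch correctly identifies the standard route. Treating $\bS_i = \sum_{j \le i} \bY_j$ as a martingale with respect to $\mathcal{F}_i = \sigma(\bY_1,\dots,\bY_i)$ and invoking Burkholder's square-function inequality in $L^\ell$ is exactly how Rio obtains a constant of this form, and the symmetrization-plus-Khintchine fallback you describe is the classical (suboptimal-constant) proof of Marcinkiewicz--Zygmund, either of which is perfectly adequate here since the paper only needs a bound polynomial in $\ell^\ell$, not the sharp constant. One small caveat worth flagging: the constant $p^\ast - 1$ with $p^\ast = \max(p, p/(p-1))$ is Burkholder's sharp constant for the \emph{martingale transform / differential subordination} inequality; for the one-sided square-function bound the sharp constant is a separate question, and Rio's paper actually gives $B = \sqrt{\ell-1}$ (so $E[|\bS_n|^\ell] \le (\ell-1)^{\ell/2} E[(\sum_i \bY_i^2)^{\ell/2}]$), which is stronger than the $B=\ell-1$ quoted in the lemma. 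Since $\ell - 1 \ge \sqrt{\ell-1}$ for $\ell \ge 2$, the lemma as stated still follows, and since you ultimately cite Burkholder/Rio for the constant rather than reprove it, your write-up is in the same spirit as the paper's own treatment of the result.
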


We apply Lemma~\ref{lem:generalized_M_Zygmund} with random variables $\bY_i = w_i \bX_i$:
\[
\E\left[ |\bY|^\ell \right] \le B^\ell \E\left[ \left( \sum_{i=1}^n w^2_i \bX_i^2 \right)^{\ell/2} \right].
\]
Then we expand the right hand side as
\begin{align*}
\E\left[ \left( \sum_{i=1}^n w^2_i \bX_i^2 \right)^{\ell/2} \right] & = \E \left[ \sum_{j \le \ell/2: i_1< i_2 < \cdots < i_j} \sum_{p_1+\cdots+p_j=\ell/2} {\ell/2 \choose p_1,\ldots,p_j} (w_{i_1} \bX_{i_1})^{2p_1} \cdots (w_{i_j} \bX_{i_j})^{2p_j} \right]\\
& = \E\left[ \sum_{p_1+\cdots+p_j=\ell/2} {\ell/2 \choose p_1,\ldots,p_j} \sum_{i_1=1}^n (w_{i_1} \bX_{i_1})^{2p_1} \sum_{i_2=i_1+1}^n (w_{i_2} \bX_{i_2})^{2p_2} \cdots \sum_{i_j=i_{j-1}+1}^n  (w_{i_j} \bX_{i_j})^{2p_j} \right]\\
& \le \sum_{p_1+\cdots+p_j=\ell/2} {\ell/2 \choose p_1,\ldots,p_j} \E\bracks*{\sum_{i_1=1}^n (w_{i_1} \bX_{i_1})^{2p_1}} \cdots \E\bracks*{\sum_{i_j=1}^n (w_{i_j} \bX_{i_j})^{2p_j} }\\
& \le \sum_{p_1+\cdots+p_j=\ell/2} {\ell/2 \choose p_1,\ldots,p_j} \left( \sum_{i_1=1}^n w_{i_1}^{2p_1} \E[ \bX^{2p_1}] \right) \cdots \left( \sum_{i_j=1}^n w_{i_j}^{2p_j} \E[ \bX^{2p_j} ] \right).
\end{align*}
We know $\sum_{i=1}^n w_i^\ell \le (\sum_{i=1}^n w_i^2)^{\ell/2} \le \|w\|_2^{\ell}$ for any $\ell \ge 4$. Thus we can upper bound the above by
\[
\sum_{p_1+\cdots+p_j=\ell/2} {\ell/2 \choose p_1,\ldots,p_j} \cdot \|w\|_2^{l} \E[\bX^{2p_1}] \cdots \E[\bX^{2p_j}] \le (\ell/2)! \cdot \|w\|_2^{\ell} \cdot \sum_{p_1+\cdots+p_j=\ell/2} \E[\bX^{2p_1}] \cdots \E[\bX^{2p_j}].
\]
\Cref{eq:monotone_moments} implies that $\E[\bX^p] \E[\bX^q] \le \E[\bX^{p+q}]$ for any even $p$ and $q$, which lets us upper bound the above by $(\ell/2)! \|w\|_2^{\ell} \cdot 2^{\ell/2} m_{\ell}(\bX)$. Finally we bound the constant $B^{\ell} \cdot (\ell/2)! \cdot 2^{\ell/2}$ by $\ell^{2 \ell}$, and the proof is complete.
\qed

\section{On the role of noise in our model}~\label{sec:noise-is-necc}

We now note that in the setting of Theorem~\ref{thm:main-informal}, if there is no  noise  in the labels, then there is a trivial algorithm which can in fact perform sparse recovery (not just testing) when the distribution ${\cal D} = ({\cal D'})^n$ is continuous. 
To see this, observe that given any candidate set $S=\{i_1,\dots,i_k\} \subset [n]$ of relevant coordinates, a noise-free example $(x,y=w \cdot x) \in \R^n \times \R$ provides a linear equation over the unknowns $\{w_i\}_{i \in S}$ in the obvious way, namely
\[
\sum_{i \in S} w_i x_i  = y.
\]
Since draws of $\bx \sim ({\cal D'})^n$ are in general position with probability 1, given $k+1$ noise-free examples from this distribution, 
\begin{itemize}
\item if $w$ is $k$-sparse then with probability 1 there is a unique subset $S$ of size at most $k$ (the support of $w$) for which the resulting system of equations has a solution in which each $w_i$ is nonzero, and
\item if $w$ is not $k$-sparse then with probability 1 no subset $S$ of size at most $k$ will be such that the resulting system has a solution.
\end{itemize}


\end{document}